\newif\iftitlepage
\newif\ifabstract
\newif\iftoc
\newcommand{\mm}[1]{\begin{align*}#1\end{align*}}
\newcommand{\etal}{\emph{et al.}}
\newcommand{\toc}[1]{\pagenumbering{roman}\setcounter{tocdepth}{#1}\tableofcontents\newpage\pagenumbering{arabic}}
\renewcommand{\qed}{\quad\hbox{\vrule width 8pt height 8pt depth 1.5pt}\lower 8.5pt\hbox{}}
\newcommand{\innerqed}{\renewcommand{\qed}{\hfill\ensuremath{\Box}}}
\newcounter{tag}[section]
\newcommand{\eqtag}[1]{\stepcounter{tag}\tag{\thesection.\thetag} \label{#1}}
\newcommand{\eqdef}{\ensuremath{\overset{\mathrm{def}}{=\joinrel=}}}
\newcommand{\mathdot}{\ensuremath{\;\text{.}}}
\newcommand{\mathcomma}{\ensuremath{\;\text{,}}}
\newcommand{\eps}{\ensuremath{\epsilon}}
\newcommand{\e}{\ensuremath{\eps}}
\newcommand{\al}{\ensuremath{\alpha}}
\newcommand{\de}{\ensuremath{\delta}}
\newcommand{\ra}{\ensuremath{\rightarrow}}
\newcommand{\norm}[1]{\ensuremath{\left\lVert#1\right\rVert}}
\newcommand{\abs}[1]{\ensuremath{\Big\vert#1\Big\vert}}
\newcommand{\ip}[1]{\ensuremath{\left\langle #1 \right\rangle}}
\newcommand{\ceil}[1]{\ensuremath{\left\lceil#1\right\rceil}}
\newcommand{\xor}{\ensuremath{\oplus}}
\newcommand{\rest}{\mathord{\upharpoonright}}
\newcommand{\F}{\ensuremath{\mathbb{F}}}
\newcommand{\N}{\ensuremath{\mathbb{N}}}
\newcommand{\R}{\ensuremath{\mathbb{R}}}
\newcommand{\E}{\ensuremath{\mathbb{E}}}
\newcommand{\bitset}{\ensuremath{\{0,1\}}}
\newcommand{\pmset}{\ensuremath{\{-1,1\}}}
\newcommand{\poly}{\ensuremath{\mathrm{poly}}}
\newcommand{\p}{\ensuremath{\mathcal{P}}}
\newcommand{\np}{\ensuremath{\mathcal{NP}}}
\newcommand{\nexp}{\ensuremath{\mathcal{NEXP}}}
\newcommand{\bpp}{\ensuremath{\mathcal{BPP}}}
\newcommand{\ac}{\ensuremath{\mathcal{AC}}}
\newcommand{\tc}{\ensuremath{\mathcal{TC}}}
\newcommand{\fanin}{\ensuremath{\mathrm{fan\text{-}in}}}
\newcommand{\acc}{\ensuremath{\mathtt{acc}}}
\newcommand{\sign}{\ensuremath{\mathrm{sgn}}}
\newcommand{\pmstar}{\ensuremath{\{-1,1,\star\}}}
\newcommand{\code}{\ensuremath{\mathtt{ECC}}}
\newtheoremstyle{dot}{\topsep}{\topsep}{\itshape}{0pt}{\bfseries}{.}{.5em}{}
\newtheoremstyle{nodot}{\topsep}{\topsep}{\itshape}{0pt}{\bfseries}{}{.5em}{}
\newcommand{\thmdot}[3]{\newtheorem{#1}[#2]{#3}\theoremstyle{dot}\newtheorem{#1dot}[#2]{#3}\theoremstyle{nodot}}
\theoremstyle{nodot}
\newtheorem{theorem}{Theorem}[section]
\theoremstyle{dot}
\newtheorem{subfact}{Fact}[theorem]
\newtheorem{subclaim}[subfact]{Claim}
\theoremstyle{nodot}
\newtheorem{problem}{Open Problem}
\begin{document}

\iftitlepage\begin{titlepage}\fi
\title{Quantified Derandomization of Linear Threshold Circuits}
\author{
        Roei Tell \thanks{Department of Computer Science and Applied Mathematics, Weizmann Institute of Science, Rehovot, Israel. Email: {\tt roei.tell@weizmann.ac.il}}
}
\maketitle
\iftitlepage\thispagestyle{empty}\fi

\ifabstract
\begin{abstract}
One of the prominent current challenges in complexity theory is the attempt to prove lower bounds for $\mathcal{TC}^0$, the class of constant-depth, polynomial-size circuits with majority gates. Relying on the results of Williams (2013), an appealing approach to prove such lower bounds is to construct a non-trivial derandomization algorithm for $\mathcal{TC}^0$. In this work we take a first step towards the latter goal, by proving the first positive results regarding the derandomization of $\mathcal{TC}^0$ circuits of depth $d>2$.

Our first main result is a \emph{quantified derandomization} algorithm for $\mathcal{TC}^0$ circuits with a super-linear number of wires. Specifically, we construct an algorithm that gets as input a $\mathcal{TC}^0$ circuit $C$ over $n$ input bits with depth $d$ and $n^{1+\exp(-d)}$ wires, runs in almost-polynomial-time, and distinguishes between the case that $C$ rejects at most $2^{n^{1-1/5d}}$ inputs and the case that $C$ accepts at most $2^{n^{1-1/5d}}$ inputs. In fact, our algorithm works even when the circuit $C$ is a linear threshold circuit, rather than just a $\mathcal{TC}^0$ circuit (i.e., $C$ is a circuit with linear threshold gates, which are stronger than majority gates).

Our second main result is that even a \emph{modest improvement} of our quantified derandomization algorithm would yield a non-trivial algorithm for \emph{standard derandomization} of all of $\mathcal{TC}^0$, and would consequently imply that $\mathcal{NEXP}\not\subseteq\mathcal{TC}^0$. Specifically, if there exists a quantified derandomization algorithm that gets as input a $\mathcal{TC}^0$ circuit with depth $d$ and $n^{1+O(1/d)}$ wires (rather than $n^{1+\exp(-d)}$ wires), runs in time at most $2^{n^{\exp(-d)}}$, and distinguishes between the case that $C$ rejects at most $2^{n^{1-1/5d}}$ inputs and the case that $C$ accepts at most $2^{n^{1-1/5d}}$ inputs, then there exists an algorithm with running time $2^{n^{1-\Omega(1)}}$ for \emph{standard derandomization} of $\mathcal{TC}^0$.
\end{abstract}
\fi
\iftitlepage\end{titlepage}\fi

\iftoc\toc{2}\fi

\section{Introduction} \label{sec:int}

The classical problem of \emph{derandomization of a circuit class $\mathcal{C}$} is the following: Given a circuit $C\in\mathcal{C}$, deterministically distinguish between the case that the acceptance probability of $C$ is at least $2/3$ and the case that the acceptance probability of $C$ is at most $1/3$. When $\mathcal{C}=\p/\poly$, this problem can be solved in polynomial time if and only if $promise\text{-}\bpp=promise\text{-}\p$. However, at the moment we do not know how to solve the problem in polynomial time even if $\mathcal{C}$ is the class of polynomial-sized CNFs.

The derandomization problem for a circuit class $\mathcal{C}$ is tightly related to lower bounds for $\mathcal{C}$. Relying on the classic hardness-randomness paradigm~\cite{yao82,bm84,nw94}, sufficiently strong lower bounds for a class $\mathcal{C}$ imply the existence of pseudorandom generators with short seed for $\mathcal{C}$, which allow to derandomize $\mathcal{C}$ (see, e.g.,~\cite[Chp. 20]{ab09},~\cite[Chp. 8.3]{gol08}). On the other hand, the existence of a non-trivial derandomization algorithm for a circuit class $\mathcal{C}$ typically implies (weak) lower bounds for $\mathcal{C}$. Specifically, for  many specific classes $\mathcal{C}$ (e.g., $\mathcal{C}=\p/\poly$), the existence of a derandomization algorithm for $\mathcal{C}$ running in time $2^n/n^{\omega(1)}$ implies that $\mathcal{E}^{\np}\not\subseteq\mathcal{C}$, and in some cases also that $\nexp\not\subseteq\mathcal{C}$ (see~\cite{wil13,sw13,bv14}, which build on~\cite{iw98,ikw02}).

Following Williams' proof that $\mathcal{ACC}$ does not contain $\nexp$~\cite{wil11}, one of the prominent current challenges in complexity theory is the attempt to prove similar lower bounds for the complexity class $\tc^0$ (i.e., the class of constant-depth, polynomial-sized circuits with majority gates, which extends $\mathcal{ACC}$). Even after extensive efforts during the last few decades (and with renewed vigor recently), the best-known lower bounds for $\tc^0$ assert the existence of functions in $\p$ that require $\tc^0$ circuits with a \emph{slightly super-linear} number of wires, or with a \emph{linear} number of gates (see Section~\ref{sec:bg} for further background).

Since derandomization algorithms imply lower bounds in general, an appealing approach to prove lower bounds for $\tc^0$ is to construct derandomization algorithms for this class. Moreover, a non-trivial derandomization of $\tc^0$ would separate $\tc^0$ from $\nexp$ (and not only from $\mathcal{E}^{\np}$; see~\cite{sw13,bv14}). Accordingly, the problem of either derandomizing $\tc^0$ or constructing a deterministic algorithm for \emph{satisfiability} of $\tc^0$ (which would be a stronger result) was recently suggested as a central open problem in complexity theory both by Williams~\cite{wil14} and by Aaronson~\cite{aar17}.~\footnote{See the first open problem in the Conclusions section in~\cite{aar17}, and Section 4.2 in~\cite{wil14}.}

An intensive recent effort has been devoted to constructing deterministic algorithms for satisfiability $\tc^0$. Such algorithms (with non-trivial running time) have been constructed for $\tc^0$ circuits of depth two, and for certain ``structured subclasses'' of $\tc^0$ (see~\cite{ips13,wil14b,as15,sstt16,tam16}). However, much less is known about \emph{derandomization} algorithms for $\tc^0$. Following an intensive effort to construct pseudorandom generators for a single \emph{linear threshold function}~\cite{dgjsv10,rs10,gowz10,krs12,mz13,kan11,kan14,km15,gkm15} (i.e., a single ``gate''; for background see Sections~\ref{sec:bg:dernd} and~\ref{sec:pre:ltf}), a first step towards derandomizing $\tc^0$ \emph{circuits} was very recently undertaken by Servedio and Tan~\cite{st17}, who considered the problem of derandomizing $\tc^0$ circuits of \emph{depth two}.~\footnote{Their manuscript is still unpublished, and so we describe their results in Section~\ref{sec:bg:dernd}.}

In this work we take a significant additional step towards the derandomization of $\tc^0$, by proving the \emph{first positive results} regarding the derandomization of $\tc^0$ circuits of \emph{any constant depth} $d\ge2$. Loosely speaking, we first construct an algorithm for a ``relaxed'' type of derandomization problem of sparse $\tc^0$ circuits of any constant depth $d\ge2$. As far as we are aware of, this is the first deterministic circuit-analysis algorithm for $\tc^0$ circuits of any constant depth that do not admit any special structure (other than being sparse). Then, we show that even a modest improvement in the parameters of the foregoing algorithm (for the ``relaxed'' problem) would yield a non-trivial algorithm for \emph{standard} derandomization of \emph{all of $\tc^0$}; indeed, as mentioned above, such a result would imply that $\nexp\not\subseteq\tc^0$. We thus suggest this approach (of the ``relaxed'' derandomization problem) as a potentially tractable line-of-attack towards proving $\nexp\not\subseteq\tc^0$ (see Section~\ref{sec:int:open}).

\subsection{Our results} \label{sec:int:res}

Our two main results lie within the framework of \emph{quantified derandomization}. Quantified derandomization, which was introduced by Goldreich and Wigderson~\cite{gw14}, is the relaxed derandomization problem of distinguishing between a circuit that accepts $1-o(1)$ of its inputs and a circuit that rejects $1-o(1)$ of its inputs (where the $1-o(1)$ term replaces the original $2/3$ term in standard derandomization). 

On the one hand, this relaxation potentially allows to construct more efficient derandomization algorithms. But on the other hand, the standard derandomization problem can be \emph{reduced to quantified derandomization}, by applying strong error-reduction within the relevant circuit class (such that a circuit with acceptance probability $2/3$ is transformed to a circuit with acceptance probability $1-o(1)$). Of course, a main goal underlying this approach is to reduce standard derandomization to a parameter setting for which we are able to construct a corresponding algorithm for quantified derandomization.

\subsubsection{A quantified derandomization algorithm} \label{sec:int:main}

Our first result is a \emph{quantified derandomization algorithm} for $\tc^0$ circuits with a slightly super-linear number of wires. 
In fact, our algorithm works not only for $\tc^0$, but also for the class of {\sf linear threshold circuits}: While in $\tc^0$ circuits each gate computes the majority function, in linear threshold circuits each gate computes a linear threshold function (i.e., a function of the form $g(x)=\sign\left(\sum_{i\in[n]}w_i\cdot x_i-\theta\right)$, for $w\in\R^n$ and $\theta\in\R$; see Section~\ref{sec:pre:ltf} for definitions). Towards stating this first result, denote by $\mathcal{C}_{n,d,w}$ the class of linear threshold circuits over $n$ input bits of depth $d$ and with at most $w$ wires. 

\begin{theorem} (quantified derandomization of linear threshold circuits). \label{thm:int:main}
There exists a deterministic algorithm that, when given as input a circuit $C\in\mathcal{C}_{n,d,n^{1+2^{-10d}}}$, runs in time $n^{O(\log\log(n))^2}$, and satisfies the following: 
\begin{enumerate}
	\item \label{it:int:main:acc} If $C$ accepts all but at most $B(n)=2^{n^{1-1/5d}}$ of its inputs, then the algorithm accepts $C$.
	\item \label{it:int:main:rej} If $C$ rejects all but at most $B(n)=2^{n^{1-1/5d}}$ of its inputs, then the algorithm rejects $C$.
\end{enumerate}
\end{theorem}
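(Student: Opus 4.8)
The plan is to implement the ``restriction + structural simplification'' paradigm that has been successful for $\mathcal{AC}^0$ and for depth-two threshold circuits, adapted to the super-linear wire regime. The starting observation is that a circuit $C\in\mathcal{C}_{n,d,n^{1+2^{-10d}}}$ has very few wires \emph{above the bottom layer}: after we feed $C$ a random restriction that leaves roughly $n^{1-\gamma}$ variables alive (for a suitable small $\gamma=\gamma(d)$), the expected number of surviving wires entering any fixed gate drops proportionally, so by a counting/Markov argument we can find a restriction under which every gate that is not at the bottom layer has only a \emph{mild} fan-in, and the bottom-layer gates are linear threshold functions on the few live variables. The key technical input I would invoke here is a structural lemma about a single linear threshold function (from the line of work \cite{dgjsv10,gopalan,kane} cited in the excerpt): a linear threshold gate, after a random restriction of the appropriate rate, is with high probability \emph{close to a constant} or can be replaced by a junta / a very biased function, so that the restricted bottom layer collapses and the whole circuit shrinks in depth by one. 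Iterating this $d$ times (each time paying a polynomial factor in the seed length / running time and shrinking the live variable set by a polynomial factor) reduces $C$ to a depth-$0$ object, i.e.\ essentially a constant, on a set of variables of size $n^{1-\Theta(1/d)}$ or so; getting the exponent to land at $1-1/5d$ is the bookkeeping that the theorem's constants are tuned for.

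Concretely, the steps in order: (i) fix notation for random restrictions with live-probability $p$, and prove a ``wire-savings'' lemma showing that with good probability a $p$-restriction of $C$ has all non-bottom gates of fan-in $\le \poly\log n$ (this is just linearity of expectation over the $n^{1+2^{-10d}}$ wires plus a union bound, which is why the exponent $2^{-10d}$ on the wires has to be so small relative to $1/5d$); (ii) prove the bottom-layer simplification lemma: under a $p$-restriction, each bottom linear threshold gate is, except with probability $n^{-\omega(1)}$, either ``$(B')$-biased'' (wrong on at most $2^{m^{1-c}}$ of the remaining $m$ inputs) or becomes a function of $\poly\log n$ variables; (iii) combine (i) and (ii) so that after one restriction phase $C$ is, on all but $B$-many inputs of the restricted cube, equal to a depth-$(d-1)$ linear threshold circuit with the same wire bound on a variable set of size $m=n^{1-\Theta(1/d)}$ — here one must argue that the ``bad'' inputs (where a simplified gate disagrees with the original) contribute at most $B(n)=2^{n^{1-1/5d}}$ to the exceptional set, using that the number of gates is at most the number of wires; (iv) recurse: apply phases $2,\dots,d$, at each step re-invoking (i) and (ii), tracking the exceptional-input budget so that the total stays below $B(n)$; (v) after $d$ phases the circuit is a constant on a subcube of dimension $\ge 1$, so the algorithm outputs that constant, and correctness follows because in case (\ref{it:int:main:acc}) [resp. (\ref{it:int:main:rej})] the number of bad inputs plus the at-most-$B(n)$ originally-rejected [resp.\ accepted] inputs is still less than the size of the surviving subcube, forcing the surviving constant to be $1$ [resp.\ $0$]. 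The derandomization itself comes from the fact that the choice of ``good'' restriction at each phase can be made by enumerating over a small space — either a pseudorandom / $k$-wise-independent family of restrictions, or an explicit deterministic greedy search — which is where the $n^{O(\log\log n)^2}$ running time comes from (roughly $d$ nested enumerations, each of quasi-polynomial size).

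I expect the main obstacle to be step (ii), the bottom-layer simplification, together with its quantitative interface to step (iii)–(iv). The issue is that a generic linear threshold function need \emph{not} simplify under a random restriction — e.g.\ a ``balanced'' majority on the live variables stays a majority — so the lemma cannot be ``every LTF becomes a junta.'' The right statement is a \emph{dichotomy}: either the LTF already has small enough ``critical index'' / is dominated by a few heavy weights (in which case restricting those few variables makes it a junta), or its weights are ``spread out,'' in which case after restriction it concentrates and is $B'$-close to a constant by an anti-concentration / Berry–Esseen argument. Making this dichotomy effective (so that the good restriction is found deterministically in quasi-polynomial time) and making the closeness parameter $B'$ small enough that the union over all $\le n^{1+o(1)}$ gates and over all $d$ phases keeps the cumulative exceptional set below $2^{n^{1-1/5d}}$ is the delicate part; this is precisely why the theorem's wire bound ($1+2^{-10d}$) and error bound ($1-1/5d$) are so tightly and non-obviously calibrated, and why the exponents degrade doubly-exponentially in $d$. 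Everything else — the wire-counting in step (i), the recursion bookkeeping in step (iv), and the final constant-output argument in step (v) — I expect to be routine once the restriction machinery and the effective dichotomy are set up.
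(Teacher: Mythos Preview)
Your high-level plan --- iterative pseudorandom restrictions that collapse the bottom layer, reducing depth by one each phase --- is the paper's, and you correctly flag the LTF simplification (your step (ii)) as the crux. But there are two real gaps.

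First, step (i) as stated is confused: a restriction acts on input \emph{variables}, so it does nothing directly to the fan-in of non-bottom gates (their inputs are other gates). All the work happens at the bottom layer. The paper's mechanism there is not quite your junta/constant dichotomy: it proves that under the pseudorandom restriction an LTF becomes $t$-imbalanced (hence $\exp(-t^2)$-close to a constant) with high probability; the minority of gates that stay balanced are handled by a wire-counting argument (their total post-restriction fan-in is $o(pn)$, so one can afford to fix all variables feeding them), and gates of small initial fan-in are handled separately by a graph-theoretic independent-set argument.

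Second, and more substantively, your ``exceptional-input budget'' accounting in (iii)--(iv) hides the main difficulty. When at phase $i$ you replace a biased gate $\Phi$ by a constant $\sigma$, you know $\Phi$ agrees with $\sigma$ on all but an $\exp(-n_i^{\Omega(1)})$ fraction of the \emph{current} $n_i$-dimensional subcube. But the final subcube has dimension only $n^{1-\de}\ll n_i$, and the disagreement points could in principle cover it entirely; closeness on a large cube does not survive restriction to a much smaller one for free. The paper addresses this with a separate ``bias preservation lemma'': if the \emph{values} assigned to fixed variables are drawn from a distribution that is $(1/\poly(n))$-pseudorandom for LTFs (not merely $k$-wise independent), then an LTF that is $\de$-close to a constant stays $\de'$-close after each subsequent restriction, with high probability. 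The proof uses a randomized-test argument (a random test that samples points in the subcube and checks $\Phi=\sigma$ is itself a conjunction of LTFs, so the LTF-PRG fools it). This is the second main technical ingredient, and $k$-wise independence or a naive greedy search does not obviously deliver it. It is also where the running time comes from: the $n^{O(\log\log n)^2}$ is the cost of enumerating seeds of the Gopalan--Kane--Meka LTF-PRG (seed length $O(\log n\cdot(\log\log n)^2)$), used both for the restriction values and, at the end, to estimate the acceptance probability of the single surviving LTF --- the paper stops at depth $1$ and uses the PRG, rather than going all the way to a constant.
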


Observe that as $d$ grows larger, the algorithm in Theorem~\ref{thm:int:main} solves a more difficult derandomization task (since $B(n)$ is larger), but only has to handle circuits with fewer wires (i.e., $n^{1+\exp(-d)}$). Also note that the algorithm in Theorem~\ref{thm:int:main} is ``\emph{whitebox}'': That is, the algorithm gets as input an \emph{explicit description} of a \emph{specific} linear threshold circuit $C$, and uses this description when estimating the acceptance probability of $C$.~\footnote{The algorithm in Theorem~\ref{thm:int:main} works in any reasonable model of explicitly representing linear threshold circuits; see Section~\ref{sec:pre:ltf} for a brief discussion.} The actual algorithm that we construct works for a more general parameter regime, which exhibits a trade-off between the number $B(n)=2^{n^{1-\de}}$ of exceptional inputs for $C$ and the number $n^{1+\de\cdot\exp(-d)}$ of wires of $C$ (see Theorem~\ref{thm:main} for a precise statement).

The limitation on the number of wires of $C$ in Theorem~\ref{thm:int:main} (i.e., $n^{1+\exp(-d)}$) essentially \emph{matches the best-known lower bounds for linear threshold circuits}. This is no coincidence: Our algorithm construction follows a common theme in the design of circuit-analysis algorithms (e.g., derandomization algorithms or algorithms for satisfiability), which is the conversion of techniques that underlie lower bound proofs into algorithmic techniques. In this case, we observe that certain proof techniques for \emph{correlation bounds} for a circuit class $\mathcal{C}$ can be used to obtain algorithmic techniques for \emph{quantified derandomization} of $\mathcal{C}$. In particular, to construct the algorithm in Theorem~\ref{thm:int:main}, we leverage the techniques underlying the recent proof of Chen, Santhanam, and Srinivasan~\cite{css16} of correlation bounds for linear threshold circuits. A high-level description of our algorithm appears in Section~\ref{sec:tech:alg}.

\subsubsection{A reduction of standard derandomization to quantified derandomization}

Our second result reduces the \emph{standard} derandomization problem of $\tc^0$ to the \emph{quantified} derandomization problem of $\tc^0$ circuits with a \emph{super-linear number} of wires. In fact, we show that even a modest improvement of Theorem~\ref{thm:int:main} would yield a non-trivial algorithm for \emph{standard} derandomization of \emph{all of $\tc^0$}.

\begin{theorem} (a reduction of standard derandomization to quantified derandomization). \label{thm:int:threshold}
Assume that there exists a deterministic algorithm that, when given as input a circuit $C\in\mathcal{C}_{n,d,n^{1+O(1/d)}}$, runs in time at most $T(n)=2^{n^{1/4^{d}}}$, and for the parameter $B(n)=2^{n^{1-1/5d}}$ satisfies the following: If $C$ accepts all but at most $B(n)$ of its inputs then the algorithm accepts $C$, and if $C$ rejects all but at most $B(n)$ of its inputs then the algorithm rejects $C$.

Then, there exists an algorithm that for every $k\in\N$ and $d\in\N$, when given as input a circuit $C\in\mathcal{C}_{m,d,m^k}$, runs in time $2^{m^{1-\Omega(1)}}$, and satisfies the following: If $C$ accepts at least $2/3$ of its inputs then the algorithm accepts $C$, and if $C$ rejects at least $2/3$ of its inputs then the algorithm rejects $C$.
\end{theorem}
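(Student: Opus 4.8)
The plan is to reduce standard derandomization of an arbitrary $\tc^0$ circuit $C\in\mathcal{C}_{m,d,m^k}$ to the hypothesized quantified-derandomization algorithm by performing \emph{strong error-reduction inside $\tc^0$}. The idea underlying quantified derandomization (as emphasized in the introduction) is that if we can boost the gap between acceptance and rejection from $2/3$ versus $1/3$ to $1-o(1)$ versus $o(1)$ — where the $o(1)$ matches the $2^{-n^{1-1/5d}}$ ``exceptional inputs'' bound — then the hypothesized algorithm does the rest. Concretely, given $C$ on $m$ bits, I would build a new circuit $C'$ on $n\gg m$ bits that takes a (pseudo)random sample of many evaluation points of $C$ inside $\{0,1\}^m$, runs $C$ on each, and outputs the majority vote. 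If $C$ accepts $\ge 2/3$ of its inputs, then by a Chernoff bound $C'$ accepts all but a $2^{-\Omega(\text{\#samples})}$ fraction of its random-coin strings, and symmetrically for rejection; choosing the number of samples to be roughly $n^{1-1/5d}$ makes the exceptional fraction at most $B(n)=2^{-n^{1-1/5d}}$ as required.

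The first technical point is to implement this amplification so that $C'$ is still a linear threshold (indeed $\tc^0$) circuit of depth $d'=d+O(1)$ with only $n^{1+O(1/d')}$ wires. Here I would not sample the $m$-bit points independently — that would cost $\Theta(m)$ wires per sample and blow up the wire count — but instead use a \emph{sampler} (e.g.\ one obtained from a constant-degree expander walk, or a Reingold–Vadhan–Wigderson-style construction) that produces $t\approx n^{1-1/5d}$ sample points from a seed of length $O(m + t)$; reading off each sample point and feeding it into a copy of $C$ costs $O(m^k)$ wires, and the final majority over the $t$ copies adds only $O(t)$ more wires, so the total is $O(t\cdot m^k)$ wires over $n=\Theta(t)$ input bits. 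Since $m^k$ is a fixed polynomial in $m$ while $n$ is a large polynomial in $m$ (we are free to choose $n$ as large as we like, say $n=m^c$ for a sufficiently large constant $c=c(k,d)$), the wire count $O(n\cdot m^k)=O(n\cdot n^{k/c})=n^{1+O(1/c)}$ can be pushed below $n^{1+O(1/d')}$ by taking $c$ large relative to $d'$. One must be slightly careful that the sampler itself, together with the bit-extraction gadgets, is computable by a $\tc^0$ (or linear-threshold) circuit of bounded depth and near-linear wires; expander-walk samplers have this property since a single step is a fixed-size gadget.

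Having built $C'\in\mathcal{C}_{n,d',n^{1+O(1/d')}}$, I would run the hypothesized algorithm on it. By construction $C'$ is either in the ``accepts all but $B(n)$ inputs'' case or in the ``rejects all but $B(n)$ inputs'' case (according to whether the original $C$ is in the $\ge 2/3$-accept or $\ge 2/3$-reject case), so the algorithm's output correctly determines which side $C$ is on. The running time is $T(n)=2^{n^{1/4^{d'}}}$ plus the $\poly(n)$ time to write down $C'$; with $n=m^c$ and $d'=d+O(1)$ this is $2^{m^{c/4^{d'}}}$, which is $2^{m^{1-\Omega(1)}}$ provided $c$ is chosen so that $c/4^{d'} \le 1-\Omega(1)$ — i.e.\ $c$ should be a constant strictly less than $4^{d'}$, and we simultaneously needed $c$ large enough (relative to $d'$) for the wire bound. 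The compatibility of these two requirements on $c$ is the crux: we need $c$ large enough that $k/c$ is a small-enough multiple of $1/d'$ yet $c<4^{d'}$. Since $4^{d'}$ grows much faster in $d'$ than the linear-in-$d'$ requirement from the wire bound, such a $c=c(k,d)$ exists, and this bookkeeping of the three exponents ($B(n)$'s $1-1/5d$, the wire exponent $1+O(1/d)$, and the time exponent $1/4^{d}$) against each other is the main obstacle — everything else is standard sampler and Chernoff machinery. Finally, $d'$ and $c$ depend on $k$ and $d$, but for each fixed $(k,d)$ we get a single algorithm with the claimed $2^{m^{1-\Omega(1)}}$ running time, as required.
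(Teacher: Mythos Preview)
Your high-level plan---error-reduction inside $\tc^0$ via sampling and majority vote, then invoking the hypothesized quantified-derandomization algorithm---is exactly the paper's strategy. But the instantiation has a real gap.

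First, you have misread $B(n)$: the hypothesis requires that the \emph{number} of exceptional inputs be at most $B(n)=2^{n^{1-1/5d}}$, so the exceptional \emph{fraction} must be at most $2^{-(n-n^{1-1/5d})}\approx 2^{-n}$, not $2^{-n^{1-1/5d}}$. Consequently, ``$t\approx n^{1-1/5d}$ samples and Chernoff'' is nowhere near enough: a Chernoff bound with $t$ samples gives failure probability $2^{-\Omega(t)}$, so you would need $t=\Omega(n)$, not $t\approx n^{1-1/5d}$. Worse, an expander-walk sampler cannot reach these parameters at all: a length-$\ell$ walk on a degree-$D$ expander over $\{0,1\}^m$ uses $n=m+\ell\log D$ random bits and has failure probability $2^{-c_0\ell}$ with $c_0<\log D$ (for any constant accuracy $\e$); hence the number of bad sources is $2^{n-c_0\ell}=2^{m+\ell(\log D-c_0)}$, which is $\ge 2^{\Omega(n)}\gg 2^{n^{1-1/5d}}$ regardless of $\ell$. (This also means your equations $n=\Theta(t)$ and $n=m^c$ for large $c$ are mutually inconsistent with ``seed length $O(m+t)$''.)

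What is actually needed is an averaging sampler with error $\delta\le 2^{-(n-n^{1-1/5d})}$---equivalently, a seeded extractor for min-entropy $k=n^{1-1/5d}$---and moreover one whose outputs on \emph{all} seeds can be computed by a $\tc^0$ circuit of depth $O(d)$ with only $n^{1+O(1/d)}$ wires. Building such an extractor is not ``standard sampler and Chernoff machinery''; it is the entire technical content of the proof. The paper uses Trevisan's extractor, and the work goes into (i) constructing an $\e$-balanced error-correcting code of rate $\poly(\e)$ computable by a $\tc^0$ circuit with $n^{1+O(1/d)}$ wires (via high-order tensor codes followed by expander-walk distance amplification), and (ii) constructing weak combinatorial designs with universe size $\approx 1.01\cdot\log(\bar n)$ so that the seed length---and hence the number $2^t$ of copies of $C$---stays at $n^{1+O(1/d)}$. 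Once that extractor exists, the majority-of-copies construction and the bookkeeping of $n=m^{c}$ versus $T(n)=2^{n^{1/4^{d'}}}$ go through essentially as you wrote.
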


The gap between the algorithm constructed in Theorem~\ref{thm:int:main} and the algorithm assumed in the hypothesis of Theorem~\ref{thm:int:threshold} is quantitatively very small: Specifically, the algorithm in Theorem~\ref{thm:int:main} works when the number of wires in the input circuit $C$ is $n^{1+\exp(-d)}$, whereas the algorithm in the hypothesis of Theorem~\ref{thm:int:threshold} is required to work when the number of wires is $n^{1+O(1/d)}$. Moreover, Theorem~\ref{thm:int:threshold} holds even if this improvement (in the number of wires) comes at the expense of a longer running time; specifically, the conclusion of Theorem~\ref{thm:int:threshold} holds even if the algorithm runs in (sufficiently small) sub-exponential time.

As mentioned in the beginning of Section~\ref{sec:int}, a non-trivial derandomization of $\tc^0$ implies lower bounds for this class. Specifically, combining Theorem~\ref{thm:int:threshold} with~\cite[Thm 1.5]{sw13} (see also~\cite{bv14}), we obtain the following corollary:

\begin{corollary} (quantified derandomization implies lower bounds for $\tc^0$). \label{cor:int:threshold}
Assume that there exists a deterministic algorithm as in the hypothesis of Theorem~\ref{thm:int:threshold}. Then, $\nexp\not\subseteq\tc^0$.
\end{corollary}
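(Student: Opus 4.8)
The plan is to observe that Corollary~\ref{cor:int:threshold} is a one-line composition: chain Theorem~\ref{thm:int:threshold} with the known implication ``non-trivial derandomization of a circuit class $\Rightarrow$ $\nexp$ lower bound for that class''. First I would feed the quantified-derandomization algorithm assumed in the corollary's hypothesis into Theorem~\ref{thm:int:threshold}. Its conclusion then gives a deterministic algorithm that, for every $k,d\in\N$ and every input circuit $C\in\mathcal{C}_{m,d,m^k}$, runs in time $2^{m^{1-\Omega(1)}}$ and distinguishes the case that $C$ accepts at least $2/3$ of its inputs from the case that $C$ rejects at least $2/3$ of its inputs. Since every depth-$d$ polynomial-size majority circuit is in particular a linear threshold circuit in $\mathcal{C}_{m,d,m^k}$ for a suitable constant $k=k(d)$, this is exactly a (white-box) standard derandomization algorithm for $\tc^0$, uniformly across all constant depths and all polynomial size bounds.

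Next I would invoke~\cite[Thm 1.5]{sw13} (see also~\cite{bv14}), whose hypothesis is precisely the existence of such a non-trivial derandomization algorithm for $\tc^0$: the running time $2^{m^{1-\Omega(1)}}$ is comfortably below the $2^m/m^{\omega(1)}$ threshold required there — indeed it is sub-exponential — and the algorithm is available for every polynomial size $m^k$ and every constant depth $d$, which is the uniformity over parameters that the theorem needs. Its conclusion is exactly $\nexp\not\subseteq\tc^0$, which is the statement of the corollary. The reason this direction yields a $\nexp$ (rather than merely an $\mathcal{E}^{\np}$) lower bound is that the Williams-style framework of~\cite{wil13,sw13,bv14}, building on~\cite{iw98,ikw02}, consumes a two-sided CAPP-type distinguisher of exactly the kind produced by Theorem~\ref{thm:int:threshold}.

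I do not expect a genuine obstacle here; the remaining points are bookkeeping. One should check that $\tc^0$ meets the mild structural requirements of~\cite[Thm 1.5]{sw13} — it is closed under composition, contains $\ac^0$, and admits the standard uniform sub-circuit constructions used in the argument — and that the interface matches, i.e. that the worst-case two-sided promise-distinguisher between acceptance probability $\ge 2/3$ and $\le 1/3$ delivered by Theorem~\ref{thm:int:threshold} is exactly the primitive assumed by~\cite[Thm 1.5]{sw13}. Both hold immediately. All the actual content has already been spent in proving Theorem~\ref{thm:int:threshold}, so the corollary is essentially a restatement of its conclusion through the known circuit-analysis-to-lower-bounds connection.
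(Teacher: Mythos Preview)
Your proposal is correct and matches the paper's own argument exactly: the paper simply states that the corollary follows by combining Theorem~\ref{thm:int:threshold} with~\cite[Thm~1.5]{sw13} (see also~\cite{bv14}), which is precisely the two-step chain you describe.
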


The result that we actually prove is stronger and more general than the one stated in Theorem~\ref{thm:int:threshold} (see Theorem~\ref{thm:threshold}). First, the result holds even if we limit ourselves only to the class $\tc^0$, rather than to the class of linear threshold circuits (i.e., if we interpret the class $\mathcal{C}_{n,d,w}$ as the class of $\tc^0$ circuits over $n$ inputs of depth $d$ and with $w$ wires). And secondly, the hypothesis of the theorem can be modified via a trade-off between the number of exceptional inputs for the circuit $C$ and the number of wires in $C$.

The proof of Theorem~\ref{thm:int:threshold} is based on developing a very efficient method for error-reduction within sparse $\tc^0$. Specifically, we construct a seeded extractor such that there exists a $\tc^0$ circuit that gets input $x\in\bitset^n$ and computes the outputs of the extractor on $x$ and on all seeds using only a super-linear number of wires (i.e., a circuit of depth $d$ uses $n^{1+O(1/d)}$ wires);
as far as we know, this is the first construction of a seeded extractor that is specific to $\tc^0$. This construction extends the study of randomness extraction in weak computational models, which has so far focused on $\ac^0$, on $\ac^0[\xor]$, and on streaming algorithms~\cite{brst02,vio05,hea08,gvw15,cl16}. The construction is described in high-level in Section~\ref{sec:tech:ext}, and a precise statement appears in Proposition~\ref{prop:threshold:sampler}.

\subsubsection{Restrictions for sparse $\tc^0$ circuits: A potential path towards $\nexp\not\subseteq\tc^0$} \label{sec:int:open}

Recall that the best-known lower bounds for $\tc^0$ circuits of arbitrary constant depth $d$ are for circuits with $n^{1+\exp(-d)}$ wires. Our results imply that a certain type of analysis of $\tc^0$ circuits \emph{with only $n^{1+O(1/d)}$ wires}, which is common when proving correlation bounds (i.e., average-case lower bounds), might suffice to deduce a lower bound for \emph{all of $\tc^0$}. 

Specifically, a common technique to prove correlation bounds for a circuit $C$ is the ``restriction method'', which (loosely speaking) consists of proving the existence of certain subsets of the domain on which $C$ ``simplifies'' (i.e., $C$ agrees with a simpler function on the subset). We pose the following open problem: Construct a deterministic algorithm that gets as input a $\tc^0$ circuit $C$ with $n^{1+O(1/d)}$ wires, runs in sufficiently small sub-exponential time, and finds a subset $S$ of size larger than $2^{n^{1-1/5d}}$ such that the acceptance probability of $C\rest_S$ can be approximated in sufficiently small sub-exponential time (see Open Problem~\ref{prob:rest} in Section~\ref{sec:open} for a precise statement). In Section~\ref{sec:open} we show that a resolution of the foregoing problem would imply that $\nexp\not\subseteq\tc^0$; this follows from Theorem~\ref{thm:int:threshold} and from the techniques that underlie the proof of Theorem~\ref{thm:int:main}.

\subsubsection{The special case of depth-2 circuits} \label{sec:int:depthtwo}

In addition to our main results, we also construct an alternative quantified derandomization algorithm for the special case of linear threshold circuits of \emph{depth two}. Specifically, we construct a pseudorandom generator with seed length $\tilde{O}(\log(n))$ for the class of depth-2 linear threshold circuits with $n^{3/2-\Omega(1)}$ wires that either accept all but $B(n)=2^{n^{\Omega(1)}}$ of their inputs or reject all but $B(n)$ of their inputs. This result is not a corollary of Theorem~\ref{thm:int:main}, and is incomparable to the pseudorandom generator of Servedio and Tan~\cite{st17}. 

The precise result statement and proof appear in Section~\ref{sec:depthtwo}. The generator construction is obtained by leveraging the techniques of Kane and Williams~\cite{kw16} for correlation bounds for linear threshold circuits of depth two.

\subsection{Organization}

In Section~\ref{sec:bg} we provide background and discuss some relevant previous works. In Section~\ref{sec:tech} we give high-level overviews of the proofs of Theorems~\ref{thm:int:main} and~\ref{thm:int:threshold}. After presenting preliminary formal definitions in Section~\ref{sec:pre}, we prove Theorem~\ref{thm:int:main} in Section~\ref{sec:main} and Theorem~\ref{thm:int:threshold} in Section~\ref{sec:threshold}. In Section~\ref{sec:depthtwo} we construct the pseudorandom generator mentioned in Section~\ref{sec:int:depthtwo}. Finally, in Section~\ref{sec:open} we formally pose the open problem that was mentioned in Section~\ref{sec:int:open} and show the consequences of a solution to the problem.

\section{Background and previous work} \label{sec:bg}
\addtocontents{toc}{\protect\setcounter{tocdepth}{1}}

\subsection{Lower bounds for linear threshold circuits} \label{sec:bg:lbs}

The best-known lower bounds for computing explicit functions by linear threshold circuits of a \emph{fixed small depth} have been recently proved by Kane and Williams~\cite{kw16}. Specifically, they showed that any depth-two linear threshold circuit computing Andreev's function requires $\tilde{\Omega}(n^{3/2})$ gates and $\tilde{\Omega}(n^{5/2})$ wires. They also showed correlation bounds (i.e,. average-case lower bounds with respect to the uniform distribution) for such circuits with Andreev's function. Extending their worst-case lower bounds to depth three, they proved that any depth-$3$ circuit with a \emph{top majority gate} that computes a specific polynomial-time computable function also requires $\tilde{\Omega}(n^{3/2})$ gates and $\tilde{\Omega}(n^{5/2})$ wires (the ``hard'' function is a modification of Andreev's function).

For linear threshold circuits of arbitrary constant depth $d\ge2$, the best-known lower bounds on the number of wires required to compute explicit functions are only slightly super-linear. Specifically, Impagliazzo, Paturi, and Saks~\cite{ips97} proved that any linear threshold circuit of depth $d$ requires at least $n^{1+\exp(-d)}$ wires to compute the parity function; Chen, Santhanam, and Srinivasan~\cite{css16} strengthened this by showing correlation bounds for such circuits with parity (as well as with the generalized Andreev function). These lower bounds for parity are essentially tight, since Beame, Brisson, and Ladner~\cite{bbl92} (and later~\cite{ps94}) constructed a linear threshold circuit with $n^{1+\exp(-d)}$ wires that computes parity. We also mention that linear lower bounds on the number of linear threshold \emph{gates} required to compute explicit functions (e.g., the inner-product function) have been proved in several works during the early `90s, and these gate lower bounds apply even for circuits of unrestricted depth (see~\cite{smo90,gt91,ros94,nis93}).

\subsection{Derandomization of LTFs and of functions of LTFs} \label{sec:bg:dernd}

There has been an intensive effort in the last decade to construct pseudorandom generators for a \emph{single linear threshold function}. This problem was first considered by Diakonikolas \etal~\cite{dgjsv10} (see also~\cite{rs10}), and the current state-of-the-art, following~\cite{gowz10,kan11,krs12,mz13,kan14,km15}, is the pseudorandom generator of Gopalan, Kane, and Meka~\cite{gkm15}, which $\e$-fools any LTF with $n$ input bits using a seed of length $\tilde{O}(\log(n/\e))$. Harsha, Klivans, and Meka~\cite{hkm12} considered a \emph{conjunction} of linear threshold functions, and constructed a pseudorandom generator for a subclass of such functions (i.e., for a conjunction of \emph{regular} LTFs; see Section~\ref{sec:pre:ltf} for a definition). Gopalan \etal~\cite{gowz10} constructed pseudorandom generators for small decision trees in which the leaves are linear threshold functions. 

Very recently, Servedio and Tan~\cite{st17} considered the problem of derandomizing \emph{linear threshold circuits}. For every $\e>0$, they constructed a pseudorandom generator that $1/\poly(n)$-fools any \emph{depth-2 linear threshold circuit with at most $n^{2-\e}$ wires,} using a seed of length $n^{1-\de}$, where $\de=\de_\e>0$ is a small constant that depends on $\e$. This yields a derandomization of depth-2 linear threshold circuits with $n^{2-\e}$ wires in time $2^{n^{1-\Omega(1)}}$.

\subsection{Quantified derandomization} \label{sec:bg:quant}

The quantified derandomization problem, which was introduced by Goldreich and Wigderson~\cite{gw14}, is a generalization of the standard derandomization problem. For a circuit class $\mathcal{C}$ and a parameter $B=B(n)$, the $(\mathcal{C},B)$-derandomization problem is the following: Given a description of a circuit $C\in\mathcal{C}$ over $n$ input bits, deterministically distinguish between the case that $C$ accepts all but $B(n)$ of its inputs and the case that $C$ rejects all but $B(n)$ of its inputs. Indeed, the standard derandomization problem is represented by the parameter value $B(n)=\frac{1}{3}\cdot2^n$. Similarly to standard derandomization, a solution for the quantified derandomization problem of a class $\mathcal{C}$ via a ``black-box'' algorithm (e.g., via a pseudorandom generator) yields a corresponding lower bound for $\mathcal{C}$ (see Appendix~\ref{app:quant:lb}). 

Prior to this work, quantified derandomization algorithms have been constructed for $\ac^0$, for subclasses of $\ac^0[\xor]$, for polynomials over $\F_2$ that vanish rarely, and for a subclass of $\mathcal{MA}$. On the other hand, reductions of standard derandomization to quantified derandomization are known for  $\ac^0$, for $\ac^0[\xor]$, for polynomials over large finite fields, and for the class $\mathcal{AM}$ (both the algorithms and the reductions appear in~\cite{gw14,tell17}). In some cases, most notably for $\ac^0$, the parameters of the known quantified derandomization algorithms are very close to the parameters of quantified derandomization to which standard derandomization can be reduced (see~\cite[Thms 1 \& 2]{tell17}).

\section{Overviews of the proofs} \label{sec:tech}
\addtocontents{toc}{\protect\setcounter{tocdepth}{2}}

\subsection{A quantified derandomization algorithm for linear threshold circuits} \label{sec:tech:alg}

The high-level strategy of the quantified derandomization algorithm is as follows. Given a circuit $C:\pmset^n\ra\pmset$, the algorithm deterministically finds a set $S\subseteq\pmset^n$ of size $|S|\gg B(n)$ on which the circuit $C$ simplifies; that is, $C$ agrees with a function from some ``simple'' class of functions on almost all points in $S$. If $C$ accepts all but $B(n)$ of its inputs, then the acceptance probability of $C\rest_S$ will be very high, and similarly, if $C$ rejects all but $B(n)$ of its inputs, then the acceptance probability of $C\rest_S$ will be very low. The algorithm then distinguishes between the two cases, by enumerating the seeds of a pseudorandom generator for the ``simple'' class of functions.

Our starting point in order to construct a deterministic algorithm that finds a suitable set $S$ is the recent proof of correlation bounds for sparse linear threshold circuits by Chen, Santhanam, and Srinivasan~\cite{css16}. Their proof is based on a \emph{randomized} ``whitebox'' algorithm that gets as input a linear threshold circuit with depth $d$ and $n^{1+\e}$ wires, and restricts all but $n^{1-\e\cdot\exp(d)}$ of the variables such that the restricted circuit can be approximated by a single linear threshold function. Thus, if we are able to modify their algorithm to a deterministic one, we will 
obtain a quantified derandomization algorithm with the parameters asserted in Theorem~\ref{thm:int:main} (i.e., if $\e=\exp(-d)$, then $B(n)\approx|S|/10>2^{n^{1-1/5d}}$).~\footnote{This approach follows the well-known theme of ``leveraging" techniques from lower bound proofs to algorithmic techniques, and in particular to techniques for constructing circuit-analysis algorithms; see, e.g.,~\cite{lmn93,san10,bra10,imz12,st12,imp12,bis12,gmr13,tx13,ckksz15,st17,st17b}. We also mention that in~\cite[Sec. 5]{css16} their randomized restriction algorithm is used to construct a \emph{randomized} algorithm for \emph{satisfiability} of sparse linear threshold circuits.}

Converting the randomized restriction algorithm into a deterministic algorithm poses several challenges, which will be our focus in this overview. Let us first describe the original algorithm, in high-level. The algorithm iteratively reduces the depth of the circuit. In each iteration it applies a random restriction that keeps every variable alive with probability $p=n^{-\Omega(1)}$, and otherwise assigns a random value to the variable. The main structural lemma of~\cite{css16} asserts that such a random restriction turns any LTF to be very biased (i.e., $\exp(-n^{\Omega(1)})$-close to a constant function), with probability $1-n^{-\Omega(1)}$. Hence, after applying the restriction, most gates in the bottom layer of the circuit become very biased, and the fan-in of the rest of the gates in the bottom layer significantly decreases (i.e., we expect it to reduce by a factor of $p=n^{-\Omega(1)}$). The algorithm replaces the very biased gates with the corresponding constants, thereby obtaining a circuit that \emph{approximates} the original circuit (i.e., the two circuits agree on all but $2^{-n^{\Omega(1)}}$ of the inputs); and in~\cite{css16} it is shown that the algorithm can afterwards fix relatively few variables such that the fan-in of each gate that did not become very biased decreases to be at most one (such a gate can be replaced by a variable or a constant). Thus, if the circuit $C_i$ in the beginning of the iteration was of depth $i$, we obtain a circuit $C_{i-1}$ of depth $i-1$ that approximates $C_i$.

One obvious challenge in converting the randomized restriction algorithm into a deterministic algorithm is ``derandomizing'' the main structural lemma; that is, we need to construct a pseudorandom distribution of restrictions that turns any LTF to be very biased, with high probability. The second challenge is more subtle: In each iteration we replace the ``current'' circuit $C_i$ by a circuit $C_{i-1}$ that agrees with $C_i$ on almost all inputs in the subcube of the $n$ living variables (i.e., the circuits disagree on at most $2^{n-n^{\Omega(1)}}$ inputs). However, in subsequent iterations we will fix almost all of these $n$ variables, such that only $n^{1-\Omega(1)}$ variables will remain alive. Thus, we have no guarantee that $C_i$ and $C_{i-1}$ will remain close after additional restrictions in subsequent iterations; in particular, $C_i$ and $C_{i-1}$ might disagree on \emph{all} of the inputs in the subcube of living variables in the end of the entire process. Of course, this is very unlikely to happen when values for fixed variables are chosen uniformly, but we need to construct a \emph{pseudorandom} distribution of restrictions such that the approximation of each $C_i$ by $C_{i-1}$ is likely to be maintained throughout the process.

\subsubsection{Derandomizing the main structural lemma of~\cite{css16}.}
Let $\Phi=(w,\theta)$ be an LTF over $n$ input bits, and consider a random restriction $\rho$ that keeps each variable alive with probability $p=n^{-\Omega(1)}$. Peres' theorem implies that the expected distance of $\Phi\rest_\rho$ from a constant function is approximately $\sqrt{p}$ (see, e.g.,~\cite[Sec. 5.5]{odo14}).~\footnote{Peres' theorem is usually phrased in terms of the noise sensitivity of $\Phi$, but the latter is propotional to its expected bias under a random restriction; for further details see~\cite[Prop. 8]{css16}.} A natural question is whether we can prove a concentration of measure for this distribution. As an illustrative example, consider the majority function $MAJ(x)=\sign(\sum_{i\in[n]}x_i)$; for any $t\ge1$, with probability roughly $1-t\cdot\sqrt{p}$ it holds that $MAJ\rest_\rho$ is $\exp(-t^2)$-close to a constant function (see Fact~\ref{fact:rest:ltf:init}). The main structural lemma in~\cite{css16} asserts that a similar statement indeed holds for \emph{any LTF $\Phi$}; specifically, they showed that with probability at least $1-p^{\Omega(1)}$ it holds that $\Phi\rest_\rho$ is $\exp(-p^{-\Omega(1)})$-close to a constant function. 

We construct a distribution over restrictions that can be efficiently sampled using \linebreak $\tilde{O}(\log(n))$ random bits such that for any LTF $\Phi$ and any $t\ge p^{-1/8}$, with probability at least $1-\tilde{O}(t^{2})\cdot \sqrt{p}$ it holds that $\Phi\rest_\rho$ is $\exp(-t^2)$-close to a constant function. (The actual statement that we prove is more general; see Proposition~\ref{prop:rest:ltf} for precise details.) Indeed, this is both an ``almost-full derandomization'' of the lemma of~\cite{css16} as well as a refinement of the quantitative bound in the lemma.

The original proof of~\cite{css16} relies on a technical case analysis that is reminiscent of other proofs that concern LTFs, and is based on the notion of a \emph{critical index} of a vector $w\in\R^n$ (they refer to the ideas underlying such analyses as ``the structural theory of linear threshold functions''; see, e.g.,~\cite{ser07,dgjsv10}, and Definitions~\ref{def:ltf:reg} and~\ref{def:ltf:crit}). In each case, the main technical tools that are used are concentration and anti-concentration theorems for random weighted sums (i.e., Hoeffding's inequality and the Berry-Ess\'{e}en theorem, respectively), which are used to bound the probability that several specific random weighted sums that are related to the restricted function $\Phi\rest_\rho$ fall in certain intervals.

To derandomize the original proof, an initial useful observation is the following. We say that a distribution ${\bf z}$ over $\pmset^n$ is {\sf $\e$-pseudorandomly concentrated} if for any $w\in\R^n$ and any interval $J\subseteq\R$, the probability that $\ip{w,{\bf z}}$ falls in $J$ is $\e$-close to the probability that $\ip{w,{\bf u}_n}$ falls in $J$ (where ${\bf u}_n$ is the uniform distribution over $\pmset^n$). In particular, the Berry-Ess\'{e}en theorem and Hoeffding's inequality approximately hold for pseudorandom sums $\ip{w,{\bf z}}$ when ${\bf z}$ is pseudorandomly concentrated. The observation is that being $\e$-pseudorandomly concentrated is essentially equivalent to being $\e$-pseudorandom for LTFs (see Claim~\ref{claim:concltfs}).~\footnote{This observation was communicated to us by Rocco Servedio, and is attributed to Li-Yang Tan.} In particular, if a distribution ${\bf z}$ over $\pmset^{n}$ is chosen using the pseudorandom generator of Gopalan, Kane, and Meka~\cite{gkm15} for LTFs, which has seed length $\tilde{O}(\log(n/\e))$, then ${\bf z}$ is $\e$-pseudorandomly concentrated.

The main part in the proof of the derandomized lemma is a (non-trivial) modification of the original case analysis, in order to obtain an analysis in which all claims hold under a suitably-chosen pseudorandom distribution of restrictions. Since this part of the proof is quite technical and low-level, we defer its detailed description to Section~\ref{sec:main:rest:ltf}. However, let us mention that our pseudorandom distribution itself is relatively simple: We first choose the variables to keep alive such that each variable is kept alive with probability approximately $p=n^{-\Omega(1)}$, and the choices are $O(1)$-wise independent; and then we independently choose values for the fixed variables, using the generator of~\cite{gkm15} with error parameter $\e=1/\poly(n)$. We also note that it is suprising that in our setting the case analysis can be modified in order to obtain an ``almost-full derandomization'' (i.e., seed length $\tilde{O}(\log(n))$), since previous derandomizations of similar case analyses regarding LTFs for different settings required much larger seed for error $\e=n^{-\Omega(1)}$ (see~\cite{dgjsv10}).

\subsubsection{Preserving the closeness of the circuit to its approximations.}

Consider some iteration of the restriction algorithm, in which we start with a circuit $C_i$ of depth $i$, and replace it by a circuit $C_{i-1}$ of depth $i-1$ that only \emph{approximates} $C_i$. (In particular, $C_i$ and $C_{i-1}$ disagree on more inputs than the number of inputs in the final subcube of living variables in the end of the entire restriction process.) Recall that $C_{i-1}$ was obtained by replacing very biased gates in $C_i$ with corresponding constants.  

Our goal now is to show how to choose subsequent restrictions such that with high probability $C_i$ and $C_{i-1}$ will remain close even after applying these restrictions. We will in fact choose each restriction $\rho$ such that the following holds: For each gate $\Phi$ that was replaced by a constant $\sigma\in\pmset$, with probability $1-\frac1{\poly(n)}$ over choice of restriction $\rho$ it holds that $\Phi\rest_\rho$ is still $\frac1{\poly(n)}$-close to $\sigma$ (i.e., $\Pr_x[\Phi\rest_\rho(x)\ne\sigma]\le\frac1{\poly(n)}$; the claim that $C_i$ and $C_{i-1}$ remain close with high probability follows by a union-bound on the gates). Specifically, we prove that if an LTF $\Phi$ is, say, $n^{-200}$-close to a constant $\sigma$, and a restriction $\rho$ is chosen such that the distribution of values for the fixed variables is \emph{$n^{-200}$-pseudorandom for LTFs}, then with probability $1-n^{-10}$ it holds that $\Phi\rest_\rho$ is $n^{-10}$-close to $\sigma$ (see Lemma~\ref{lem:bias}).~\footnote{Since each gate is initially $\exp(-n^{\Omega(1)})$-close to a constant, we can afford a constant number of losses in the polynomial power in the ``closeness'' parameter throught the execution of the restriction algorithm.}

A natural approach to prove such a statement is the following. For any fixed choice of a set $I\subseteq[n]$ of variables to keep alive, we want to choose the values for the fixed variables from a \emph{distribution that ``fools'' a test that checks whether or not $\Phi\rest_\rho$ is close to $\sigma$}. That is, consider a test $T:\pmset^{[n]\setminus I}\ra\pmset$ that gets as input values $z\in\pmset^{[n]\setminus I}$ for the fixed variables $[n]\setminus I$, and decides whether or not $\Phi$ remains close to $\sigma$ in the subcube corresponding to $\rho=\rho_{I,z}$. When $z$ is chosen uniformly, with high probability $\Phi\rest_\rho$ remains close to $\sigma$, and hence the acceptance probability of $T$ is high; thus, any distribution over $\pmset^{[n]\setminus I}$ that is pseudorandom for $T$ also yields, with high probability, values $z\in\pmset^{[n]\setminus I}$ such that $\Phi\rest_{\rho_{I,z}}$ remains close to $\sigma$. The \emph{problem with this approach} is that a test $T$ for such a task above might be very inefficient, since it needs to evaluate $\Phi$ on all points in the subcube corresponding to $\rho=\rho_{I,z}$; thus, we might not be able to construct a pseudorandom generator with short seed to ``fool'' such a ``complicated'' test.

To solve this problem, we use the following general technique that was introduced in our previous work~\cite{tell17}, which is called \emph{randomized tests}. Loosely speaking, a lemma from our previous work implies the following: Assume that there exists a \emph{distribution ${\bf T}$ over tests} $\pmset^{[n]\setminus I}\ra\pmset$ such that for every \emph{fixed input $z$} for which $\Phi\rest_{\rho_{I,z}}$ is $n^{-100}$-close to $\sigma$ it holds that ${\bf T}(z)=-1$, with high probability, and for every \emph{fixed input $z$} for which $\Phi\rest_{\rho_{I,z}}$ is not $n^{-10}$-close to $\sigma$ it holds that ${\bf T}(z)=1$, with high probability. That is, the distribution ${\bf T}$ constitutes a ``randomized test'' that distinguishes, with high probability, between ``excellent'' $z$'s (such that $\Phi\rest_{\rho_{I,z}}$ is very close to $\sigma$) and ``bad'' $z$'s (such that $\Phi\rest_{\rho_{I,z}}$ is relatively far from $\sigma$). Also assume that almost all tests $T:\pmset^{[n]\setminus I}\ra\pmset$ in the support of ${\bf T}$ are ``fooled'' by a pseudorandom generator $G$. Then, with high probability over choice of seed for the pseudorandom generator $G$, the generator outputs $z$ such that $\Phi\rest_{\rho_{I,z}}$ is $n^{-10}$-close to $\sigma$ (see Lemma~\ref{lem:randtest} for a precise and general statement). The main point is that the distribution ${\bf T}$, which may have very high entropy, is \emph{only part of the analysis}; the actual algorithm that generates ${\bf z}$ is simply the pseudorandom generator $G$.

The distribution ${\bf T}$ that we will use is equivalent to the following random process: Given $z\in\pmset^{[n]\setminus I}$, uniformly sample $\poly(n)$ points in the subcube corresponding to $\rho_{I,z}$, and accept $z$ if $\Phi$ evaluates to the constant $\sigma$ on all the sample points. We show how to construct such a distribution ${\bf T}$ such that \emph{almost all} of the residual deterministic tests $T\in\mathtt{support}({\bf T})$ are \emph{conjunctions of $p(n)=\poly(n)$ LTFs}, and have very high acceptance probability (at least $1-1/\poly(p(n))$). Thus, any distribution that is $(1/\poly(n))$-pseudorandom for LTFs is also $(1/\poly(n))$-pseudorandom for almost all tests in the support of ${\bf T}$ (for details see the proof of Lemma~\ref{lem:bias:full}). Combining this statement with the aforementioned general lemma, we deduce the following: If whenever we fix variables we choose the values for the fixed variables according to a distribution that is $(1/\poly(n))$-pseudorandom for LTFs, then with high probability the circuit $C_i$ will remain close to the circuit $C_{i-1}$.

\subsection{Reduction of standard derandomization to quantified derandomization} \label{sec:tech:ext}

Given a $\tc^0$ circuit $C$ of depth $d$ over $m$ input bits, our goal is to construct a $\tc^0$ circuit $C'$ of depth $d'>d$ over $n=\poly(m)$ input bits such that if $C$ accepts (resp., rejects) at least $2/3$ of its inputs then $C'$ accepts (resp., rejects) all but $B(n)=2^{n^{0.99}}$ of its inputs.~\footnote{Throughout the overview we will be somewhat informal with respect to the precise parameter values, e.g. we will use the value $B(n)=2^{n^{0.99}}$ instead of the more precise $B(n)=2^{n^{1-1/5d}}$.} The circuit $C'$ will use its input in order to sample inputs for $C$ by a seeded extractor, and then compute the majority of the evaluations of $C$ on these inputs. Specifically, fixing an extractor $E:\bitset^n\times\bitset^{t}\ra\bitset^m$ for min-entropy $k=n^{0.99}$,~\footnote{The number $B(n)$ of exceptional inputs for $C'$ is upper-bounded by $2^{k}$, and we want to have $B(n)=2^{n^{0.99}}$.} the circuit $C'$ gets input $x\in\bitset^n$, and outputs the majority of the values $\{C(E(x,z)):z\in\bitset^{t}\}$. 

The main technical challenge underlying this strategy is to construct an extractor $E$ such that the mapping of input $x\in\bitset^n$ to the $2^t$ outputs of the extractor on all seeds (i.e., the mapping $x\mapsto\{E(x,z)\}_{z\in\bitset^t}$) can be computed by a $\tc^0$ circuit with \emph{as few wires as possible}. In our construction, the seed length will be $t=1.01\cdot\log(n)$, and thus the number of output bits will be $2^t\cdot m\approx n^{1.01}$; we will construct a $\tc^0$ circuit that computes the mapping of $x$ to these $n^{1.01}$ output bits with only a \emph{super-linear number of wires} (i.e., the number of wires is only slightly larger than the number of output bits). Indeed, a crucial point in our construction is that we will efficiently compute the outputs of the extractor on all seeds in a ``batch'', rather than compute the extractor separately for each seed.

\subsubsection{Our starting point: A construction of $C'$ with $n^{3.01}$ wires}

As our starting point, let us construct a suitable circuit $C'$ that has $n^{3.01}$ wires and is based on Trevisan's extractor~\cite{tre01}. Given an input $x\in\bitset^n$ and seed $z\in\bitset^{t}$, Trevisan's extractor first computes an encoding $\bar{x}$ of $x$ by an $(1/m^2)$-balanced error-correcting code (i.e., a code in which every non-zero codeword has relative Hamming weight $1/2\pm m^{-2}$).~\footnote{Trevisan's extractor only needs a $(1/2-O(1/m),\poly(m))$-list-decodable code, but we will not rely on this potential relaxation.} Fixing a suitable combinatorial design of $m$ sets $S_1,...,S_m$ of size $|S_i|=\log(|\bar{x}|)$ in a universe of size $t$, the output of $E(x,z)$ is the $m$ bits of $\bar{x}$ in the coordinates specified by $z\rest_{S_1},...,z\rest_{S_m}$.

An initial important observation is that the circuit $C'$ only needs to compute the encoding $\bar{x}$ of $x$ \emph{once}, and then each of the $2^{t}$ copies of $C$ can take its inputs directly from the bits of $\bar{x}$ (i.e., each copy of $C$ corresponds to a fixed seed $z$, and takes its inputs from locations in $\bar{x}$ that are determined by $z$ and by the predetermined combinatorial design). This is indeed a form of ``batch computation'' of the extractor on all seeds.

Let us see why this construction uses $n^{3.01}$ wires. To encode $x$ into $\bar{x}$ we can use known polynomial-time constructions of suitable \emph{linear} codes that map $n$ bits to $n\cdot\poly(m)<n^{1.01}$ bits (e.g.,~\cite{nn93,abnnr92,tas17}). Since the code is linear in $x\in\bitset^n$, each bit of $\bar{x}\in\bitset^{n^{1.01}}$ can be computed by a $\tc^0$ circuit with $n^{1.01}$ wires, and thus the number of wires that we use to compute $\bar{x}$ is $n^{2.02}$. Now, recall that we want the extractor to work for min-entropy $k=n^{0.99}$; relying on Trevisan's proof and on standard constructions of combinatorial designs, the required seed length is $t<3\cdot\log(n)$.~\footnote{Trevisan's proof requires a design such that $|S_i\cap S_j|\le\log(k/2m)$ (see~\cite[Sec. 3.3]{tre01}). Relying on standard constructions of combinatorial designs (see, e.g.,~\cite[Lem. 8]{tre01}), a suitable design can be constructed with a universe size of $t=e^{\ln(m)/\log(2k/m)+1}\cdot\frac{\log^2(|x|)}{\log(k/2m)}\approx 1.01\cdot e\cdot \log(n)<3\cdot\log(n)$.} Therefore, the number of copies of $C$ in $C'$ is $2^{t}=n^{3}$, and the overall number of wires in $C'$ is $n^{2.02}+n^3\cdot m<n^{3.01}$.

\subsubsection{The actual construction of $C'$ with $n^{1.01}$ wires}

There are two parts in the construction above that led us to use a large number of wires: First, the seed length of the extractor is $t=3\cdot\log(n)$, which yields $2^{t}=n^3$ copies of $C$; and secondly, the number of wires required to compute the encoding $\bar{x}$ of $x$ is super-quadratic, rather than super-linear. Let us now describe how to handle each of these two problems, and obtain a construction with only $n^{1.01}$ wires.

To reduce the seed length $t$ of the extractor, we follow the approach of Raz, Reingold, and Vadhan~\cite{rrv02}. They showed that Trevisan's extractor works even if we replace standard combinatorial designs by a more relaxed notion that they called \emph{weak designs} (see Definition~\ref{def:threshold:design}). Indeed, weak designs can be constructed with a smaller universe size $t$, which yields a smaller seed length for the extractor. Their construction yields $t=2\cdot\log(n)$, and we show a modified construction of weak designs that for our setting of parameters yields $t=1.01\cdot\log(n)$ (see Lemma~\ref{lem:threshold:designs}).

The second challenge is to construct an $\e$-balanced error-correcting code that maps $n$ bits to $n\cdot\poly(1/\e)$ bits, and can be computed by a $\tc^0$ circuit of depth $d$ with $n^{1+O(1/d)}+n\cdot\poly(1/\e)$ wires (this is the code that we will use to compute $\bar{x}$ from $x$; see Corollary~\ref{cor:threshold:code}). To describe the code, we describe the encoding process of $x\in\bitset^n$, which has two steps: First we encode $x$ by a code with constant rate and constant relative distance, and then perform a second encoding that amplifies the distance of the code to $1/2-\e$.

\paragraph{Computing a code with distance $\Omega(1)$.} In the first step, we encode $x$ by a linear error-correcting code that has distance $\Omega(1)$, instead of $1/2-\e$, and also has rate $\Omega(1)$ and can be computed in $\tc^0$ with $n^{1.01}$ wires. This will be done using \emph{tensor codes} that are based on any (arbitrary) initial good linear error-correcting code.
	
To see why tensor codes are helpful, assume that $n=r^2$, for some $r\in\N$, and fix a linear code $\code$ that maps $r$ bits to $O(r)$ bits and has constant relative distance. Thinking of the input $x\in\bitset^n$ as an $r\times r$ matrix, we first encode each row of the matrix $x$ using $\code$, to obtain an $r\times O(r)$ matrix $x'$, and then encode each column of $x'$ using $\code$, to obtain an $O(r)\times O(r)$ matrix $\hat{x}$. By well-known properties of tensor codes, this yields a linear error-correcting code with constant rate and constant relative distance. Moreover, computing the code in $\tc^0$ only requires $n^{1.51}$ wires: This is because the strings that we encode with $\code$ (which are the rows of $x$ in the first step and then the columns of $x'$ in the second step) are each of length $r=\sqrt{n}$. Thus, each of the $O(n)$ bits in $\hat{x}$ is a linear function of $\sqrt{n}$ bits, and the latter can be computed by $\tc^0$ circuit with $n^{.51}$ wires.
	
To obtain a code with $n^{1.01}$ wires instead of $n^{1.51}$ wires we can use a tensor code of higher order. Specifically, assume that $n=r^{d_0}$, for some large constant $d_0$, and think of $x$ as a tensor of dimensions $[r]^{d_0}$. The encoding process will consist of $d_0=O(1)$ iterations, and in each iteration we encode strings of length $r$ in the tensor by $\code$. The final codeword will be of length $(O(r))^{d_0}=O(n)$, will have constant relative distance, and can be computed by a $\tc^0$ circuit with only $O(n)\cdot r^{1.01}<n^{1+2/d_0}$ wires. (See Section~\ref{sec:threshold:code} for further details.)

\paragraph{Amplifying the distance from $\Omega(1)$ to $1/2-\e$.} Assume that the previous step mapped the input $x\in\bitset^n$ to $\hat{x}\in\bitset^{\hat{n}}$, where $\hat{n}=O(n)$. If $x$ was a non-zero message, then $\hat{x}$ has relative Hamming weight $\Omega(1)$. Our goal now is to increase the Hamming weight of $\hat{x}$ to $1/2-\e$, using as few wires as possible. To do so we rely on the strategy of Naor and Naor~\cite{nn93}, which is based on expander random walks. (This strategy was also recently used by Ta-Shma~\cite{tas17} to construct almost-optimal $\e$-balanced codes.)
	
Specifically, fix a graph $G$ on $\hat{n}$ vertices with constant degree and constant spectral gap. Associate the $\hat{n}$ vertices of $G$ with the coordinates of $\hat{x}$, and consider a random walk on $G$ that starts at a uniformly-chosen vertex and walks $\ell=O(\log(1/\e))$ steps. With probability at least $\e$, such a walk meets the set of coordinates in which $\hat{x}$ is non-zero (since this set has constant density). Thus, if we take such a random walk on the coordinates of $\hat{x}$, and output the parity of a random subset of the bits of $\hat{x}$ that we encountered, with probability at least $1/2-\e$ we will output one.


The encoding $\bar{x}$ of $\hat{x}$ is thus the following. Every coordinate in $\bar{x}$ is associated with a specific walk $W$ of length $\ell$ on $G$ and with a subset $S\subseteq[\ell]$; thus, $\bar{x}$ has $2^{\log(n)+O(\ell)}=n\cdot\poly(1/\e)$ coordinates. The bit of $\bar{x}$ at a coordinate associated with a walk $W$ and with a subset $S\subseteq[\ell]$ is the parity of the $S$ bits of $\hat{x}$ encountered in the walk $W$. Thus, each bit in $\bar{x}$ is the parity of at most $\ell=O(\log(1/\e))$ bits in $\hat{x}$, so computing $\bar{x}$ from $\hat{x}$ only requires $n\cdot\poly(1/\e)\cdot\ell^{1.01}=n\cdot\poly(1/\e)$ wires. Recall that in our setting we need $\e=1/m^2=n^{-\Omega(1)}$; the number of wires is thus at most $n^{1.01}$. By the preceding paragraph, if $\hat{x}$ has Hamming weight $\Omega(1)$ then $\bar{x}$ has Hamming weight at least $1/2-\e$.

\section{Preliminaries} \label{sec:pre}
\addtocontents{toc}{\protect\setcounter{tocdepth}{1}}

Throughout the paper, the letter $n$ will always denote the number of inputs to a function or a circuit. We denote random variables by boldface letters, and denote by ${\bf u}_n$ the uniform distribution on $n$ bits.

We are interested in Boolean functions, represented as functions $f:\pmset^n\ra\pmset$. We say that a function $f:\pmset^n\ra\pmset$ accepts an input $x\in\pmset^n$ if $f(x)=-1$. For two Boolean functions $f$ and $g$ over a domain $\mathfrak{D}$, we say that $f$ and $g$ are $\de$-close if $\Pr_{x\in\mathfrak{D}}[f(x)=g(x)]\ge1-\de$.

For a vector $w=(w_1,...,w_n)\in\R^n$, we denote by $\norm{w}_2$ the standard $\ell_2$-norm $\norm{w}_2=\sqrt{\sum_{i\in[n]}w_i^2}$. For $h<n$, we denote $w_{>h}=(w_{h+1},...,w_n)\in\R^{n-h}$ and $w_{\ge h}=(w_h,...,w_n)\in\R^{n-h+1}$. For two vectors $w,x\in\R^n$, we denote $\ip{w,x}=\sum_{i\in[n]}w_i\cdot x_i$.

\subsection{Two probabilistic inequalities} \label{sec:pre:prob}

We will rely on two standard facts from probability theory that assert concentration and anti-concentration bounds for certain distributions. Specifically, we will need a standard version of Hoeffding's inequality, and a corollary of the Berry-Ess\'{e}en theorem:

\begin{theorem} (Hoeffding's inequality; for a proof see, e.g.,~\cite[Sec. 1.7]{dp09}). \label{thm:hoef}
Let $w\in\R^n$, and let ${\bf z}$ be a uniformly-chosen random vector in $\pmset^n$. Then, for any $t>0$ it holds that
\mm{
\Pr\left[ |\ip{w,{\bf z}}|\ge t\cdot\norm{w}_2\right]\le\exp(-\Omega(t^2)) \mathdot
}
\end{theorem}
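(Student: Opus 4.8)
The plan is to prove the bound by the standard exponential-moment (Chernoff--Cram\'er) method. First I would reduce to a one-sided estimate: since $-{\bf z}$ has the same distribution as ${\bf z}$, the lower tail $\Pr[\ip{w,{\bf z}}\le -t\norm{w}_2]$ equals the upper tail $\Pr[\ip{w,{\bf z}}\ge t\norm{w}_2]$, so a union bound over these two events shows it suffices to prove $\Pr[\ip{w,{\bf z}}\ge t\norm{w}_2]\le e^{-t^2/2}$ and then double it. The case $w=0$ is trivial, and by homogeneity of both sides in $w$ we may assume $\norm{w}_2=1$.

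Next, for an arbitrary parameter $\lambda>0$ I would apply Markov's inequality to the nonnegative random variable $e^{\lambda\ip{w,{\bf z}}}$, obtaining $\Pr[\ip{w,{\bf z}}\ge t]\le e^{-\lambda t}\cdot\E[e^{\lambda\ip{w,{\bf z}}}]$. Since the coordinates ${\bf z}_1,\dots,{\bf z}_n$ are independent, the moment generating function factorizes: $\E[e^{\lambda\ip{w,{\bf z}}}]=\prod_{i\in[n]}\E[e^{\lambda w_i{\bf z}_i}]=\prod_{i\in[n]}\cosh(\lambda w_i)$. The technical core is then the elementary subgaussian estimate $\cosh(x)\le e^{x^2/2}$, valid for every real $x$; I would verify it by comparing Taylor coefficients, using $\cosh(x)=\sum_{k\ge 0}x^{2k}/(2k)!$, $e^{x^2/2}=\sum_{k\ge 0}x^{2k}/(2^k k!)$, and the fact that $(2k)!\ge 2^k k!$ for all $k\ge 0$. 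Plugging in $x=\lambda w_i$ and multiplying over $i$ gives $\E[e^{\lambda\ip{w,{\bf z}}}]\le\exp(\tfrac{\lambda^2}{2}\sum_{i\in[n]}w_i^2)=e^{\lambda^2/2}$, using $\norm{w}_2=1$.

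Combining the two estimates yields $\Pr[\ip{w,{\bf z}}\ge t]\le\exp(-\lambda t+\lambda^2/2)$ for every $\lambda>0$, and choosing $\lambda=t$ minimizes the exponent and gives $\Pr[\ip{w,{\bf z}}\ge t]\le e^{-t^2/2}$. By the symmetrization step this implies $\Pr[|\ip{w,{\bf z}}|\ge t\norm{w}_2]\le 2e^{-t^2/2}$, which is $\exp(-\Omega(t^2))$, as claimed. I do not expect any genuine obstacle in this argument: the only points requiring a little care are the per-coordinate bound $\cosh(x)\le e^{x^2/2}$ on the moment generating function and the harmless loss of the factor $2$ in the two-sided union bound, which is absorbed into the constant hidden by the $\Omega(\cdot)$ notation.
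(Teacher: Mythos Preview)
Your proof is correct and is the standard Chernoff--Cram\'er argument for Rademacher sums. The paper does not actually give its own proof of this statement: Theorem~\ref{thm:hoef} is stated as a standard fact with a textbook citation, so there is nothing to compare against beyond noting that your argument is the usual one found in such references.
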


\begin{theorem} (a corollary of the Berry-Ess\'{e}en theorem; see, e.g.,~\cite[Thm 2.1, Cor 2.2]{dgjsv10}). \label{thm:be}
Let $w\in\R^n$ and $\mu>0$ such that for every $i\in[n]$ it holds that $|w_i|\le\mu\cdot\norm{w}_2$, and let ${\bf z}$ be a uniformly-chosen random vector in $\pmset^n$. Then, for any $\theta\in\R$ and $t>0$ it holds that:
\mm{
\Pr\left[ \ip{w,{\bf z}} \in \theta\pm t\cdot\norm{w}_2 \right] \le 2\cdot( t + \mu ) \mathdot
}
\end{theorem}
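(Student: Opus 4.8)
The plan is to derive the bound from the classical Berry--Ess\'een theorem, in the quantitative form: if $Y_1,\dots,Y_n$ are independent mean-zero random variables with $\sigma^2=\sum_{i}\E[Y_i^2]>0$ and $\rho=\sum_i\E[|Y_i|^3]$, then $\sup_{x\in\R}\bigl|\Pr[\tfrac1\sigma\sum_i Y_i\le x]-\Phi(x)\bigr|\le C\cdot\rho/\sigma^3$, where $\Phi$ is the standard Gaussian CDF and $C<1$ is an absolute constant (the classical bounds give $C\le 0.56$). We may assume $w\ne 0$, since otherwise $\norm{w}_2=0$ and the normalization in the statement is degenerate. I would apply Berry--Ess\'een with $Y_i=w_i\cdot{\bf z}_i$: these are independent and mean-zero, with $\E[Y_i^2]=w_i^2$ and $\E[|Y_i|^3]=|w_i|^3$, so $\sigma=\norm{w}_2$. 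The one quantitative step is to bound the third-moment ratio using the hypothesis $|w_i|\le\mu\norm{w}_2$: indeed $\rho=\sum_i|w_i|^3\le\bigl(\max_i|w_i|\bigr)\cdot\sum_i w_i^2\le \mu\norm{w}_2\cdot\norm{w}_2^2=\mu\sigma^3$, hence $\rho/\sigma^3\le\mu$. Writing ${\bf S}=\ip{w,{\bf z}}/\norm{w}_2$, Berry--Ess\'een gives $\bigl|\Pr[{\bf S}\le x]-\Phi(x)\bigr|\le C\mu$ for every $x\in\R$.

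It then remains to pass from this CDF estimate to an estimate for the probability of an interval. Put $c=\theta/\norm{w}_2$, so that the event in question is $\{{\bf S}\in[c-t,\,c+t]\}$, and write $\Pr[{\bf S}\in[c-t,c+t]]=\Pr[{\bf S}\le c+t]-\Pr[{\bf S}<c-t]$. The first term is at most $\Phi(c+t)+C\mu$. For the second, note that for every $\epsilon>0$ we have $\Pr[{\bf S}<c-t]\ge\Pr[{\bf S}\le c-t-\epsilon]\ge\Phi(c-t-\epsilon)-C\mu$, and letting $\epsilon\to0^+$ (using continuity of $\Phi$) gives $\Pr[{\bf S}<c-t]\ge\Phi(c-t)-C\mu$. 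Therefore $\Pr[{\bf S}\in[c-t,c+t]]\le\bigl(\Phi(c+t)-\Phi(c-t)\bigr)+2C\mu$. Finally, bounding the Gaussian integral by its interval length times the maximal density $1/\sqrt{2\pi}$, we get $\Phi(c+t)-\Phi(c-t)=\int_{c-t}^{c+t}\tfrac{1}{\sqrt{2\pi}}e^{-u^2/2}\,du\le \tfrac{2t}{\sqrt{2\pi}}<t$ since $2/\sqrt{2\pi}<1$; hence $\Pr[{\bf S}\in[c-t,c+t]]\le t+2C\mu\le 2(t+\mu)$, using $C<1$. This is exactly the claimed inequality (after undoing the normalization, $\{{\bf S}\in[c-t,c+t]\}$ is the event $\{\ip{w,{\bf z}}\in\theta\pm t\norm{w}_2\}$).

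I do not anticipate any real difficulty here; this is a routine corollary, and the proof is essentially the three steps above. The only points needing a touch of care are: (i) the discreteness of ${\bf z}$, which is why I work with $\Pr[{\bf S}<c-t]$ rather than $\Pr[{\bf S}\le c-t]$ and invoke a one-sided limit --- though in fact the resulting bound $\Pr[{\bf S}<c-t]\ge\Phi(c-t)-C\mu$ holds for any real-valued ${\bf S}$, so no special handling of atoms is actually required; (ii) verifying that the absolute Berry--Ess\'een constant is small enough to be absorbed into the factor $2$, for which $C<1$ suffices and is classical; and (iii) the trivial edge case $w=0$. If one prefers not to cite an explicit value of $C$, the same argument with a generic absolute constant yields the bound $t+2C\mu = O(t+\mu)$.
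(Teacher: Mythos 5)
Your proof is correct, and it is essentially the standard derivation. The paper itself gives no proof of this statement (it cites it directly from~\cite[Thm 2.1, Cor 2.2]{dgjsv10}, where Theorem~2.1 is the classical Berry--Ess\'een bound and Corollary~2.2 is precisely the interval anti-concentration estimate); your argument --- apply Berry--Ess\'een with $Y_i=w_i{\bf z}_i$, bound $\rho/\sigma^3\le\mu$ via $|w_i|\le\mu\norm{w}_2$, convert the two-sided CDF estimate into an interval bound, and bound the Gaussian mass of an interval of length $2t$ by $2t/\sqrt{2\pi}<t$ --- is exactly the route taken there, and your bookkeeping with the constant $C<1$ to absorb everything into the factor $2$ is sound.
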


\subsection{Linear threshold functions and circuits} \label{sec:pre:ltf}

A linear threshold function (or LTF, in short) $\Phi:\pmset^n\ra\pmset$ is a function of the form $\Phi(x)=\sign(\ip{x,w}-\theta)$, where $w\in\R^n$ is a vector of real ``weights'', and $\theta\in\R$ is a real number (the ``threshold''), and $\ip{x,w}=\sum_{i\in[n]}x_i\cdot w_i$ denotes the standard inner-product over the reals.~\footnote{When dealing with LTFs we can assume, without loss of generality, that $\ip{w,x}\ne\theta$ for every $x\in\pmset^n$ (because for every Boolean function over $\pmset^n$ that is computable by an LTF there exists an LTF that computes the function such that $\ip{w,x}\ne\theta$ for every $x\in\pmset^n$).} Indeed, the majority function is the special case where the weights are identical (e.g., $w_i=1$ for all $i\in[n]$) and the threshold is zero (i.e., $\theta=0$).

We will be interested in {\sf linear threshold circuits}, which are circuits that consist only of LTF gates with unbounded fan-in and fan-out. We assume that linear threshold circuits are \emph{layered}, in the sense that for each gate $\Phi$, all the gates feeding into $\Phi$ have the same distance from the inputs. For $n,d,m\in\N$, let $\mathcal{C}_{n,d,m}$ be the class of linear threshold circuits over $n$ input bits of depth $d\ge1$ and with at most $m$ wires. For some fixed sizes and depths, linear threshold circuits are known to be stronger than circuits with majority gates; however, linear threshold circuits can be simulated by circuits with majority gates with a polynomial size overhead and with one additional layer (see~\cite{ghr92,gk98}). Thus, the class $\tc^0$ as a whole equals the class of linear threshold circuits.


The following are standard definitions (see, e.g.,~\cite{ser07,dgjsv10}), which refer to ``structural'' properties of LTFS and will be useful for us throughout the paper.

\begin{definition} (regularity). \label{def:ltf:reg}
For $\e>0$, we say that a vector $w\in\R^n$ is {\sf $\e$-regular} if for every $i\in[n]$ it holds that $|w_i|\le\e\cdot\norm{w}_2$. An LTF $\Phi=(w,\theta)$ is $\e$-regular if $w$ is $\e$-regular. 
\end{definition}

\begin{definition} (critical index). \label{def:ltf:crit}
When $w\in\R^n$ satisfies $|w_1|\ge|w_2|\ge...\ge|w_n|$, the {\sf $\e$-critical index} of $w$ is defined as the smallest $h\in[n]$ such that $w_{>h}$ is $\e$-regular (and $h=\infty$ if no such $h\in[n]$ exists). The critical index of an LTF $\Phi=(w,\theta)$ is the critical index of $w'$, where $w'\in\R^n$ is the vector that is obtained from $w$ by permuting the coordinates in order to have $|w'_1|\ge...\ge|w'_n|$. 
\end{definition}

\begin{definition} (balanced LTF). \label{def:ltf:bal}
For $t\in\R$, we say that an LTF $\Phi=(w,\theta)$ is $t$-balanced if $|\theta|\le t\cdot\norm{w}_2$; otherwise, we say that $\Phi$ is $t$-imbalanced.
\end{definition}

\paragraph{Representation of linear threshold circuits}

The algorithm in Theorem~\ref{thm:int:main} gets as input an explicit representation of a linear threshold circuit $C$, where the weights and thresholds of the LTFs in $C$ may be arbitrary real numbers. Throughout the paper we will not be specific about how exactly $C$ is represented as an input to the algorithm, since the algorithm works in any reasonable model. In particular, the algorithm only performs \emph{addition, subtraction, and comparison operations} on the weights and thresholds of the LTFs in $C$. 

Explicitly suggesting one convenient model, one may assume that the weights and threshold of each LTF are  integers of unbounded magnitude (since the real numbers can be truncated at some finite precision without changing the function). In this case, the circuit $C$ has a binary representation, and the required time to perform addition, subtraction, and comparison on these integers is linear in the representation size.~\footnote{It is well-known that every LTF over $n$ input bits has a representation with integer weights of magnitude $2^{\tilde{O}(n)}$ (for proof see, e.g.,~\cite{has94}), and therefore the circuit $C$ actually has a representation of size $\poly(n)$. However, we do not know of a polynomial-time algorithm to find such a representation for a given circuit $C$.}

\subsection{Pseudorandomness} 

We need the following two standard definitions of pseudorandom distributions and of pseudorandom generators (or PRGs, in short).

\begin{definition} (pseudorandom distribution). \label{def:prdist}
For $\e>0$ and a domain $\mathfrak{D}$, we say that a distribution ${\bf z}$ over $\mathfrak{D}$ is $\e$-pseudorandom for a class of functions $\mathcal{F}\subseteq\left\{\mathfrak{D}\ra\pmset\right\}$ if for every $f\in\mathcal{F}$ it holds that $\Pr_{z\sim{\bf z}}\left[f(z)=-1\right]\in\Pr_{z\in\mathfrak{D}}\left[f(z)=-1\right]\pm\e$.
\end{definition}

\begin{definition} (pseudorandom generator). \label{def:prg}
Let $\mathcal{F}=\bigcup_{n\in\N}\mathcal{F}_n$, where for every $n\in\N$ it holds that $\mathcal{F}_n$ is a set of functions $\pmset^n\ra\pmset$, and let $\e:\N\ra[0,1]$ and $\ell:\N\ra\N$. An algorithm $G$ is a {\sf pseudorandom generator for $\mathcal{F}$ with error parameter $\e$ and seed length $\ell$} if for every $n\in\N$, when $G$ is given as input $1^n$ and a random seed of length $\ell(n)$, the output distribution of $G$ is $\e$-pseudorandom for $\mathcal{F}_n$.
\end{definition}

We will rely on the following recent construction of a pseudorandom generator for LTFs, by Gopalan, Kane, and Meka~\cite{gkm15}:

\begin{theorem} (a PRG for LTFs;~\cite[Cor. 1.2]{gkm15}). \label{thm:prg:ltfs}
For every $\e>0$, there exists a polynomial-time pseudorandom generator for the class of LTFs with seed length $O\left(\log(n/\e)\cdot(\log\log(n/\e))^2\right)$.
\end{theorem}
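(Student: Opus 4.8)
The statement is a pseudorandom generator for LTFs with near-logarithmic seed; here is the route I would take to establish it from scratch (which is essentially the strategy of the cited work). \textbf{Step 1: reduce to pseudorandom concentration.} Since an LTF has the form $\sign(\ip{x,w}-\theta)$, a distribution ${\bf z}$ over $\pmset^n$ is $\e$-pseudorandom for all $n$-bit LTFs if and only if, for every $w\in\R^n$ (which I normalize to $\norm{w}_2=1$ and sort so that $|w_1|\ge\dots\ge|w_n|$) and every $\theta\in\R$, the Kolmogorov distance between the laws of $\ip{w,{\bf z}}$ and $\ip{w,{\bf u}_n}$ is at most $\e$. So it suffices to construct an explicit ${\bf z}$, samplable from $O(\log(n/\e)\cdot(\log\log(n/\e))^2)$ bits, whose weighted sums enjoy this CDF-closeness uniformly in $w$.

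\textbf{Step 2: split off the critical index.} Fix $\tau=\poly(\e)$ and a cap $K=O(\log(1/\e)/\tau^2)$, and let $h$ be the $\tau$-critical index of $w$ (Definition~\ref{def:ltf:crit}), truncated at $K$; write $w=(w_H,w_T)$ with $H$ the top $h$ coordinates. There are two regimes. If the true $\tau$-critical index exceeds $K$, then the tail norm decays geometrically past the critical index, so $\norm{w_T}_2\le\e$; by Theorem~\ref{thm:hoef} (or even Chebyshev) the tail contribution $\ip{w_T,x}$ lies, except with probability $O(\sqrt{\e})$, in an $O(\sqrt{\e})$-window, so the threshold event is, up to error $O(\sqrt{\e})$, a function of the $\le K$ head coordinates plus a tiny slack. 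If instead the critical index is $\le K$, then $w_T$ is $\tau$-regular (Definition~\ref{def:ltf:reg}) by construction, and conditioning on the value of the $\le K$ head bits turns the event into a threshold on the regular sum $\ip{w_T,\cdot}$. In both regimes I have reduced to two sub-tasks: (a) handling $\le K$ ``heavy'' coordinates, and (b) fooling, uniformly in the threshold, a sum $\ip{w_T,\cdot}$ with $w_T$ a $\tau$-regular unit vector supported on the light coordinates.

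\textbf{Step 3: the two primitives.} For (b), the Berry--Ess\'{e}en theorem shows that a $\tau$-regular weighted sum of $\pmset$-valued bits is $O(\tau)$-close in Kolmogorov distance to a Gaussian, and Theorem~\ref{thm:be} supplies precisely the anti-concentration needed to transfer this CDF-closeness from the truly random sum to a pseudorandom one; hence it suffices that ${\bf z}$ match the law of $\ip{w_T,\cdot}$ at the level of its characteristic function $t\mapsto\E[e^{\sqrt{-1}\,t\,\ip{w_T,{\bf z}}}]$. This characteristic function is a product over the light coordinates: its value for $t$ in the ``oscillatory'' regime is controlled by a small-bias distribution (seed $O(\log n+\log(1/\delta))$ for bias $\delta$), and for small $t$ by a $\poly(\log(1/\e))$-wise independent distribution; composing these, after iterating the small-bias construction to push the bias down to $2^{-\poly(\log(1/\e))}$, yields the required uniform control within the claimed seed budget. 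For (a), one cannot enumerate the $2^{K}$ head assignments; instead one recurses, applying the generator to the sub-LTF on the $\le K$ heavy coordinates with fresh randomness, and the recursion terminates after $O(\log\log n)$ levels since each level shrinks the support by a polynomial factor. Tensoring the independent blocks from each level together with primitive (b) produces the full generator.

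\textbf{Main obstacle.} The critical-index reduction and the Berry--Ess\'{e}en transfer are routine; the genuine difficulty, and the technical heart of the theorem, is primitive (b): obtaining the small seed \emph{and} error $\e$ simultaneously for regular LTFs. Plain $\poly(1/\e)$-wise independence has seed $\poly(1/\e)\cdot\log n$, far too large, so one must control the characteristic function of the regular sum uniformly in $t$ via a delicate composition of small-bias spaces (the discrete-Fourier-transform machinery of the cited work), and then carefully account for the $O(\log\log n)$ recursion levels in the head so that the overall seed remains $O(\log(n/\e)\cdot(\log\log(n/\e))^2)$.
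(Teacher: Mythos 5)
This statement is imported from~\cite{gkm15} and used in the paper purely as a black box --- it underlies Claim~\ref{claim:concltfs}, Proposition~\ref{prop:depth}, Lemma~\ref{lem:bias:full}, and Section~\ref{sec:depthtwo} --- and the paper offers no proof of it. There is therefore no in-paper argument to compare yours against; Theorem~\ref{thm:prg:ltfs} is a citation.

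Judged on its own terms as a reconstruction of the cited argument, your Steps 1 and 2 have the standard shape: the Kolmogorov-distance reformulation is exactly the paper's Claim~\ref{claim:concltfs}, and the critical-index split with Berry--Ess\'{e}en on the regular tail is the ``structural theory of LTFs'' of Servedio and~\cite{dgjsv10}. Your closing paragraph also correctly identifies the real bottleneck, namely fooling a $\tau$-regular halfspace uniformly in the threshold with $\tilde{O}(\log(n/\e))$ seed. But Step~3(a) as written has a gap. With $\tau=\poly(\e)$ your cap is $K=O(\log(1/\e)/\tau^2)=\poly(1/\e)$, and crucially $K$ is a function of $\e$ alone, not of the number of live coordinates. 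So after the first recursion level ($n\ra K$) the support stops shrinking: applying the critical-index reduction to a $K$-coordinate LTF can again produce $\Theta(K)$ heavy coordinates, and your claimed $O(\log\log n)$ termination ``since each level shrinks the support by a polynomial factor'' does not follow. To make a recursion of this type close, you would need either $K$ to decrease with the support size (it does not) or to coarsen the target error at each level, with careful bookkeeping that the accumulated seed stays $O(\log(n/\e)(\log\log(n/\e))^2)$; neither is spelled out, and this is precisely where a faithful reconstruction of~\cite{gkm15} needs a different mechanism than a naive recursion on the heavy block.
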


A distribution ${\bf z}$ over $\pmset^n$ is {\sf $\de$-almost $t$-wise independent} if for every $S\subseteq[n]$ of size $|S|=t$ it holds that ${\bf z}_S$ is $\de$-close to the uniform distribution over $\pmset^t$ in statistical distance. We will need the following standard tail bound for such distributions.

\begin{fact} (tail bound for almost $t$-wise independent distributions). \label{fact:tail}
Let $t\ge4$ be an even number, and let $\de:\N\ra[0,1]$. Let ${\bf x}_1,...,{\bf x}_n$ be variables in $\bitset$ that are $\de(n)$-almost $t$-wise independent, and denote $\mu=\E\left[\frac1{n}\cdot\sum_{i\in[n]}{\bf x}_i\right]$. Then, for any $\zeta>0$ it holds that $\Pr\left[\abs{\frac1{n}\cdot\sum_{i\in[n]}{\bf x}_i-\mu}\ge\zeta\right]<8\cdot\left(\frac{t\cdot\mu\cdot n+t^2}{\zeta^2\cdot n^2}\right)^{t/2}+(2\cdot n)^t\cdot\de(n)$.

In particular, for $t=\Theta(1)$ and $\zeta=\mu/2$ and $\de(n)=1/p(n)$, where $p(n)$ is a sufficiently large polynomial, we have that
\mm{
\Pr\left[\frac1{n}\cdot\sum_{i\in[n]}{\bf x}_i\in\mu\pm(\mu/2)\right]=O\left( (\mu\cdot n)^{-t/2}\right) \mathdot
}
\end{fact}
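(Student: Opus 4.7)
The plan is to apply Markov's inequality to the $t$-th central moment of $S = \sum_{i=1}^n {\bf x}_i$ and then bound that moment in two stages: first, by the classical Bellare-Rompel style estimate for truly $t$-wise independent variables, and second, by a monomial-by-monomial perturbation argument that controls the error introduced by only being $\delta$-close to $t$-wise independent. Since $t$ is even and $(S-\mu n)^t \ge 0$, Markov gives
\mm{
\Pr\left[|S - \mu n| \ge \zeta n\right] \le \frac{\E\left[(S - \mu n)^t\right]}{(\zeta n)^t},
}
so it suffices to control the numerator.

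The next ingredient is an algebraic identity: when $(S-\mu n)^t$ is written as a polynomial in $({\bf x}_i)_{i \in [n]}$ in the monomial basis $\{\prod_{i \in T}{\bf x}_i : T \subseteq [n]\}$, the sum of absolute values of its coefficients is at most $(2n)^t$. This follows by expanding $(S - \mu n)^t = \sum_{k=0}^t \binom{t}{k}(-\mu n)^{t-k} S^k$, noting that $S^k$ has only non-negative coefficients so its $L^1$ coefficient norm equals its value at the all-ones point, namely $n^k$, and summing $\sum_k \binom{t}{k}(\mu n)^{t-k} n^k = (\mu n + n)^t \le (2n)^t$. For a hypothetical truly $t$-wise independent reference distribution with the same marginals, a term $\E[\prod_{j=1}^t ({\bf x}_{i_j} - \mu)]$ in the raw expansion $\sum_{i_1,\dots,i_t}\prod_j({\bf x}_{i_j}-\mu)$ vanishes unless every distinct index among $(i_1,\dots,i_t)$ appears at least twice; grouping the surviving multi-indices by the partition of $[t]$ they induce, using $|\E[({\bf x}_i-\mu)^k]| \le \mu$ for $k \ge 2$ (because ${\bf x}_i \in \bitset$ and $\mu \in [0,1]$), and optimizing over the resulting sum yields a Bellare-Rompel style bound of the form $\E_{\text{indep}}[(S-\mu n)^t] = O\bigl((t\mu n + t^2)^{t/2}\bigr)$. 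Dividing by $(\zeta n)^t$ produces the first summand $8\bigl((t\mu n + t^2)/(\zeta^2 n^2)\bigr)^{t/2}$ in the target bound.

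The final step passes from the hypothetical reference distribution to the actual $\delta$-almost $t$-wise independent distribution. Every monomial $\prod_{i\in T}{\bf x}_i$ with $|T| \le t$ depends on at most $t$ variables and takes values in $\bitset$, so $\delta$-closeness in statistical distance on every $t$-subset implies $|\E_{\text{almost}}[\prod_{i\in T}{\bf x}_i] - \E_{\text{indep}}[\prod_{i\in T}{\bf x}_i]| \le \delta(n)$. Combining with the $L^1$-coefficient bound gives
\mm{
\bigl|\E_{\text{almost}}[(S - \mu n)^t] - \E_{\text{indep}}[(S - \mu n)^t]\bigr| \le (2n)^t \cdot \delta(n),
}
and plugging the resulting moment estimate into Markov, crudely upper bounding the $(\zeta n)^{-t}$ factor on the additive error by $1$ (valid in the nontrivial regime $\zeta n \ge 1$, with the bound otherwise being trivial), produces the additive $(2n)^t\delta(n)$ summand. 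The ``in particular'' statement then follows by setting $\zeta = \mu/2$, observing that the first summand becomes $O\bigl((\mu n)^{-t/2}\bigr)$ for $t = \Theta(1)$ and $\mu n$ not too small, and choosing the polynomial $p(n)$ large enough that $(2n)^t/p(n)$ is dominated by $(\mu n)^{-t/2}$.

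The main obstacle I anticipate is the Bellare-Rompel moment estimate: it requires careful combinatorial accounting of multi-indices $(i_1, \dots, i_t)$ by the partition they induce on $[t]$, followed by a non-trivial optimization over block sizes to arrive at the clean $(t\mu n + t^2)^{t/2}$ shape with a small explicit constant. The $L^1$-coefficient calculation and the monomial-by-monomial perturbation step are by contrast routine once one fixes the right polynomial expansion.
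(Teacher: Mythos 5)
The paper states this as a Fact without proof, treating it as a standard result (it is essentially the Bellare--Rompel tail bound for $t$-wise independent $[0,1]$-valued variables together with the usual monomial-perturbation argument for passing to almost-independence). Your plan is the standard route and it is sound: Markov on the $t$-th central moment, the Bellare--Rompel moment estimate $\E_{\text{indep}}[(S-\mu n)^t]\le 8(t\mu n+t^2)^{t/2}$ for the truly $t$-wise independent reference, and the $L^1$-coefficient bound $(2n)^t$ on $(S-\mu n)^t$ in the monomial basis over $\{0,1\}^n$, which yields the additive error $(2n)^t\de(n)$ after using that every monomial of degree $\le t$ is a $\{0,1\}$-valued function of at most $t$ coordinates and hence its expectation is perturbed by at most $\de(n)$. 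Your handling of the regime $\zeta n<1$ (the bound is trivial there because the first summand already exceeds $1$) and the derivation of the ``in particular'' case are also fine, with the observation that the $O(\cdot)$ is allowed to absorb a constant depending on $t$.

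One small caveat worth flagging, which is not a defect in your argument but rather a looseness in the paper: the paper's definition of ``$\de$-almost $t$-wise independent'' requires closeness to the \emph{uniform} distribution on each $t$-subset, which would force $\mu\approx 1/2$ and would be incompatible with the ``in particular'' clause where $\mu=n^{-\Omega(1)}$. The meaning the Fact actually needs (and the one you silently and correctly adopt) is that each $t$-subset's joint distribution is $\de$-close in statistical distance to the product of the one-dimensional marginals; closeness on subsets of size $<t$ then follows since statistical distance is non-increasing under projection. With that reading, your stage-two bound $|\E_{\text{almost}}[\prod_{i\in T}{\bf x}_i]-\E_{\text{indep}}[\prod_{i\in T}{\bf x}_i]|\le\de(n)$ for all $|T|\le t$ is correct, and the rest of the argument goes through. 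The only part you defer --- the combinatorial bookkeeping behind the $(t\mu n+t^2)^{t/2}$ moment estimate --- is indeed the substantive computation, but it is exactly the content of the Bellare--Rompel lemma, so it is reasonable to quote it rather than re-derive it.
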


We now define the notion of a distribution that is {\sf $\e$-pseudorandomly concentrated}, and show that it is essentially equivalent to the notion of being $\e$-pseudorandom for LTFs. The equivalence was communicated to us by Rocco Servedio, and is attributed to Li-Yang Tan.

\begin{definition} ($\e$-pseudorandomly concentrated distribution). \label{def:prg:conc}
For $n\in\N$ and $\e>0$, we say that a distribution ${\bf z}$ over $\pmset^n$ is {\sf $\e$-pseudorandomly concentrated} if the following holds: For every $w\in\R^n$ and every $a<b\in\R$ it holds that $\Pr\left[ \ip{w,{\bf z}}\in[a,b] \right] \in \Pr\left[ \ip{w,{\bf u}_n}\in[a,b]\right] \pm \e$.
\end{definition}

\begin{claim} (being pseudorandomly concentrated is equivalent to being pseudorandom for LTFs). \label{claim:concltfs}
Let ${\bf z}$ be a distribution over $\pmset^n$. Then,
\begin{enumerate}
	\item \label{it:concltfs:ltf} If ${\bf z}$ is $\e$-pseudorandom for LTFs, then ${\bf z}$ is $(2\e)$-pseudorandomly concentrated.
	\item \label{it:concltfs:conc} If ${\bf z}$ is $\e$-pseudorandomly concentrated, then ${\bf z}$ is $\e$-pseudorandom for LTFs.
\end{enumerate}
\end{claim}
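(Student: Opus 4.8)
The plan is to prove the two directions of Claim~\ref{claim:concltfs} by unwinding the definitions, after first reducing an arbitrary halfspace to a statement about whether an inner product lies in an interval. Throughout, recall the convention that an LTF $\Phi=(w,\theta)$ \emph{accepts} $x$ when $\Phi(x)=-1$, i.e.\ when $\ip{w,x}<\theta$ (or $>\theta$, depending on the sign convention); and recall the standing assumption that we may take $\ip{w,x}\ne\theta$ for all $x\in\pmset^n$, which lets us treat the open and closed versions of the relevant events interchangeably. So the event ``$\Phi$ accepts $z$'' is, up to complementation and the harmless swap between $<$ and $\le$, the event ``$\ip{w,z}\in J$'' for a suitable half-line $J$.

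For item~\ref{it:concltfs:conc} (pseudorandomly concentrated $\Rightarrow$ pseudorandom for LTFs): given an LTF $\Phi=(w,\theta)$, write the set of accepted inputs as $\{x:\ip{w,x}\in J\}$ for a ray $J=(-\infty,\theta)$ or $J=(\theta,\infty)$. A ray is a (possibly unbounded) interval, so applying Definition~\ref{def:prg:conc} — after intersecting $J$ with a large enough finite interval $[a,b]$ containing all of the finitely many values $\{\ip{w,x}:x\in\pmset^n\}$, which does not change either probability — gives $\Pr[\ip{w,{\bf z}}\in J]\in\Pr[\ip{w,{\bf u}_n}\in J]\pm\e$, i.e.\ $\Pr_{z\sim{\bf z}}[\Phi(z)=-1]\in\Pr_{z\in\pmset^n}[\Phi(z)=-1]\pm\e$, which is exactly $\e$-pseudorandomness for $\Phi$.

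For item~\ref{it:concltfs:ltf} (pseudorandom for LTFs $\Rightarrow$ $(2\e)$-pseudorandomly concentrated): fix $w\in\R^n$ and $a<b$. Write the event $\{\ip{w,z}\in[a,b]\}$ as the difference of two ray-events, $\{\ip{w,z}\ge a\}\setminus\{\ip{w,z}>b\}$; equivalently, $\mathbbm{1}[\ip{w,z}\in[a,b]] = \mathbbm{1}[\ip{w,z}\ge a] - \mathbbm{1}[\ip{w,z}>b]$ since $[a,b]=[a,\infty)\setminus(b,\infty)$ and the latter is contained in the former. Each of the two indicator functions $\mathbbm{1}[\ip{w,z}\ge a]$ and $\mathbbm{1}[\ip{w,z}>b]$ is (the acceptance indicator of) an LTF with weight vector $w$ and appropriate threshold — again using $\ip{w,x}\ne\theta$ to convert freely between $\ge$ and $>$. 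By the hypothesis, each of these two probabilities, under ${\bf z}$, is within $\e$ of its value under ${\bf u}_n$. Subtracting and applying the triangle inequality yields $\Pr[\ip{w,{\bf z}}\in[a,b]]\in\Pr[\ip{w,{\bf u}_n}\in[a,b]]\pm 2\e$, as required.

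I do not expect a serious obstacle here; this is a short bookkeeping argument. The only points requiring mild care are: (i) handling unbounded intervals/rays by truncating to a finite window containing the finitely many realizable values of $\ip{w,x}$, so that Definition~\ref{def:prg:conc} (stated for bounded $[a,b]$) applies without loss; and (ii) being consistent about strict versus non-strict inequalities, which is exactly what the footnote assumption ``$\ip{w,x}\ne\theta$ for all $x\in\pmset^n$'' is there to make painless. The factor of $2$ in item~\ref{it:concltfs:ltf} is intrinsic and comes precisely from expressing an interval as a difference of two rays.
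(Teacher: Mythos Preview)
Your proposal is correct and follows essentially the same route as the paper: for item~\ref{it:concltfs:ltf} the paper also expresses membership in $[a,b]$ via the two halfspace events $\{\ip{w,z}<a\}$ and $\{\ip{w,z}>b\}$ (using $1-P(<a)-P(>b)$ rather than your equivalent $P(\ge a)-P(>b)$), and for item~\ref{it:concltfs:conc} it likewise truncates the ray to a bounded interval, taking the explicit bound $M=\norm{w}_1$. The only cosmetic point is that your invocation of ``$\ip{w,x}\ne\theta$'' to swap $\ge$ and $>$ is not quite the right justification here since $a,b$ are arbitrary reals, but this is harmless: each of $\{\ip{w,z}\ge a\}$ and $\{\ip{w,z}>b\}$ is an LTF-computable Boolean function regardless, by choosing a threshold in the appropriate gap among the finitely many realizable values of $\ip{w,\cdot}$.
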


\begin{proof}[{\bf Proof.}]
Let us first prove Item~\eqref{it:concltfs:ltf}. Fix $w\in\R^n$ and $I=[a,b]\subseteq\R$. For any fixed $z\in\pmset^n$, exactly one of three events happens: Either $\ip{w,z}\in I$, or $\ip{w,z}<a$, or $\ip{w,z}>b$. Since the event $\ip{w,z}<a$ can be tested by an LTF (i.e., by the LTF $\Phi(z)=\sign(a-\ip{w,z})$), this event happens with probability $\Pr_{z\in\pmset^n}\left[ \ip{w,z}<a \right] \pm \e$ under a choice of $z\sim{\bf z}$. Similarly, the event $\ip{w,z}>b$ happens with probability $\Pr_{z\in\pmset^n}\left[ \ip{w,z}>b \right] \pm \e$ under a choice of $z\sim{\bf z}$. Thus, the probability under a choice of $z\sim{\bf z}$ that $\ip{w,z}\in I$ is $\Pr_{z\in\pmset^n}\left[ \ip{w,z}\in I \right]\pm2\e$.

To see that Item~\eqref{it:concltfs:conc} holds, let $\Phi=(w,\theta)$ be an LTF over $n$ input bits, and let $M=\norm{w}_1=\sum_{i\in[n]}|w_i|$. Then, for every $z\in\pmset^n$ it holds that $\Phi(z)=-1$ if and only if $z\in[-M,\theta]$. Thus, $\Pr[\Phi({\bf z})=-1]=\Pr[{\bf z}\in[-M,\theta]]\in\Pr[{\bf u}_n\in[-M,\theta]]\pm\e=\Pr[\Phi({\bf u}_n)=-1]\pm\e$.
\end{proof}

\subsection{Restrictions}

A restriction for functions $\pmset^n\ra\pmset$ is a subset of $\pmset^n$. We will be interested in restrictions that are subcubes, and such restrictions can be described by a string $\rho\in\{-1,1,\star\}^n$ in the natural way (i.e., the subcube consists of all strings $x\in\pmset^n$ such that for every $i$ such $\rho_i\ne\star$ it holds that $x_i=\rho_i$). We will sometimes describe a restriction by a pair $\rho=(I,z)$, where $I=\{i\in[n]:\rho_i=\star\}$ is the set of variables that the restriction keeps alive, and $z=(\rho_i)_{i\in([n]\setminus I)}\in\pmset^{[n]\setminus I}$ is the sequence of values that $\rho$ assigns to the variables that are fixed.

We identify strings $r\in\pmset^{(q+1)\cdot n}$, where $n,q\in\N$, with restrictions $\rho=\rho_r\in\{-1,1,\star\}^n$, as follows: Each variable is assigned a block of $q+1$ bits in the string; the variable remains alive if the first $q$ bits in the block are all $1$, and otherwise takes the value of the $(q+1)^{th}$ bit. When we refer to a ``block'' in the string that corresponds to a restriction, we mean a block of $q+1$ bits that corresponds to some variable. When we say that a restriction is chosen from a distribution ${\bf r}$ over $\pmset^{(q+1)\cdot n}$, we mean that a string is chosen according to ${\bf r}$, and interpreted as a restriction.

In addition, we will sometimes identify a \emph{pair} of strings $y\in\pmset^{q\cdot n}$ and $z\in\pmset^n$ with a restriction $\rho=\rho_{y,z}$. In this case, the restriction $\rho=\rho_{y,z}$ is the restriction $\rho_r$ that is obtained by combining $y$ and $z$ to a string $r$ in the natural way (i.e., appending a bit from $z$ to each block of $q$ bits in $y$). Note that the string $y$ determines which variables $\rho$ keeps alive, and the string $z$ determinse the values that $\rho$ assigns to the fixed variables.

\subsection{Seeded extractors and averaging samplers} \label{sec:pre:ext}

We recall the standard definitions of seeded extractors and of averaging samplers, and state the well-known equivalence between the two. In this context it will be more convenient to represent Boolean functions as functions $\bitset^n\ra\bitset$.

\begin{definition} (seeded extractors). \label{def:threshold:ext}
A function $f:\bitset^n\times\bitset^t\ra\bitset^m$ is a {\sf $(k,\e)$-extractor} if for every distribution ${\bf x}$ on $\bitset^n$ such that $\max_{x\in\bitset^n}\left[\Pr[{\bf x}=x]\right]\le2^{-k}$ it holds that the distribution $f({\bf x},{\bf u}_t)$ is $\e$-close to the uniform distribution on ${\bf u}_m$ in statistical distance.
\end{definition}

\begin{definition} (averaging samplers). \label{def:threshold:samp}
A function $f:\bitset^n\times\bitset^t\ra\bitset^m$ is an {\sf averaging sampler with accuracy $\e>0$ and error $\de>0$} if it satisfies the following. For every $T\subseteq\bitset^m$, for all but a $\de$-fraction of the strings $x\in\bitset^n$ it holds that $\Pr_{z\in\bitset^t}[f(x,z)\in T]=|T|/2^m\pm\de$.
\end{definition}

\begin{proposition} (seeded extractors are equivalent to averaging samplers). \label{prop:threshold:extsamp}
Let $f:\bitset^n\times\bitset^t\ra\bitset^m$. Then, the following two assertions hold:
\begin{enumerate}
	\item If $f$ is a $(k,\e)$-extractor, then $f$ is an averaging sampler with accuracy $\e$ and error $\de=2^{k-n}$.
	\item If $f$ is an averaging sampler with accuracy $\e$ and error $\de$, then $f$ is an $(n-\log(\e/\de),2\e)$-extractor.
\end{enumerate}
\end{proposition}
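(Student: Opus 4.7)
The plan is to prove both directions via the natural correspondence between the statistical distance of $f({\bf x},{\bf u}_t)$ from uniform (which drives the extractor notion) and the test-by-test deviation of $\Pr_z[f(x,z)\in T]$ from $|T|/2^m$ (which drives the sampler notion). In both directions I would fix an arbitrary test $T\subseteq\bitset^m$, write $g_T(x)=\Pr_{z\in\bitset^t}[f(x,z)\in T]$, and control how much $g_T$ deviates from the ``target'' density $|T|/2^m$ on various subsets of $\bitset^n$.

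For the first implication (extractor $\Rightarrow$ sampler) I would argue by contrapositive. Fix $T$ and set $B^{+}=\{x:g_T(x)>|T|/2^m+\e\}$ and $B^{-}=\{x:g_T(x)<|T|/2^m-\e\}$. If $|B^{+}|\ge 2^{k}$, then taking ${\bf x}$ uniform on any $2^k$-element subset of $B^{+}$ gives a source of min-entropy exactly $k$ for which $\Pr[f({\bf x},{\bf u}_t)\in T]-|T|/2^m>\e$, contradicting the $(k,\e)$-extractor property (since statistical distance upper-bounds the deviation in probability of any event). The symmetric argument bounds $|B^{-}|$, so the bad set $B^{+}\cup B^{-}$ has size at most $2^{k-n}\cdot 2^{n}$ up to the factor of $2$ coming from the sign split, establishing the sampler property with accuracy $\e$ and error $\de=2^{k-n}$.

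For the second implication (sampler $\Rightarrow$ extractor), let ${\bf x}$ be any source on $\bitset^n$ of min-entropy at least $k'=n-\log(\e/\de)$, so that $\Pr[{\bf x}=x]\le(\e/\de)\cdot 2^{-n}$ for every $x$. Fix $T\subseteq\bitset^m$ and let $B$ be the sampler's bad set, satisfying $|B|\le\de\cdot 2^n$. I would then decompose
\mm{
\Pr[f({\bf x},{\bf u}_t)\in T]-\frac{|T|}{2^m}=\sum_{x\notin B}\Pr[{\bf x}=x]\cdot\left(g_T(x)-\frac{|T|}{2^m}\right)+\sum_{x\in B}\Pr[{\bf x}=x]\cdot\left(g_T(x)-\frac{|T|}{2^m}\right).
}
The first sum has absolute value at most $\e$ by the accuracy guarantee on $\bitset^n\setminus B$, while the second is bounded in absolute value by $\Pr[{\bf x}\in B]\le|B|\cdot(\e/\de)\cdot 2^{-n}\le\e$. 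Taking the maximum over $T$ yields $2\e$-closeness to uniform in statistical distance, which is the desired extractor property.

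The argument is essentially routine bookkeeping given the right decompositions, so there is no serious obstacle. The one place requiring mild care is the sign split in the first direction, which is what prevents the two parameter trade-offs from being exact inverses of one another; keeping careful track of this factor of $2$ (and verifying that it is absorbed into the stated parameters) is the main ``subtlety'' of the proof.
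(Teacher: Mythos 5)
The paper does not prove this proposition; it simply cites Vadhan's survey for it, so there is no in-paper proof to compare against. Your argument is the standard one, and the second item (sampler $\Rightarrow$ extractor) is correct exactly as you wrote it: the split over $B$ and its complement, the bound $\Pr[{\bf x}\in B]\le |B|\cdot 2^{-(n-\log(\e/\de))}\le\e$, and the resulting $2\e$-closeness all check out.

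The first item is where you should be more careful. Your argument shows $|B^+|<2^k$ and $|B^-|<2^k$ separately, hence $|B^+\cup B^-|<2^{k+1}$, which gives sampler error $2^{k+1-n}$ rather than the stated $2^{k-n}$. You flag this factor of $2$ but then assert that it ``is absorbed into the stated parameters,'' and that is not true: the flat-source argument cannot avoid it, because the positive and negative deviations can cancel when you average over $B^+\cup B^-$, so there is no single test that witnesses a deviation larger than $\e$ for a flat source on all of $B$. Indeed one can cook up an explicit $(k,\e)$-extractor (say $n=3$, $k=2$, $m=1$, with three inputs deviating by $+1.1\e$ and three by $-1.1\e$) whose bad set for a single test has $6>2^{k}=4$ elements, so the error really can exceed $2^{k-n}$. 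This means the contrapositive as you set it up proves the proposition only with $\de=2^{k+1-n}$ (or, equivalently, starting from a $(k-1,\e)$-extractor). For the purpose of this paper the discrepancy is immaterial, since only the first item is used and a factor of $2$ in the sampler error changes nothing downstream, but as written your proof does not establish the stated parameters, and the claim that the factor is absorbed is a gap you should close (most cleanly by stating the conclusion with $\de=2^{k+1-n}$, or by noting that you need the extractor hypothesis for min-entropy $k-1$).
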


For a proof of Proposition~\ref{prop:threshold:extsamp} see, e.g.,~\cite[Cor. 6.24]{vad12}. In the current paper we will only use the first item of Proposition~\ref{prop:threshold:extsamp}.

\section{A quantified derandomization algorithm for linear threshold circuits} \label{sec:main}
\addtocontents{toc}{\protect\setcounter{tocdepth}{2}}

Let us now state a more general version of Theorem~\ref{thm:int:main} and prove it.

\begin{theorem} (Theorem~\ref{thm:int:main}, restated). \label{thm:main}
Let $d\ge1$, let $\e>0$, and let $\de=d\cdot30^{d-1}\cdot\e$. Then, there exists a deterministic algorithm that for every $n\in\N$, when given as input a circuit $C\in\mathcal{C}_{n,d,n^{1+\e}}$, runs in time $n^{O(\log\log(n))^2}$, and for the parameter $B(n)=\frac1{10}\cdot2^{n^{1-\de}}$ satisfies the following:
\begin{enumerate}
	\item If $C$ accepts all but at most $B(n)$ of its inputs, then the algorithm accepts $C$.
	\item If $C$ rejects all but at most $B(n)$ of its inputs, then the algorithm rejects $C$.
\end{enumerate}
\end{theorem}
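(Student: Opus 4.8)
The plan is to follow the high-level strategy described in Section~\ref{sec:tech:alg}: given the input circuit $C\in\mathcal{C}_{n,d,n^{1+\e}}$, deterministically produce a restriction $\rho=(I,z)$ with $|I|=n'\ge n^{1-\de}$ living variables such that $C\rest_\rho$ is $(1/10)$-close to a single LTF (or a constant), and then distinguish the two cases by enumerating the seeds of the PRG of Theorem~\ref{thm:prg:ltfs} for LTFs. The point is that if $C$ accepts all but $B(n)=\frac1{10}2^{n^{1-\de}}$ of its $n$-bit inputs, then at most a $\frac1{10}2^{n^{1-\de}}/2^{n'}\le\frac1{10}$ fraction of the inputs in the subcube of $\rho$ are rejected by $C$; combined with the $(1/10)$-closeness of $C\rest_\rho$ to an LTF $\Phi$, we get that $\Phi$ accepts at least a $1-\frac15$ fraction of its inputs, so a $(1/10)$-error PRG for LTFs will detect acceptance probability $\ge 1/2$; symmetrically in the rejecting case we detect acceptance probability $\le 1/2$. (The constants $1/10$, $1/5$ are indicative; one tunes the closeness parameter in the restriction step and the PRG error accordingly so that the two cases are separated by a constant gap, and the factor $d\cdot30^{d-1}$ in $\de$ is exactly the loss incurred over the $d$ iterations, roughly a factor $30$ per depth reduction from the $p=n^{-\Omega(1)}$ shrinkage of live variables.)

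The core is the deterministic restriction procedure, run iteratively to reduce depth from $d$ to $1$. In iteration $i$ (starting from $C_i$ of depth $i$), I would: (1) sample a pseudorandom restriction $\rho$ that keeps each variable alive with probability $p=n^{-\Omega(1)}$ using $O(1)$-wise independent choices for which variables survive, and the \cite{gkm15} generator with error $1/\poly(n)$ for the values assigned to the fixed variables (this is the distribution described in Section~\ref{sec:tech:alg}); (2) invoke the derandomized structural lemma (Proposition~\ref{prop:rest:ltf}) to conclude that, with probability $1-p^{\Omega(1)}$ over $\rho$, each bottom-layer gate becomes $\exp(-n^{\Omega(1)})$-close to a constant, while the surviving non-biased bottom gates shrink in fan-in by roughly a factor $p$; (3) replace the very-biased gates by the corresponding constants, obtaining a depth-$(i-1)$ circuit $C_{i-1}$ that is $2^{-n^{\Omega(1)}}$-close to $C_i$ on the current subcube; (4) fix a few more variables to drive the remaining bottom gates to fan-in $\le 1$ (replaceable by variables/constants), as in \cite{css16}. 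Since all relevant probabilistic statements hold over an explicitly samplable pseudorandom distribution with $\tilde O(\log n)$ random bits, I enumerate all seeds, and for each one check (by the efficient tests described below, or directly since after the restriction the fan-ins are small) whether the desired structural event occurred; by a union bound over the at most $\poly(n)$ gates, a $1-o(1)$ fraction of seeds succeed, so at least one does, and we pick it. After $d$ iterations we are left with a single LTF $\Phi$ on $n'\ge n^{1-\de}$ variables, computable explicitly in the allotted time.

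The subtle point, and the step I expect to be the main obstacle, is \emph{preserving the approximations across iterations}: the circuit $C_{i-1}$ disagrees with $C_i$ on up to $2^{n-n^{\Omega(1)}}$ inputs of the \emph{current} subcube, which dwarfs the size $2^{n'}$ of the \emph{final} subcube, so naive union bounds over future restrictions are useless. I would handle this exactly as sketched in the overview via the ``randomized tests'' machinery of \cite{tell17} (Lemma~\ref{lem:randtest}): for each gate $\Phi$ replaced by a constant $\sigma$, design a distribution ${\bf T}$ over tests on $\pmset^{[n]\setminus I}$ — namely, sample $\poly(n)$ uniform points in the subcube $\rho_{I,z}$ and accept iff $\Phi\equiv\sigma$ on all of them — such that almost every $T$ in its support is a conjunction of $\poly(n)$ LTFs with acceptance probability $\ge 1-1/\poly(n)$, hence is $(1/\poly(n))$-fooled by any distribution that is $(1/\poly(n))$-pseudorandom for LTFs, in particular by \cite{gkm15}. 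Lemma~\ref{lem:bias} (and its strengthening Lemma~\ref{lem:bias:full}) then gives: if $\Phi$ is $n^{-200}$-close to $\sigma$ and we choose fixed values from an $n^{-200}$-pseudorandom-for-LTFs distribution, then with probability $1-n^{-10}$ the restricted $\Phi$ stays $n^{-10}$-close to $\sigma$. Since each gate starts $\exp(-n^{\Omega(1)})$-close to its constant, a constant number of such polynomial losses over the $d=O(1)$ iterations is affordable, and a union bound over $\poly(n)$ gates keeps $C_i$ and $C_{i-1}$ $(1/\poly(n))$-close on the \emph{final} subcube — which suffices for the acceptance-probability argument above. The remaining work is bookkeeping: verifying the arithmetic relating $\e$, $p$, $\de$, and the $d\cdot30^{d-1}$ factor, checking that the \cite{gkm15} seed length $\tilde O(\log n)$ and the $O(1)$-wise independence generator combine to $\tilde O(\log n)$ total randomness so that enumeration costs $2^{\tilde O(\log n)}=n^{\tilde O(1)}$ per iteration, and confirming the final running time is $n^{O(\log\log n)^2}$, dominated by enumerating the \cite{gkm15} seeds for the final LTF.
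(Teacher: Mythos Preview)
Your plan follows the paper's proof almost exactly---iterated depth reduction via Proposition~\ref{prop:depth}, the derandomized structural lemma (Proposition~\ref{prop:rest:ltf}), bias preservation via Lemma~\ref{lem:bias}, and estimating the final LTF with the \cite{gkm15} PRG---and the parameter bookkeeping is right.

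There is one real gap. You propose to enumerate the $\tilde O(\log n)$-bit seeds for the restriction procedure, \emph{check} for each whether ``the desired structural event occurred'', and \emph{pick one} that succeeds. But the crucial event---that every previously-replaced gate $\Phi_k$ remains $1/\poly(n)$-close to its constant $\sigma_k$ on the \emph{final} subcube---is not something you can certify from the seed: Lemma~\ref{lem:bias} only guarantees this with probability $1-n^{-\Omega(1)}$ over the pseudorandom choice of fixed values, not for any specific seed for which the checkable events (gates becoming $t$-imbalanced, fan-ins shrinking, enough variables surviving) happen to pass. A seed could pass all your efficient checks yet still land in the $n^{-\Omega(1)}$-probability bad set where some $\Phi_k\rest_\rho$ drifts far from $\sigma_k$. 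So after you pick your seed you have no guarantee that $C\rest_\rho$ is $1/10$-close to the LTF $\Phi$ you output, and the acceptance-probability comparison breaks.

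The paper's fix is simply not to pick a single seed. For \emph{every} seed that yields a restriction with $\ge n^{1-\de}$ live variables together with an LTF $\Phi$, it estimates $\Phi$'s acceptance probability via the \cite{gkm15} PRG, and accepts $C$ iff the estimate is $\ge 3/5$ for \emph{most} seeds. Since all but an $O(n^{-\e/2})$ fraction of seeds are good (large subcube \emph{and} $\Phi$ is $1/10$-close to $C\rest_\rho$), and every good seed gives the correct decision, the majority is correct---no per-seed certification of approximation-preservation is needed.
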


To obtain the parameters of Theorem~\ref{thm:int:main}, for any $d\ge1$ let $\e=2^{-10d}$. Then, the algorithm from Theorem~\ref{thm:main} works when the number of exceptional inputs of $C$ is at most $B(n)=\frac1{10}\cdot2^{n^{1-\de}}>2^{n^{1-1/5d}}$. The deterministic algorithm from Theorem~\ref{thm:main} is based on the following \emph{pseudorandom restriction algorithm}, whose construction and proof appear in Section~\ref{sec:depth}.

\begin{proposition} (pseudorandom restriction algorithm). \label{prop:main:rest:circuit}
Let $d\ge1$, let $\e>0$ be a sufficiently small constant, and let $\de=d\cdot30^{d-1}\cdot\e$. Then, there exists a polynomial-time algorithm that for every $n\in\N$, when given as input a circuit $C\in\mathcal{C}_{n,d,n^{1+\e}}$ and a random seed of length $O(\log(n)\cdot(\log\log(n))^2)$, with probability at least $1-n^{-\e/2}$ satisfies the following:
\begin{enumerate}
	\item The algorithm outputs a restriction $\rho\in\pmstar^n$ that keeps at least $n^{1-\de}$ variables alive.
	\item The algorithm outputs an LTF $\Phi:\pmset^{\rho^{-1}(\star)}\ra\pmset$ such that $\Phi$ is $1/10$-close to $C\rest_\rho$ (i.e., $\Pr_{x\in\pmset^{\rho^{-1}(\star)}}[C(x)=\Phi(x)]\ge9/10$).
\end{enumerate}
\end{proposition}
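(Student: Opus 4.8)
The base case $d=1$ is immediate: output the all-$\star$ restriction (so all $n\ge n^{1-\de}$ variables stay alive) and $\Phi=C$, which is already a single LTF and equals $C\rest_\rho$. For $d\ge2$ I would \emph{iteratively reduce the depth}, derandomizing the restriction procedure of~\cite{css16}. The algorithm maintains a restriction $\rho$ (initially all-$\star$) and a depth-$i$ circuit $C_i$ on the $n_i$ living variables, with $C_d=C$ and $n_d=n$; in each of the $d-1$ iterations it turns $C_i$ into a depth-$(i-1)$ circuit $C_{i-1}$ that approximates $C_i$ on the new subcube of living variables, and at the end it outputs the composed restriction $\rho$ and $\Phi\eqdef C_1$. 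One iteration has three steps. \textbf{(i)} Apply a pseudorandom restriction $\tilde\rho$ keeping each variable alive with probability $p=n_i^{-\Theta(1)}$, where the surviving coordinates are drawn from an $O(1)$-wise almost-independent distribution and the values for the fixed coordinates from the Gopalan--Kane--Meka generator (Theorem~\ref{thm:prg:ltfs}) with error $1/\poly(n)$ --- this is exactly the restriction distribution provided by Proposition~\ref{prop:rest:ltf}. \textbf{(ii)} For each bottom gate, test (in polynomial time, from the explicit weights) the deterministic structural condition furnished by the proof of Proposition~\ref{prop:rest:ltf} --- a condition on the critical index, regularity and imbalance of the restricted weight vector that implies $\exp(-n_i^{\Omega(1)})$-closeness to a fixed constant $\sigma\in\pmset$ --- and replace every gate meeting it by the constant $\sigma$. \textbf{(iii)} Fix enough of the living variables feeding the remaining (``bad'') bottom gates, using their pseudorandom values $z$, so that each such gate has fan-in at most $1$ and hence becomes a literal or a constant; fold these into the layer above to obtain $C_{i-1}$, which has depth $i-1$ and at most as many wires as $C_i$.

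\textbf{Correctness of one iteration.} By Proposition~\ref{prop:rest:ltf}, each bottom gate separately meets the structural condition (and is then $\exp(-n_i^{\Omega(1)})$-close to its constant) with probability $1-p^{\Omega(1)}$ over $\tilde\rho$, so the expected fan-in into the gates that fail the condition is at most $p^{\Omega(1)}\cdot n_i^{1+\e}$. Taking $p=n_i^{-\Theta(1)}$ with a large enough constant makes this $\le n_i^{1-\Omega(1)}$, and by Markov's inequality --- together with Fact~\ref{fact:tail} applied to the $O(1)$-wise almost-independent survival indicators, which also shows that $\gtrsim p\cdot n_i$ variables survive --- with probability $1-n^{-\Omega(\e)}$ the total fan-in over living variables into the bad gates is $\le n_i^{1-\Omega(1)}$. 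Hence step~(iii) fixes only $\le n_i^{1-\Omega(1)}$ extra variables, the number of living variables stays $\ge n_i^{1-\Omega(\e)}$, and the only error introduced is in step~(ii): replacing the good gates by constants alters $C_i\rest_{\tilde\rho}$ on at most a $(\#\text{good gates})\cdot\exp(-n_i^{\Omega(1)})$ fraction of the current subcube.

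\textbf{Preserving the approximation across iterations.} The subtle point, as in the overview, is that this $\exp(-n^{\Omega(1)})$ error lives on the iteration-$i$ subcube, which later iterations shrink drastically; I would control it via Lemma~\ref{lem:bias:full} and the randomized-tests lemma (Lemma~\ref{lem:randtest}). Whenever a later iteration fixes variables using a distribution that is $(1/\poly(n))$-pseudorandom for LTFs --- which is what Theorem~\ref{thm:prg:ltfs} supplies, and which by Claim~\ref{claim:concltfs} equals being pseudorandomly concentrated --- each gate that was replaced by a constant $\sigma$ and is currently $\beta$-close to $\sigma$ remains $\poly(\beta,1/n)$-close to $\sigma$ after that restriction, except with probability $n^{-\Omega(1)}$. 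Since a gate starts $\exp(-n^{\Omega(1)})$-close to its constant and undergoes at most $d-2=O(1)$ further restrictions, a constant number of polynomial degradations leaves it $1/\poly(n)$-close to $\sigma$ on the \emph{final} subcube. A union bound over the $\le n^{1+\e}$ gates of $C$ and the $d-1=O(1)$ iterations shows that, except with probability $n^{-\Omega(1)}$, every replaced gate stays $1/\poly(n)$-close to its constant on the final subcube; at any $x$ on which no replaced gate deviates from its constant one has $C\rest_\rho(x)=C_{d-1}\rest_{\cdots}(x)=\cdots=C_1(x)=\Phi(x)$, so $\Phi$ is $\big((d-1)\cdot n^{1+\e}/\poly(n)\big)$-close to $C\rest_\rho$, which is below $1/10$ once the polynomial in Lemma~\ref{lem:bias:full} is chosen large enough.

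\textbf{Bookkeeping, and the main obstacle.} The seed of one iteration is dominated by Theorem~\ref{thm:prg:ltfs} with error $1/\poly(n)$, i.e.\ $O(\log n\cdot(\log\log n)^2)$ bits, plus $O(\log n)$ bits for the $O(1)$-wise almost-independent survival pattern; over $d=O(1)$ iterations the total seed length is $O(\log n\cdot(\log\log n)^2)$, and the algorithm runs in polynomial time since each of the $O(1)$ iterations performs only polynomial-time operations. Each iteration fails with probability $n^{-\Omega(\e)}$ (from Proposition~\ref{prop:rest:ltf}, Fact~\ref{fact:tail}, and the preservation step), so a union bound over the $d-1$ iterations gives total failure probability $\le n^{-\e/2}$. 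The remaining work is the parameter recursion for the number of living variables: writing $n_i=n^{s_i}$, the circuit $C_i$ has at most $n_i^{1+\e_i}$ wires with $1+\e_i=(1+\e)/s_i$, which \emph{grows} as $s_i$ shrinks, so the permissible restriction rate and the per-level exponent loss grow geometrically down the depth; tracking this shows the loss compounds to $\de=d\cdot30^{d-1}\cdot\e$ and that $s_i\ge1-\de$ throughout. I expect the main obstacle to be making this recursion close simultaneously with all the constraints --- $p$ large enough for Proposition~\ref{prop:rest:ltf} and the Fact~\ref{fact:tail} tail bounds to bite, yet the total bad fan-in still shrinking and enough variables surviving --- together with checking that the randomized-tests argument of Lemma~\ref{lem:randtest} applies to all replaced gates at once despite the adaptive, pseudorandom choice of clean-up variables in step~(iii).
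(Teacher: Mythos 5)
Your high-level plan (iteratively apply a pseudorandomly-restricted depth-reduction, then track the accumulated error across iterations via Lemma~\ref{lem:bias:full} and Lemma~\ref{lem:randtest}) is the paper's approach, and your handling of the ``closeness must survive later restrictions'' issue, the seed-length accounting, and the $\e_i$/$s_i$ recursion are all in the right spirit. But there is a genuine gap in your one-iteration argument, specifically in step (iii) and the Markov bound feeding it, and the missing pieces are not routine bookkeeping.

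Your claim that ``the expected fan-in into the gates that fail the condition is small, so by Markov we can afford to fix all their feeding variables'' silently assumes that the indicator $\mathbf 1[\Phi\text{ remains balanced}]$ and the random variable $\fanin(\Phi\rest_{\tilde\rho})$ decouple. They do not: both depend on which variables survive. The paper decouples them by first conditioning on an event $\mathcal{E}$ under which \emph{every large-fan-in gate} has its fan-in cut by a factor $\approx p$; conditioned on $\mathcal{E}$, the fan-in bound is deterministic and the expectation factorizes. That conditioning is only available for gates of fan-in at least $n^{\alpha}$, because the concentration (Fact~\ref{fact:tail}) that makes $\mathcal{E}$ likely needs the fan-in to be polynomially large. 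Gates with fan-in below $n^{\alpha}$ cannot be handled this way, and simple fixes (Cauchy--Schwarz on $\mathbf 1\cdot\fanin$, or dropping the indicator) do not close the gap: the $\sqrt{pf}$ variance term, summed over all small gates, exceeds the number $\approx p\cdot n$ of live variables you would need to dominate it.

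Consequently the paper's proof of the depth-reduction step (Proposition~\ref{prop:depth}) contains two structural ingredients that your sketch omits and that are actually needed. First, a deterministic \emph{preprocessing restriction} $\rho_1$ that fixes every variable of fan-out greater than $2n^{\e}$ (affordable because the circuit has only $n^{1+\e}$ wires). Second, a split of the bottom gates into a large-fan-in set $L$ (treated as you describe, with the conditioning on $\mathcal{E}$) and a small-fan-in set $S$, where $S$-gates are eliminated by a purely combinatorial, greedy independent-set argument on a conflict graph of the living variables (Claim~\ref{claim:graphtharg}): after the fan-out preprocessing, each variable conflicts with at most $k'=2n^{\alpha+\e}$ others through $S$-gates, so one can keep an $n_3/k'$-size independent set alive and fix the rest, forcing every $S$-gate to fan-in $\le 1$. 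Without the fan-out bound, this graph has unbounded degree and the greedy argument fails; without the $L/S$ split, there is no way to control the total bad fan-in. So your step (iii) — ``fix enough of the variables feeding the bad gates'' — needs to be replaced by this two-pronged treatment, and step (i) needs to be preceded by the fan-out reduction.
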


Let us now prove the main result (i.e., Theorem~\ref{thm:main}) relying on Proposition~\ref{prop:main:rest:circuit}.

\begin{proof}[{\bf Proof of Theorem~\ref{thm:main}.}]
We iterate over all seeds for the algorithm from Proposition~\ref{prop:main:rest:circuit}. 
For each seed that yields both a restriction $\rho$ that keeps at least $n^{1-\de}$ variables alive and an LTF $\Phi$ over $\pmset^{\rho^{-1}(\star)}$, we estimate the acceptance probability of $\Phi$ up to an error of $\frac1{5}$; this is done by iterating over the seeds of the pseudorandom generator from Theorem~\ref{thm:prg:ltfs} (instantiated with error parameter $1/5$). If for most of the seeds, our estimate of the acceptance probability of $\Phi$ is at least $\frac3{5}$, then we accept $C$; and otherwise we reject $C$. The running time of the algorithm is $2^{O(\log(n)\cdot(\log\log(n))^2)}=n^{O(\log\log(n))^2}$.

Recall that all but $O(n^{-\e})$ of the seeds yield $\rho$ and $\Phi$ such that $\rho$ keep at least $n^{1-\de}>\log(10\cdot B(n))$ variables alive and such that $\Phi$ is $1/10$-close to $C\rest_\rho$; we call such seeds \emph{good} seeds. Now, if $C$ accepts all but at most $B(n)$ inputs, then for every good seed, the acceptance probability of $C\rest_\rho$ is at least $9/10$, and thus the acceptance probability of $\Phi$ is at least $\frac4{5}$, which implies that our estimate of the latter will be at least $3/5$. Thus, the algorithm will accept $C$. On the other hand, if $C$ rejects all but at most $B(n)$ inputs, then by a similar argument for all good seeds it holds that the estimate of the acceptance probability of $\Phi$ will be at most $2/5$, and thus the algorithm will reject $C$.
\end{proof}

\subsection{Pseudorandom restriction algorithm} \label{sec:depth}

We prove Proposition~\ref{prop:main:rest:circuit} in three steps. The first step, in Section~\ref{sec:main:rest:ltf}, will be to prove that a suitably-chosen pseudorandom restriction turns any single LTF to be very biased, with high probability. The second step, in Section~\ref{sec:main:rest:layer}, will leverage the first step to construct an algorithm that gets as input a linear threshold circuit, and applies pseudorandom restrictions to reduce the depth of the circuit by one layer. And the final step, in Section~\ref{sec:main:rest:circuit}, will be to iterate the construction of the second step in order to prove Proposition~\ref{prop:main:rest:circuit}.

\subsubsection{Pseudorandom restrictions and a single LTF} \label{sec:main:rest:ltf}

As mentioned in the introduction, an illustrative example for the effects of restrictions on LTFs is the majority function $\Phi(x)=\sign(\sum_{i\in[n]}x_i)$. For $p\in(0,1)$, denote by ${\bm{\mathcal{R}}}_p$ the distribution of restrictions on $n$ variables such that for every $i\in[n]$ independently it holds that the $i^{th}$ variable remains alive with probability $p$, and is otherwise assigned a uniform random bit. Then, we have the following:

\begin{fact} (a random restriction and the majority function). \label{fact:rest:ltf:init}
Let $\Phi(x)=\sign(\sum_{i\in[n]}x_i)$, and let $p=n^{-\Omega(1)}$. Then, for every $t\ge1$, with probability at least $1-O(t\cdot\sqrt{p})$ over $\rho\sim{\bm{\mathcal{R}}}_p$ it holds that $\Phi\rest_{\rho}$ is $t$-imbalanced 
\end{fact}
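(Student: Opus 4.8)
The plan is to compute the effect of a random restriction on $\Phi=\sign(\sum_i x_i)$ explicitly and then apply the Berry--Esseen corollary (Theorem~\ref{thm:be}) for anti-concentration of weighted $\pm1$ sums. Write $\rho=(I,z)$, so $I$ is the set of surviving variables and $z\in\pmset^{[n]\setminus I}$ are the values assigned to the rest. For $y\in\pmset^{I}$ we have $\Phi\rest_\rho(y)=\sign\bigl(\sum_{i\in I}y_i+\sum_{j\notin I}z_j\bigr)$, so $\Phi\rest_\rho$ is the LTF with the all-ones weight vector $w'\in\R^{|I|}$ (whence $\norm{w'}_2=\sqrt{|I|}$) and threshold $\theta'=-\sum_{j\notin I}z_j$. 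By Definition~\ref{def:ltf:bal}, $\Phi\rest_\rho$ is $t$-imbalanced iff $\bigl|\sum_{j\notin I}z_j\bigr|>t\sqrt{|I|}$, so it suffices to bound by $O(t\sqrt p)$ the probability, over $\rho\sim{\bm{\mathcal R}}_p$, of the complementary event $\bigl|\sum_{j\notin I}z_j\bigr|\le t\sqrt{|I|}$.

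To estimate this probability I would condition on the number of survivors $\mathbf k=|I|$, which is $\mathrm{Bin}(n,p)$-distributed. By symmetry, conditioned on $\mathbf k=k$ the sum $\sum_{j\notin I}z_j$ is distributed as a sum of $n-k$ independent uniform signs, regardless of which $k$-subset survived. Applying Theorem~\ref{thm:be} with the all-ones vector in $\R^{n-k}$ (for which $\mu=1/\sqrt{n-k}$ works), threshold $0$, and parameter $t\sqrt{k}/\sqrt{n-k}$, for $1\le k\le n-1$ we obtain
\[
q_k\ :=\ \Pr\!\Bigl[\,\bigl|{\textstyle\sum_{j\notin I}}z_j\bigr|\le t\sqrt k \ \Big|\ \mathbf k=k\,\Bigr]\ \le\ 2\Bigl(\tfrac{t\sqrt k}{\sqrt{n-k}}+\tfrac1{\sqrt{n-k}}\Bigr),
\]
while $q_0=\Pr[\sum_{j\in[n]}z_j=0]\le 2/\sqrt n$ by the same theorem with radius tending to $0$, and $k=n$ is harmless (it is subsumed by the tail bound below since $p\le 1/4$ for $n$ large). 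Finally I would split on whether $\mathbf k\le 2pn$: a Chernoff bound (or Fact~\ref{fact:tail} with a large enough constant) gives $\Pr[\mathbf k>2pn]\le\exp(-\Omega(pn))$, and for every $k\le 2pn$ we have $n-k\ge n/2$ and hence $q_k\le 4t\sqrt p+2\sqrt{2/n}$. Summing over the two cases, the probability that $\Phi\rest_\rho$ is $t$-balanced is at most $\exp(-\Omega(pn))+4t\sqrt p+2\sqrt{2/n}$, which is $O(t\sqrt p)$ since $t\ge1$ and $p=n^{-\Omega(1)}$ (so that $\exp(-\Omega(pn))$ and $1/\sqrt n$ are both $O(\sqrt p)$).

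I do not expect a genuine obstacle here; the argument is a direct anti-concentration computation, and the effort is purely bookkeeping. The only points that require care are matching the parameters in Theorem~\ref{thm:be} correctly in the middle step, and verifying that the atypical contributions (the event $\mathbf k>2pn$, the boundary cases $\mathbf k\in\{0,n\}$, and the additive $1/\sqrt n$ slack) are all dominated by $t\sqrt p$ in the regime $t\ge1$, $p=n^{-\Omega(1)}$. As a consistency check, this bound cannot be improved: when $k=\Theta(pn)$ a sum of $\Theta(n)$ independent signs lands within distance $t\sqrt{pn}$ of the origin with probability $\Theta(t\sqrt p)$.
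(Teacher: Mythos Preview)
Your proposal is correct and follows essentially the same approach as the paper: both condition on $|I|$ being close to its mean $pn$ via a Chernoff bound and then apply the Berry--Ess\'{e}en corollary (Theorem~\ref{thm:be}) to the sum $\sum_{j\notin I}z_j$ over the fixed coordinates. The paper phrases the conditioning as $\norm{w_I}_2\in\sqrt{pn}\pm\sqrt{pn}/2$ rather than $|I|\le 2pn$, and does not separately treat the boundary cases $k\in\{0,n\}$, but the arithmetic is identical.
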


\begin{proof}[{\bf Proof.}]
Let $I\subseteq[n]$ be the set of variables that $\rho$ keeps alive. With probability $1-\exp(-n^{\Omega(1)})$ it holds that $\norm{w_I}_2\in \sqrt{pn}\pm\sqrt{pn}/2$. Conditioned on $\norm{w_I}_2\le2\cdot\sqrt{pn}$, it also holds that $\norm{w_{[n]\setminus I}}_2\ge\sqrt{n}/2$, which implies that for every $i\in([n]\setminus I)$ it holds that $|w_i|=1\le (2/\sqrt{n})\cdot\norm{w_{[n]\setminus I}}_2$. In this case, by the Berry-Ess\'{e}en theorem (i.e., by Theorem~\ref{thm:be}), for any $t\ge1$, the probability that $\ip{w_{[n]\setminus I},z_{[n]\setminus I}}$ falls in the interval $\pm4t\cdot\sqrt{p}\cdot\norm{w_{[n]\setminus I}}_2$ (which contains the interval $\pm t\cdot\norm{w_I}_2$) is at most $O(t\cdot\sqrt{p}+\frac{2}{\sqrt{n}})=O(t\cdot\sqrt{p})$.
\end{proof}

Our goal in this section is to prove a statement that is similar to Fact~\ref{fact:rest:ltf:init}, but that holds for an \emph{arbitrary LTF} $\Phi$, and holds also when the restriction $\rho$ is sampled pseudorandomly, rather than uniformly. For simplicity, we only state the proposition informally at the moment (for a formal statement see Proposition~\ref{prop:rest:ltf}):

\begin{proposition} (pseudorandom restriction lemma for a single LTF; informal). \label{prop:rest:ltf:informal}
Let $n\in\N$, let $p=n^{-\Omega(1)}$, and let $t=p^{-\Omega(1)}$. Let ${\bf y}$ be a distribution over $\pmset^{\log(1/p)\cdot n}$ that is $p$-almost $O(\log(1/p))$-wise independent, and let ${\bf z}$ be a distribution over $\pmset^n$ that is $p^{\Omega(1)}$-pseudorandomly concentrated. Then, for any LTF $\Phi$ over $n$ input bits, the probability over choice of restriction $\rho\sim({\bf y},{\bf z})$ that $\Phi\rest_\rho$ is $t$-balanced is at most $p^{\Omega(1)}$.
\end{proposition}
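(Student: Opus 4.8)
The goal is to show that a pseudorandomly chosen restriction $\rho \sim ({\bf y},{\bf z})$ makes an arbitrary LTF $\Phi = (w,\theta)$ imbalanced with high probability. The plan is to follow the ``structural theory of LTFs'' case analysis based on the critical index, as in~\cite{css16}, but carry it out so that every probabilistic estimate uses only (i) the near-$t$-wise-independence of ${\bf y}$ (which controls \emph{which} coordinates survive) via the tail bound in Fact~\ref{fact:tail}, and (ii) the pseudorandom concentration of ${\bf z}$ (which controls the \emph{values} on the fixed coordinates) via Claim~\ref{claim:concltfs}, so that the Berry-Ess\'een theorem (Theorem~\ref{thm:be}) and Hoeffding's inequality (Theorem~\ref{thm:hoef}) apply to the pseudorandom weighted sum $\ip{w_{[n]\setminus I},{\bf z}_{[n]\setminus I}}$ up to an additive error of $p^{\Omega(1)}$. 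Write $\rho = (I,z)$ where $I = \rho^{-1}(\star)$; then $\Phi\rest_\rho$ has weight vector $w_I$ and threshold $\theta' = \theta - \ip{w_{[n]\setminus I}, z}$, so ``$\Phi\rest_\rho$ is $t$-balanced'' is the event $|\theta - \ip{w_{[n]\setminus I}, z}| \le t\cdot\norm{w_I}_2$.

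First I would normalize so that $|w_1| \ge \cdots \ge |w_n|$ and split on the $\e_0$-critical index $h$ of $w$ for a suitable small constant $\e_0$. \emph{Case 1: $h$ is small (say $h = O(\log(1/p))$) and $\Phi$ is already somewhat imbalanced on the ``head'' coordinates.} Here the tail part $w_{>h}$ is regular, and I would argue that typically $\norm{w_I}_2 \approx \sqrt{p}\cdot\norm{w}_2$ is small (using Fact~\ref{fact:tail} applied to the survival indicators of coordinates, weighted by $w_i^2$, exactly as in the uniform case) while the fixed tail coordinates still carry $\ge \Omega(\norm{w}_2)$ of $\ell_2$-mass and are individually tiny relative to that mass; then the Berry-Ess\'een estimate for the pseudorandom sum $\ip{w_{[n]\setminus I},{\bf z}}$ (valid because ${\bf z}$ is pseudorandomly concentrated) says it lands in the window $\theta \pm t\cdot\norm{w_I}_2$ with probability $O(t\cdot\sqrt{p}) + p^{\Omega(1)}$. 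This is the direct analogue of the proof of Fact~\ref{fact:rest:ltf:init}. \emph{Case 2: $h$ is large, i.e. $w_{>O(\log(1/p))}$ is still regular.} Then with probability $1 - p^{\Omega(1)}$ over ${\bf y}$ the surviving set $I$ avoids all of the first $O(\log(1/p))$ coordinates (here I only need $O(\log(1/p))$-wise independence and $\delta = p$-almost-independence, since each of these finitely many coordinates survives with probability $p$), so $w_I = (w_{>O(\log(1/p))})_I$ is a restriction of a regular vector; again a concentration argument on $\norm{w_I}_2$ plus Berry-Ess\'een on the fixed part finishes it. \emph{Case 3: $\Phi$ is $t$-imbalanced to begin with but the head coordinates $w_{\le h}$ dominate} --- here one conditions on the values ${\bf z}$ assigns to the (few) fixed head coordinates and on which head coordinates survive; on the remaining regular tail the same Berry-Ess\'een argument applies, and one shows the total ``effective threshold'' $\theta - \ip{w_{\text{fixed head}},z}$ is, with the required probability, far from the (small) surviving mass. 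The bookkeeping of how these cases partition all LTFs is exactly the case split in~\cite[proof of the structural lemma]{css16}, which I would mirror.

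The \textbf{main obstacle}, and the part that needs genuine care rather than bookkeeping, is making sure that \emph{every} appeal to anti-concentration (Berry-Ess\'een) and concentration (Hoeffding) in the original argument is of the restricted form ``$\Pr[\ip{v,{\bf z}} \in J]$ for a fixed real vector $v$ and fixed interval $J$'', because that is precisely what pseudorandom concentration of ${\bf z}$ buys us via Claim~\ref{claim:concltfs}, whereas the uniform-case proof sometimes implicitly uses joint statements about several correlated sums or conditions on partial sums in a way that does not immediately reduce to a single fixed linear form. Concretely, when we condition on the restriction ${\bf y}$ (the survival pattern), the residual randomness is ${\bf z}$ on the fixed coordinates, and $\ip{w_{[n]\setminus I},{\bf z}_{[n]\setminus I}}$ is then a single fixed linear form in the fresh randomness, so the reduction works coordinate-patternwise; but one must check that the ``regularity of $w_{[n]\setminus I}$ relative to $\norm{w_{[n]\setminus I}}_2$'' needed to invoke Theorem~\ref{thm:be} holds for \emph{typical} $I$, which is where Fact~\ref{fact:tail} (almost-$t$-wise independence of ${\bf y}$) re-enters. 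I would structure the write-up to first record a clean ``pseudorandom Berry-Ess\'een / Hoeffding'' corollary (for $\e$-pseudorandomly concentrated ${\bf z}$, the bounds of Theorems~\ref{thm:hoef} and~\ref{thm:be} hold up to $+\e$), then state a ``typical survival pattern'' lemma from Fact~\ref{fact:tail} controlling $\norm{w_I}_2$ and the per-coordinate ratios, and finally run the three-case argument above feeding these two ingredients in; the constants ($\e_0$, the $O(\log(1/p))$ threshold, the $p^{\Omega(1)}$ error, the $t \ge p^{-\Omega(1)}$ regime) are chosen at the end so all the $p^{\Omega(1)}$ error terms and the $O(t\sqrt p)$ main term stay below the target $p^{\Omega(1)}$.
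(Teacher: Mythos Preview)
Your plan has the right overall shape (split on the critical index, control $\norm{w_I}_2$ via near-independence of ${\bf y}$, then reduce the event ``$\Phi\rest_\rho$ is $t$-balanced'' to a single linear form in ${\bf z}$ landing in a fixed interval), but the case analysis as written is garbled and is missing the key idea in the large-critical-index case.

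First, your Case~2 is self-contradictory: if the $\e_0$-critical index $h$ is larger than $O(\log(1/p))$, then by definition $w_{>O(\log(1/p))}$ is \emph{not} $\e_0$-regular, so you cannot say ``$w_I$ is a restriction of a regular vector'' and feed it to Berry--Ess\'een. More substantively, the paper's two-case split (not three) hinges on a threshold $k$ that is \emph{polynomial} in $1/p$, with regularity parameter $\mu = \Theta(1/t^c) = p^{\Omega(1)}$ rather than a constant $\e_0$. With your constant $\e_0$, the Berry--Ess\'een error in the small-index case is $O(\e_0)=O(1)$, which gives nothing; and with your $O(\log(1/p))$ threshold, the large-index case is simply not the regime where regularity is available.

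The genuine gap is the large-critical-index case. When $h>k$, the tail $w_{>k}$ is irregular and Berry--Ess\'een is unavailable. The paper instead uses Servedio's lemma that the weights decay geometrically up to the critical index: for suitably spaced ``landmark'' indices $1,\gamma,2\gamma,\dots,r\gamma\le k$ one has $|w_{j\gamma}|<3^{-j}|w_1|$, and in particular $\norm{w_{>k}}_2\le\mu\,|w_{r\gamma}|$ is tiny. Conditioning on the (high-probability) event that all first $k$ coordinates are fixed, the anti-concentration then comes not from Berry--Ess\'een but from the elementary fact (from~\cite{dgjsv10}) that among the $2^r$ sign assignments to $r$ geometrically separated weights, at most \emph{one} places the partial sum in any interval of width $|w_{r\gamma}|/4$; hence $\Pr_z[\ip{w_{[n]\setminus I},z}\in\theta\pm(1/4\mu)\norm{w_{>k}}_2]\le 2^{-r}$ for uniform $z$. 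This is still a statement of the form ``a fixed linear form lies in a fixed interval'', so pseudorandom concentration of ${\bf z}$ transfers it with additive loss $\mu$. Your write-up should replace Cases~2 and~3 with this argument; everything else you outline (the small-index case via Berry--Ess\'een on the regular tail, and the use of Fact~\ref{fact:tail} to control $\norm{w_I}_2$ and the survival of the head) matches the paper once the parameters are set correctly.
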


\paragraph{A high-level description of the proof.} Let $\Phi=(w,\theta)$ be an LTF over $n$ input bits, and without loss of generality assume that $|w_1|\ge|w_2|\ge...\ge|w_n|$. Denote by $I\subseteq[n]$ the set of variables that ${\rho}$ keeps alive, and by $z_{[n]\setminus I}\in\pmset^{[n]\setminus I}$ the values that ${\rho}$ assigns to the fixed variables. Then, the restricted function is of the form $\Phi\rest_{\rho}=\left(w_I,\theta-\ip{w_{[n]\setminus I},{z}_{[n]\setminus I}}\right)$, and the restricted function is $t$-balanced if and only if the sum $\ip{w_{[n]\setminus I},{ z}_{[n]\setminus I}}$ falls in the interval $\theta\pm2t\cdot\norm{w_I}_2$. Our goal will be to show that this event is unlikely. 

The proof is based on a modification of the case analysis that appears in~\cite[Lem. 34, Sec. 4.2, Apdx. C.]{css16}. Specifically, for the parameter values $\mu=\Omega(1/t)$ and $k=\tilde{O}(t^2)$, we will consider two separate cases.

\bigskip\noindent\emph{Case 1: The $\mu$-critical index of $\Phi$ is at most $k$.}
Let $h\le k$ be the $\mu$-critical index of $\Phi$, and denote $T=[n]\setminus[h]$. We first claim that with probability $1-p^{\Omega(1)}$ over choice of $y\sim{\bf y}$ it holds that $\norm{w_I}_2\le p^{\Omega(1)}\cdot\norm{w_T}_2$. This is the case since with probability at least $1-h\cdot p=1-p^{\Omega(1)}$, all the first $h$ variables are fixed by ${\rho}$, and since the expected value of $\norm{w_{I\cap T}}_2$ is $\sqrt{p}\cdot\norm{w_{T}}_2$.

Condition on any fixed choice of $y\sim{\bf y}$ such that $\norm{w_I}_2\le p^{\Omega(1)}\cdot\norm{w_T}_2$. We will prove that with probability $1-p^{\Omega(1)}$ over a \emph{uniform choice of $z\in\pmset^n$} it holds that $\ip{w_{[n]\setminus I},z_{[n]\setminus I}}$ does not fall in the interval $\theta\pm t\cdot p^{\Omega(1)}\cdot\norm{w_{T}}_2$ (which contains the interval $\theta\pm t\cdot\norm{w_I}_2$, due to our fixed choice of $y$). Since ${\bf z}$ is $p^{\Omega(1)}$-pseudorandomly concentrated, it will follow that this event also holds with probability $1-p^{\Omega(1)}$ under a choice of $z\sim{\bf z}$.

To prove the claim about a uniform choice of $z\in\pmset^n$, condition \emph{any arbitrary fixed values} $z_{[h]}\in\pmset^{h}$ for the first $h$ variables. Then, the probability that $\ip{w_{[n]\setminus I},z_{[n]\setminus I}}$ falls in the interval $\theta\pm t\cdot p^{\Omega(1)}\cdot\norm{w_{T}}_2$ (which is what we want to bound) equals the probability that $\ip{w_{T\setminus I},z_{T\setminus I}}_2$ falls in the interval $\theta'\pm t\cdot p^{\Omega(1)}\cdot\norm{w_T}_2$, where $\theta'=\theta-\ip{w_{[h]},z_{[h]}}$. Since $h$ is the $\mu$-critical index of $w$ we have that $w_T$ is $\mu$-regular; also, since $\norm{w_I}_2\le p^{\Omega(1)}\cdot\norm{w_{T}}_2$ (due to our choice of $y$), it follows that $w_{T\setminus I}$ is also $(2\mu)$-regular and that $\norm{w_T}_2\approx\norm{w_{T\setminus I}}_2$. By the Berry-Ess\'{e}en theorem, the probability that $\ip{w_{T\setminus I},z_{T\setminus I}}$ falls in an interval of length $t\cdot p^{\Omega(1)}\cdot\norm{w_{T\setminus I}}_2$ is at most $O(t\cdot p^{\Omega(1)}+\mu)=p^{\Omega(1)}$ (see Lemma~\ref{lem:rest:ltf:reg}).

\bigskip\noindent\emph{Case 2: The $\mu$-critical index of $\Phi$ is larger than $k$.}
Similarly to the previous case, with probability at least $1-p^{\Omega(1)}$ it holds that all the first $k$ variables are fixed by ${\rho}$. Condition on any fixed $y\sim{\bf y}$ that fixes all the first $k$ variables. What we will show is that with high probability over $z\sim{\bf z}$, the sum $\ip{w_{[n]\setminus I},{\bf z}_{[n]\setminus I}}$ falls outside the interval $\theta\pm(1/4\mu)\norm{w_{>k}}_2$, which contains the interval $\theta\pm t\cdot\norm{w_I}_2$ (since $I\subseteq([n]\setminus[k])$ and $\mu=\Omega(1/t)$).

As before, we first analyze the case in which $z$ is chosen uniformly in $\pmset^n$. To do so we rely on a lemma of Servedio~\cite{ser07}, which asserts that the weights in $w$ decrease exponentially up to the critical index.  Intuitively, since the critical index is large (i.e., more than $k$), the exponential decay of the weights implies that $\norm{w_{>k}}_2$ is small. Thus, when uniformly choosing $z\in\pmset^n$, the sum $\ip{w_{[n]\setminus I},z_{[n]\setminus I}}$ is unlikely to fall in the small interval $\theta\pm(1/4\mu)\cdot\norm{w_{>k}}_2$; specifically, this happens with probability at most $\mu=p^{\Omega(1)}$ (see Claim~\ref{claim:rest:ltf:crit} for a precise statement). 

Since the event $\ip{w_{[n]\setminus I},z_{[n]\setminus I}}\in\theta\pm(1/4\mu)\cdot\norm{w_{>k}}_2$ happens with probability $p^{\Omega(1)}$ when $z\in\pmset^n$ is chosen uniformly, and the distribution ${\bf z}$ is $p^{\Omega(1)}$-pseudorandomly concentrated, the event also happens with probability at most $p^{\Omega(1)}$ over a choice of $z\sim{\bf z}$.

\paragraph{The full proof.} 
We will first prove an auxiliary lemma, which analyzes the effect of uniformly-chosen restrictions on regular LTFs (see Lemma~\ref{lem:rest:ltf:reg}). Then, we will prove a version of Proposition~\ref{prop:rest:ltf:informal} that only holds for LTFs with \emph{bounded critical index} (see Lemma~\ref{lem:rest:ltf:small}), and a version of Proposition~\ref{prop:rest:ltf:informal} that only holds for LTFs with \emph{large critical index} (see Lemma~\ref{lem:rest:ltf:large}). Finally, we will formally state a more general version of Proposition~\ref{prop:rest:ltf:informal} and prove it (see Proposition~\ref{prop:rest:ltf}).

The following auxiliary lemma considers a regular vector $w\in\R^m$, a fixed set of variables $I\subseteq[m]$ that will be kept alive, and a uniformly-chosen assignment $z\in\pmset^m$ for the fixed variables. The lemma will be used in the proof of Lemma~\ref{lem:rest:ltf:small}.

\begin{lemma} (pseudorandom restriction lemma for regular LTFs). \label{lem:rest:ltf:reg}
Let $m\in\N$, let $\mu\in(0,1)$, and let $\lambda\le3/4$. Let $w'\in\R^m$ be a $\mu$-regular vector, and let $I\subseteq[m]$ such that $\norm{w'_I}_2<\lambda\cdot\norm{w'}_2$. Then, for any $\theta'\in\R$ and $t>0$, the probability over uniform choice of $z\in\pmset^m$ that $\ip{w'_{[m]\setminus I},z_{[m]\setminus I}}\in\theta'\pm t\cdot\lambda\cdot\norm{w'}_2$ is at most $O(t\cdot\lambda+\mu)$.
\end{lemma}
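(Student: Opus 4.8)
The plan is to reduce this to the Berry-Esséen corollary (Theorem~\ref{thm:be}) applied to the vector $w'_{[m]\setminus I}$ of fixed-variable weights, after checking that removing the coordinates in $I$ does not destroy regularity or shrink the norm by more than a constant factor. First I would set $W = \norm{w'}_2$ and observe that $\norm{w'_{[m]\setminus I}}_2^2 = W^2 - \norm{w'_I}_2^2 \ge (1-\lambda^2)\cdot W^2 \ge (1 - 9/16)\cdot W^2 = (7/16)\cdot W^2$, using the hypothesis $\norm{w'_I}_2 < \lambda\cdot W$ and $\lambda \le 3/4$; hence $\norm{w'_{[m]\setminus I}}_2 \ge \tfrac12\cdot W$ (in fact $\ge \sqrt{7}/4\cdot W$), so $W \le 2\cdot\norm{w'_{[m]\setminus I}}_2$.

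Next I would verify regularity of the restricted weight vector. Since $w'$ is $\mu$-regular, for every coordinate $i$ we have $|w'_i| \le \mu\cdot W \le 2\mu\cdot\norm{w'_{[m]\setminus I}}_2$, and this in particular holds for every $i \in [m]\setminus I$; thus $w'_{[m]\setminus I}$ is $(2\mu)$-regular (with respect to its own $\ell_2$-norm). Now apply Theorem~\ref{thm:be} to the vector $w'_{[m]\setminus I}$ with the uniformly random $z_{[m]\setminus I} \in \pmset^{[m]\setminus I}$, the regularity parameter $2\mu$, the threshold $\theta'$, and the window parameter chosen so that the interval $\theta' \pm t\cdot\lambda\cdot W$ is contained in $\theta' \pm (2t\lambda)\cdot\norm{w'_{[m]\setminus I}}_2$ (using $W \le 2\cdot\norm{w'_{[m]\setminus I}}_2$). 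Theorem~\ref{thm:be} then gives that the probability this inner product lands in $\theta' \pm (2t\lambda)\cdot\norm{w'_{[m]\setminus I}}_2$ is at most $2\cdot(2t\lambda + 2\mu) = O(t\lambda + \mu)$, which is exactly the claimed bound.

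The one genuinely routine point to be careful about is that Theorem~\ref{thm:be} as stated bounds $\Pr[\ip{w,{\bf z}}\in\theta\pm t\cdot\norm{w}_2]$ for a random vector over \emph{all} $m$ coordinates, whereas here we only want the sum over $[m]\setminus I$; but this is immediate by applying the theorem with the ambient dimension taken to be $|[m]\setminus I|$ and the weight vector $w'_{[m]\setminus I}$, since $z_{[m]\setminus I}$ is itself a uniform $\pm1$ vector on those coordinates. There is no real obstacle here — the lemma is essentially a bookkeeping wrapper around Berry-Esséen, and the only thing to track is the constant-factor slack incurred by (a) passing from $\norm{w'}_2$ to $\norm{w'_{[m]\setminus I}}_2$ and (b) passing from $\mu$-regularity of $w'$ to $(2\mu)$-regularity of $w'_{[m]\setminus I}$; both cost at most a factor of $2$, which is absorbed into the $O(\cdot)$. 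If one wanted to be slightly more careful one could also note the edge case $I = [m]$, but then $\norm{w'_I}_2 = W \not< \lambda\cdot W$ since $\lambda \le 3/4 < 1$, so the hypothesis is vacuous and there is nothing to prove.
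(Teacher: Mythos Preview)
Your proof is correct and follows essentially the same approach as the paper: bound $\norm{w'_{[m]\setminus I}}_2$ below by a constant times $\norm{w'}_2$, deduce $(2\mu)$-regularity of $w'_{[m]\setminus I}$, and apply the Berry--Ess\'een corollary (Theorem~\ref{thm:be}) to the enlarged interval. The paper's proof is the same argument with slightly looser bookkeeping (it records $\norm{w'_{[m]\setminus I}}_2^2 > \norm{w'}_2^2/4$ rather than your sharper $7/16$), but the structure is identical.
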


\begin{proof}[{\bf Proof.}]
Note that $\norm{w'_{[m]\setminus I}}_2^2>\norm{w'}_2^2/4$; this is the case because $\norm{w'_I}_2^2<\lambda\cdot\norm{w'}_2^2\le\frac3{4}\cdot\norm{w'}_2^2$. It follows that $w'_{[m]\setminus I}$ is $2\mu$-regular, since for every $i\in[m]$ we have that $\abs{w'_i}\le\mu\cdot\norm{w'}_2\le2\mu\cdot\norm{w'_{[m]\setminus I}}_2$. It also follows that the interval $\theta\pm t\cdot\lambda\cdot\norm{w'}_2$ is contained in the interval $\theta\pm 2t\cdot\lambda\cdot\norm{w'_{[m]\setminus I}}_2$. By the Berry-Ess\'{e}en theorem (i.e., by Theorem~\ref{thm:be}), the probability over a uniform choice of $z\in\pmset^{m}$ that the sum $\ip{w_{[m]\setminus I},z_{[m]\setminus I}}$ falls in a fixed interval of length $2t\cdot\lambda\cdot\norm{w_{[m]\setminus I}}$ is at most $O(t\cdot\lambda+\mu)$. 
\end{proof}

The following lemma asserts that a suitably-chosen pseudorandom restriction turns every LTF with \emph{bounded critical index} to be very biased, with high probability. The specific parameters that are chosen for the lemma will be useful for us when proving the general case (i.e., Proposition~\ref{prop:rest:ltf}, which holds for arbitrary LTFs).

\begin{lemma} (pseudorandom restriction lemma for LTFs with small critical index). \label{lem:rest:ltf:small}
Let $n\in\N$, let $p\in[0,1]$ be a power of two, let $c\in\N$ be a constant, and let $t\le p^{-1/(3c-2)}$ and $\mu=1/4t^c$. Let ${\bf y}$ be a distribution over $\pmset^{\log(1/p)\cdot n}$ that is $p$-almost $O(\log(1/p))$-wise independent, and let ${\bf z}$ be a distribution over $\pmset^n$ that is $\mu$-pseudorandomly concentrated. Then, for any LTF $\Phi$ over $n$ input bits with $\mu$-critical index at most $k=10^3\cdot\mu^{-2}\cdot\log^2(1/\mu)$, the probability over choice of $\rho\sim({\bf y},{\bf z})$ that $\Phi\rest_\rho$ is $t$-balanced is at most $\tilde{O}(t^{1+c/2})\cdot\sqrt{p}+O(t^{-c})$.
\end{lemma}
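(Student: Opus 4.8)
The plan is to follow the two-case structure outlined in the high-level description immediately preceding the lemma statement, with $\mu = 1/4t^c$ and $k = 10^3 \cdot \mu^{-2} \cdot \log^2(1/\mu)$ as prescribed, and to carefully track how the constraint $t \le p^{-1/(3c-2)}$ feeds into the final bound $\tilde{O}(t^{1+c/2}) \cdot \sqrt{p} + O(t^{-c})$. Throughout, write $\Phi = (w,\theta)$ with $|w_1| \ge |w_2| \ge \dots \ge |w_n|$, let $I \subseteq [n]$ be the living set determined by $\mathbf{y}$, and recall that $\Phi\rest_\rho = (w_I, \theta - \ip{w_{[n]\setminus I}, z_{[n]\setminus I}})$ is $t$-balanced iff $\ip{w_{[n]\setminus I}, z_{[n]\setminus I}} \in \theta \pm 2t\cdot\norm{w_I}_2$. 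Since the $\mu$-critical index $h$ is at most $k$ here, only Case~1 is relevant (Case~2 will be handled in the companion Lemma~\ref{lem:rest:ltf:large}).

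First I would control $\norm{w_I}_2$ relative to $\norm{w_T}_2$, where $T = [n]\setminus[h]$. Using that $\mathbf{y}$ is $p$-almost $O(\log(1/p))$-wise independent: with probability $1 - h\cdot p - \mathrm{negl} \ge 1 - O(t^{-c})$ (using $h \le k = \mathrm{poly}(t)\cdot\mathrm{polylog}(t)$ and the hypothesis $h\cdot p \le k\cdot p$ which is small given $t \le p^{-1/(3c-2)}$) all of the first $h$ coordinates are killed, so $w_I = w_{I\cap T}$. Then I would apply the tail bound for almost $t$-wise independent distributions (Fact~\ref{fact:tail}) to the indicator variables of which coordinates in $T$ survive: the expected value of $\norm{w_{I\cap T}}_2^2$ is $p \cdot \norm{w_T}_2^2$, and regularity of $w_T$ (it is $\mu$-regular since $h$ is the $\mu$-critical index) lets me treat $\norm{w_{I\cap T}}_2^2 = \sum_{i\in T} \mathbf{1}[i\in I]\cdot w_i^2$ as a sum of bounded terms — each at most $\mu^2 \cdot \norm{w_T}_2^2$ — so Fact~\ref{fact:tail} gives $\norm{w_{I\cap T}}_2^2 \le 2p\cdot\norm{w_T}_2^2$ except with probability $O((p\cdot\mu^{-2})^{-t/2})$ or so; I would tune this so the failure probability is at most $O(t^{-c})$ or absorbed into $\tilde{O}(t^{1+c/2})\sqrt{p}$. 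The upshot: with the required probability, $\norm{w_I}_2 \le \sqrt{2p}\cdot\norm{w_T}_2 =: \lambda\cdot\norm{w_T}_2$ with $\lambda = \sqrt{2p}$.

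Next, conditioning on such a good $y$, I would bound the probability over $z \sim \mathbf{z}$ that $\ip{w_{[n]\setminus I}, z_{[n]\setminus I}} \in \theta \pm 2t\cdot\norm{w_I}_2$, which is implied by $\ip{w_{[n]\setminus I}, z_{[n]\setminus I}} \in \theta \pm 2t\lambda\cdot\norm{w_T}_2$. Since $\mathbf{z}$ is $\mu$-pseudorandomly concentrated and $\mu$ is small, it suffices to prove the bound for uniform $z$ and then pay an additive $\mu = O(t^{-c})$. For uniform $z$: fix arbitrary values $z_{[h]}$ for the first $h$ (dead) coordinates; this shifts $\theta$ to $\theta' = \theta - \ip{w_{[h]}, z_{[h]}}$, and the event becomes $\ip{w_{T\setminus I}, z_{T\setminus I}} \in \theta' \pm 2t\lambda\cdot\norm{w_T}_2$. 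Now apply the auxiliary Lemma~\ref{lem:rest:ltf:reg} with the $\mu$-regular vector $w_T$ (restricted to $T$), living set $I\cap T$, and $\norm{w_{I\cap T}}_2 \le \lambda\cdot\norm{w_T}_2$ with $\lambda = \sqrt{2p} \le 3/4$; this gives probability at most $O(t\lambda + \mu) = O(t\sqrt{p} + t^{-c})$ — and I'd refine the constant/polylog factors to land at $\tilde{O}(t^{1+c/2})\sqrt{p}$ by being more careful (the extra $t^{c/2}$ and polylogs come from the precise form of Fact~\ref{fact:tail} applied with window $\zeta$ chosen to make the concentration of $\norm{w_{I\cap T}}_2$ tight, and from $\log(1/\mu) = O(\log t)$ factors). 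Summing the failure probabilities over the steps yields the claimed $\tilde{O}(t^{1+c/2})\cdot\sqrt{p} + O(t^{-c})$.

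\textbf{Main obstacle.} The delicate point is extracting the \emph{sharp} exponent: the naive chain gives roughly $O(t\sqrt{p} + t^{-c})$, but the lemma claims $\tilde{O}(t^{1+c/2})\sqrt{p}$, so the real work is in the concentration step for $\norm{w_{I\cap T}}_2$ — I must not simply say "$\norm{w_I}_2 \approx \sqrt{p}\norm{w_T}_2$ whp" but instead balance the window in Fact~\ref{fact:tail} against the constant $c$ and the constraint $t \le p^{-1/(3c-2)}$ (which is exactly what guarantees $k\cdot p$ and the various error terms are dominated by $t^{-c}$ or by $\tilde O(t^{1+c/2})\sqrt p$). Getting every polylog and polynomial-in-$t$ factor to slot correctly, and confirming $\lambda = \sqrt{2p} \le 3/4$ and $2\mu$-regularity hypotheses of Lemma~\ref{lem:rest:ltf:reg} are met, is where the bookkeeping is hardest; the conceptual structure is otherwise a direct transcription of the $\mathbf{z}$-pseudorandomly-concentrated version of the \cite{css16} Case~1 argument.
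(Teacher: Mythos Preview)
There is a genuine gap in the concentration step, and it is exactly the place you flag as ``the main obstacle''. Your plan is to show, via Fact~\ref{fact:tail}, that $\norm{w_{I\cap T}}_2^2 \le 2p\,\norm{w_T}_2^2$ except with probability $O\bigl((p\mu^{-2})^{-s/2}\bigr)$ for some $s=O(\log(1/p))$. But this bound is only useful when $p\mu^{-2}>1$, i.e.\ when $t>p^{-1/(2c)}$ (up to constants); combined with the hypothesis $t\le p^{-1/(3c-2)}$, this leaves an empty range for every $c\ge 2$, and even for $c=1$ it excludes all small $t$. More fundamentally, the random variable $\norm{w_{I\cap T}}_2^2$ has variance of order $p\mu^2\norm{w_T}_2^4$, hence standard deviation $\sqrt{p}\,\mu\,\norm{w_T}_2^2$; when $\mu/\sqrt{p}$ is large (which it is for small $t$), the standard deviation swamps the mean $p\norm{w_T}_2^2$, so you simply cannot conclude $\norm{w_I}_2^2\le 2p\norm{w_T}_2^2$ with high probability even for genuinely independent choices. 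This is why your ``naive chain'' gives $O(t\sqrt{p})$---a bound that is \emph{stronger} than the one asserted---and why you then struggle to see where the extra $t^{c/2}$ should come from.

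The paper's argument sidesteps this entirely: instead of a tail bound it applies plain Markov, yielding $\norm{w_I}_2\le\sqrt{p/\mu}\,\norm{w_T}_2$ with failure probability $O(\mu)=O(t^{-c})$. The looser threshold $\lambda=\sqrt{p/\mu}=2t^{c/2}\sqrt{p}$ is precisely the source of the $t^{1+c/2}$ factor: invoking Lemma~\ref{lem:rest:ltf:reg} with this $\lambda$ gives $O(t\lambda+\mu)=O(t^{1+c/2}\sqrt{p}+t^{-c})$ directly. The term $p\cdot k=\tilde O(t^{2c})\cdot p$ is then checked, using $t\le p^{-1/(3c-2)}$, to be at most $\tilde O(t^{1+c/2})\sqrt{p}$, which accounts for the $\tilde O$. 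So the $t^{c/2}$ is not a product of careful bookkeeping in a concentration inequality; it is the deliberate cost of trading a tight $\ell_2$ estimate (which is unattainable here) for a guaranteed $O(t^{-c})$ failure probability via Markov.
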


\begin{proof}[{\bf Proof.}]
Let $\Phi=(w,\theta)$ be an LTF gate over $n$ input bits with critical index $h\le k$, and without loss of generality assume that $|w_1|\ge|w_2|\ge...\ge|w_n|$. Let $I\subseteq[n]$ be the random variable that is the set of live variables under ${\bf y}$; then, it holds that:

\begin{subclaim} \label{lem:rest:ltf:small:y}
With probability at least $1-O(\mu+p\cdot k)$ over $y\sim{\bf y}$ it holds that $I\subseteq([n]\setminus[h])$ and that $\norm{w_I}_2\le\sqrt{p/\mu}\cdot\norm{w_{[n]\setminus[h]}}_2$.
\end{subclaim}

\begin{proof}\innerqed
Since ${\bf y}$ is $p$-almost $O(\log(1/p))$-wise independent, each variable is kept alive with probability at most $2p$. Thus, the probability over $y\sim{\bf y}$ that the first $h$ variables are all fixed is at least $1-2p\cdot h$. Also, the expected value of $\norm{w_{I\cap([n]\setminus[h])}}_2^2$ is at most $2p\cdot\norm{w_{[n]\setminus[h]}}_2^2$, and hence with probability at least $1-2\mu$ it holds that $\norm{w_{I\cap([n]\setminus[h])}}_2\le\sqrt{p/\mu}\cdot\norm{w_{[n]\setminus[h]}}_2$. By a union-bound, with probability at least $1-O(\mu+p\cdot h)>1-O(\mu+p\cdot k)$ it holds that $I\subseteq([n]\setminus[h])$ and that $\norm{w_I}_2=\norm{w_{I\cap([n]\setminus[h])}}_2\le\sqrt{p/\mu}\cdot\norm{w_{[n]\setminus[h]}}_2$. 
\end{proof}

Fix any $y\sim{\bf y}$ such that the first $h$ variables are all fixed, and such that $\norm{w_{I}}_2\le\sqrt{p/\mu}\cdot\norm{w_{[n]\setminus[h]}}_2$. Our goal will be to prove that with high probability over $z\sim{\bf z}$ it holds that $\ip{w_{[n]\setminus I},{\bf z}_{[n]\setminus I}}\notin\theta\pm t\cdot\sqrt{p/\mu}\cdot\norm{w_{[n]\setminus[h]}}_2$; this suffices to prove the lemma, since $t\cdot\sqrt{p/\mu}\cdot\norm{w_{[n]\setminus[h]}}\ge t\cdot\norm{w_I}_2$. To do so, we first analyze the setting in which $z\in\pmset^n$ is chosen uniformly, rather than from the distribution ${\bf z}$:

\begin{subclaim} \label{lem:rest:ltf:small:z}
The probability over a uniform choice of $z\in\pmset^n$ that $\ip{w_{[n]\setminus I},z_{[n]\setminus I}}\in\theta\pm t\cdot\sqrt{p/\mu}\cdot\norm{w_{[n]\setminus[h]}}_2$ is at most $O(t\cdot\sqrt{p/\mu}+\mu)$.
\end{subclaim}

\begin{proof} \innerqed
The claim is trivial for $\mu\le2p$, so it suffices to prove the claim under the assumption that $\mu>2p$. Condition on any arbitrary assignment $z_{[h]}\in\pmset^h$ for the first $h$ variables, and note that 
the vector $w_{>h}\in\pmset^{n-h}$ is $\mu$-regular (since $h$ is the $\mu$-critical index of $\Phi$). 

Let $T=[n]\setminus[h]$. Observe that when conditioning on $z_{[h]}$, the event $\ip{w_{[n]\setminus I},z_{[n]\setminus I}}\in\theta\pm t\cdot\sqrt{p/\mu}\cdot\norm{w_{[n]\setminus[h]}}_2$ happens if and only if the event $\ip{w_{T\setminus I},z_{T\setminus I}}\in\theta'\pm t\cdot\sqrt{p/\mu}\cdot\norm{w_{T}}_2$ happens, where $\theta'=\theta-\ip{w_{[h]},z_{[h]}}$. Since $w_T$ is $\mu$-regular, we can invoke Lemma~\ref{lem:rest:ltf:reg} with $w'=w_{T}$  and with $\lambda=\sqrt{p/\mu}\le3/4$ (the inequality is since $\mu>2p$), and deduce the probability of the event $\ip{w_{T\setminus I},z_{T\setminus I}}\in\theta'\pm t\cdot\sqrt{p/\mu}\cdot\norm{w_{T}}_2$ is at most $O(t\cdot\sqrt{p/\mu}+\mu)$.
\end{proof}

Since ${\bf z}$ is $\mu$-pseudorandomly concentrated, it follows from Claim~\ref{lem:rest:ltf:small:z} that the probability over $z\sim{\bf z}$ that $\ip{w_{[n]\setminus I},{\bf z}_{[n]\setminus I}}\in\theta\pm t\cdot\sqrt{p/\mu}\cdot\norm{w_{[n]\setminus[h]}}_2$ is at most $O(t\cdot\sqrt{p/\mu}+\mu)$. Thus, the probability over choice of $\rho\sim({\bf y},{\bf z})$ that $\Phi\rest_\rho$ is $t$-balanced is at most $O(t\cdot\sqrt{p/\mu}+\mu+p\cdot k)=\tilde{O}(t^{1+c/2})\cdot\sqrt{p}+O(t^{-c})$, where the last equality relied on the hypothesis that $t\le p^{-1/(3c-2)}$.
\end{proof}

The following lemma is similar to Lemma~\ref{lem:rest:ltf:small}, but holds for LTFs with \emph{large critical index}.

\begin{lemma} (pseudorandom restriction lemma for LTFs with large critical index). \label{lem:rest:ltf:large}
Let $n\in\N$, let $p\in[0,1]$ be a power of two, and let $\mu>0$. Let ${\bf y}$ be a distribution over $\pmset^{\log(1/p)\cdot n}$ that is $p$-almost $O(\log(1/p))$-wise independent, and let ${\bf z}$ be a distribution over $\pmset^n$ that is $\mu$-pseudorandomly concentrated. Then, for any LTF $\Phi$ over $n$ input bits with $\mu$-critical index larger than $k=10^3\cdot\mu^{-2}\cdot\log^{2}(1/\mu)$, the probability over choice of $\rho\sim({\bf y}, {\bf z})$ that $\Phi\rest_\rho$ is $(1/4\mu)$-balanced is $\tilde{O}(\mu^{-2})\cdot p+O(\mu)$.
\end{lemma}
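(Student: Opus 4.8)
The plan is to carry out the ``Case 2'' strategy sketched in Section~\ref{sec:tech:alg}. Throughout, assume $|w_1|\ge|w_2|\ge\dots\ge|w_n|$, and assume $\mu$ is smaller than a suitable absolute constant (otherwise the claimed bound $O(\mu)$ already exceeds $1$ and there is nothing to prove), so that $\log(1/\mu)\ge1$ and $\mu<1/2$. Let $I\subseteq[n]$ be the (random) set of variables that $\rho$ keeps alive. Since ${\bf y}$ is $p$-almost $O(\log(1/p))$-wise independent, the marginal of ${\bf y}$ on the $\log(1/p)$ bits that control a single variable is $p$-close to uniform, so each variable is kept alive with probability at most $2p$; by a union bound over the $k$ heaviest coordinates, with probability at least $1-2pk=1-\tilde O(\mu^{-2})\cdot p$ over $y\sim{\bf y}$ we have $I\cap[k]=\emptyset$, and in particular $\norm{w_I}_2\le\norm{w_{>k}}_2$. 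Fix any such $y$. Since $\Phi\rest_\rho$ being $(1/4\mu)$-balanced means $\ip{w_{[n]\setminus I},z_{[n]\setminus I}}\in\theta\pm(1/4\mu)\cdot\norm{w_I}_2\subseteq\theta\pm(1/4\mu)\cdot\norm{w_{>k}}_2$, it suffices to prove that the probability of the event $\ip{w_{[n]\setminus I},z_{[n]\setminus I}}\in\theta\pm(1/4\mu)\cdot\norm{w_{>k}}_2$ is $O(\mu)$; the lemma then follows by the law of total probability together with the union bound above.

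\textbf{The core anti-concentration estimate (the analogue of Claim~\ref{claim:rest:ltf:crit}).} We bound the probability of $\ip{w_{[n]\setminus I},z_{[n]\setminus I}}\in\theta\pm(1/4\mu)\cdot\norm{w_{>k}}_2$ first for a \emph{uniform} $z\in\pmset^n$. Because the $\mu$-critical index of $w$ exceeds $k$, the vector $w_{>j-1}$ is not $\mu$-regular for every $j\le k$, so its largest entry satisfies $|w_j|>\mu\cdot\norm{w_{>j-1}}_2$, whence $\norm{w_{>j}}_2^2=\norm{w_{>j-1}}_2^2-w_j^2<(1-\mu^2)\cdot\norm{w_{>j-1}}_2^2$ for all $j\le k$ (this is the exponential decay of weights below the critical index, cf.~\cite{ser07}). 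Put $m=\ceil{\mu^{-2}}$. Telescoping from $j=m+1$ to $j=k$, and using $|w_m|>\mu\cdot\norm{w_{>m-1}}_2\ge\mu\cdot\norm{w_{>m}}_2$, gives
\[
\norm{w_{>k}}_2^2\;<\;(1-\mu^2)^{k-m}\cdot\norm{w_{>m}}_2^2\;\le\;(1-\mu^2)^{k-m}\cdot\bigl(\,|w_m|/\mu\,\bigr)^2 .
\]
With $k=10^{3}\cdot\mu^{-2}\cdot\log^{2}(1/\mu)$ one has $k-m\ge 4\mu^{-2}\log(1/\mu)$, and since $\log\bigl(1/(1-\mu^2)\bigr)\ge\mu^2$ this forces $(1-\mu^2)^{(k-m)/2}\le 4\mu^2$; hence $\norm{w_{>k}}_2<4\mu\cdot|w_m|$, i.e.\ $|w_m|>(1/4\mu)\cdot\norm{w_{>k}}_2$. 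Therefore each of $w_1,\dots,w_m$ has absolute value strictly larger than half the length of the window $\theta\pm(1/4\mu)\norm{w_{>k}}_2$. Writing $\ip{w_{[n]\setminus I},z_{[n]\setminus I}}=\ip{w_{[m]},z_{[m]}}+\ip{w_{([n]\setminus I)\setminus[m]},z_{([n]\setminus I)\setminus[m]}}$ (legitimate since $[m]\subseteq[k]\subseteq[n]\setminus I$), conditioning on the second summand (the two groups of coordinates are independent under the uniform distribution), and invoking the Erd\H{o}s--Littlewood--Offord anti-concentration inequality for $\ip{w_{[m]},z_{[m]}}$ — whose $m$ coefficients each strictly exceed half the length of the fixed target interval — yields that $\ip{w_{[m]},z_{[m]}}$ lands in that interval with probability at most $\binom{m}{\floor{m/2}}\cdot 2^{-m}=O(1/\sqrt m)=O(\mu)$. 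Averaging over the conditioning, $\Pr_{z\sim{\bf u}_n}\bigl[\ip{w_{[n]\setminus I},z_{[n]\setminus I}}\in\theta\pm(1/4\mu)\norm{w_{>k}}_2\bigr]=O(\mu)$.

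\textbf{Passing to ${\bf z}$ and assembling.} Apply the hypothesis that ${\bf z}$ is $\mu$-pseudorandomly concentrated to the vector $w'\in\R^n$ that equals $w$ on $[n]\setminus I$ and is $0$ on $I$ (so $\ip{w',{\bf z}}=\ip{w_{[n]\setminus I},{\bf z}_{[n]\setminus I}}$): the probability in question changes by at most $\mu$, hence remains $O(\mu)$ when $z\sim{\bf z}$. Combining with the first paragraph gives $\Pr_{\rho\sim({\bf y},{\bf z})}[\Phi\rest_\rho\text{ is }(1/4\mu)\text{-balanced}]\le 2pk+O(\mu)=\tilde O(\mu^{-2})\cdot p+O(\mu)$, which is the assertion of the lemma.

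\textbf{Where the difficulty lies.} The only non-routine ingredient is the core anti-concentration estimate, and within it the arithmetic showing that the precise choice $k=10^3\mu^{-2}\log^2(1/\mu)$ makes $\Omega(\mu^{-2})$ of the heaviest coordinates of $w$ have absolute value above $(1/4\mu)\norm{w_{>k}}_2$ — it is exactly this many ``large'' coordinates that let Erd\H{o}s--Littlewood--Offord push the anti-concentration probability from the trivial $O(1)$ down to $O(\mu)$. The union bound of the first paragraph and the pseudorandom-concentration step are straightforward.
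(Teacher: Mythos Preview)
Your argument is correct, and its outer structure (the union bound forcing $I\cap[k]=\emptyset$, the reduction to anti-concentration of $\ip{w_{[n]\setminus I},z_{[n]\setminus I}}$ in the window $\theta\pm(1/4\mu)\norm{w_{>k}}_2$ for uniform $z$, and the final appeal to $\mu$-pseudorandom concentration) matches the paper exactly. The core anti-concentration step, however, is carried out differently. The paper (Claim~\ref{claim:rest:ltf:crit}) exploits the exponential decay below the critical index to extract $r=\log(1/\mu)$ coordinates at geometrically spaced positions $1,\gamma,2\gamma,\dots,r\gamma$ whose weights are \emph{super-decreasing} (each dominates the tail of the rest), and then invokes the uniqueness claim of~\cite[Clm.~5.7]{dgjsv10} to conclude that at most one sign pattern on these $r$ coordinates can land in the target window, yielding probability $2^{-r}=\mu$. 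You instead use the same decay to show that \emph{all} of the first $m=\ceil{\mu^{-2}}$ coordinates exceed the half-width of the window, and then apply the classical Erd\H{o}s--Littlewood--Offord inequality to these $m$ coordinates to get $O(1/\sqrt m)=O(\mu)$. Your route is more elementary (it avoids the ``super-decreasing'' machinery and the cited DGJSV claim), while the paper's route uses exponentially fewer coordinates ($\log(1/\mu)$ versus $\mu^{-2}$) and gives the clean bound $\mu$ rather than $O(\mu)$; for the present lemma both are equally adequate.
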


\begin{proof}[{\bf Proof.}]
Let $\Phi=(w,\theta)$ be an LTF gate over $n$ input bits with $\mu$-critical index larger than $k$, and without loss of generality assume that $|w_1|\ge|w_2|\ge...\ge|w_n|$. Also, let $I\subseteq[n]$ be the random variable that is the set of live variables under ${\bf y}$. Note that the probability over $y\sim{\bf y}$ that $I\cap[k]\ne\emptyset$ is at most $2p\cdot k=\tilde{O}(\mu^{-2})\cdot p$ (since ${\bf y}$ keeps each variable alive with probability at most $2p$).

Condition on any arbitrary $y\sim{\bf y}$ such that $[k]\cap I=\emptyset$. Our goal now is to show that the probability over $z\sim{\bf z}$ that $\Phi\rest_\rho$ is $(1/4\mu)$-balanced is $O(\mu)$. We will actually prove a stronger claim: We will show that with probability at least $1-O(\mu)$ it holds that $\ip{w_{[n]\setminus I},{\bf z}_{[n]\setminus I}}\notin\theta\pm(1/4\mu)\cdot\norm{w_{>k}}_2$ (this claim is stronger, since $I\subseteq([n]\setminus [k])$, which implies that $\norm{w_{>k}}_2\ge\norm{w_I}_2$). To prove this assertion we will rely on the following claim, which is essentially from~\cite[Prop. 45]{css16} and generalizes~\cite[Lem. 5.8]{dgjsv10}. (Since the proof is sketched in~\cite{css16}, we include a full proof.)

\begin{subclaim} \label{claim:rest:ltf:crit}
Let $\mu>0$, let $r\in\N$, and let $k_{r,\mu}=\frac{4r\cdot\ln(3/\mu^2)}{\mu^2}$. Let $\Phi=(w,\theta)$ be an LTF over $n$ input bits with $\mu$-critical index larger than $k_{r,\mu}$ such that $|w_1|\ge...\ge|w_n|$, and let $J\subseteq[n]$ such that $J\supseteq[k_{r,\mu}]$. Then, the probability under uniform choice of $z\in\bitset^n$ that $\ip{w_{J},z_{J}}\in\theta\pm(1/4\mu)\cdot\norm{w_{>k_{r,\mu}}}_2$ is at most $2^{-r}$.
\end{subclaim}

\begin{proof} \innerqed
Since the critical index of $\Phi$ is larger than $k_{r,\mu}$, a lemma of Servedio~\cite[Lem. 3]{ser07} asserts that for any $1\le i<j\le k_{r,\mu}$ it holds that
\mm{
|w_j|\le \norm{w_{\ge j}}_2\le \left(1-\mu^2\right)^{(j-i)/2}\cdot\norm{w_{\ge i}}_2\le \left(1-\mu^2\right)^{(j-i)/2}\cdot|w_i|/\mu \eqtag{eq:rocco} \mathdot
}
(For an equivalent statement of the lemma see~\cite[Lem. 5.5]{dgjsv10}.) In particular, fixing $\gamma=\frac{2\ln(3/\mu^2)}{\mu^2}$, for any $i\in\N$ such that $i\cdot\gamma<k_{r,\mu}$ it holds that $|w_{i\cdot\gamma}|<|w_1|/3^i$. 


Let $R=1,\gamma,...,r\cdot\gamma<k_{r,\mu}$, and consider any arbitrary fixed value of $z_{J\setminus R}$. Then, by a claim of Diakonikolas \etal~\cite[Clm. 5.7]{dgjsv10}, there exists at most a single value $z_R\in\pmset^r$ such that $\ip{w_{R},z_{R}}\in\left(\theta-\ip{w_{J\setminus R},z_{J\setminus R}}\right)\pm|w_{r\cdot\gamma}|/4$. Thus, the probability under a uniform choice of $z\in\bitset^n$ that $\ip{w_{J},z_{J}}\in\theta\pm|w_{r\cdot\gamma}|/4$ is at most $2^{-r}$. 

The claim follows since $\norm{w_{>k_{r,\mu}}}_2\le\norm{w_{\ge (r+1)\cdot\gamma}}_2\le\mu\cdot|w_{r\cdot\gamma}|$, where the first inequality is since $k_{r,\mu}>(r+1)\cdot\gamma$ and the second inequality is due to Eq.~\eqref{eq:rocco}.
\end{proof}

We invoke Claim~\ref{claim:rest:ltf:crit} with the value $r=\log(1/\mu)$ and with the set $J=[n]\setminus I$, while noting that the critical index of $\Phi$ is indeed larger than $k\ge k_{r,\mu}$. Since the interval $\theta\pm(1/4\mu)\cdot\norm{w_{>k}}_2$ is contained in the interval $\theta\pm(1/4\mu)\cdot\norm{w_{>k_{r,\mu}}}_2$ (because $k\ge k_{r,\mu}$), we deduce that the event $\ip{w_{[n]\setminus I},z_{[n]\setminus I}}\in\theta\pm (1/4\mu)\cdot\norm{w_{>k}}_2$ happens with probability at most $\mu$ under a uniform choice of $z\in\bitset^n$. Since ${\bf z}$ is $\mu$-pseudorandomly concentrated, this event happens with probability at most $O(\mu)$ also under a choice of $z\sim{\bf z}$.
\end{proof}

Finally, we are ready to state a more general version of Proposition~\ref{prop:rest:ltf:informal} and to prove it. The proof will rely on Lemmas~\ref{lem:rest:ltf:small} and~\ref{lem:rest:ltf:large}.

\begin{proposition} (pseudorandom restriction lemma for an arbitrary LTF). \label{prop:rest:ltf}
Let $n\in\N$, let $p\in[0,1]$ be a power of two, let $c\in\N$ be a constant, and let $t\le p^{-1/(3c-2)}$. Let ${\bf y}$ be a distribution over $\pmset^{\log(1/p)\cdot n}$ that is $p$-almost $O(\log(1/p))$-wise independent, and let ${\bf z}$ be a distribution over $\pmset^n$ that is $(1/4t^c)$-pseudorandomly concentrated. Then, for any LTF $\Phi$ over $n$ input bits, the probability over choice of $\rho\sim({\bf y},{\bf z})$ that $\Phi\rest_\rho$ is $t$-balanced is at most $\tilde{O}(t^{1+c/2})\cdot\sqrt{p}+O(t^{-c})$.
\end{proposition}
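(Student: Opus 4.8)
The plan is to prove Proposition~\ref{prop:rest:ltf} by a dichotomy on the $\mu$-critical index of $\Phi$, for the specific choice $\mu = 1/4t^c$, which matches the pseudorandom-concentration parameter of ${\bf z}$ assumed in the hypothesis. Set $k = 10^3\cdot\mu^{-2}\cdot\log^2(1/\mu)$, exactly the threshold appearing in Lemmas~\ref{lem:rest:ltf:small} and~\ref{lem:rest:ltf:large}. The two cases are then handled by directly invoking those two lemmas, so almost all the work has already been done; what remains is to check that the hypotheses line up and that in the ``large critical index'' case the conclusion about $(1/4\mu)$-balancedness upgrades to the desired conclusion about $t$-balancedness.

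First I would dispose of the easy case. If the $\mu$-critical index of $\Phi$ is at most $k$, then Lemma~\ref{lem:rest:ltf:small} applies verbatim with the same $n$, $p$, $c$, $t$, $\mu = 1/4t^c$, and the same distributions ${\bf y}$, ${\bf z}$ (noting $t\le p^{-1/(3c-2)}$ is assumed, and ${\bf z}$ being $(1/4t^c)$-pseudorandomly concentrated is exactly $\mu$-pseudorandomly concentrated). This gives that $\Phi\rest_\rho$ is $t$-balanced with probability at most $\tilde O(t^{1+c/2})\cdot\sqrt p + O(t^{-c})$, which is precisely the bound claimed.

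Second, suppose the $\mu$-critical index of $\Phi$ is larger than $k$. Here I would invoke Lemma~\ref{lem:rest:ltf:large}, again with the same parameters, to conclude that $\Phi\rest_\rho$ is $(1/4\mu)$-balanced with probability at most $\tilde O(\mu^{-2})\cdot p + O(\mu)$. The key observation connecting this to the statement is that $1/4\mu = t^c \ge t$ (since $c\ge1$ and $t\ge1$), so the event ``$\Phi\rest_\rho$ is $t$-balanced'' is contained in the event ``$\Phi\rest_\rho$ is $(1/4\mu)$-balanced'' — being $t$-balanced means $|\theta'| \le t\cdot\|w_I\|_2 \le (1/4\mu)\cdot\|w_I\|_2$. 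Hence the probability that $\Phi\rest_\rho$ is $t$-balanced is also at most $\tilde O(\mu^{-2})\cdot p + O(\mu)$. Finally I would substitute $\mu = 1/4t^c$: then $O(\mu) = O(t^{-c})$ and $\tilde O(\mu^{-2})\cdot p = \tilde O(t^{2c})\cdot p$. Using $t \le p^{-1/(3c-2)}$, i.e. $p \le t^{-(3c-2)}$, we get $\tilde O(t^{2c})\cdot p \le \tilde O(t^{2c})\cdot t^{-(3c-2)}\cdot\sqrt p = \tilde O(t^{2-c})\cdot\sqrt p \le \tilde O(t^{1+c/2})\cdot\sqrt p$ for $c\ge1$, so this case is also bounded by $\tilde O(t^{1+c/2})\cdot\sqrt p + O(t^{-c})$, as required.

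The main obstacle here is not any hard estimate — both cases reduce mechanically to the already-proven Lemmas~\ref{lem:rest:ltf:small} and~\ref{lem:rest:ltf:large} — but rather the bookkeeping of verifying that the two lemmas' error terms, after the substitution $\mu = 1/4t^c$ and the use of the constraint $t\le p^{-1/(3c-2)}$, collapse to the single clean expression $\tilde O(t^{1+c/2})\cdot\sqrt p + O(t^{-c})$; in particular one must check the exponent arithmetic $2c - (3c-2) = 2 - c \le 1 + c/2$ holds for all integer $c\ge1$ (indeed $2-c \le 1+c/2 \iff 1 \le 3c/2$, true for $c\ge1$). One should also note that the cases are exhaustive — every LTF has a well-defined $\mu$-critical index that is either $\le k$, $>k$, or $\infty$ (the last of which falls under ``larger than $k$'') — so the union of the two bounds covers all $\Phi$.
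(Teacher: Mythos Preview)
Your approach is exactly the paper's: set $\mu=1/4t^c$, $k=10^3\mu^{-2}\log^2(1/\mu)$, and split on whether the $\mu$-critical index is $\le k$ (Lemma~\ref{lem:rest:ltf:small}) or $>k$ (Lemma~\ref{lem:rest:ltf:large}), using $1/4\mu=t^c\ge t$ to pass from $(1/4\mu)$-balanced to $t$-balanced in the second case.

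There is one arithmetic slip in your bookkeeping for the large-index case. From $t\le p^{-1/(3c-2)}$ you get $p\le t^{-(3c-2)}$, hence $\sqrt{p}\le t^{-(3c-2)/2}$, \emph{not} $\sqrt{p}\le t^{-(3c-2)}$. So the valid step is
\[
\tilde O(t^{2c})\cdot p \;=\; \tilde O(t^{2c})\cdot\sqrt{p}\cdot\sqrt{p}\;\le\; \tilde O(t^{2c})\cdot t^{-(3c-2)/2}\cdot\sqrt{p}\;=\;\tilde O\bigl(t^{\,2c-(3c-2)/2}\bigr)\cdot\sqrt{p}\;=\;\tilde O(t^{1+c/2})\cdot\sqrt{p},
\]
which lands directly on the target exponent (your detour through $t^{2-c}$ and the inequality $2-c\le 1+c/2$ is both unnecessary and based on the incorrect bound $\sqrt{p}\le t^{-(3c-2)}$). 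With this correction the argument is complete and identical to the paper's.
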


To obtain the parameters that were stated in Section~\ref{sec:tech:alg}, invoke Proposition~\ref{prop:rest:ltf} with $c=2$. (When $c=2$, the hypothesis that $t\le p^{-1/(3c-2)}=p^{-1/4}$ is not required, since for $t>p^{-1/4}$ the probability bound in the lemma's statement is trivial.)

\begin{proof}[{\bf Proof of Proposition~\ref{prop:rest:ltf}.}]
Let $\Phi=(w,\theta)$ be an LTF gate over $n$ input bits, let $\mu=1/4t^c$, and let $k=10^3\cdot\mu^{-2}\cdot\log^2(1/\mu)$. If the $\mu$-critical index of $\Phi$ is at most $k$, the asserted probability bound follows immediately from Lemma~\ref{lem:rest:ltf:small}. On the other hand, if the $\mu$-critical index of $\Phi$ is larger than $k$, we can rely on Lemma~\ref{lem:rest:ltf:large}. The lemma asserts that the probability that $\Phi\rest_\rho$ is $(1/4\mu)$-balanced is at most $\tilde{O}(\mu^{-2})\cdot p+O(\mu)<\tilde{O}(t^{1+c/2})\cdot\sqrt{p}+O(t^{-c})$, where the inequality relies on the hypothesis that $t\le p^{-1/(3c-2)}$. Since $(1/4\mu)\ge t$, whenever $\Phi\rest_\rho$ is $(1/4\mu)$-imbalanced it is also $t$-imbalanced. 
\end{proof}

\subsubsection{Pseudorandom restriction algorithm for a ``layer'' of LTFs} \label{sec:main:rest:layer}

The next step is to construct a pseudorandom restriction algorithm that transforms a depth-$d$ linear threshold circuit into a depth-$(d-1)$ linear threshold circuit. The key part in this step is an application of Proposition~\ref{prop:rest:ltf}.

\begin{proposition} (pseudorandom restriction algorithm for a ``layer'' of LTFs). \label{prop:depth}
For every three constants $d\ge2$ and $\e>0$ and $c>0$, there exists a polynomial-time algorithm that gets as input a circuit $C\in\mathcal{C}_{n,d,n^{1+\e}}$ and a random seed of length $O(\log(n)\cdot(\log\log(n))^2)$, and with probability at least $1-n^{-\e}$ outputs the following:
\begin{enumerate}
	\item A restriction $\rho\in\pmstar^n$ that keeps at least $n'=\Omega(n^{1-24\cdot\e})$ variables alive.
	\item A circuit $\widetilde{C}\in\mathcal{C}_{n',d-1,(n')^{1+30\e}}$ that agrees with $C$ on at least $1-n^{-c}$ of the inputs in the subcube that corresponds to $\rho$ (i.e., $\Pr_{x\in\pmset^{|\rho^{-1}(\star)|}}[C\rest_\rho(x)=\widetilde{C}(x)]>1-n^{-c}$).
\end{enumerate}
\end{proposition}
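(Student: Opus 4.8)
The plan is to prove Proposition~\ref{prop:depth} by applying a single pseudorandom restriction to the bottom layer of $C$ and then fixing a few more variables to eliminate the bottom gates that did not become nearly constant. Fix $p=n^{-24\e}$ (a power of two), and invoke Proposition~\ref{prop:rest:ltf} with $c=2$, $t=p^{-1/8}$ and $\mu=1/4t^{2}$; let $\rho_0\sim({\bf y},{\bf z})$, where ${\bf y}$ is $p$-almost $O(\log(1/p))$-wise independent (so each variable stays alive with probability $\Theta(p)$) and ${\bf z}$ is the $\mu$-pseudorandomly-concentrated distribution obtained from the generator of Theorem~\ref{thm:prg:ltfs} via Claim~\ref{claim:concltfs}; both fit within the stated $O(\log n\cdot(\log\log n)^{2})$ seed budget. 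Since each bottom gate is an LTF over a subset of the variables, and the relevant marginals of ${\bf y}$ and ${\bf z}$ retain almost-$O(\log(1/p))$-wise independence and $\mu$-pseudorandom concentration, Proposition~\ref{prop:rest:ltf} applies to every bottom gate: it becomes $t$-imbalanced after $\rho_0$ with probability at least $1-q$, where $q=\tilde{O}(p^{1/4})$.

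The two routine ingredients are the following. By the tail bound for almost-$O(\log(1/p))$-wise independent distributions (Fact~\ref{fact:tail}), with probability $\ge 1-o(n^{-\e})$ the restriction $\rho_0$ keeps at least $pn/2=\Omega(n^{1-24\e})$ variables alive. Moreover, whenever a bottom gate $\Phi$ is $t$-imbalanced after $\rho_0$, Hoeffding's inequality (Theorem~\ref{thm:hoef}) shows that $\Phi\rest_{\rho_0}$ equals a fixed constant $\sigma_\Phi\in\pmset$ on all but an $\exp(-\Omega(t^{2}))=\exp(-n^{\Omega(\e)})$ fraction of its subcube; replacing every such gate by $\sigma_\Phi$ therefore yields a circuit $C'$ that agrees with $C\rest_{\rho_0}$ on all but a $\le n^{1+\e}\cdot\exp(-n^{\Omega(\e)})$ fraction of the subcube. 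The circuit $C'$ has depth $d-1$ once we also eliminate the bottom gates that remained $t$-balanced.

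The technical core is to show that only few more variables need to be fixed to eliminate those surviving bottom gates. For a bottom gate $\Phi_i$ on $f_i$ inputs, let $X_i$ be the indicator that $\Phi_i\rest_{\rho_0}$ is $t$-balanced and let $F_i$ be its fan-in under $\rho_0$; I would prove $\E\!\big[\sum_i F_iX_i\big]=O\big((pq+p^{2}k)\,n^{1+\e}\big)\le\tfrac14\,pn^{1-\e}$, where $k=\poly(1/\mu)=n^{O(\e)}$ is the critical-index threshold from Proposition~\ref{prop:rest:ltf} and the last inequality uses $q=\tilde{O}(p^{1/4})$ together with the choice $p=n^{-24\e}$. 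Granting this, Markov's inequality gives $\sum_i F_iX_i\le pn/4$ with probability $\ge 1-O(n^{-\e})$; one can then pick a set $S$ of at most $pn/4$ still-alive variables — for each surviving gate, all but one of its live inputs — so that after fixing $S$ every surviving bottom gate has at most one live input and hence becomes a literal or constant, which is absorbed into the layer above with no increase in the wire count (shared variables only help). The bound on $\E[\sum_i F_iX_i]$ is obtained by revisiting the case analysis (small versus large critical index) behind Proposition~\ref{prop:rest:ltf} (Lemmas~\ref{lem:rest:ltf:small} and~\ref{lem:rest:ltf:large}) while tracking the surviving fan-in: conditioned on the live set, $F_i$ is fixed, one bounds the probability that $\Phi_i\rest_{\rho_0}$ is still $t$-balanced by $O(q)$ for "good" live sets, and the contribution to $\E[F_iX_i]$ of the "bad" events (a top-$k$ weight variable of $\Phi_i$ surviving, or $\norm{w_I}_2$ being too large relative to $\norm{w_{>h_i}}_2$) is bounded by $O(p^{2}k+\mu)\cdot f_i$ using only pairwise independence of the keep-alive indicators. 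I expect this to be the main obstacle: for gates with a dominant weight (large critical index) one must simultaneously control the atypically likely event that the gate fails to become nearly constant and the size of its surviving fan-in, which is exactly where the structural theory of LTFs and the limited independence of the restriction interact delicately.

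Finally I would assemble the pieces, taking care to fix the variables of $S$ \emph{pseudorandomly} rather than arbitrarily, so as not to destroy the approximation of $C$ by $C'$. After $\rho_0$ is sampled and $S$ is determined, draw the values for $S$ from a distribution that is $n^{-M}$-pseudorandom for LTFs (again via Theorem~\ref{thm:prg:ltfs} and Claim~\ref{claim:concltfs}) for a sufficiently large constant $M=M(c)$; by Lemma~\ref{lem:bias}, each of the $\le n^{1+\e}$ constant-substituted gates — which is $\exp(-n^{\Omega(\e)})$-close, hence $n^{-M}$-close, to its $\sigma_\Phi$ over the $\rho_0$-subcube — stays $n^{-(c+2)}$-close to $\sigma_\Phi$ after fixing $S$, except with probability $n^{-(c+2)}$. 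Taking a union bound over the three failure events (too few live variables, too large surviving fan-in, some gate drifting from its constant) gives overall failure probability at most $n^{-\e}$. On the success event, $\widetilde{C}:=C'\rest_{S}$ has depth $d-1$, has $n'=\Omega(n^{1-24\e})$ input variables, uses at most the $\le n^{1+\e}$ wires of the top $d-1$ layers of $C$, and disagrees with $C\rest_{\rho}$ on at most $n^{1+\e}\cdot n^{-(c+2)}\le n^{-c}$ of the subcube; since $n'=\Omega(n^{1-24\e})$ implies $n^{1+\e}\le(n')^{1+30\e}$ for $\e$ below an absolute constant (the regime of interest), we conclude $\widetilde{C}\in\mathcal{C}_{n',d-1,(n')^{1+30\e}}$, as required.
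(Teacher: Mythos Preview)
Your plan has a genuine gap in the central expectation bound. You claim
\[
\E\Big[\sum_i F_iX_i\Big]=O\big((pq+p^{2}k)\,n^{1+\e}\big),
\]
but your own justification (``the contribution to $\E[F_iX_i]$ of the bad events \ldots\ is bounded by $O(p^{2}k+\mu)\cdot f_i$'') produces an additional term $\mu\sum_i f_i=\mu\cdot n^{1+\e}$ that you silently drop from the displayed bound. With your parameters $p=n^{-24\e}$, $t=p^{-1/8}$, $\mu=1/(4t^{2})\approx n^{-6\e}$, this term is $\approx n^{1-5\e}$, which is far larger than your Markov threshold $\tfrac14\,pn^{1-\e}=\tfrac14\,n^{1-25\e}$. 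So the Markov step fails. The issue is real, not merely a parameter tweak: in the small-critical-index analysis (Lemma~\ref{lem:rest:ltf:small}) the ``bad$_2$'' event $\|w_{I\cap T}\|_2^2>(p/\mu)\|w_T\|_2^2$ has probability $\Theta(\mu)$, and your attempt to control $\E[F_i\cdot\mathbf{1}_{\text{bad}_2}]$ via pairwise independence yields a per-gate $\Theta(\mu)$ contribution that does not depend on $f_i$ and cannot be absorbed. Replacing $F_iX_i$ by $(F_i-1)^{+}X_i$ (which is what actually upper-bounds $|S|$) removes the trivial fan-in-$1$ pathology but does not kill this $\mu$-term; no choice of $c$ and $t$ in Proposition~\ref{prop:rest:ltf} makes $\mu\cdot n^{1+\e}\le pn^{1-\e}$, since $\mu\ge\Omega(p^{1/3})$ always.

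The paper's proof sidesteps exactly this obstacle by \emph{not} trying to control $\E[F_iX_i]$ directly. It first fixes all high-fan-out variables (so every input has fan-out $\le 2n^{\e}$), then splits the bottom gates by fan-in at threshold $n^{\alpha}$ with $\alpha=12\e$. For large-fan-in gates, Fact~\ref{fact:tail} gives a \emph{high-probability} event $\mathcal{E}$ under which every such gate satisfies $F_i\le 2p\cdot f_i$ deterministically; conditioning on $\mathcal{E}$ decouples $F_i$ from $X_i$ and yields $\E[\sum_{i\in L}F_iX_i\mid\mathcal{E}]\le 2p\cdot n^{1+\e}\cdot q$, which is small enough. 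For small-fan-in gates the restriction lemma is not used at all: instead, the fan-out bound from the preprocessing step makes the ``conflict graph'' on living variables have degree $\le 2n^{\alpha+\e}$, and a greedy independent set of size $\Omega(n^{1-24\e})$ simultaneously trivializes every small gate (Claim~\ref{claim:graphtharg}). Your single-restriction plan lacks both the preprocessing and the large/small split, and the coupling between $F_i$ and $X_i$ under the ``bad'' events of Proposition~\ref{prop:rest:ltf} is precisely where it breaks.
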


\paragraph{High-level overview of the proof.}

The key step of the algorithm is to apply Proposition~\ref{prop:rest:ltf} with parameters $p=n^{-\beta}$ and $c=1$ and $t=p^{-1/5}$, where $\beta=O(\e)$. The lemma asserts that, in expectation, all but approximately $n^{-\beta/5}$ of the gates will become $t$-imbalanced (for simplicity, ignore polylogarithmic factors for now). Such imbalanced gates are extremely close to a constant function, so we can replace the gates by the corresponding constants and get a circuit that agrees with the original circuit on almost all inputs. 

As for the other $n^{-\beta/5}$-fraction of the gates, we expect that the number of wires feeding into them will decrease by a factor of $p$ after the restriction. Specifically, assume that indeed the fan-in of each gate decreased by a factor of at least $p$; then, the expected number of wires feeding into the balanced gates after the restriction is at most
\mm{
\sum_{\Phi\text{ gate}}\Pr[\Phi\text{ balanced}]\cdot p\cdot(\text{\# wires incoming to }\Phi) &\le n^{-\beta/5}\cdot p\cdot n^{1+\e}\mathdot \eqtag{eq:fanin:highlevel}
}
Thus, with probability at least $1-n^{-\beta/10}$, the number of wires feeding into balanced gates is at most $(n^{\e-\beta/10})\cdot p\cdot n$, which is much smaller than the expected number of living variables (i.e., than $p\cdot n$) if $\beta>10\e$. When this happens, we can afford to simply fix all the variables that feed into balanced gates, making those gates constant too.

The argument above relied on the assumption that the fan-in of each gate $\Phi$ decreased by a factor of at least $p$. We can argue that this indeed holds with high probability for all gates with fan-in at least $n^{\al}$, where $\al>\beta$, but we will need to separately handle gates with fan-in at most $n^{\al}$. This will be done in two steps: The first is an initial preprocessing step (before applying Proposition~\ref{prop:rest:ltf}), in which we fix every variable with fan-out more than $2\cdot n^{\e}$; since there are at most $n^{1+\e}$ wires, this step fixes at most $n/2$ variables. Then, after applying Proposition~\ref{prop:rest:ltf} and fixing the variables that feed into balanced gates with fan-in at least $n^{\al}$, we show that there exists a set $I$ of variables of size approximately $n^{-(\al+\e)}\cdot(p\cdot n)$ such that after fixing all variables outside $I$, each gate with fan-in at most $n^{\al}$ has fan-in at most one (see Claim~\ref{claim:graphtharg}). Thus, we can fix the variables outside $I$, and then replace each gate with fan-in at most $n^{\al}$ with the corresponding variable (or with its negation). At this point all the gates in the bottom layer have been replaced by constants or by variables.

\begin{proof}[{\bf Proof of Proposition~\ref{prop:depth}.}]
Let $G=\{\Phi_1,...,\Phi_r\}$ be the set of gates in the bottom layer of $C$. For $\al=12\e$, let $S\subseteq G$ be the set of gates with fan-in at most $n^{\al}$, and let $L=G\setminus S$ be the set of gates with fan-in more than $n^{\al}$. 

The restriction $\rho$ will be composed of four restrictions $\rho_1,...,\rho_4$. When describing the construction of each restriction, we will always assume that all previous restrictions were successful (we will describe exactly what ``successful'' means for each restriction). Also, after each restriction, we fix additional variables if necessary, in order to obtain an exact number of living variables in the end of the step.

Let ${\bf z}$ be a distribution over $\pmset^n$ that is $(1/q(n))$-pseudorandom for LTFs, where $q$ is a sufficiently large polynomial. We mention in advance that for each $i\in[4]$, the values for variables that are fixed by $\rho_i$ will always be decided by sampling from ${\bf z}$.

\paragraph{The first restriction $\rho_1$: Reduce the fan-out of input gates.}
We sample $z\sim{\bf z}$, and fix all variables with fan-out more than $2\cdot n^{\e}$ to values according to $z$. Since the number of wires between the bottom-layer gates and the input variables is at most $n^{1+\e}$, and each fixing of a variable eliminates $2\cdot n^{\e}$ wires, we will fix no more than $n/2$ variables in this step. Let $n_1=n/2$ be the number of living variables after the first step.

\paragraph{The second restriction $\rho_2$: Applying Proposition~\ref{prop:rest:ltf}.}
We use Proposition~\ref{prop:rest:ltf} with the values $p=n^{-\beta}$, where $\beta=11\e$, and $c=1$, and $t=p^{-1/5}$.~\footnote{For simplicity, we assume that $p=n^{-11\e}$ is a power of two. Otherwise, we can choose $\beta$ to be a value very close to $11\e$ such that $p$ will be a power of two, with no meaningful change to the rest of the proof (the proof only relies on the fact that $10\e<\beta<\al$).}  The distributions that we use are a ($1/\poly(n)$)-almost $O(\log(1/p))$-wise independent distribution ${\bf y}$ over $\pmset^{\log(1/p)\cdot n}$ and the aforementioned distribution ${\bf z}$ over $\pmset^n$.

Let $\mathcal{E}$ be the event in which $\rho_2$ keeps at least $(p\cdot n_1)/2$ variables alive, and for every gate $\Phi\in L$ it holds that $\fanin(\Phi\rest_{\rho_2})\le 2p\cdot\fanin(\Phi)$. We claim that $\mathcal{E}$ happens with probability at least $1-1/\poly(n)$. To see that this is the case, note that the expected number of living variables is $p\cdot n_1=n^{\Omega(1)}$, and that for each gate $\Phi\in G$, the expected fan-in of $\Phi\rest_{\rho_2}$ is $n^{\al-\beta}=n^{\Omega(1)}$. Since the choice of variables to keep alive is $\frac1{\poly(n)}$-almost $O(1)$-independent, we can use Fact~\ref{fact:tail} to deduce that $\Pr[\mathcal{E}]\ge1-\frac1{\poly(n)}$.

Now, assume without loss of generality that $L=\{\Phi_1,...,\Phi_{r'}\}$, for some $r'\le r$. For any $i\in[r']$, denote by $\mathcal{B}_i$ the event that $\Phi_i$ is $t$-balanced. Note that when conditioning on $\mathcal{E}$, the probability of each $\mathcal{B}_i$ is at most $\tilde{O}(n^{-\beta/5})$. Therefore, conditioned on $\mathcal{E}$, the expected number of wires feeding into $t$-balanced gates in $L$ after the restriction is 
\mm{
\E\left[ \sum_{i\in[r']} \mathbf{1}_{\mathcal{B}_i}\cdot\fanin(\Phi_i\rest_{\rho_2}) \Big\vert \mathcal{E}\right] &= 
\sum_{i\in[r']}\Pr[\mathcal{B}_i|\mathcal{E}]\cdot\E[\fanin(\Phi_i\rest_{\rho_2})|\mathcal{E},\mathcal{B}_i] \\
&\le \sum_{i\in[r']}\tilde{O}(n^{-\beta/5})\cdot(2p\cdot\fanin(\Phi_i)) \\
&= \tilde{O}(n^{-\beta/5})\cdot p\cdot n^{1+\e} \mathdot 
}

Hence, conditioned on $\mathcal{E}$, the probability that the number of wires feeding into $t$-balanced gates in $L$ after the restriction is more than $\tilde{O}(n^{-\beta/10})\cdot p\cdot n^{1+\e}=\tilde{O}(n^{\e-\beta/10})\cdot n^{1-\beta}$ is at most $O(n^{-\beta/10})$. We consider the restriction $\rho_2$ successful if $\mathcal{E}$ happens and if the number of wires between $t$-balanced gates in $L$ and input gates is at most $\tilde{O}(n^{\e-\beta/10})\cdot n^{1-\beta}$. In this case, the number of currently-living variables is $n_2 = p\cdot n_1/2 = \frac1{4}\cdot n^{1-\beta}$. 

After applying $\rho_2$, we replace any $t$-imbalanced gate $\Phi_i\in L$ with its most probable value $\sigma_i\in\pmset$. Note that by Theorem~\ref{thm:hoef}, each $t$-imbalanced gate $\Phi_i$ is $(\exp(-n^{\Omega(1)}))$-close to $\sigma_i$ in the subcube that corresponds to the currently-living variables.

\paragraph{The third restriction $\rho_3$: Eliminate $L$-gates that remained unbiased.}
In this step we sample $z\sim{\bf z}$ again, and fix all the variables that feed into $t$-balanced gates according to $z$. Assuming that $\rho_2$ was successful, the number of such variables is at most $\tilde{O}(n^{\e-\beta/10})\cdot n^{1-\beta}=o(n_2)$, where we used the fact that $\beta>10\e$. Denote the restriction applied in this step by $\rho_3$, and note that the number of living variables after applying $\rho_3$ is $n_3=\Omega(n_2)=\Omega(n^{1-11\e})$. 

Our goal now is to claim that for each gate $\Phi_i$ that was replaced by a constant $\sigma\in\pmset$ prior to applying $\rho_3$, it still holds that $\Phi_i$ is close to $\sigma$ in the subcube $\pmset^{\rho_3^{-1}(\star)}$. To do so we will rely on a lemma that asserts the following: If an LTF $\Phi_i$ is $\de$-close to a constant function, then with probability $1-\gamma$ over choice of $z\sim{\bf z}$ it holds that $\Phi_i\rest_\rho$ is $\de'$-close to the same constant function, as long as $\de\le\poly(\de',\gamma)$ and that ${\bf z}$ is $\poly(\gamma)$-pseudorandom for LTFs.

\begin{lemma} (bias preservation lemma). \label{lem:bias}
Let $n\in\N$, and let $\de,\de',\gamma>0$ such that $\de\le(\gamma\cdot\de')^{10}$. Let $\Phi=(w,\theta)$ be an LTF over $n$ input bits that is $\de$-close to a constant function $\sigma\in\pmset$, let $I\subseteq[n]$, and let ${\bf z}$ be a distribution over $\pmset^{[n]\setminus I}$ that is $(\de'\cdot\gamma^2)$-pseudorandom for LTFs. Then, with probability $1-O(\gamma)$ over choice of $z\sim{\bf z}$ it holds that $\Phi\rest_{(I,z)}$ is $\de'$-close to $\sigma$.
\end{lemma}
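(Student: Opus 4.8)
The plan is to reduce the (a priori complicated) test ``is $\Phi\rest_{(I,z)}$ still $\delta'$-close to $\sigma$?'' to a \emph{single linear threshold inequality on $z$}, and then invoke the fact that ${\bf z}$ is pseudorandom for LTFs. I would first normalize: assume without loss of generality that $\sigma=1$ (otherwise replace $\Phi$ by its negation, which is again an LTF and is $\delta$-close to $-\sigma=1$; the class of LTFs is closed under negation, so the hypothesis that ${\bf z}$ is $(\delta'\gamma^2)$-pseudorandom for LTFs is unaffected), and assume the standard no-ties convention $\ip{w,x}\ne\theta$ for all $x\in\pmset^n$, which is inherited by every restriction of $\Phi$. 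Write $J=[n]\setminus I$, and for independent uniform ${\bf u}_I\in\pmset^I$, ${\bf u}_J\in\pmset^J$ put $X_I=\ip{w_I,{\bf u}_I}$ and $X_J=\ip{w_J,{\bf u}_J}$; both are symmetric around $0$. We may also assume $\delta'<1$ and $\gamma$ small enough that $O(\gamma)<1$, else there is nothing to prove.

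The key reformulation I would record is the following. For $z\in\pmset^J$ the restricted function is the LTF $\Phi\rest_{(I,z)}=(w_I,\theta_z)$ with $\theta_z=\theta-\ip{w_J,z}$, and by the no-ties convention $\Phi\rest_{(I,z)}$ is $\delta'$-close to $\sigma=1$ iff $\Pr_{{\bf u}_I}[X_I<\theta_z]\le\delta'$. Let $q$ be the largest real $v$ with $\Pr_{{\bf u}_I}[X_I<v]\le\delta'$ (attained, since $v\mapsto\Pr[X_I<v]$ is a nondecreasing step function with breakpoints among the finitely many values of $\ip{w_I,\cdot}$). Then $\Pr_{{\bf u}_I}[X_I<v]\le\delta'$ for $v\le q$ and $\Pr_{{\bf u}_I}[X_I<v]>\delta'$ for $v>q$, so $\Phi\rest_{(I,z)}$ is $\delta'$-close to $\sigma$ iff $\theta_z\le q$ iff $\ip{w_J,z}\ge\theta-q$. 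Hence the ``bad'' set $\{z:\Phi\rest_{(I,z)}\text{ is not }\delta'\text{-close to }\sigma\}=\{z:\ip{w_J,z}<\theta-q\}$ is exactly the accepting set of the single LTF $\Psi(z)=\sign\!\big(\ip{w_J,z}-(\theta-q)\big)$ over $\pmset^J$. Crucially, $q=q(w_I,\delta')$ is a fixed number that never has to be computed; it is used only in the analysis.

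Now, since ${\bf z}$ is $(\delta'\gamma^2)$-pseudorandom for LTFs (in particular for $\Psi$), $\Pr_{z\sim{\bf z}}[\Psi(z)=-1]\le\Pr_{z\sim{\bf u}_J}[\Psi(z)=-1]+\delta'\gamma^2=\Pr[X_J<\theta-q]+\delta'\gamma^2$, so it suffices to show $\Pr[X_J<\theta-q]\le\gamma$. Suppose not. For every $z$ with $\ip{w_J,z}<\theta-q$ we have $\theta_z=\theta-\ip{w_J,z}>q$, hence $\Pr_{{\bf u}_I}[X_I<\theta_z]>\delta'$ by the defining property of $q$. Using independence of ${\bf u}_I,{\bf u}_J$ and the hypothesis that $\Phi$ is $\delta$-close to $\sigma=1$ (equivalently $\Pr[X_I+X_J<\theta]\le\delta$),
\[
\delta\;\ge\;\Pr[X_I+X_J<\theta]\;=\;\E_{{\bf u}_J}\!\big[\,\Pr\nolimits_{{\bf u}_I}[X_I<\theta-X_J]\,\big]\;\ge\;\Pr[X_J<\theta-q]\cdot\delta'\;>\;\gamma\delta',
\]
contradicting $\delta\le(\gamma\delta')^{10}\le\gamma\delta'$ (the last inequality holds since $\gamma\delta'\le1$). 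Therefore $\Pr[X_J<\theta-q]\le\gamma$, and combining with the previous display, $\Pr_{z\sim{\bf z}}[\Phi\rest_{(I,z)}\text{ is not }\delta'\text{-close to }\sigma]\le\gamma+\delta'\gamma^2=O(\gamma)$.

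The only genuine obstacle is the reformulation in the second paragraph: recognizing that the test ``is $\Phi\rest_{(I,z)}$ biased towards $\sigma$?'', which a priori depends on the whole distribution of $\ip{w_I,{\bf u}_I}$, collapses to a single linear inequality $\ip{w_J,z}\gtrless\theta-q$ for the appropriate quantile $q$; this is precisely why the ``randomized tests'' machinery needed for the more elaborate bias-preservation statement is not required here. After that, the argument is a one-line calculation with no second moments, and it uses far less than the hypothesis provides — $\delta<\gamma\delta'$ already suffices — which is consistent with the paper's remark that one can afford a constant number of losses in the polynomial exponent of the closeness parameter over the $O(1)$ iterations of the restriction algorithm.
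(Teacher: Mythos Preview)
Your proof is correct and takes a genuinely different, more elementary route than the paper's.

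The paper proves this lemma via the general ``randomized tests'' machinery (Lemma~\ref{lem:randtest}): it builds a distribution ${\bf T}$ over tests, each of which samples $t=O(\log(1/\gamma)/\de')$ uniform points in the subcube $\mathfrak{C}_z$ and checks that $\Phi$ equals $\sigma$ on all of them; each such test is a conjunction of $t$ LTFs in $z$, and the argument shows that almost all residual tests have acceptance probability $1-o(1)$, so pseudorandomness for a single LTF plus a union bound over the $t$ conjuncts suffices. This yields the stated bound after tracking five error terms $\e_1,\dots,\e_5$ and is where the exponent $10$ in $\de\le(\gamma\de')^{10}$ is actually used.

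Your key observation bypasses all of this: because $v\mapsto\Pr[X_I<v]$ is monotone, the event ``$\Phi\rest_{(I,z)}$ is $\de'$-close to $\sigma$'' is \emph{exactly} the half-space $\{\ip{w_J,z}\ge\theta-q\}$ for the appropriate quantile $q$, hence is the accepting set of a single LTF $\Psi$ in $z$. One application of pseudorandomness for $\Psi$ together with the Markov-type bound $\Pr[X_J<\theta-q]\le\de/\de'\le\gamma$ (which only needs $\de\le\gamma\de'$) finishes the proof. The quantile $q$ is well-defined since $\Pr[X_I<\cdot]$ is left-continuous with finitely many jumps and $\de'<1$, and the no-ties convention is indeed inherited by restrictions, so your reformulation is sound. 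Your argument is sharper (it shows $\de\le\gamma\de'$ and $(\de'\gamma^2)$-pseudorandomness already give probability $\le\gamma+\de'\gamma^2$) and avoids the randomized-tests lemma entirely; the paper's approach, while heavier here, is a reusable template that does not depend on the ``good'' set happening to be a half-space.
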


The proof of Lemma~\ref{lem:bias} is deferred to Section~\ref{sec:bias}. We invoke Lemma~\ref{lem:bias} with $I$ being the set of variables that are kept alive by $\rho_3$, and $\de=\exp(-n^{\Omega(1)})$, and $\gamma=1/\poly(n)$, and $\de'=n^{-10\cdot(2+4\e+c)}$. After union-bounding over at most $r\le n^{1+\e}$ gates that were replaced by constants, with probability $1-1/\poly(n)$ it holds that all these gates are $\de'$-close to constants in the subcube $\pmset^{\rho_3^{-1}(\star)}$.

\paragraph{The fourth restriction $\rho_4$: Eliminate gates with small fan-in.}
We will rely on the following claim, which is an algorithmic version of~\cite[Prop. 36]{css16}:

\begin{subclaim} \label{claim:graphtharg}
For $k'=2\cdot n^{\al+\e}$, we can deterministically find in $\poly(n)$ time a set $I$ of at least $n_3/k'$ living variables such that when fixing all variables not in $I$ to any arbitrary values, the fan-in of each gate in $S$ is at most one. 
\end{subclaim}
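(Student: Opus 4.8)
The plan is to reduce the claim to finding a large independent set in an auxiliary \emph{conflict graph} on the living variables, and to produce such a set by the obvious greedy procedure. Let $H$ be the graph whose vertices are the $n_3$ variables still alive after applying $\rho_1,\rho_2,\rho_3$, with an edge between $i$ and $j$ whenever some gate $\Phi\in S$ has both $x_i$ and $x_j$ among its inputs. The point of this definition is that a set $I$ of living variables is independent in $H$ if and only if every gate in $S$ has at most one of its input variables in $I$; hence, once all variables outside such an $I$ are fixed to \emph{arbitrary} values, each gate in $S$ becomes a linear threshold function of at most one variable (its structure, i.e.\ which variables feed into it, does not depend on the fixed values), which is exactly the conclusion of the claim.

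It remains to bound the maximum degree of $H$ and to find the independent set. Recall that $S$ consists of bottom-layer gates of $C$ with fan-in at most $n^{\al}$, and that $\rho_1$ fixed every variable whose fan-out into the bottom layer exceeds $2n^{\e}$; since $\rho_2$ and $\rho_3$ only delete wires (fixing a variable, or replacing a gate by a constant, never increases any fan-out), every variable still alive has fan-out at most $2n^{\e}$. Consequently a living variable $x_i$ feeds into at most $2n^{\e}$ gates, and each such gate that lies in $S$ contributes fewer than $n^{\al}$ further input variables, so $\deg_H(i)\le 2n^{\e}\cdot(n^{\al}-1) < 2n^{\al+\e}=k'$; in particular $\deg_H(i)\le k'-1$ for every $i$.

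Finally I would run the standard greedy independent-set algorithm on $H$: while there is a remaining vertex, pick an arbitrary one, add it to $I$, and delete it together with all of its $H$-neighbors. Each iteration removes at most $1+(k'-1)=k'$ vertices from consideration while adding one vertex to $I$, so the process ends with an independent set of size $|I|\ge n_3/k'$; and both constructing $H$ from the explicit description of $C$ and executing the greedy procedure take $\poly(n)$ time. I do not anticipate a real obstacle here — this is just the algorithmic version of the counting argument of~\cite[Prop. 36]{css16} — and the only points that deserve attention are verifying that the fan-out bound of $\rho_1$ and the defining fan-in bound of $S$ survive the later restrictions (they do, since restrictions only remove wires) and tracking the degree bound precisely enough to get size $n_3/k'$ rather than a slightly worse fraction.
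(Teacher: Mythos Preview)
Your proposal is correct and takes essentially the same approach as the paper: build the conflict graph on the living variables (edges coming from shared membership in some $\Phi\in S$), bound its maximum degree by $2n^{\e}\cdot n^{\al}=k'$ using the fan-out bound from $\rho_1$ and the fan-in bound defining $S$, and greedily extract an independent set of size at least $n_3/k'$. You are in fact a bit more careful than the paper in noting that the degree is strictly below $k'$ (so that the greedy bound $n_3/k'$ holds exactly) and in verifying that the fan-out bound survives the subsequent restrictions.
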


\begin{proof} \innerqed
Consider the graph in which the vertices are the input gates $x_1,...,x_{n_3}$, and two vertices $x_i$ and $x_j$ are connected (in the graph) if and only if there exists a gate $\Phi_i\in S$ that is connected (in the circuit) to both $x_i$ and $x_j$. Note that this graph has degree at most $k'$, since every living variable has fan-out at most $2\cdot n^{\e}$, and every gate in $S$ has fan-in at most $n^{\al}$. Therefore, we can greedily construct an independent set $I$ in the graph of size at least $n_3/k'$, which is indeed the set of variables that we wanted.
\end{proof}

The algorithm finds a set $I$ using Claim~\ref{claim:graphtharg}, samples $z\sim{\bf z}$, and fixes all the variables outside $I$ according to $z$. This yields a restriction that reduces the fan-in of each gate in $S$ to one. Thus, each gate $\Phi\in S$ now simply takes the value of an input gate (or its negation), which implies that the gates that are connected to $\Phi$ (in the layer above it) can be connected immediately to the corresponding input gate, and we can remove $\Phi$ from the circuit. The number of living variables is $n_4=n_3/k'=\Omega(n^{1-24\e})$. 

To conclude, we claim that the gates that were previously replaced by constants are still close to constants in the new subcube. This is done by invoking Lemma~\ref{lem:bias} with $I$ being the aforementioned set of size $n_4$, and with parameter values $\de=n^{-10\cdot(2+4\e+c)}$, and $\gamma=n^{-(1+3\e)}$, and $\de'=n^{-(c+1+\e)}$. After union-bounding over the gates that were replaced by constants, with probability at least $1-n^{-2\e}$ it holds that all these gates are $\de'$-close to constants in the final subcube. It follows that the original circuit is $\de''$-close to the new circuit in the final subcube, where $\de''\le\de'\cdot n^{1+\e}\le n^{-c}$.

\paragraph{Accounting for the parameters.}
We obtained a circuit in $\tilde{C}\in\mathcal{C}_{n_4,d-1,n^{1+\e}}$. Since $n^{1+\e}=O(n_4^{\frac{1+\e}{1-24\e}})<n_4^{(1+\e)(1+25\e)}\le n_4^{1+30\e}$, we have that $\tilde{C}\in\mathcal{C}_{n_4,d-1,n_4^{1+30\e}}$. To sample the restriction $\rho=\rho_4\circ...\circ\rho_1$, we sampled from the distribution ${\bf z}$ four times, and from the distribution ${\bf y}$ a single time. A sample from ${\bf y}$ can obtained with seed length $O(\log(n))$, and relying on Theorem~\ref{thm:prg:ltfs}, each sample from ${\bf z}$ can be obtained with seed length $O(\log(n)\cdot(\log\log(n))^2)$.

Finally, let us account for the error probability. The first step is deterministic and always succeeds. In the second step, the algorithm is unable to simplify the circuit if the event $\mathcal{E}$ does not happen, or if the number of wires between $t$-balanced gates in $L$ and input gates is too large. Denoting the latter event by $\mathcal{E}'$, the probability of error is at most $\Pr[\lnot\mathcal{E}]+\Pr[\mathcal{E}'|\mathcal{E}]\le O(n^{-\beta/10})$. The last type of error to account for is the probability that $\tilde{C}$ is not $n^{-c}$-close to $C$ in $\pmset^{\rho^{-1}(\star)}$; as detailed above, this happens with probability at most $n^{-2\e}$. The overall error is thus $O( n^{-\beta/10} + n^{-2\e} )<n^{-\e}$.
\end{proof}

\subsubsection{Pseudorandom restriction algorithm for linear threshold circuits} \label{sec:main:rest:circuit}

We are now ready to construct the pseudorandom restriction algorithm that simplifies any linear threshold circuit to a single LTF gate (i.e., Proposition~\ref{prop:main:rest:circuit}). The proof will consist of $d-1$ applications of Proposition~\ref{prop:depth}. In each application, we will use Lemma~\ref{lem:bias} to claim that all the approximations in previous applications of Proposition~\ref{prop:depth} still hold.

\begin{proposition} (Proposition~\ref{prop:main:rest:circuit}, restated). \label{prop:itdepth:full}
Let $d\ge1$, let $\e>0$ be a sufficiently small constant, and let $\de=d\cdot30^{d-1}\cdot\e$. Then, there exists a polynomial-time algorithm that for every $n\in\N$, when given as input a circuit $C\in\mathcal{C}_{n,d,n^{1+\e}}$ and a random seed of length $O(\log(n)\cdot(\log\log(n))^2)$, with probability at least $1-n^{-\e/2}$ satisfies the following:
\begin{enumerate}
	\item The algorithm outputs a restriction $\rho\in\pmstar^n$ that keeps at least $n^{1-\de}$ variables alive.
	\item The algorithm outputs an LTF $\Phi:\pmset^{\rho^{-1}(\star)}\ra\pmset$ such that $\Phi$ is $1/10$-close to $C\rest_\rho$ (i.e., $\Pr_{x\in\pmset^{\rho^{-1}(\star)}}[C(x)=\Phi(x)]\ge9/10$).
\end{enumerate}
\end{proposition}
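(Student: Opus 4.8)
The plan is to apply the single-layer restriction algorithm of Proposition~\ref{prop:depth} repeatedly, $d-1$ times, peeling off one layer of $C$ in each step. Set $C_1=C$, $n_1=n$, $\e_1=\e$; for $j=1,\dots,d-1$, feed $C_j\in\mathcal{C}_{n_j,\,d-j+1,\,n_j^{1+\e_j}}$ (a depth-$(d-j+1)\ge2$ circuit) into the algorithm of Proposition~\ref{prop:depth}, instantiated with depth parameter $d-j+1$, wire-exponent parameter $\e_j$, and some accuracy parameter $c$ (its exact value is immaterial), using a fresh block of the seed; this returns a restriction $\rho^{(j)}$ that keeps $n_{j+1}=\Omega(n_j^{1-24\e_j})$ variables alive and a circuit $C_{j+1}\in\mathcal{C}_{n_{j+1},\,d-j,\,n_{j+1}^{1+30\e_j}}$. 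Hence $\e_{j+1}=30\e_j$, so $\e_j=30^{j-1}\e$; after the final step $C_d$ has depth $1$, i.e.\ it is a single LTF $\Phi$, and the composed restriction $\rho=\rho^{(d-1)}\circ\cdots\circ\rho^{(1)}$ keeps $n_d=\Omega\!\left(n^{\prod_{j=1}^{d-1}(1-24\e_j)}\right)$ variables alive. Since $\prod_{j=1}^{d-1}(1-24\e_j)\ge 1-24\sum_{j=1}^{d-1}\e_j\ge 1-\e\cdot30^{d-1}$, which exceeds $1-\de=1-d\cdot30^{d-1}\e$ by a constant positive amount $(d-1)\cdot30^{d-1}\e$ in the exponent, we get $n_d\ge n^{1-\de}$ for all sufficiently large $n$; this needs $\e$ small enough that $\e_{d-1}=30^{d-2}\e$ lies below the threshold of Proposition~\ref{prop:depth} and that $\de<1/2$, which is exactly the ``sufficiently small constant'' hypothesis. (For $d=1$ the algorithm returns the all-$\star$ restriction and $\Phi=C$; for $n$ below a constant depending on $d$ it uses brute force; both are trivial.)

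The crux, and the main obstacle, is to argue that the approximation quality accumulated in the early steps survives the restrictions applied in later steps. Step $j$ only guarantees that $C_{j+1}$ agrees with $C_j\rest_{\rho^{(j)}}$ on a $1-1/\poly(n_j)$ fraction of the $n_{j+1}$-variable subcube, and the disagreement set may contain many more than $2^{n_d}$ points, so a priori the approximations could all be destroyed by the time we reach the final $n_d$-variable subcube. To get around this, recall from the proof of Proposition~\ref{prop:depth} that $C_{j+1}$ differs from $C_j\rest_{\rho^{(j)}}$ only on inputs where some bottom-layer LTF $\Psi$ of $C_j$ that was replaced by a constant $\sigma$ disagrees with $\sigma$; and by Hoeffding's inequality (Theorem~\ref{thm:hoef}), at the moment $\Psi$ is replaced --- right after it is made $t$-imbalanced with $t=n^{\Omega(1)}$ --- it is $\exp(-n^{\Omega(1)})$-close to $\sigma$ over the then-current subcube. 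Every later restriction is a composition of sub-restrictions (four per application of Proposition~\ref{prop:depth}, hence $O(d)=O(1)$ of them in total after the creation of $\Psi$), each of which assigns the fixed variables values drawn from a distribution that is $(1/q(n))$-pseudorandom for LTFs for a large polynomial $q$; so each of them can be handled by the bias-preservation Lemma~\ref{lem:bias}, which, invoked with $\gamma=1/\poly(n)$, degrades the distance-to-$\sigma$ only from $\eta$ to $\eta^{1/10}\cdot\poly(n)$ at the cost of failure probability $O(\gamma)$. Iterating this a constant number of times turns $\exp(-n^{\Omega(1)})$ into $\exp(-n^{\Omega(1)})$ (a worse exponent, still $n^{\Omega(1)}$), hence below $n^{-C}$ for any prescribed constant $C$; taking a union bound over the at most $\poly(n)$ gates ever replaced by constants (across all $d-1$ steps) shows that, with probability $1-1/\poly(n)$, each such gate is $n^{-C}$-close to its constant over the final subcube, and therefore $C\rest_\rho$ and $\Phi$ disagree on at most $\poly(n)\cdot n^{-C}<1/10$ of that subcube.

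Finally, I would collect the error terms. Step $j$ of Proposition~\ref{prop:depth} fails with probability at most $n_j^{-\e_j}$, which is $\le n^{-\e}$ for $j=1$ (since $n_1=n$) and $\le n^{-15\e}$ for $j\ge2$ (since $n_j\ge n^{1/2}$ and $\e_j\ge30\e$); together with the $1/\poly(n)$ failure probability of the bias-preservation union bound, the total failure probability is at most $n^{-\e}+(d-2)n^{-15\e}+1/\poly(n)\le n^{-\e/2}$ for all sufficiently large $n$ (and for small $n$ the brute-force branch never errs). On a successful run the algorithm outputs $\rho$, keeping $n_d\ge n^{1-\de}$ variables alive, and the single LTF $\Phi$, which is $(1-\poly(n)\cdot n^{-C})$-close --- in particular $1/10$-close --- to $C\rest_\rho$. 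The algorithm runs $d-1=O(1)$ polynomial-time subroutines on $d-1$ independent seeds of length $O(\log(n)\cdot(\log\log(n))^2)$, so it is polynomial-time and uses a seed of length $O(\log(n)\cdot(\log\log(n))^2)$, as claimed. The one place that requires care is the second paragraph: one must ensure that every pseudorandom distribution used to fix variables --- at every one of the $O(d)$ sub-restrictions --- is pseudorandom for LTFs, so that Lemma~\ref{lem:bias} can be invoked to carry the early approximations all the way through the process.
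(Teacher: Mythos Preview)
Your proposal is correct and follows essentially the same approach as the paper's proof: iterate Proposition~\ref{prop:depth} for $d-1$ rounds, track the recursion $\e_j=30^{j-1}\e$ and $n_{j+1}=\Omega(n_j^{1-24\e_j})$ to bound the number of surviving variables, and—crucially—use the bias-preservation Lemma~\ref{lem:bias} repeatedly across the $O(d)$ sub-restrictions (four per round) to carry the $\exp(-n^{\Omega(1)})$-closeness of each replaced gate forward to the final subcube, then union-bound over all $\poly(n)$ such gates. The only cosmetic differences are indexing conventions and the error-probability bookkeeping (the paper bounds each round's failure by $(n^{1-\de})^{-\e}$ and sums $d$ of them, whereas you separate the first round from the later ones); both calculations yield the required $n^{-\e/2}$ bound once $\e$ is small enough that $\de<1/2$.
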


\begin{proof}[{\bf Proof.}]
We repeatedly invoke Proposition~\ref{prop:depth}, for $d-1$ times. For $i\in[d-1]$, let $\rho^{(i)}$ be the restriction that is obtained in the $i^{th}$ invocation of Proposition~\ref{prop:depth}, and let $\rho=\rho^{(d-1)}\circ...\circ\rho^{(1)}$ be the final restriction. Let $C_0=C$, and for $i\in[d-1]$, let $C_{i}$ be the circuit that is obtained after the $i^{th}$ invocation of Proposition~\ref{prop:depth}. Also let $\e_0=\e$ and $\e_{i}=30\cdot\e_{i-1}=30^{i}\cdot\e$, and let $n_0=n$ and $n_{i}=\Omega\left( (n_{i-1})^{1-24\e_{i-1}} \right)$. 

We say that an invocation of Proposition~\ref{prop:depth} is \emph{successful} if the two items in the proposition's statement are satisfied (i.e., the algorithm outputs a restriction that keeps sufficiently many live variables, and a circuit of smaller depth that agrees with the original circuit on almost all inputs). Assuming all invocations of Proposition~\ref{prop:depth} are successful, for each $i\in[d-1]$ it holds that $C_{i}\in\mathcal{C}_{n_{i},d-i,n_i^{1+\e_{i}}}$, and in particular $C_{d-1}$ is a single LTF $\Phi$. Also, in this case, the number of living variables after all invocations is
\mm{
n_{d-1} = n^{\Pi_{i=0}^{d-2}(1-24\e_{i})} > n^{1-24\cdot\sum_{i=0}^{d-2}\e_{i}} > n^{1-24\cdot d\cdot\e_{d-2}} > n^{1-\de} \mathdot \eqtag{eq:livingvars}
}

The required seed length for the $d-1$ invocations of Proposition~\ref{prop:depth} is $\tilde{O}(\log(n))$. To bound the probability of error, for each $i\in[d-1]$, assume that all previous $i-1$ invocations were successful, and note that the probability that the $i^{th}$ invocation of Proposition~\ref{prop:depth} fails is at most $n_{i-1}^{-\e_{i-1}}<(n^{1-\de})^{-\e}$ (the inequality is since we assumed that the previous invocations of Proposition~\ref{prop:depth} were successful, which implies that $n_{i-1}\ge n^{1-\de}$, by a calculation similar to Eq.~\eqref{eq:livingvars}). Thus, the accumulated probability of error is at most $d\cdot(n^{1-\de})^{-\e}<n^{-\e/2}$, where the inequality relied on the fact that $\e$ is sufficiently small. 

Condition on all the $d-1$ invocations of Proposition~\ref{prop:depth} being successful. Recall that in this case, for every $i\in[d-1]$ it holds that $C_{i}$ is $n^{-c}$-close to $C_{i-1}\rest_{\rho^{(i)}}$; we now claim that, with high probability, this approximation continues to hold even in the subcube that corresponds to the final restriction $\rho$.

\begin{subclaim}
For every $i\in[d-1]$, with probability $1-1/\poly(n)$ it holds that $\left(C_{i-1}\right)\rest_\rho$ is $1/10d$-close to $\left(C_{i}\right)\rest_\rho$.
\end{subclaim}
\begin{proof}\innerqed
For each $j\in\{i,...,d-1\}$, recall that $\rho^{(j)}$ is the composition of four restrictions, denoted by $\rho^{(j)}_1,...,\rho^{(j)}_4$. Fix $i\in[d-1]$, condition on any fixed choice for $\rho^{(i)}_1$ and $\rho^{(i)}_2$, and let $C'=(C_{i-1})\rest_{\rho^{(i)}_1,\rho^{(i)}_2}$. Recall that immediately after applying $\rho^{(i)}_2$, the algorithm from Proposition~\ref{prop:depth} replaces a set of $m\le n^{1+\e_{d-(i-1)}}$ LTF gates, denoted $\Phi_1,...,\Phi_{m}$, with a corresponding set of constants $\sigma_1,...,\sigma_{m}\in\pmset$. Let $\widetilde{C'}$ be the circuit that is obtained from $C'$ by the aforementioned replacement. Finally, note that for every choice of final restriction $\rho$ it holds that $\left(C_{i-1}\right)\rest_\rho=C'\rest_\rho$ and $\left(C_{i}\right)\rest_\rho=\widetilde{C'}\rest_\rho$. 

Our goal now will be to show that for every fixed $k\in[m]$, with probability $1-1/\poly(n)$ over choice of $\rho$ it holds that $(\Phi_k)\rest_\rho$ is $1/(10dm)$-close to $\sigma_k$. This suffices to conclude the proof, since it follows (by a union-bound over the $m$ gates) that with probability $1-1/\poly(n)$, for every $k\in[m]$ it holds that $(\Phi_k)\rest_\rho$ is $1/(10dm)$-close to $\sigma_k$; and whenever the latter event happens we have that $C'\rest_\rho$ is $1/(10d)$-close to $\widetilde{C'}\rest_\rho$.

Towards the aforementioned goal, fix $k\in[m]$, and recall that $\Phi_k$ is $\de_0$-close to some constant function $\sigma_k\in\pmset$, where $\de_0=\exp\left( n_{i-1}^{-\Omega(1)} \right)=\exp\left( n^{-\Omega(1)} \right)$, where the inequality is since $n_{i-1}=n^{\Omega(1)}$ (recall that we conditioned on all invocations of Proposition~\ref{prop:depth} being successful). Observe that the final restriction $\rho$ is composed of $t\eqdef4\cdot(d-i-1)+2$ additional restrictions on the domain of $\Phi_k$: Two additional restrictions $\rho^{(i)}_3$ and $\rho^{(i)}_4$ in the $i^{th}$ invocation of Proposition~\ref{prop:depth}, and for each $j\in\{i+1,...,d-1\}$, four restrictions $\rho^{(j)}_1,...,\rho^{(j)}_4$ in the $j^{th}$ invocation of Proposition~\ref{prop:depth}. Recall that each of the $t$ restrictions is chosen by first choosing (deterministically or pseudorandomly) a set of variables to keep alive, and then \emph{independently} choosing values for the fixed variables. Therefore, we will now repeatedly use Lemma~\ref{lem:bias}, to claim that each restriction preserves the closeness of $\Phi_k$ to $\sigma_k$.

For convenience, rename the $t$ restrictions $\rho^{(i)}_3,\rho^{(i)}_4,\rho^{(i+1)}_1, ... ,\rho^{(i+1)}_4,...,\rho^{(d-1)}_1,...,\rho^{(d-1)}_4$, and denote them by $\rho'^{(1)},...,\rho'^{(t)}$. Let $\gamma=n^{-c}$ for a sufficiently large constant $c>1$. Note that $\de_{0}<n^{-10^{2t}\cdot c}$, and for every $r\in[t]$ let $\de_{r}=\de_{r-1}^{1/10^2}$; it follows that for every $r\in[t]$ it holds that $\de_{r-1}\le(\gamma\cdot\de_{r})^{10}$. We prove by induction on $r\in[t]$ that with probability at least $1-O(n^{-c})$ it holds that $(\Phi_k)\rest_{\rho'^{(1)}\circ...\circ\rho'^{(r)}}$ is $\de_r$-close to $\sigma_k$. For the base case $r=1$ we rely on the hypothesis that $\Phi_k$ is $\de_0$-close to $\sigma_k$, and use Lemma~\ref{lem:bias} with the values $\de=\de_{0}$ and $\de'=\de_{1}$ and $\gamma=n^{-c}$ as above. For the induction step $r>1$, we condition on $(\Phi_k)\rest_{\rho'^{(1)}\circ...\circ\rho'^{(r-1)}}$ being $\de_{r-1}$-close to $\sigma_k$, and again use Lemma~\ref{lem:bias} with the values $\de=\de_{r-1}$ and $\de'=\de_{r}$ and $\gamma=n^{-c}$. Hence, with probability at least $1-O(n^{-c})$ it holds that $\left(\Phi_k\right)\rest_\rho$ is $\de_{t}$-close to $\sigma_k$, where $\de_t=n^{-c}<1/(10dm)$.
\end{proof}

Thus, with probability $1-1/\poly(n)$, for every $i\in[d-1]$ it holds that $(C_{i-1})\rest_\rho$ is $1/10d$-close to $(C_{i})\rest_\rho$. Whenever this holds, by a union-bound it follows that $C\rest_\rho=(C_0)\rest_\rho$ is $1/10$-close to $(C_{d-1})\rest_\rho=C_{d-1}=\Phi$.
\end{proof}

\subsection{Proof of the bias preservation lemma} \label{sec:bias}

In this section we prove Lemma~\ref{lem:bias}. Loosely speaking, the lemma  asserts that an LTF $\Phi$ that is close to a constant $\sigma\in\pmset$ remains close to $\sigma$ when the domain is restricted by a restriction $\rho$ in which the values for the fixed variables are chosen from a distribution that is pseudorandom for LTFs. For the proof we will need the following lemma from~\cite[Lem. 15]{tell17} (the original notations are adapted for the current context).

\begin{lemma} (randomized tests). \label{lem:randtest}
Let $n\in\N$, and let $\e_1,\e_2,\e_3,\e_4,\e_5>0$ be error parameters.
\begin{itemize}
	\item Let $G\subseteq\pmset^n$, and let $E\subseteq G$ such that $\Pr_{z\in\pmset^n}[z\in E]\ge1-\e_1$.
	\item Let ${\bf T}$ be a distribution over functions $T:\pmset^n\ra\pmset$ such that for every $z\in E$ it holds that $\Pr_{T\sim{\bf T}}[T(z)=-1]\ge1-\e_2$, and for every $z\notin G$ it holds that $\Pr_{T\sim{\bf T}}[T(z)=1]\ge1-\e_3$.
	\item Let ${\bf z}$ be a distribution that is $\e_5$-pseudorandom for all but an $\e_4$-fraction of the tests in ${\bf T}$; that is, the probability over $T\sim{\bf T}$ that $\abs{\Pr[T({\bf u}_n)=-1]-\Pr[T({\bf z})=-1]}>\e_5$ is at most $\e_4$.
\end{itemize}
Then, the probability that ${\bf z}\in G$ is at least $1-(\e_1+\e_2+\e_3+2\e_4+\e_5)$.
\end{lemma}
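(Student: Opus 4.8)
The plan is to reduce the claim to the defining property of a pseudorandom distribution for the "typical" test drawn from ${\bf T}$, then to bound the various error contributions by a union bound over the bad events. First I would introduce the set of \emph{bad tests} $\mathcal{B}=\{T\in\mathtt{support}({\bf T}):\abs{\Pr[T({\bf u}_n)=-1]-\Pr[T({\bf z})=-1]}>\e_5\}$, so that $\Pr_{T\sim{\bf T}}[T\in\mathcal{B}]\le\e_4$ by hypothesis. For a test $T\notin\mathcal{B}$ we have $\Pr[T({\bf z})=1]\ge\Pr[T({\bf u}_n)=1]-\e_5$. The strategy is to lower-bound $\Pr_{z\in\pmset^n}[T(z)=1]$ using the fact that $T$ accepts (outputs $-1$) almost all of $E$ only rarely: since $\Pr_{z\in\pmset^n}[z\in E]\ge1-\e_1$ and each fixed $z\in E$ has $\Pr_{T\sim{\bf T}}[T(z)=-1]\ge1-\e_2$, a double-counting / Markov argument over the uniform choice of $z$ and the independent choice of $T$ will show that for most $T$ the value $\Pr_{z\in\pmset^n}[T(z)=-1]$ is small.

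Concretely, I would bound the joint probability, over $z\sim{\bf u}_n$ and $T\sim{\bf T}$, of the "failure" event $\{T(z)=-1\}\cup\{z\notin G\}$. We have $\Pr_{z,T}[z\notin G]\le\Pr[z\notin E]\le\e_1$, and $\Pr_{z,T}[T(z)=-1,\ z\in E]\le\e_2$ (condition on $z\in E$ and use the second bullet), hence $\Pr_{z,T}[T(z)=-1\text{ or }z\notin G]\le\e_1+\e_2$. By averaging over $T$, for all but a $\sqrt{\e_1+\e_2}$-fraction of... — actually a cleaner route avoids the square root: I would instead work directly with the event we care about. Note that $\{{\bf z}\in G\}$ fails only if ${\bf z}\notin G$; and for $z\notin G$ the third bullet gives $\Pr_{T\sim{\bf T}}[T(z)=1]\le\e_3$, i.e. $T$ is unlikely to \emph{reject} such $z$. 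So fix the pseudorandom distribution ${\bf z}$ and consider $\Pr_{{\bf z},T\sim{\bf T}}[T({\bf z})=1]$. On one hand, splitting on whether ${\bf z}\in G$: if ${\bf z}\notin G$ then $\Pr_T[T({\bf z})=1]\le\e_3$, so $\Pr_{{\bf z},T}[T({\bf z})=1]\le\Pr[{\bf z}\in G]+\e_3$. On the other hand, $\Pr_{{\bf z},T}[T({\bf z})=1]\ge\Pr_{{\bf z},T}[T({\bf z})=1,\ T\notin\mathcal{B}]\ge\Pr_{{\bf u}_n,T}[T({\bf u}_n)=1,\ T\notin\mathcal{B}]-\e_5\ge\Pr_{{\bf u}_n,T}[T({\bf u}_n)=1]-\e_4-\e_5$, using that for $T\notin\mathcal{B}$ the acceptance-of-$1$ probabilities under ${\bf z}$ and ${\bf u}_n$ differ by at most $\e_5$, and that dropping the constraint $T\notin\mathcal{B}$ costs at most $\e_4$. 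Finally $\Pr_{{\bf u}_n,T}[T({\bf u}_n)=1]\ge1-\Pr_{{\bf u}_n,T}[T({\bf u}_n)=-1]$, and by the first two bullets $\Pr_{{\bf u}_n,T}[T({\bf u}_n)=-1]\le\Pr[{\bf u}_n\notin E]+\Pr_{{\bf u}_n,T}[T({\bf u}_n)=-1\mid{\bf u}_n\in E]\le\e_1+\e_2$. Chaining the two bounds on $\Pr_{{\bf z},T}[T({\bf z})=1]$ gives $\Pr[{\bf z}\in G]+\e_3\ge 1-\e_1-\e_2-\e_4-\e_5$, i.e. $\Pr[{\bf z}\in G]\ge1-(\e_1+\e_2+\e_3+\e_4+\e_5)$.

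Two small points remain: the statement claims $1-(\e_1+\e_2+\e_3+2\e_4+\e_5)$, with a factor $2$ on $\e_4$, which I would keep by being slightly less tight in one of the two $\e_4$-steps (e.g. also pay $\e_4$ when lower-bounding $\Pr_{{\bf u}_n,T}[T({\bf u}_n)=1,T\notin\mathcal{B}]$, since the definition of $\mathcal{B}$ controls only the \emph{gap}, not $\Pr[T({\bf u}_n)=-1]$ itself, so it is cleanest to restrict to $T\notin\mathcal{B}$ on both the ${\bf z}$ side and the ${\bf u}_n$ side and absorb a $\e_4$ each time). I do not expect any genuine obstacle here — the lemma is a short averaging argument and the only thing to be careful about is bookkeeping the direction of each inequality (which events are "accept $=-1$" versus "reject $=1$") and making sure the pseudorandomness hypothesis is invoked only for the acceptance probability of the value $1$ on tests outside $\mathcal{B}$. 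Since this lemma is quoted verbatim from \cite[Lem. 15]{tell17}, I would in fact just cite that reference and include the above averaging argument as a self-contained one-paragraph proof for completeness.
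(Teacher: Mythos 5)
The paper does not actually prove Lemma~\ref{lem:randtest}; it is imported from~\cite[Lem.~15]{tell17}, so there is no in-paper argument to compare against. Your proposed two-sided averaging argument is nonetheless the natural one, and it is essentially correct; it even yields the slightly stronger bound $\Pr[{\bf z}\in G]\ge 1-(\e_1+\e_2+\e_3+\e_4+\e_5)$, with a single $\e_4$, which of course implies the stated bound.

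The one genuine issue is a consistent sign-convention slip: you treat $T(z)=1$ as the ``accepting'' outcome, whereas under the paper's convention $T(z)=-1$ is acceptance and the hypotheses are stated accordingly. Concretely, you write ``for $z\notin G$, $\Pr_{T\sim{\bf T}}[T(z)=1]\le\e_3$'' and ``$\Pr_{{\bf u}_n,T}[T({\bf u}_n)=-1\mid{\bf u}_n\in E]\le\e_2$,'' but the lemma's hypotheses give $\Pr_{T}[T(z)=1]\ge1-\e_3$ for $z\notin G$ and $\Pr_{T}[T(z)=-1]\ge1-\e_2$ for $z\in E$, i.e.\ the opposite directions, so the inequalities as literally written do not follow from the hypotheses. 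Because the reversal is uniform, the fix is mechanical: swap the roles of $1$ and $-1$ in $T$'s output, i.e.\ track the quantity $\Pr_{{\bf z},T\sim{\bf T}}[T({\bf z})=-1]$ rather than $\Pr_{{\bf z},T\sim{\bf T}}[T({\bf z})=1]$. With that substitution the chain goes through exactly as you outlined: splitting on whether ${\bf z}\in G$ and using that $\Pr_T[T(z)=-1]\le\e_3$ for $z\notin G$ gives $\Pr_{{\bf z},T}[T({\bf z})=-1]\le\Pr[{\bf z}\in G]+\e_3$; and restricting to $T\notin\mathcal{B}$ (cost $\le\e_4$) and invoking pseudorandomness (cost $\le\e_5$) gives $\Pr_{{\bf z},T}[T({\bf z})=-1]\ge\Pr_{{\bf u}_n,T}[T({\bf u}_n)=-1]-\e_4-\e_5\ge(1-\e_1)(1-\e_2)-\e_4-\e_5\ge1-\e_1-\e_2-\e_4-\e_5$. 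Chaining these two bounds yields the conclusion. Once you fix the $\pm1$ bookkeeping, the argument is complete and could serve as a self-contained replacement for the citation; your closing remark about recovering the cosmetic $2\e_4$ by double-counting $\mathcal{B}$ is fine but unnecessary.
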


Fix a set $I\subseteq[n]$ of variables that the restriction keeps alive. Relying on Lemma~\ref{lem:randtest}, the proof idea for Lemma~\ref{lem:bias} is to design a distribution ${\bf T}$ over tests that gets as input $z\in\pmset^{[n]\setminus I}$, and tests whether or not $\Phi$ is close to $\sigma$ in the subcube corresponding to the restriction $\rho=\rho_{I,z}$. 

\begin{lemma} (Lemma~\ref{lem:bias}, restated). \label{lem:bias:full}
Let $n\in\N$, and let $\de,\de',\gamma>0$ such that $\de\le(\gamma\cdot\de')^{10}$. Let $\Phi=(w,\theta)$ be an LTF over $n$ input bits that is $\de$-close to a constant function $\sigma\in\pmset$, let $I\subseteq[n]$, and let ${\bf z}$ be a distribution over $\pmset^{[n]\setminus I}$ that is $(\de'\cdot\gamma^2)$-pseudorandom for LTFs. Then, with probability $1-O(\gamma)$ over choice of $z\sim{\bf z}$ it holds that $\Phi\rest_{(I,z)}$ is $\de'$-close to $\sigma$.
\end{lemma}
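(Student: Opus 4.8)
The plan is to instantiate the randomized-tests machinery of Lemma~\ref{lem:randtest} with a distribution ${\bf T}$ of ``sampling tests'': given $z\in\pmset^{[n]\setminus I}$, the test picks $N=\poly(n)$ uniformly random points in the subcube $\rho_{I,z}^{-1}(\star)$ and accepts $z$ (outputs $-1$) if and only if $\Phi\rest_{(I,z)}$ evaluates to $\sigma$ on every sampled point. First I would set up the relevant sets for Lemma~\ref{lem:randtest}: let $E$ be the set of ``excellent'' $z$ such that $\Phi\rest_{(I,z)}$ is, say, $\de^{1/2}$-close to $\sigma$, and let $G$ be the set of ``not-too-bad'' $z$ such that $\Phi\rest_{(I,z)}$ is $\de'$-close to $\sigma$; clearly $E\subseteq G$. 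By an averaging (Markov) argument over the uniform choice of the fixed coordinates, since $\Phi$ is globally $\de$-close to $\sigma$, all but at most a $\de^{1/2}$-fraction of $z$'s have $\Phi\rest_{(I,z)}$ being $\de^{1/2}$-close to $\sigma$, so $\Pr_{z\in\pmset^{[n]\setminus I}}[z\in E]\ge1-\de^{1/2}$, giving $\e_1=\de^{1/2}$.

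Next I would verify the two ``randomized test'' conditions. For $z\in E$: the fraction of bad points in the subcube is at most $\de^{1/2}$, so the probability that $N$ independent uniform samples all hit $\sigma$ is at least $(1-\de^{1/2})^N\ge 1-N\de^{1/2}$; choosing $N=\poly(n)$ appropriately and using $\de\le(\gamma\de')^{10}$ this is $\ge1-\e_2$ with $\e_2=O(\gamma)$. For $z\notin G$: the fraction of points where $\Phi\rest_{(I,z)}\ne\sigma$ exceeds $\de'$, so the probability that all $N$ samples evaluate to $\sigma$ is at most $(1-\de')^N\le e^{-N\de'}$, which is $\le\e_3$ with $\e_3=O(\gamma)$ provided $N\ge\Theta(\log(1/\gamma)/\de')$ — again accommodated by taking $N=\poly(n)$ since $1/\de'\le\poly(n)$ and $1/\gamma\le\poly(n)$ in all our applications (one should note the hypothesis only gives $\de\le(\gamma\de')^{10}$, so $N$ is chosen as a polynomial in $1/\de'$ and $1/\gamma$, not bounded in terms of $\de$; this is fine since the lemma makes no claim about $N$'s size).

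The key structural step — which I expect to be the main obstacle, or at least the point needing the most care — is showing that \emph{most} residual deterministic tests $T\in\mathtt{support}({\bf T})$ are fooled by ${\bf z}$, i.e. bounding $\e_4$ and $\e_5$. A fixed test $T$ is determined by a choice of $N$ sample points in the subcube; since each sampled point together with $z$ yields a point of $\pmset^n$ on which we evaluate the fixed LTF $\Phi$, and $\Phi$ restricted to a subcube is again an LTF in the free variables, the test $T$ is exactly a conjunction of $N$ LTFs over $\pmset^{[n]\setminus I}$ (one LTF per sample point, checking $\Phi=\sigma$ there). A distribution that is $\eta$-pseudorandom for a single LTF is $N\eta$-pseudorandom for a conjunction of $N$ LTFs by a union bound; with $\eta=\de'\gamma^2$ and $N=\poly(n)$ this gives pseudorandomness error $\e_5=O(\de'\gamma^2\cdot\poly(n))$. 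To make this genuinely small I would instead argue as in the overview: almost all tests in the support have acceptance probability at least $1-1/\poly(N)$ under the uniform distribution (since for typical sample points the LTF-conjunction is satisfied with overwhelming probability when $z$ is excellent), so being $(1/\poly(n))$-pseudorandom for each of the $N$ constituent LTFs suffices to approximate the conjunction's (near-$1$) acceptance probability within $O(\gamma)$; the small fraction of ``bad'' tests (those whose sample points happen to land on the rare disagreement set) is absorbed into $\e_4=O(\gamma)$. Finally, plugging $\e_1=\de^{1/2}=O(\gamma\de')$, $\e_2=\e_3=\e_4=\e_5=O(\gamma)$ into Lemma~\ref{lem:randtest} yields $\Pr[{\bf z}\in G]\ge 1-O(\gamma)$, which is precisely the conclusion that $\Phi\rest_{(I,z)}$ is $\de'$-close to $\sigma$ with probability $1-O(\gamma)$ over $z\sim{\bf z}$.
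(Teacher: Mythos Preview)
Your proposal is correct and mirrors the paper's proof essentially step for step: the same sets $E$ and $G$, the same sampling-test distribution with $t=O(\log(1/\gamma)/\de')$ samples, and the same averaging argument showing that all but a $\xi$-fraction of tests $T_{\bar x}$ have $\acc(T_{\bar x})\ge 1-\xi$ (where $\xi^2=(t+1)\sqrt{\de}$), after which Lemma~\ref{lem:randtest} yields the conclusion. One correction worth flagging: your intermediate claim that $\eta$-pseudorandomness for single LTFs gives $N\eta$-pseudorandomness for their conjunction ``by a union bound'' is not merely quantitatively loose but false in general --- the argument that actually works (and that you then correctly sketch) crucially uses that for a good test each constituent LTF individually has acceptance $\ge 1-\xi$, so that a union bound on the \emph{complementary} events gives $\Pr_{z\sim{\bf z}}[T_{\bar x}(z)=-1]\ge 1-t(\xi+\eta)$.
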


\paragraph{A high-level description of the proof.} For every $z\in\pmset^{[n]\setminus I}$, consider the corresponding subcube $\mathfrak{C}_z=\left\{y\in\pmset^n:\forall i\in([n]\setminus I),y_i=z_i\right\}$. Our goal is to show that with high probability over $z\sim{\bf z}$ it holds that $\Phi$ is close to $\sigma$ in $\mathfrak{C}_z$. To do so, we will construct a distribution ${\bf T}$ of tests such that for any fixed $z\in\pmset^{[n]\setminus I}$, the distribution ${\bf T}(z)$ is equivalent to the following random process: Sample $t=\poly(n)$ random points $y^{(1)},...,y^{(t)}$ in $\mathfrak{C}_z$, and accept if and only if $\Phi(y^{(i)})=\sigma$ for every $i\in[t]$. 

To construct the distribution ${\bf T}$, for every $x\in\pmset^{|I|}$ we define a corresponding test $T_x$ as follows: The test $T_x$ gets input $z\in\pmset^{[n]\setminus I}$, extends $z$ to an $n$-bit string $y\in\pmset^n$ using the values specified in $x$ (i.e., $y_i=x_i$ for every $i\in I$, and $y_i=z_i$ otherwise), and accepts $z$ if and only if $\Phi(y)=\sigma$. Observe that $T_x$ simply computes an LTF of its input $z$ (see Eq.~\eqref{eq:bias:ltf}). Also note that for any \emph{fixed input $z\in\pmset^{[n]\setminus I}$}, a uniform choice of $x\in\pmset^{|I|}$ yields a uniform point $y\in\mathfrak{C}_z$. Each test in ${\bf T}$ corresponds to a tuple $\bar{x}=(x^{(1)},...,x^{(t)})\in\pmset^{t\cdot|I|}$, and computes the function $T_{\bar{x}}(z)=\land_{i\in[t]}T_{x^{(i)}}(z)$.

Assume that $\Phi$ is initially $\de$-close to $\sigma$, for $\de\le1/\poly(n)$. We say that an input $z\in\pmset^{[n]\setminus I}$ is {\sf excellent} if $\Phi$ is $\sqrt{\de}$-close to $\sigma$ in $\mathfrak{C}_z$, and we say that $z$ is {\sf bad} if $\Phi$ is not $\de'$-close to $\sigma$ in $\mathfrak{C}_z$, where $\de'=\de^{\Omega(1)}$. Let $E$ be the set of excellent inputs, and let $B$ be the set of bad inputs. If we choose the parameter $t$ (i.e., the number of sample points) such that $\frac{O(\log(n))}{\de'}< t<\frac1{\sqrt{\de}\cdot\poly(n)}$, then the distribution ${\bf T}$ accepts every $z\in E$ with probability $1-1/\poly(n)$, and rejects every $z\in B$, with probability $1-1/\poly(n)$.

What remains to show is that a distribution ${\bf z}$ that is $(1/\poly(n))$-pseudorandom for LTFs is also $(1/\poly(n))$-pseudorandom for almost all tests in the support of ${\bf T}$. To do so, note that almost all inputs $z\in\pmset^{[n]\setminus I}$ are excellent, and each excellent input is accepted with high probability by a random test $T_{\bar{x}}\sim{\bf T}$. Thus, almost all of the residual deterministic tests $T_{\bar{x}}$ in the support of ${\bf T}$ \emph{accept almost all of their inputs}; in particular, at least $1-1/\poly(n)$ of the residual tests have acceptance probability at least $1-1/\poly(n)$. Every such test is the conjunction of $t=\poly(n)$ LTFs, and each of these LTFs has acceptance probability at least $1-1/\poly(n)$. By a union-bound over the $t$ LTFs, the acceptance probability of such $T_{\bar{x}}$ under ${\bf z}$ is also $1-t\cdot(1/\poly(n))=1-1/\poly(n)$.

\begin{proof}[{\bf Proof of Lemma~\ref{lem:bias:full}.}]
Without loss of generality, assume that $\Phi$ is $\de$-close to the constant $\sigma=-1$. For any Boolean function $f$ over a domain $\mathfrak{D}$, let $\acc(f)=\Pr_{x\sim\mathfrak{D}}[f(x)=-1]$. Also, denote $J=[n]\setminus I$ and $n'=|J|$, and for any $z\in\bitset^{n'}$, denote by $\rho_z$ the restriction $\rho_z=(I,z)$ (i.e., we suppress $I$ in the notation $\rho_z$, since $I$ is fixed).

Let $G=\left\{ z\in\bitset^{n'} : \acc(\Phi\rest_{\rho_z})\ge1-\de' \right\}$. Our goal is to show that $\Pr_{z\sim{\bf z}}[z\in G]\ge1-O(\gamma)$. Let $E=\left\{ z\in\bitset^{n'} : \acc(\Phi\rest_{\rho_z})\ge1-\sqrt{\de} \right\}$. Note that when $z\in\pmset^{n'}$ is chosen uniformly it holds that $\E_{z\in\pmset^{n'}}\left[ \acc(\Phi\rest_{\rho_z})\right]=\Pr_{x\in\pmset^n}[\Phi(x)=-1]\ge1-\de$. Therefore, $\Pr_{z\in\pmset^{n'}}[z\in E]\ge1-\sqrt{\de}$. 

We now construct a distribution ${\bf T}$ over tests $\pmset^{n'}\ra\pmset$ that distinguishes, with high probability, between $z\in E$ and $z\notin G$. For $x\in\bitset^{|I|}$, let $T_x$ be the function that gets as input $z\in\bitset^{n'}$, and outputs the value $\Phi(y)$, where $y_{J}=z$ and $y_I=x$. Note that for any fixed $z\in\pmset^{n'}$, when uniformly choosing $x\in\pmset^{|I|}$ it holds that $\Pr\left[ T_{x}(z)=-1 \right]=\acc(\Phi\rest_{\rho_z})$.  Also, $T_x$ is an LTF of its input $z$, because
\mm{
T_x(z)=\sign\left( \ip{y,w} -\theta \right) = \sign\left( \ip{z,w_{J}} - (\theta-\ip{x,w_I}) \right) \mathdot \eqtag{eq:bias:ltf}
}

For $t=O\left(\frac{\log(1/\gamma)}{\de'}\right)$ and $\bar{x}=(x^{(1)},...,x^{(t)})\in\bitset^{t\cdot|I|}$, let $T_{\bar{x}}:\pmset^{n'}\ra\pmset$ be the function such that $T_{\bar{x}}(z)=-1$ if and only if for every $i\in[t]$ it holds that $T_{x^{(i)}}(z)=-1$ (i.e., $T_{\bar{x}}$ is the conjunction $\land_{i\in[t]}T_{x^{(i)}}$). Our distribution ${\bf T}$ is the uniform distribution over the set $\left\{ T_{\bar{x}} : \bar{x}\in\bitset^{t\cdot|I|} \right\}$.  Observe that:
\begin{itemize}
	\item For any fixed $z\in E$ it holds that $\Pr_{T_{\bar{x}}\sim{\bf T}}\left[ T_{\bar{x}}(z)=-1 \right]\ge1-t\cdot\sqrt{\de}$.
	\item For any fixed $z\notin G$ it holds that $\Pr_{T_{\bar{x}}\sim{\bf T}}\left[ T_{\bar{x}}(z)=-1 \right]\le \gamma$.
\end{itemize}

We want to show that almost all of the tests $\{T_{\bar{x}}\}_{\bar{x}\in\bitset^{t\cdot|I|}}$ in the support of ${\bf T}$ accept almost all of their inputs. To see that this is the case, observe that
\mm{
\E_{\bar{x}}\left[ \acc(T_{\bar{x}}) \right] = \Pr_{\bar{x},z}[T_{\bar{x}}(z)=-1] \ge \Pr_z[z\in E]\cdot\min_{z\in E}\left\{ \Pr_{\bar{x}}[ T_{\bar{x}}(z)=-1 ]\right\} \mathcomma
}
which is lower-bounded by $1-\xi^2$, where $\xi^2=(t+1)\cdot\sqrt{\de}$. Therefore, the fraction of tests $T_{\bar{x}}$ that reject more than $\xi$ of their inputs is at most $\xi$.

Now, let $T_{\bar{x}}$ be a test such that $\acc(T_{\bar{x}})\ge1-\xi$. Since $T_{\bar{x}}$ is a conjunction of $T_{x^{(1)}},...,T_{x^{(t)}}$, for each $i\in[t]$ it holds that $\acc(T_{x^{(i)}})\ge1-\xi$. Also, for each $i\in[t]$ it holds that ${\bf z}$ is $\eta$-pseudorandom for $T_{x^{(i)}}$, where $\eta\le(\gamma^2\cdot\de')$, and therefore $\Pr_{z\sim{\bf z}}[T_{x^{(i)}}(z)=-1]\ge1-\xi-\eta$. It follows that $\Pr_{z\sim{\bf z}}[T_{\bar{x}}(z)=-1]\ge1-t\cdot(\xi+\eta)$.

We invoke Lemma~\ref{lem:randtest} with the parameters $\e_1=\sqrt{\de}$, $\e_2=t\cdot\sqrt{\de}$, $\e_3=\gamma$, $\e_4=\xi$, and $\e_5=t\cdot(\xi+\eta)$, and deduce that 
\mm{
\Pr_{z\sim{\bf z}}[z\notin G] &\le (t+1)\cdot\sqrt{\de} + \gamma + 2\cdot\sqrt{t+1}\cdot\de^{1/4}+t\cdot(\sqrt{t+1}\cdot\de^{1/4}+\eta) \\
&= O\left( \gamma + t^{3/2}\cdot\de^{1/4} + t\cdot\eta \right) \\
&= O\left( \gamma + (\gamma\cdot\de')^{-3/2}\cdot\de^{1/4} + \eta/(\gamma\cdot\de') \right) \mathcomma
}
which is $O(\gamma)$ since $\eta\le(\gamma^2\cdot\de')$ and by our hypotheses regarding $\gamma$, $\de$, and $\de'$.
\end{proof}

\section{Reduction of standard derandomization to quantified derandomization} \label{sec:threshold}

In this section we prove Theorem~\ref{thm:int:threshold}. The core of the proof is the construction of a suitable averaging sampler (equivalently, seeded extractor) that is computable by a $\tc^0$ circuit with a super-linear number of wires. We therefore start by describing this construction. In the current section, as in Section~\ref{sec:pre:ext}, it will be more convenient to represent Boolean functions as functions $\bitset^n\ra\bitset$, rather than $\pmset^n\ra\pmset$.

In Section~\ref{sec:threshold:design} we recall the definition of weak combinatorial designs, and construct such designs that are suitable for our parameter setting. In Section~\ref{sec:threshold:code} we show how to compute a code with distance $1/2-o(1)$ by a $\tc^0$ circuit with a super-linear number of wires. In Section~\ref{sec:threshold:extractor} we combine the two preceding ingredients to construct an averaging sampler in $\tc^0$. Finally, in Section~\ref{sec:threshold:thm} we prove Theorem~\ref{thm:int:threshold}.

\subsection{Weak combinatorial designs for Trevisan's extractor} \label{sec:threshold:design}

Let us recall the notion of weak combinatorial designs, which was introduced by Raz, Reingold, and Vadhan~\cite{rrv02}.

\begin{definition} (weak designs). \label{def:threshold:design}
For positive integers $m,\ell,t\in\N$ and an integer $\rho>1$, an {\sf $(m,\ell,t,\rho)$ weak design} is a collection of sets $S_1,...,S_m\subseteq[t]$ such that for every $i\in[m]$ it holds that $|S_i|=\ell$ and $\sum_{j<i}2^{|S_i\cap S_j|} \le (m-1)\cdot\rho$.
\end{definition}

Raz, Reingold, and Vadhan~\cite{rrv02} showed a construction of weak designs with universe size $t=\ceil{\frac{\ell}{\ln(\rho)}}\cdot\ell$. In our parameter setting we will have $\log(\rho)\approx0.99\cdot\ell$, and for such value the construction in~\cite{rrv02} yields $t=2\cdot\ell$. We want to have $t\approx1.01\cdot\ell$, and therefore now show a more refined construction.

\begin{lemma} (constructing weak designs). \label{lem:threshold:designs}
There exists an algorithm that gets as input $m\in\N$ and $\ell\in\N$ and $\rho\in\N$ such that $\log(\rho)=(1-\al)\cdot\ell$, where $\al\in(0,1/4)$, and satisfies the following. The algorithm runs in time $\poly(m,2^{\ell})$ and outputs an $(m,\ell,t,\rho)$ weak design, where $t=\ceil{(1+4\al)\cdot\ell}$.
\end{lemma}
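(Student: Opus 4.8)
The plan is to follow the greedy construction of weak designs due to Raz, Reingold, and Vadhan~\cite{rrv02}, but with a more careful accounting of the ``intersection weight'' that avoids the two lossy steps in their analysis (the approximation $1+x\le e^{x}$, and the rounding of $\ell/\ln(\rho)$ up to an integer). With $t=\ceil{(1+4\al)\ell}$, I would build the sets $S_1,\dots,S_m\subseteq[t]$ one at a time: having chosen $S_1,\dots,S_{i-1}$, let $S_i$ be an $\ell$-subset of $[t]$ minimizing $w_i(S)\eqdef\sum_{j<i}2^{|S\cap S_j|}$. Each step can be carried out in time $\poly(m,2^{\ell})$ by exhaustive search over the at most $\binom{t}{\ell}\le2^{2\ell}$ candidate sets (note $t\le2\ell$, since $\al<1/4$), evaluating $w_i$ in time $\poly(m,\ell)$ on each; alternatively one can use the method of conditional expectations. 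Since $|S_i|=\ell$ by construction, it remains only to show that $w_i(S_i)\le(m-1)\rho$ for every $i\in[m]$, which is exactly what is needed for $S_1,\dots,S_m$ to be an $(m,\ell,t,\rho)$ weak design.

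To bound $w_i(S_i)$, the key point is that the minimum of $w_i$ over all $\ell$-subsets of $[t]$ is at most its average over a uniformly random $\ell$-subset $\mathbf{S}$, and by linearity $\E_{\mathbf{S}}[w_i(\mathbf{S})]=\sum_{j<i}\E[2^{|\mathbf{S}\cap S_j|}]\le(m-1)\cdot\max_{j<i}\E\big[2^{|\mathbf{S}\cap S_j|}\big]$. Hence the whole lemma reduces to proving that $\E[2^{|\mathbf{S}\cap S_j|}]\le\rho$ for every fixed $\ell$-set $S_j$. Writing $2^{|\mathbf{S}\cap S_j|}=\prod_{e\in S_j}2^{\mathbf{1}[e\in\mathbf{S}]}$ and using that the indicators $(\mathbf{1}[e\in\mathbf{S}])_e$ arising from sampling $\ell$ elements of $[t]$ without replacement are negatively associated (a standard property of sampling without replacement; a bound of this shape is what underlies the greedy construction of~\cite{rrv02}), so that the expectation of a product of monotone nonnegative functions of them is at most the product of the expectations, I get
\mm{
\E\big[2^{|\mathbf{S}\cap S_j|}\big] \le \prod_{e\in S_j}\big(1+\Pr[e\in\mathbf{S}]\big) = \Big(1+\tfrac{\ell}{t}\Big)^{\ell} \le \Big(\tfrac{2+4\al}{1+4\al}\Big)^{\ell}\mathcomma
}
where the last inequality uses $t\ge(1+4\al)\ell$, so that $\tfrac{\ell}{t}\le\tfrac{1}{1+4\al}$.

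It then remains to check the elementary inequality $\tfrac{2+4\al}{1+4\al}\le2^{1-\al}$ for $\al\in(0,1/4)$, which gives $\big(\tfrac{2+4\al}{1+4\al}\big)^{\ell}\le2^{(1-\al)\ell}=\rho$ and completes the proof. For this, note that $\tfrac{2+4\al}{1+4\al}=2-\tfrac{4\al}{1+4\al}\le2-2\al$, using $\tfrac{4\al}{1+4\al}\ge2\al$, which holds precisely because $4\al\le1$; and $2^{1-\al}=2\cdot2^{-\al}\ge2(1-\al\ln2)\ge2(1-\al)=2-2\al$, using $e^{-x}\ge1-x$ and $\ln2<1$. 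I do not expect a genuine obstacle here: the construction and the reduction to a single expectation are standard, and the only thing that really needs attention is keeping the expected-weight estimate tight — in particular not weakening $(1+\ell/t)^{\ell}$ to $e^{\ell^{2}/t}$ — together with the observation that the ceiling in $t=\ceil{(1+4\al)\ell}$ only pushes $\ell/t$ further down, so that the target $t\approx(1+4\al)\ell$ is actually attained rather than the $t\approx2\ell$ of~\cite{rrv02}.
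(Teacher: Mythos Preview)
Your proof is correct but takes a genuinely different route from the paper's. The paper avoids negative association altogether by using a \emph{structured} random choice: it partitions $[t]$ into $\ell$ blocks --- the first $t-\ell$ of size two, the remaining $2\ell-t$ singletons --- and forms the candidate $S_i$ by picking one element from each block independently. Since both $S_i$ and each earlier $S_j$ contain exactly one element per block, the per-block contributions to $|S_i\cap S_j|$ are genuinely independent, so the product-of-expectations step is immediate and yields the exact value $\E[2^{|S_i\cap S_j|}]=(3/2)^{t-\ell}\cdot 2^{2\ell-t}$; the paper then checks this is at most $\rho$ via $3/2<e^{1/2}$.

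Your approach --- uniform $\ell$-subset plus negative association of the inclusion indicators --- is more natural and in fact gives the slightly tighter bound $(1+\ell/t)^{\ell}$, which at intermediate $\alpha$ beats the paper's $(3/2)^{t-\ell}2^{2\ell-t}$ (e.g.\ at $\alpha=1/8$ you get $(5/3)^\ell$ versus $3^{\ell/2}\approx(1.732)^\ell$); your elementary verification that $(2+4\alpha)/(1+4\alpha)\le 2^{1-\alpha}$ for $\alpha\le1/4$ is clean. What the paper's block construction buys is a smaller search space ($2^{t-\ell}$ candidates instead of $\binom{t}{\ell}$, though both are $\poly(2^\ell)$) and a fully elementary independence-based analysis that needs no appeal to properties of sampling without replacement. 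Either argument establishes the lemma.
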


\begin{proof}[{\bf Proof.}]
Let $t=\ceil{(1+4\al)\cdot\ell}$. The algorithm constructs the sets $S_1,...,S_m\subseteq[t]$ in iterations. In each iteration $i\in[m]$ the algorithm finds $S_i$ such that $\sum_{j<i}2^{|S_i\cap S_j|}\le(i-1)\cdot\rho$. To do so, the algorithm initially fixes a partition of $[t]$ into $\ell$ blocks. The first $t-\ell$ blocks, denoted $B_1,...,B_{t-\ell}$, are each comprised of two elements (i.e., for $j\in[t-\ell]$ it holds that $B_j=\{2j-1,2j\}$). The remaining $2\ell-t$ blocks, denoted $B_{t-\ell+1},...,B_{\ell}$, each consist of a single element (i.e., for $j\in\{t-\ell+1,...,\ell\}$ it holds that $B_j=\{t-\ell+j\}$).

For $i\in[m]$, let us describe the $i^{th}$ iteration, after $S_1,...,S_{i-1}$ were already chosen in previous iterations. Consider a set $S_i$ that is chosen by independently choosing one random element from each of the $\ell$ blocks to include in $S_i$.~\footnote{That is, for each $k\in[\ell]$ let $X_k$ be a random element from the block $B_k$, such that for $k\ne k'\in[\ell]$ it holds that $X_k$ and $X_{k'}$ are independent. Then, $S_i=\cup_{k\in[\ell]}X_k$.} For $j\in[i-1]$ and $k\in[\ell]$, let $Y_{j,k}$ be the indicator variable of whether the element from the $k^{th}$ block that is included in $S_j$ is also included in $S_i$ (i.e., $Y_{j,k}=1$ iff $B_k\cap S_j\cap S_i\ne\emptyset$). Note that for $k\ne k'\in[m]$ it holds that $Y_{j,k}$ and $Y_{j,k'}$ are independent. Thus, the expected value of $\sum_{j<i}2^{|S_i\cap S_j|}$ is
\mm{
\E\left[ \sum_{j<i}2^{|S_i\cap S_j|} \right] &= \sum_{j<i}\E\left[ 2^{\sum_{k\in[\ell]}Y_{j,k}} \right] \\
&= \sum_{j<i}\E\left[ \prod_{k\in[\ell]}2^{Y_{j,k}} \right] \\
&= \sum_{j<i} \prod_{k\in[\ell]} \E\left[ 2^{Y_{j,k}} \right] \\
&= (i-1)\cdot\left(3/2\right)^{t-\ell}\cdot2^{2\ell-t} \mathcomma \eqtag{eq:sillycalc}
}
where the last equality is because for every $k\in[t-\ell]$ it holds that $\Pr[Y_{j,k}=1]=1/2$ (since $|B_k|=2$), and for every $k\in\{t-\ell+1,...,\ell\}$ it holds that $Y_{j,k}\equiv1$ (since $B_k$ is a singleton). Now, plugging-in $t=\ceil{(1-4\al)\cdot\ell}$ and $\ell=\frac{\log(\rho)}{1-\al}$ into Eq.~\eqref{eq:sillycalc}, we can upper-bound the expression by $(i-1)\cdot\rho$.~\footnote{Denoting $c=\log(e)/2$ and $t=(1+4\beta)\cdot\ell$, where $\beta\ge\al$, we have that $2^{2\ell-t}\cdot(3/2)^{t-\ell}<2^{2\ell-t}\cdot e^{(t-\ell)/2} = 2^{2\ell-t+c\cdot(t-\ell)}\le2^{\frac{1-4(1-c)\cdot\beta}{1-\al}\cdot\log(\rho)}<\rho$.} 
Hence, the algorithm can find a set $S_i$ such that $\sum_{j<i}2^{|S_i\cap S_j|}\le(i-1)\cdot\rho$ by trying out all $2^{t-\ell}<2^{\ell}$ possibilities.
\end{proof}

As shown in~\cite{rrv02}, Trevisan's proof~\cite{tre01} that the Nisan-Wigderson construction~\cite{nw94} yields an extractor also extends to the setting when the combinatorial design is a weak design as in Definition~\ref{def:threshold:design}. Specifically:

\begin{theorem} (extractors from weak designs~\cite[Prop. 10]{rrv02}). \label{thm:threshold:trevisan}
Let $m<k<n$ be three integers, and let $\e>0$. Let $\code:\bitset^n\ra\bitset^{\bar{n}}$ be a code such that in every Hamming ball of radius $1/2-\de$ in $\bitset^{\bar{n}}$ there exist at most $1/\de^2$ codewords, where $\de=\e/4m$. Let $S_1,...,S_m\subseteq[t]$ be an $(m,\ell,t,\rho)$ weak design with $\ell=\log(\bar{n})$ and $\rho=\frac{k-3\cdot\log(m/\e)-t-3}{m}$. 

Then, the function $E:\bitset^n\times\bitset^t\ra\bitset^m$ that is defined by $E(x,z)=\linebreak(\code(x)_{z_{S_1}},...,\code(x)_{z_{S_m}})$ is a $(k,\e)$-extractor.
\end{theorem}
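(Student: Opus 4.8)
The plan is to prove the contrapositive: if $E$ is \emph{not} a $(k,\e)$-extractor then I exhibit a source and a statistical test witnessing this, and then use the Nisan--Wigderson reconstruction paradigm to show that this would force ``too many'' source strings to admit short descriptions, contradicting the min-entropy bound. Concretely, if $E$ fails, fix a distribution $\mathbf{x}$ on $\bitset^n$ with $\max_x\Pr[\mathbf{x}=x]\le2^{-k}$ and a test $D:\bitset^m\to\bitset$ with $\abs{\Pr[D(E(\mathbf{x},\mathbf{u}_t))=1]-\Pr[D(\mathbf{u}_m)=1]}>\e$. Averaging over $\mathbf{x}$, the set $\mathrm{Bad}$ of strings $x$ for which $\abs{\Pr_z[D(E(x,z))=1]-\Pr[D(\mathbf{u}_m)=1]}>\e/2$ carries $\mathbf{x}$-mass at least $\e/2$, hence $|\mathrm{Bad}|\ge(\e/2)\cdot2^k$. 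The goal then becomes: every $x\in\mathrm{Bad}$ admits a description of length strictly less than $k-\log(2/\e)$, which is a contradiction.

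Next, fix $x\in\mathrm{Bad}$ and write $f=\code(x)\in\bitset^{\bar n}$, so that $E(x,z)=(f(z_{S_1}),\dots,f(z_{S_m}))$ is exactly the Nisan--Wigderson generator applied to $f$ with the sets $S_1,\dots,S_m$. I would run the standard hybrid argument over $i\in[m]$ together with Yao's next-bit-predictor transformation: there is an index $i$ and a (randomized) predictor $P$ that, given $(f(z_{S_1}),\dots,f(z_{S_{i-1}}))$, outputs $f(z_{S_i})$ with advantage more than $\e/(2m)$ over the choice of $z$ and of its own coins. Fixing by averaging both the coins of $P$ and the seed coordinates outside $S_i$, what remains is a uniformly random $w=z_{S_i}\in\bitset^\ell$, read as an index in $[\bar n]$, on which $P$ predicts $f(w)$ correctly with probability more than $1/2+\de$, where $\de=\e/4m$. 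The crux is that, once the seed coordinates outside $S_i$ are fixed, for each $j<i$ the bit $f(z_{S_j})$ depends on $w$ only through the $|S_i\cap S_j|$ coordinates of $S_i\cap S_j$, so it can be hard-wired by a table of $2^{|S_i\cap S_j|}$ bits; by the weak-design hypothesis the total size of these tables is $\sum_{j<i}2^{|S_i\cap S_j|}\le(m-1)\cdot\rho$. Feeding these tables into $P$ yields a deterministic map $R:[\bar n]\to\bitset$ whose truth table agrees with $\code(x)$ on more than a $1/2+\de$ fraction of coordinates.

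Finally comes the counting. Since $\code(x)$ lies in the Hamming ball of relative radius less than $1/2-\de$ around the truth table of $R$, and by hypothesis such a ball contains at most $1/\de^2$ codewords, $x$ is determined by the description of $R$ together with the index (in $[1/\de^2]$) of $\code(x)$ among those codewords. The description of $R$ consists of the index $i$ ($\ceil{\log m}$ bits), the fixed seed coordinates and fixed coins of $P$ (at most $t+m+O(1)$ bits, which in the relevant parameter range is absorbed into $(m-1)\cdot\rho+O(\log(m/\e))$ since $\rho\ge2$), and the $(m-1)\cdot\rho$ table bits, plus $2\log(1/\de)=O(\log(m/\e))$ bits for the codeword index; also one bit to account for possibly negating $D$. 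Substituting $\rho=\tfrac{k-3\log(m/\e)-t-3}{m}$, the total comes out below $k-\log(2/\e)$, contradicting $|\mathrm{Bad}|\ge(\e/2)\cdot2^k$. I expect this last bookkeeping step to be the main obstacle — in particular, extracting the table bound $\sum_{j<i}2^{|S_i\cap S_j|}\le(m-1)\rho$ from the weak-design definition is precisely the point where the relaxation from standard designs to weak designs is used, and pushing every additive loss term below the $k$ barrier with the stated $\rho$ requires care. (As the statement is verbatim~\cite[Prop. 10]{rrv02}, one may alternatively simply invoke it.)
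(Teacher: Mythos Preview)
The paper does not prove this theorem; it is stated as a direct citation of~\cite[Prop.~10]{rrv02}, with only the preceding remark that Trevisan's proof carries over when standard designs are replaced by weak designs. Your sketch is precisely that standard Nisan--Wigderson reconstruction argument, so there is nothing in the paper to compare against --- your closing parenthetical (simply invoking the cited result) is exactly what the paper does.

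One remark on your write-up: the bookkeeping claim that the $t+m+O(1)$ bits for the fixed seed coordinates and Yao coins are ``absorbed into $(m-1)\cdot\rho+O(\log(m/\e))$ since $\rho\ge2$'' is not correct as stated. The quantity $(m-1)\rho$ is already fully spent on the tables, not available as slack; the only slack from the design bound is the gap $m\rho-(m-1)\rho=\rho$, which need not cover $t$ (in the regime used later in the paper, $t\approx\log(\bar n)$ while $m=n^{\gamma}$ can be much smaller). Also, once the tables are fixed the seed coordinates outside $S_i$ are not needed separately --- the tables already encode the dependence on them --- so the $t$ term in the stated formula for $\rho$ does not arise the way you suggest. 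Getting all the additive terms to line up with the exact stated value of $\rho$ does require the careful count in~\cite{rrv02}; you correctly flag this as the delicate step.
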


By combining Theorem~\ref{thm:threshold:trevisan} and Proposition~\ref{prop:threshold:extsamp}, we obtain the following:

\begin{corollary} (samplers from weak designs). \label{cor:threshold:trevisan}
Let $m<k<n$ be three integers, and let $\e>0$. Let $\code:\bitset^n\ra\bitset^{\bar{n}}$ be a code such that in every Hamming ball of radius $1/2-\de$ in $\bitset^{\bar{n}}$ there exist at most $1/\de^2$ codewords, where $\de=\e/4m$. Let $S_1,...,S_m\subseteq[t]$ be an $(m,\ell,t,\rho)$ weak design with $\ell=\log(\bar{n})$ and $\rho=\frac{k-3\cdot\log(m/\e)-t-3}{m}$. 

Then, the function $Samp:\bitset^n\times\bitset^t\ra\bitset^m$ that is defined by $Samp(x,z)=\linebreak(\code(x)_{z_{S_1}},...,\code(x)_{z_{S_m}})$ is an averaging sampler with accuracy $\e$ and error $2^{k-n}$.
\end{corollary}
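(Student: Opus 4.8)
The plan is to observe that $Samp$ is literally the same function as the extractor $E$ produced by Theorem~\ref{thm:threshold:trevisan}, and then to apply the generic reduction from extractors to averaging samplers recorded in the first item of Proposition~\ref{prop:threshold:extsamp}. In other words, this corollary is a pure composition of two results already stated in the excerpt, and the only work is to check that the parameters match up.

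First I would note that the hypotheses of the corollary are word-for-word the hypotheses of Theorem~\ref{thm:threshold:trevisan}: we have integers $m<k<n$ and $\e>0$; the code $\code:\bitset^n\ra\bitset^{\bar{n}}$ has at most $1/\de^2$ codewords in every Hamming ball of radius $1/2-\de$, where $\de=\e/4m$; and $S_1,\dots,S_m\subseteq[t]$ is an $(m,\ell,t,\rho)$ weak design with $\ell=\log(\bar{n})$ and $\rho=\frac{k-3\cdot\log(m/\e)-t-3}{m}$. Hence Theorem~\ref{thm:threshold:trevisan} applies directly and tells us that the map $(x,z)\mapsto\left(\code(x)_{z_{S_1}},\dots,\code(x)_{z_{S_m}}\right)$ from $\bitset^n\times\bitset^t$ to $\bitset^m$ is a $(k,\e)$-extractor. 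Since this map is, by the definition in the statement, exactly the function $Samp$, we conclude that $Samp$ is a $(k,\e)$-extractor.

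Next I would invoke the first item of Proposition~\ref{prop:threshold:extsamp}: any $(k,\e)$-extractor $f:\bitset^n\times\bitset^t\ra\bitset^m$ is an averaging sampler with accuracy $\e$ and error $\de=2^{k-n}$. Applying this with $f=Samp$ immediately yields the claimed conclusion that $Samp$ is an averaging sampler with accuracy $\e$ and error $2^{k-n}$.

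There is essentially no obstacle to overcome here; the statement is a one-line corollary obtained by chaining Theorem~\ref{thm:threshold:trevisan} and Proposition~\ref{prop:threshold:extsamp}. The single point worth a remark is that the assumption $k<n$ from the corollary guarantees that the error $2^{k-n}$ is strictly less than $1$, so that the resulting sampler guarantee is non-vacuous; this is automatic and requires no further argument.
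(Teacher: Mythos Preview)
Your proposal is correct and matches the paper's approach exactly: the paper states this corollary with the one-line justification ``By combining Theorem~\ref{thm:threshold:trevisan} and Proposition~\ref{prop:threshold:extsamp},'' which is precisely the composition you spell out.
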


\subsection{An $\e$-balanced code in sparse $\tc^0$} \label{sec:threshold:code}

Following Corollary~\ref{cor:threshold:trevisan}, our goal now is to construct a $\tc^0$ circuit with a super-linear number wires that computes an \emph{error-correcting code} that is list-decodable up to distance $1/2-\de$ with list size $\poly(1/\de)$ and rate $\poly(1/\de)$. We will do this by constructing a code with distance $1/2-\e$, where $\e=\de^2$, and then relying on the Johnson bound. In fact, we will actually construct an $\e$-balanced code (i.e., a linear code such that all codewords have relative Hamming weight $1/2\pm\e$).

As described in the introduction, the construction will consist of two parts. We will first construct a code with constant relative distance, and then show how to amplify the distance from $\Omega(1)$ to $1/2-\e$.

\begin{proposition} (a code with constant relative distance in sparse $\tc^0$). \label{prop:threshold:code:const}
There exists a polynomial-time algorithm that is given as input $1^{n}$ and a constant $d\in\N$, and outputs a $\tc^0$ circuit $C$ that satisfies the following:
\begin{enumerate}
	\item The circuit $C$ maps $n$ input bits to $\hat{n}=O(n)$ input bits.
	\item For every $x\in\bitset^n$ such that $x\ne0^n$, the relative Hamming weight of $C(x)$ is at least $3^{-d}$.
	\item Each output bit of $C$ is a linear function of the input bits.
	\item The circuit $C$ has depth $2d$ and $n^{1+O(1/d)}$ wires.
\end{enumerate}
\end{proposition}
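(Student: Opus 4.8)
The plan is to take $C$ to be (a zero-padded restriction of) the $d$-fold tensor power of a fixed explicit binary linear code, and to implement the iterated encoding by one depth-$2$ linear-threshold gadget per tensoring step. First I would fix an explicit family of $\mathbb{F}_2$-linear codes $\code:\bitset^{r}\ra\bitset^{N r}$ with $N=O(1)$ and relative distance at least $1/3$; such codes exist and are constructible in $\poly(r)$ time --- e.g.\ a $\tfrac16$-balanced code, which has relative distance $\ge 1/2-\tfrac16=1/3$ and constant rate (see~\cite{nn93,tas17}). Setting $r=\ceil{n^{1/d}}$ we have $n\le r^{d}=O(n)$; I would hard-wire the extra $r^{d}-n$ coordinates to $0$ and view the input as a tensor in $\bitset^{[r]^{d}}$. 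The circuit $C$ then runs $d$ iterations, where the $j$-th iteration encodes by $\code$ every length-$r$ fibre of the current tensor along the $j$-th axis; after $d$ iterations the tensor lies in $\bitset^{[N r]^{d}}$, so $\hat n=(Nr)^{d}=N^{d}r^{d}=O(n)$, the map $x\mapsto C(x)$ is $\mathbb{F}_2$-linear (a tensor product of linear maps), hence each output bit is a parity of a subset of the input bits --- giving the first and third claims. For the second claim I would invoke the standard fact that the $d$-fold tensor power of a code of relative distance $\delta_0$ has relative distance $\delta_0^{d}$; since a nonzero (padded) input maps to a nonzero tensor-code codeword, its relative Hamming weight is at least $(1/3)^{d}=3^{-d}$.

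The content is in the depth and wire bounds. Each iteration computes, for every coordinate of the new tensor, a \emph{parity of at most $r$ bits} of the old tensor; I would realize each such parity by the folklore depth-$2$ threshold circuit
\[
\sign\!\Big(\,\textstyle\sum_{k=1}^{s}(-1)^{k+1}\cdot\sign\big(\textstyle\sum_{i=1}^{s}y_i-k+\tfrac12\big)\;-\;\tfrac12\,\Big)\qquad(y\in\bitset^{s},\ s\le r),
\]
which uses $s$ threshold gates of fan-in $s$ in its bottom layer and one threshold gate of fan-in $s$ on top, i.e.\ $O(s^{2})$ wires at depth $2$ (and each gate can be replaced by a padded majority gate, so $C$ is a $\tc^0$ circuit under either convention for $\tc^0$). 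Stacking the $d$ iterations yields depth exactly $2d$, since the top gate of one iteration feeds the bottom gates of the next. For the wires: after $j\le d$ iterations the tensor has $N^{j}r^{d}=O(n)$ coordinates, each obtained from the previous layer by a parity gadget on $\le r$ bits and hence with $O(r^{2})$ wires; summing over the $\le d$ iterations gives $O(d\cdot n\cdot r^{2})=O(d\cdot n^{1+2/d})=n^{1+O(1/d)}$ wires, using $d=O(1)$ and $r\le 2n^{1/d}$. Finally the layout --- computing $r$, constructing the generator matrix of $\code$ in $\poly(r)\le\poly(n)$ time, and wiring the $2d$ layers from that matrix and the tensor structure --- is manifestly computable in $\poly(n)$ time.

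The main obstacle, and the reason the tensor order is chosen equal to the target depth $d$, is the depth--wires trade-off for parity: at depth $2$ a parity of $s$ bits seems to cost $\Omega(s^{2})$ wires (the near-linear-wire parity circuits have larger depth), so the fibre length $r$ must satisfy $r^{2}=n^{O(1/d)}$, which forces $r=n^{\Theta(1/d)}$ and hence exactly $d$ tensoring steps --- a tensor code of constant order would leave polynomially long fibres and blow the wire budget past $n^{1+o(1)}$. A secondary point to get right is that the base code must simultaneously be $\mathbb{F}_2$-linear, have relative distance $\ge 1/3$ (so that its $d$-th tensor power clears the $3^{-d}$ bar), and be explicitly constructible in $\poly(n)$ time --- which rules out a bare Gilbert--Varshamov existence argument and is why an explicit $\epsilon$-balanced code is used as the starting point.
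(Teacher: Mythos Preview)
The proposal is correct and takes essentially the same approach as the paper: both construct the $d$-fold tensor power of an explicit constant-rate linear code with relative distance $\ge 1/3$, pad $n$ up to $r^d$ with $r\approx n^{1/d}$, and implement each of the $d$ tensoring steps by encoding length-$r$ fibres via a depth-$2$ threshold parity gadget costing $O(r^2)$ wires per output coordinate. The paper's accounting is the same (depth $2d$, $O(n\cdot r^2)=n^{1+O(1/d)}$ wires per layer summed over $d=O(1)$ layers), citing~\cite{ps94} for the depth-$2$ parity circuit where you give the gadget explicitly.
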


\begin{proof}[{\bf Proof.}]
Assume that $n$ is of the form $r^{d}$, for $r\in\N$ (if necessary, pad the input with zeroes such that the input length will be a power of $2^{d}$). Fix a linear code $\code$ that maps strings of length $r$ to strings of length $\bar{r}=O(r)$ and has relative distance at least $1/3$ (e.g., we can use the $\e$-balanced codes of~\cite{nn93,tas17}).

Let $x\in\bitset^n$ be an input for the circuit $C$. We think of $x$ as a tensor $M^{(0)}$ of dimensions $[r]^d$; that is, for every $\vec{t}\in[r]^d$, the $\vec{t}^{th}$ entry of $M^{(0)}$ is denoted by $M^{(0)}_{\vec{t}}\in\bitset$. The circuit $C$ will iterative compute a sequence $M^{(1)},...,M^{(d)}$ of tensors, and the message $x=M^{(0)}$ will be mapped to the final codeword $\hat{x}=M^{(d)}$.

For each $i\in[d]$, the tensor $M^{(i)}$ is defined as follows. The dimensions of $M^{(i)}$ are $[\bar{r}]^{i}\times[r]^{d-i}$. For every pair $(\vec{t_{\le i-1}},\vec{t_{\ge i+1}})\in[\bar{r}]^{i-1}\times[r]^{d-i}$, we denote by $M^{(i-1)}_{\vec{t}_{\le i-1},\star,\vec{t}_{\ge i+1}}$ the $r$-bit vector $M^{(i-1)}_{\vec{t}_{\le i-1},\star,\vec{t}_{\ge i+1}}\eqdef M^{(i-1)}_{(\vec{t}_{\le i-1},1,\vec{t}_{\ge i+1})},...,M^{(i-1)}_{(\vec{t}_{\le i-1},m,\vec{t}_{\ge i+1})}\in\bitset^r$. Then, for every $\vec{t}\in[\bar{r}]^{i}\times[r]^{d-i}$, we think of $\vec{t}$ as a triplet $\vec{t}=(\vec{t}_{\le i-1},u,\vec{t}_{\ge i+1})\in[\bar{r}]^{i-1}\times[\bar{r}]\times[r]^{d-i}$, and define $M^{(i)}_{\vec{t}} =  \left(\code\left(M^{(i-1)}_{\vec{x}_{\le i-1},\star,\vec{x}_{\ge i+1}}\right)\right)_v$ (i.e., $M^{(i)}_{(\vec{t}_{\le i-1},v,\vec{t}_{\ge i+1})}$ is the $v^{th}$ coordinate of the encoding of $M^{(i-1)}_{\vec{t}_{\le i-1},\star,\vec{t}_{\ge i+1}}$ by $\code$). 

The final codeword $\hat{x}=M^{(d)}$ is of dimensions $[\bar{r}]^{d}$, which means that it represents a string of length $\hat{n}=(O(r))^d=O(n)$. The fact that every non-zero message $x\in\bitset^n$ is mapped to a codeword $\hat{x}\in\bitset^{\hat{n}}$ with relative Hamming weight at least $(1/3)^d$ follows from the properties of $\code$ and from well-known properties of tensor codes; for completeness, we include a proof in Appendix~\ref{app:tensor}. Also note that each bit of $\hat{x}$ is indeed a linear function of $x$, because $\code$ is linear (which means that in each iteration $i\in[d]$, every bit of $M^{(i)}$ is a linear function of $M^{(i-1)}$).

Finally, let us fix $i\in[d]$, and describe how to compute $M^{(i)}$ from $M^{(i-1)}$ in depth two with $O(n\cdot r^2)$ wires. Since $\code$ is linear, for each $\vec{t}=(\vec{t}_{\le i-1},v,\vec{t}_{\ge i+1})\in[\bar{r}]^{i-1}\times[\bar{r}]\times[r]^{d-i}$ it holds that $M^{(i)}_{\vec{t}}=\code\left( M^{(i-1)}_{\vec{t}_{\le i-1},\star,\vec{t}_{\ge i+1}} \right)_v$ is a linear function of the $r$-bit string $M^{(i-1)}_{\vec{t}_1,\star,\vec{t}_2}\in\bitset^r$. Thus, each entry of $M^{(i)}$ can be computed from $M^{(i-1)}$ by a depth-$2$ $\tc^0$ circuit with $O(r^{2})$ wires (see, e.g.,~\cite[Sec. 3]{ps94}), which means that $M^{(i)}$ can be computed from $M^{(i-1)}$ by a depth-$2$ $\tc^0$ circuit with $O(n\cdot r^2)$ wires. Overall, the final circuit $C$ is of depth $2d$ (since it is comprised of $d$ circuits of depth two), and the number of wires in $C$ is at most $O(n\cdot r^2) < n^{1+O(1/d)}$.
\end{proof}

We now show how to amplify the distance of the code from Proposition~\ref{prop:threshold:code:const} from $\Omega(1)$ to $1/2-\e$.

\begin{proposition} (amplifying the distance of the code from Proposition~\ref{prop:threshold:code:const}). \label{prop:threshold:code:amp}
There exists a polynomial-time algorithm that is given as input $1^{\hat{n}}$, a constant $\rho>0$, and $\e=\e(\hat{n})>0$, and outputs a $\tc^0$ circuit $C$ such that:
\begin{enumerate}
	\item The circuit $C$ maps $\hat{n}$ input bits to $\bar{n}=\hat{n}\cdot(1/\e)^{O(1/\rho)}$ output bits.
	\item For every $\hat{x}\in\bitset^{\hat{n}}$ with relative Hamming weight at least $\rho$, the relative Hamming weight of $\bar{x}=C(\hat{x})$ is between $1/2-\e$ and $1/2$.
	\item Each output bit of $C$ is a linear function of the input bits.
	\item The circuit $C$ has depth two and $\hat{n}\cdot(1/\e)^{O(1/\rho)}$ wires.
\end{enumerate}
\end{proposition}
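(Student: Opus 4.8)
The plan is to follow the Naor--Naor expander-walk strategy described in the overview (Section~\ref{sec:tech:ext}), instantiated with an explicit constant-degree expander. First I would fix an explicit family of $D$-regular graphs $G$ on $\hat{n}$ vertices with normalized second eigenvalue $\lambda\le\lambda_0$ for some absolute constant $\lambda_0<1$ (e.g.\ the constructions of Gabber--Galil or the zig-zag product; $D=O(1)$), identifying the vertex set $[\hat{n}]$ with the coordinates of $\hat{x}$. Set the walk length $\ell=O(\log(1/\e)/\log(1/\lambda_0))=O(\log(1/\e))$ (the constant hidden in the $O(\cdot)$ will also depend on $\rho$, so that the walk of length $\ell$ hits a set of density $\rho$ except with probability $<2\e$). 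A coordinate of the output $\bar{x}$ is indexed by a pair $(v_0,e_1,\dots,e_\ell,S)$, where $v_0\in[\hat{n}]$ is the start vertex, each $e_i\in[D]$ selects the $i$-th edge of the walk, and $S\subseteq\{0,1,\dots,\ell\}$ is a subset of the $\ell+1$ walk-vertices; the value at that coordinate is $\bigoplus_{i\in S}\hat{x}_{v_i}$, where $v_0,v_1,\dots,v_\ell$ are the vertices visited. Thus $\bar{n}=\hat{n}\cdot D^{\ell}\cdot 2^{\ell+1}=\hat{n}\cdot 2^{O(\ell)}=\hat{n}\cdot(1/\e)^{O(1/\rho)}$, giving Item~1.

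For the distance bound (Item~2), I would argue as in Naor--Naor: fix $\hat{x}$ with relative Hamming weight $\ge\rho$, let $W$ be the set of coordinates where $\hat{x}_v=1$, and let $v_0,\dots,v_\ell$ be a random walk. By the expander hitting-set / expander-mixing argument (see, e.g., the standard random-walk deviation bound), with probability $\ge 1-(\lambda_0+\rho)^{\Theta(\ell)}\ge 1-2\e$ the walk visits $W$ at least once; conditioning on the walk (hence on the bits $\hat{x}_{v_0},\dots,\hat{x}_{v_\ell}$) being not all zero, a uniformly random subset $S$ gives parity equal to $1$ with probability exactly $1/2$. Hence over a uniform choice of the index $(v_0,e_1,\dots,e_\ell,S)$ the output bit equals $1$ with probability in $[1/2-\e,1/2]$ --- the upper bound $1/2$ holding because, conditioned on any walk, the $S$-parity is $1$ with probability $0$ or exactly $1/2$. (Since $\hat{x}\ne 0$ when it has weight $\ge\rho>0$, the conditioning event has positive probability; one can absorb a negligible correction into the $\e$.) Linearity in $\hat{x}$ (Item~3) is immediate since each output bit is a fixed $\mathbb{F}_2$-linear combination (a parity) of input bits.

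For the circuit bound (Item~4), the key observation is the same ``batch computation'' idea as in the extractor construction: each output bit is the parity of at most $|S|\le\ell+1=O(\log(1/\e))$ input bits, a parity of $O(\log(1/\e))$ bits is computable by a single layer of threshold gates (indeed by a depth-2 $\tc^0$ circuit, or even depth-1 since exact-count over $O(\log(1/\e))$ bits and then a threshold suffices --- more carefully, $\mathrm{XOR}$ of $k$ bits is a depth-2 $\tc^0$ circuit with $O(k^2)$ wires, cf.\ \cite[Sec.~3]{ps94}, so here $O(\log^2(1/\e))$ wires per output bit). Wiring each of the $\bar{n}=\hat{n}\cdot(1/\e)^{O(1/\rho)}$ output gates directly to its $O(\log(1/\e))$ relevant input coordinates (which are determined by the index $(v_0,e_1,\dots,e_\ell,S)$ and the fixed graph $G$, computable in polynomial time) gives a $\tc^0$ circuit of depth $2$ with $\bar{n}\cdot\mathrm{poly}(\log(1/\e))=\hat{n}\cdot(1/\e)^{O(1/\rho)}$ wires, as required. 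The polynomial-time construction of the circuit reduces to: constructing the explicit expander $G$ (polynomial time in $\hat n$), and, for each output index, computing the walk it encodes and the set $S$ (polynomial time).

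The main obstacle I anticipate is getting the parameter bookkeeping exactly right so that the walk length $\ell$ --- which must scale like $\Theta(\log(1/\e))$ with a constant depending on $\rho$ via $\lambda_0$ and $\rho$ --- yields precisely the claimed bounds $\bar{n}=\hat{n}\cdot(1/\e)^{O(1/\rho)}$ and wire count $\hat{n}\cdot(1/\e)^{O(1/\rho)}$; one must be careful that the hitting probability for a density-$\rho$ set after an $\ell$-step walk on a $\lambda_0$-expander is $(1-\rho)\cdot(\text{something})^{\ell}$ rather than $(1-\rho)^\ell$, so $\ell=\Theta(\log(1/\e)/\rho)$ suffices, and that $D^\ell 2^\ell = 2^{O(\ell)} = (1/\e)^{O(1/\rho)}$. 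A secondary subtlety is the exact achievability of the upper bound $1/2$ on the relative Hamming weight (handling the measure-zero vs.\ genuinely-zero conditioning carefully and the slack this introduces into $\e$), but this is routine and can be folded into constants.
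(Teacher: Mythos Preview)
Your proposal is correct and follows essentially the same approach as the paper: an explicit constant-degree expander on $[\hat{n}]$, walks of length $\ell=\Theta(\log(1/\e)/\rho)$, output coordinates indexed by (walk, subset), each output bit a parity of $O(\ell)$ input bits implemented in depth-$2$ $\tc^0$ with $O(\ell^2)$ wires per output. One small slip: the miss-probability bound should read roughly $(1-\rho(1-\lambda_0))^{\ell}$ (or $(1-\rho+\lambda_0\rho)^\ell$), not $(\lambda_0+\rho)^{\Theta(\ell)}$, since the bound must improve as $\rho$ grows; you already use the correct scaling $\ell=\Theta(\log(1/\e)/\rho)$ in the final paragraph, so this is only a notational hiccup.
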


\begin{proof}[{\bf Proof.}]
The algorithm first constructs an expander graph $G$ on $\hat{n}$ vertices; that is, a $d_G$-regular graph over the vertex-set $[\hat{n}]$ vertices with constant spectral gap.~\footnote{For a suitable construction see, e.g.,~\cite[Thm E.10]{gol08}. This specific construction requires $\hat{n}$ to be a square, so we might need to pad the input $x\in\bitset^{\hat{n}}$ with zeroes such that it will be of length $4^{k}=(2^{k})^2$, for $k\in\N$. Since such a padding will not affect the rest of the argument, we ignore this issue.} Consider a random walk that starts from a uniform $i\in[\hat{n}]$ and walks $\ell-1$ steps, where $\ell=\frac{c_G}{\rho}\cdot \log(1/\e)$ and $c_G$ is a sufficiently large constant that depends only on $G$. By the hitting property of expander random walks (see, e.g.,~\cite[Thm 8.28]{gol08}), with probability at least $1-\e$ such a walk hits $i\in[\hat{n}]$ such that $x_i\ne0$ (this is because the set $\{i\in[\hat{n}]:x_i\ne0\}$ has density at least $\rho$). Thus, if we first take such a random walk, and then output a random parity of the values of $\hat{x}$ at the coordinates corresponding to the vertices in the walk, the output will equal one with probability at least $1/2-\e$ and at most $1/2$.

The mapping of $\hat{x}$ to $\bar{x}=C(\hat{x})$ is obtained by considering all the possible outcomes of the random process above. Specifically, for every random walk $W=\left(i^{(W)}_1,...,i^{(W)}_{\ell}\right)$ of length $\ell-1$ on $G$, and every subset $S\subseteq[\ell]$, we have a corresponding coordinate $(W,S)$ in $C(\hat{x})$. The value of $C(\hat{x})$ at coordinate $(W,S)$  is the parity of the bits of $\hat{x}$ in the  locations corresponding to $S$ in walk $W$; that is, $C(\hat{x})_{(W,S)}=\xor_{j\in S}\hat{x}_{i^{(W)}_j}$. 

Note that the length of $C(\hat{x})$ is $\hat{n}\cdot (d_G)^{\ell-1}\cdot 2^{\ell}=\hat{n}\cdot(1/\e)^{c'_G/\rho}$, where $c'_G$ is a large constant that only depends on $G$. Also, the mapping of $\hat{x}$ to $C(\hat{x})$ is linear, and moreover every coordinate of $C(\hat{x})$ is the parity of $\ell$ coordinates of $\hat{x}$. Thus, $C(\hat{x})$ can be computed by a $\tc^0$ circuit of depth two using at most $\hat{n}\cdot(1/\e)^{c/\rho}\cdot\ell^2<\hat{n}\cdot(1/\e)^{2c/\rho}$ wires.
\end{proof}

By combining Propositions~\ref{prop:threshold:code:const} and~\ref{prop:threshold:code:amp} we obtain the following:

\begin{proposition} (an $\e$-balanced code in sparse $\tc^0$). \label{prop:threshold:code}
There exists a polynomial-time algorithm that gets inputs $1^n$ and $\e=\e(n)$ and a constant $d\in\N$, and outputs a $\tc^0$ circuit such that:
\begin{enumerate}
	\item The circuit computes a linear code that maps messages of length $n$ to codewords of length $\bar{n}=n\cdot(1/\e)^{O(3^d)}$ such that every codeword has relative Hamming weight $1/2\pm\e$.
	\item The circuit has depth $2d$ and $n^{1+O(1/d)}+n\cdot(1/\e)^{O(3^d)}$ wires.
\end{enumerate}
\end{proposition}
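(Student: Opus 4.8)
The plan is to \textbf{cascade} the two codes constructed above: first apply the constant-rate, constant-distance code of Proposition~\ref{prop:threshold:code:const} to the message $x\in\bitset^n$, and then amplify the relative distance of the resulting string to $1/2-\e$ using the code of Proposition~\ref{prop:threshold:code:amp}. Concretely, assuming $d\ge2$, I would run the algorithm of Proposition~\ref{prop:threshold:code:const} on input $1^n$ with depth parameter $d-1$ to obtain a $\tc^0$ circuit $C_1$ that maps $n$ bits to $\hat n=O(n)$ bits, has depth $2(d-1)$ and $n^{1+O(1/(d-1))}$ wires, maps every non-zero message to a string of relative Hamming weight at least $\rho\eqdef3^{-(d-1)}$, and computes an $\F_2$-linear map. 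Then I would run the algorithm of Proposition~\ref{prop:threshold:code:amp} on input $1^{\hat n}$, the constant $\rho$, and the given $\e=\e(n)$, to obtain a $\tc^0$ circuit $C_2$ that maps $\hat n$ bits to $\bar n=\hat n\cdot(1/\e)^{O(1/\rho)}$ bits, has depth two and $\hat n\cdot(1/\e)^{O(1/\rho)}$ wires, maps every string of relative Hamming weight at least $\rho$ to a string of relative Hamming weight in $[1/2-\e,1/2]$, and computes an $\F_2$-linear map. The circuit output by the algorithm would be $C\eqdef C_2\circ C_1$.

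To argue correctness, I would first note that the composition of two $\F_2$-linear maps is $\F_2$-linear, so each output bit of $C$ is a linear function of $x$ and $C$ is a linear code. Next, $C_1$ is injective (a non-zero message has positive relative weight, hence a non-zero image), so for every non-zero $x$ the string $\hat x=C_1(x)$ is non-zero and has relative Hamming weight at least $\rho$; Proposition~\ref{prop:threshold:code:amp} then gives that $C(x)=C_2(\hat x)$ has relative Hamming weight $1/2\pm\e$, while the zero message maps to the zero codeword (so the conclusion should be read for non-zero codewords, matching the $\e$-balanced convention used earlier). The codeword length is $\bar n=\hat n\cdot(1/\e)^{O(1/\rho)}=O(n)\cdot(1/\e)^{O(3^{d-1})}=n\cdot(1/\e)^{O(3^d)}$, as required.

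For the resources, I would simply add up the two circuits: $C=C_2\circ C_1$ has depth $2(d-1)+2=2d$, and its number of wires is that of $C_1$ plus that of $C_2$, namely $n^{1+O(1/(d-1))}+O(n)\cdot(1/\e)^{O(3^{d-1})}$, which for $d\ge2$ equals $n^{1+O(1/d)}+n\cdot(1/\e)^{O(3^d)}$. The algorithm producing $C$ just runs the two polynomial-time algorithms of Propositions~\ref{prop:threshold:code:const} and~\ref{prop:threshold:code:amp} and connects the output layer of $C_1$ to the input layer of $C_2$, so it runs in polynomial time.

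There is essentially no obstacle here: all the real content — computing a constant-distance code in sparse $\tc^0$ via high-order tensor codes, and amplifying distance via expander random walks — already lives in Propositions~\ref{prop:threshold:code:const} and~\ref{prop:threshold:code:amp}, so the present statement is just their composition together with parameter bookkeeping. The only mildly delicate point is cosmetic: composing a depth-$2(d-1)$ circuit with a depth-two circuit yields depth $2d$, so one must instantiate Proposition~\ref{prop:threshold:code:const} with depth parameter $d-1$ (rather than $d$) to land exactly on the claimed depth, which is harmless since $3^{d-1}=O(3^d)$ and $1/(d-1)=O(1/d)$ for $d\ge2$, and the degenerate small-$d$ cases are absorbed by the $O(\cdot)$ notation.
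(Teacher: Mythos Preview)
Your proposal is correct and matches the paper's approach: the paper simply states that the proposition follows ``by combining Propositions~\ref{prop:threshold:code:const} and~\ref{prop:threshold:code:amp}'' without further detail, and you carry out exactly that composition with careful parameter bookkeeping. Your observation that one should instantiate Proposition~\ref{prop:threshold:code:const} with depth parameter $d-1$ (so that the total depth is $2(d-1)+2=2d$ rather than $2d+2$) is the right way to make the depth bound exact, and the remaining parameter calculations are straightforward.
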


Relying on the Johnson bound, we obtain the list-decodable code that is needed for Corollary~\ref{cor:threshold:trevisan} as a corollary of Proposition~\ref{prop:threshold:code}:

\begin{corollary} (a list-decodable code in sparse $\tc^0$). \label{cor:threshold:code}
There exists a polynomial-time algorithm that gets inputs $1^n$ and $\de=\de(n)$ and a constant $d\in\N$, and outputs a $\tc^0$ circuit such that:
\begin{enumerate}
	\item The circuit computes a linear code mapping messages of length $n$ to codewords of length $\bar{n}=n\cdot(1/\de)^{O(3^d)}$ such that in any Hamming ball of radius $1/2-\de$ in $\bitset^{\bar{n}}$ there exist at most $O(1/\de^2)$ codewords.
	\item The circuit has depth $2d$ and $n^{1+O(1/d)}+n\cdot(1/\de)^{O(3^d)}$ wires.
\end{enumerate}
\end{corollary}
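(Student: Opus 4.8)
The plan is to deduce Corollary~\ref{cor:threshold:code} from Proposition~\ref{prop:threshold:code} by a routine application of the Johnson bound, so the only real work is bookkeeping of parameters. First I would invoke Proposition~\ref{prop:threshold:code} with the error parameter $\e=\de^2$ (and the same constant $d$), obtaining a polynomial-time algorithm that outputs a $\tc^0$ circuit computing a linear code $\code:\bitset^n\ra\bitset^{\bar n}$ with codeword length $\bar n = n\cdot(1/\e)^{O(3^d)} = n\cdot(1/\de)^{O(3^d)}$ (absorbing the factor of $2$ in the exponent of $1/\de$ into the $O(\cdot)$), depth $2d$, and $n^{1+O(1/d)}+n\cdot(1/\de)^{O(3^d)}$ wires, such that every codeword has relative Hamming weight in $1/2\pm\e = 1/2\pm\de^2$.

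The key step is then the distance-to-list-decodability conversion. Since $\code$ is linear and $\e$-balanced with $\e=\de^2$, the relative distance between any two distinct codewords $c,c'$ is $\mathrm{wt}(c\oplus c')\ge 1/2-\de^2$, i.e.\ the code has minimum relative distance at least $1/2-\de^2$. The Johnson bound (see, e.g., the standard statement in coding theory texts) asserts that a binary code of relative distance $1/2-\gamma^2$ is $(1/2-\gamma,\,O(1/\gamma^2))$-list-decodable; applying this with $\gamma=\de$ gives that every Hamming ball of radius $1/2-\de$ in $\bitset^{\bar n}$ contains at most $O(1/\de^2)$ codewords. I would cite the Johnson bound rather than reprove it, since it is classical and not specific to this paper. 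Combining, the circuit produced above is exactly the one claimed in Corollary~\ref{cor:threshold:code}: the list-size bound follows from the Johnson bound, and the length and wire bounds are inherited verbatim from Proposition~\ref{prop:threshold:code} after the substitution $\e=\de^2$.

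I do not anticipate a genuine obstacle here; the only point requiring a modicum of care is making sure the substitution $\e=\de^2$ does not blow up the stated parameters. The codeword length becomes $n\cdot(1/\de^2)^{O(3^d)}=n\cdot(1/\de)^{O(3^d)}$ and the wire count becomes $n^{1+O(1/d)}+n\cdot(1/\de)^{O(3^d)}$, both of which match the corollary statement once the constant $2$ is absorbed into the $O$-notation in the exponent; the depth $2d$ is unchanged. The algorithm's running time remains polynomial since it is just the algorithm of Proposition~\ref{prop:threshold:code} run with a different (polynomially-bounded-description) error parameter. Thus the proof is a two-line reduction: apply Proposition~\ref{prop:threshold:code} with $\e=\de^2$, then invoke the Johnson bound.
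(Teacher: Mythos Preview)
Your proposal is correct and is essentially identical to the paper's own proof: the paper invokes Proposition~\ref{prop:threshold:code} with $\e=\de^2$ and then applies the Johnson bound (citing~\cite[Thm 19.23]{ab09}) to conclude the list-decodability. Your bookkeeping of the parameters is, if anything, more carefully spelled out than in the paper.
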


\begin{proof}[{\bf Proof.}]
We invoke Proposition~\ref{prop:threshold:code} with $\e=\de^2$. The code that the circuit computes has distance $1/2-\de^2$. Relying on the Johnson bound (see, e.g.,~\cite[Thm 19.23]{ab09}), in such a code every Hamming ball of radius $\de$ contains at most $1/\de^2$ codewords.
\end{proof}

\subsection{An averaging sampler in sparse $\tc^0$} \label{sec:threshold:extractor}

We now combine Lemma~\ref{lem:threshold:designs}, Corollary~\ref{cor:threshold:trevisan}, and Corollary~\ref{cor:threshold:code}, to get an averaging sampler that can be computed by a $\tc^0$ circuit with a super-linear number of wires. The sampler will get an input of length $n$, and for two constants $0<\gamma\ll\beta<1$, the sampler will output $m=n^{\gamma}$ bits and will have accuracy $1/m$ and error $2^{n^{\beta}-n}$.

\begin{proposition} (an averaging sampler in sparse $\tc^0$). \label{prop:threshold:sampler}
There exists a polynomial-time algorithm that gets as input $1^n$ and three constants $d\in\N$ and $\gamma\le\frac1{c\cdot d\cdot3^d}$ (where $c>1$ is some universal constant) and $\beta\ge4/5$, and outputs a $\tc^0$ circuit $C$ that satisfies the following:
\begin{enumerate}
	\item The circuit $C$ gets input $x\in\bitset^n$ and outputs $2^t<n^{(1+O(1/d))\cdot(5-4\beta)}$ strings of length $m=n^{\gamma}$.
	\item The function $Samp:\bitset^n\times\bitset^t\ra\bitset^m$ such that $Samp(x,i)=C(x)_i$ (i.e., $Samp(x,i)\in\bitset^m$ is the $i^{th}$ output string of $C(x)$) is an averaging sampler with accuracy $\e=1/m$ and error $2^{n^{\beta}-n}$.
		\item The depth of $C$ is $2d+1$ and its number of wires is at most $n^{(1+O(1/d))\cdot(5-4\beta)}$.
\end{enumerate}
In particular, if $\beta\ge1-1/5d$, then both the number of outputs of $C$ (i.e., $2^t$) and the number of wires in $C$ are less than $n^{1+O(1/d)}$.
\end{proposition}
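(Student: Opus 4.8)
The plan is to instantiate the extractor/sampler of Corollary~\ref{cor:threshold:trevisan} with the sparse list-decodable code of Corollary~\ref{cor:threshold:code} and the weak designs of Lemma~\ref{lem:threshold:designs}, and then verify that the resulting ``batch evaluation'' map $x\mapsto\{Samp(x,z)\}_{z\in\bitset^t}$ is computed by a $\tc^0$ circuit of the claimed depth and wire count. First I would fix the parameters: set $m=n^\gamma$, accuracy $\e=1/m$, min-entropy $k=n^\beta$ (so that the sampler error is $2^{k-n}=2^{n^\beta-n}$ as required), and $\de=\e/4m=1/(4m^2)=n^{-O(\gamma)}$. I invoke Corollary~\ref{cor:threshold:code} with this $\de$ (and the same $d$) to obtain a linear code $\code:\bitset^n\ra\bitset^{\bar n}$ with $\bar n=n\cdot(1/\de)^{O(3^d)}=n^{1+O(\gamma\cdot3^d)}$ in which every radius-$(1/2-\de)$ Hamming ball has at most $O(1/\de^2)$ codewords; since $\gamma\le 1/(c\cdot d\cdot 3^d)$ this gives $\bar n\le n^{1+O(1/d)}$, hence $\ell=\log(\bar n)=(1+O(1/d))\cdot\log n$. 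The code circuit has depth $2d$ and $n^{1+O(1/d)}$ wires.

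Next I would handle the seed length via the weak design. Following Theorem~\ref{thm:threshold:trevisan}, I need an $(m,\ell,t,\rho)$ weak design with $\rho=\frac{k-3\log(m/\e)-t-3}{m}$. Since $k=n^\beta$ and $m=n^\gamma$ with $\gamma$ tiny, $\log\rho = \beta\log n - \gamma\log n - O(\log\log n) = (1-\al)\cdot\ell$ for $\al$ satisfying $1-\al \approx (\beta-\gamma)/(1+O(1/d))$, so $\al=1-(\beta-\gamma)(1+O(1/d))^{-1}\approx(1-\beta)+O(1/d)$. To apply Lemma~\ref{lem:threshold:designs} I need $\al\in(0,1/4)$, which holds when $\beta\ge 4/5$ and $d$ is large enough; the lemma then gives a weak design with universe size $t=\lceil(1+4\al)\ell\rceil$. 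Plugging in, $t \le (1+4\al)\ell \le \big(1+4((1-\beta)+O(1/d))\big)(1+O(1/d))\log n = (5-4\beta+O(1/d))\log n$, so $2^t\le n^{(1+O(1/d))(5-4\beta)}$, matching item~1. Corollary~\ref{cor:threshold:trevisan} then certifies that $Samp(x,z)=(\code(x)_{z_{S_1}},\dots,\code(x)_{z_{S_m}})$ is an averaging sampler with accuracy $\e$ and error $2^{k-n}$, giving item~2.

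Finally I would argue the circuit complexity (items~1 and 3) using the ``batch'' idea from Section~\ref{sec:tech:ext}: the circuit $C$ first computes the codeword $\bar x=\code(x)\in\bitset^{\bar n}$ \emph{once}, using the depth-$2d$, $n^{1+O(1/d)}$-wire circuit from Corollary~\ref{cor:threshold:code}; then for each seed $z\in\bitset^t$ and each $i\in[m]$, the output bit $Samp(x,z)_i=\bar x_{z_{S_i}}$ is just a selection among the $\bar n$ precomputed bits, which can be implemented by one extra $\tc^0$ layer (a selector gate, e.g.\ a small threshold computation over the relevant $|S_i|$-bit address and the data bits, as in the standard multiplexer-in-$\tc^0$ construction). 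The total number of output bits is $2^t\cdot m\le n^{(1+O(1/d))(5-4\beta)}\cdot n^\gamma$; since $\gamma$ is a lower-order term this is $n^{(1+O(1/d))(5-4\beta)}$, and the wire count of the final selection layer is of the same order, while the code sublayer contributes only $n^{1+O(1/d)}\le n^{(1+O(1/d))(5-4\beta)}$ wires (using $\beta\le 1$). Hence $C$ has depth $2d+1$ and $n^{(1+O(1/d))(5-4\beta)}$ wires. The ``in particular'' clause follows by substituting $\beta\ge 1-1/5d$, which makes $5-4\beta\le 1+4/5d=1+O(1/d)$.

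The main obstacle I expect is \textbf{bookkeeping the three interlocking $(1+O(1/d))$ factors} (one from the code's rate blow-up, one from $\ell=\log\bar n$, one from $t=(1+4\al)\ell$) so that they collapse into a single $n^{(1+O(1/d))(5-4\beta)}$ bound, together with checking that the constraint $\gamma\le 1/(c\cdot d\cdot 3^d)$ is exactly what is needed to keep $\bar n\le n^{1+O(1/d)}$ (since $\bar n=n\cdot(1/\de)^{O(3^d)}$ and $1/\de=n^{O(\gamma)}$, this forces $\gamma\cdot 3^d = O(1/d)$). A secondary nuisance is confirming that the final multiplexer layer — selecting, for each of the $m$ coordinates and each seed, one of $\bar n$ bits using a $\log\bar n$-bit address derived from $z$ — can genuinely be done in a single additional $\tc^0$ layer with wire count proportional to the number of outputs times $\mathrm{polylog}(n)$, which is where the $+1$ in the depth $2d+1$ comes from.
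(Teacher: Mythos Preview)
Your proposal is correct and follows essentially the same route as the paper: instantiate Corollary~\ref{cor:threshold:trevisan} with the code of Corollary~\ref{cor:threshold:code} and the weak design of Lemma~\ref{lem:threshold:designs}, then check the wire and depth bounds; the paper sets $\al=1-\beta+(c'\cdot3^{d+1})\cdot\gamma$ explicitly (which is your $(1-\beta)+O(1/d)$ once $\gamma\le1/(c\cdot d\cdot3^d)$) and verifies $\rho=2^{(1-\al)\ell}<\frac{k-3\log(m/\e)-t-3}{m}$ directly.

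One simplification worth noting for your ``secondary nuisance'': no multiplexer is needed. The circuit $C$ outputs the strings for \emph{all} $2^t$ seeds simultaneously, so each output bit corresponds to a \emph{fixed} seed $z$ and a \emph{fixed} index $i\in[m]$, hence to a \emph{fixed} coordinate $z_{S_i}\in[\bar n]$ of $\bar x$. The final layer is therefore pure projection (direct wiring from bits of $\bar x$ to output gates), using exactly $2^t\cdot m$ wires; the paper states this as ``the $2^t$ outputs are projections of $\bar x$.'' The $+1$ in the depth bound just accounts for this output layer of identity/fan-out gates, not for any address computation.
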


\begin{proof}[{\bf Proof.}]
We first use Corollary~\ref{cor:threshold:code} with the parameter value $\de=\e/4m$ to construct a circuit $C_0$ of depth $2d$ that encodes its input $x\in\bitset^n$ to a codeword $\bar{x}$ of length $\bar{n}$. Then, we use Lemma~\ref{lem:threshold:designs} to construct an $(m,\ell,t,\rho)$ weak design $S_1,...,S_m\subseteq[t]$ with the following parameters: For $\al=1-\beta+(c\cdot3^{d+1})\cdot\gamma<1/4$ (the inequality is since $\beta>4/5$ and $\gamma$ is sufficiently small), we construct a design with $\ell=\log(\bar{n})$ and $\rho=2^{(1-\al)\cdot\ell}$ and $t=\ceil{(1+4\al)\cdot\ell}$. Now, define a function $Samp:\bitset^n\times\bitset^t\ra\bitset^m$ as in Corollary~\ref{cor:threshold:trevisan}; that is, for $x\in\bitset^n$ and $z\in\bitset^t$, the $m$-bit string $Samp(x,z)$ is the projection of $\bar{x}$ to the coordinates $z_{S_1},...,z_{S_m}$. The circuit $C$ outputs the $2^t$ strings corresponding to $\{Samp(x,z)\}_{z\in\bitset^t}$, where each output string is a projections of $m$ bits of $\bar{x}$.

Let $k=n^{\beta}$. An elementary calculation shows that $\rho=2^{(1-\al)\cdot\ell}<\frac{k-3\cdot\log(m/\e)-t-3}{m}$.~\footnote{To see that this holds, let $c'>1$ be the universal constant such that $\bar{n}\le n\cdot(m/\e)^{c'\cdot3^d}$. Then, note that $\al=1-\beta+(c'\cdot3^{d+1})\cdot\gamma>\frac{1-\beta+(2c'\cdot3^d+1)\cdot\gamma}{1+2c'\cdot \gamma\cdot3^d}=1-\frac{\beta-\gamma}{1+2c'\cdot\gamma\cdot3^d}$. It follows that $\log(\rho)=(1-\al)\cdot\ell<\log(k/2m)$, since $1-\al<\frac{\beta-\gamma-1/\log(n)}{1+2c'\cdot\gamma\cdot3^d}$. We can thus deduce that $\rho\le k/2m<\frac{k-3\cdot\log(m/\e)-t-3}{m}$.} Thus, relying on Corollary~\ref{cor:threshold:trevisan}, the function $Samp$ is an averaging sampler with accuracy $\e$ and error $2^{k-n}$.  The depth of $C$ is $2d+1$ (since the depth of $C_0$ is $2d$, and the $2^t$ outputs are projections of $\bar{x}$). Finally, the number of wires in $C_0$ is $n^{1+O(1/d)}+n\cdot(m/\e)^{O(3^d)}<n^{1+O(1/d)}$, and the number of wires between $\bar{x}$ and the outputs is $2^t\cdot m=2^{\ceil{(1+4\al)\cdot\log(\bar{n})}}\cdot m=n^{(1+O(\gamma\cdot3^d))(1+4\al)}=n^{(1+O(1/d))\cdot(5-4\beta)}$.
\end{proof}

\subsection{Proof of Theorem~\ref{thm:int:threshold}} \label{sec:threshold:thm}

Let us now formally state Theorem~\ref{thm:int:threshold} and prove it using the averaging sampler from Proposition~\ref{prop:threshold:sampler}.  Towards stating the theorem, for any $n,d,k\in\N$, denote by $\mathcal{C}_{n,d,n^k}$ either the class of linear threshold circuits over $n$ input bits of depth $d$ and with at most $n^k$ wires. 

\begin{theorem} (Theorem~\ref{thm:int:threshold}, restated). \label{thm:threshold}
Assume that for every $d\in\N$ and for some $\beta=\beta_d\ge4/5$ there exists an algorithm that gets as input a circuit $C'\in\mathcal{C}_{n,d,n^{(1+O(1/d))\cdot(5-4\beta)}}$, runs in time $T(n)$, and satisfies the following: If $C'$ rejects all but at most $2^{n^{\beta}}$ of its inputs, then the algorithm rejects $C'$, and if $C'$ accepts all but at most $2^{n^{\beta}}$ of its inputs, then the algorithm accepts $C'$.

Then, there exists an algorithm that for every $k\in\N$ and $d\in\N$, when given as input a circuit $C\in\mathcal{C}_{m,d,m^k}$, runs in time $T(m^{O(k\cdot d\cdot3^d)})$ (where the $O$-notation hides some fixed universal constant), and satisfies the following: If $C$ accepts at least $2/3$ of its inputs then the algorithm accepts $C$, and if $C$ rejects at least $2/3$ of its inputs then the algorithm rejects $C$.
\end{theorem}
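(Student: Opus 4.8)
The plan is to reduce standard derandomization of a given threshold circuit $C\in\mathcal{C}_{m,d,m^k}$ to the quantified derandomization problem solved by the assumed algorithm, via the averaging sampler of Proposition~\ref{prop:threshold:sampler}. First I would set $n=m^{c}$ for a suitable $c=O(k\cdot d\cdot3^d)$ and invoke Proposition~\ref{prop:threshold:sampler} with $\gamma$ chosen so that $m=n^{\gamma}$ (which forces $\gamma\le 1/(c\cdot d\cdot3^d)$, matching the proposition's hypothesis) and with $\beta=\beta_d\ge 4/5$ (so that $2^{n^{\beta}}$ is the allowed number of exceptional inputs). This yields a $\tc^0$ circuit $C_{\mathrm{samp}}$ of depth $2d'+1$ and $n^{(1+O(1/d'))\cdot(5-4\beta)}$ wires that on input $x\in\bitset^n$ outputs the $2^t<n^{(1+O(1/d'))\cdot(5-4\beta)}$ sample points $\{\mathrm{Samp}(x,z)\}_{z\in\bitset^t}$, each of length $m$.

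Next I would build the composed circuit $C'$: feed $x\in\bitset^n$ into $C_{\mathrm{samp}}$, apply a separate copy of $C$ to each of the $2^t$ output strings of length $m$, and take the majority of the $2^t$ resulting bits with one extra majority gate on top. The depth of $C'$ is $(2d'+1)+d+1=O(d'+d)$; I'd absorb the additive constants by working with a slightly larger target depth so that $C'$ lands in the class $\mathcal{C}_{n,\,O(d'),\,n^{(1+O(1/d'))\cdot(5-4\beta)}}$ to which the assumed algorithm applies. The wire count is the key bookkeeping point: the sampler contributes $n^{(1+O(1/d'))\cdot(5-4\beta)}$ wires, and the $2^t$ copies of $C$ contribute $2^t\cdot m^k\le n^{(1+O(1/d'))\cdot(5-4\beta)}\cdot m^k$; since $m^k=n^{k\gamma}=n^{o(1)}$ by our choice of $c$, this is still $n^{(1+O(1/d'))\cdot(5-4\beta)}$, as required. (This is where the requirement $c=\Omega(k\cdot d\cdot 3^d)$ and the slack in the $O(1/d')$ exponent get consumed.)

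The correctness argument uses the sampler guarantee from Corollary~\ref{cor:threshold:trevisan} / Definition~\ref{def:threshold:samp}. Let $T=\{y\in\bitset^m: C(y)=1\}$ (identifying acceptance with output $1$). If $C$ accepts at least $2/3$ of its inputs, then $|T|/2^m\ge 2/3$; the averaging sampler with accuracy $\e=1/m$ and error $\de=2^{n^{\beta}-n}$ guarantees that for all but $\de\cdot 2^n=2^{n^{\beta}}$ of the $x\in\bitset^n$, the fraction of seeds $z$ with $\mathrm{Samp}(x,z)\in T$ is at least $2/3-1/m>1/2$, so the top majority gate of $C'$ outputs $1$; hence $C'$ accepts all but at most $2^{n^{\beta}}$ of its inputs, and the assumed algorithm accepts $C'$. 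Symmetrically, if $C$ rejects at least $2/3$ of its inputs the same estimate shows $C'$ rejects all but $2^{n^{\beta}}$ of its inputs, and the algorithm rejects $C'$. The algorithm for standard derandomization of $C$ simply constructs $C'$ in polynomial time (Proposition~\ref{prop:threshold:sampler} is polynomial-time, and wiring up copies of $C$ is trivial) and runs the assumed algorithm, for a total running time of $T(n)=T(m^{O(k\cdot d\cdot 3^d)})$.

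The main obstacle I expect is the simultaneous calibration of all parameters: one must pick $\beta$, the sampler output length $m=n^\gamma$, the min-entropy threshold $k_{\mathrm{ent}}=n^{\beta}$, and the ambient input length $n=m^c$ so that (i) the sampler's hypotheses ($\gamma\le 1/(c\cdot d\cdot 3^d)$, $\beta\ge 4/5$) hold, (ii) the composed circuit $C'$ has wire count $n^{(1+O(1/d'))\cdot(5-4\beta)}$ despite carrying $2^t\approx n^{5-4\beta}$ copies of a circuit with $m^k$ wires each, and (iii) the number $2^{n^{\beta}}$ of exceptional inputs for $C'$ is exactly what the assumed quantified-derandomization algorithm tolerates, while still giving a final running time that is subexponential in $m$. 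The accuracy/error translation from "acceptance probability $\ge 2/3$" to "accepts all but $2^{n^\beta}$ inputs" is routine once the sampler is in hand; the delicate part is that the exponent $5-4\beta$ in the wire bound of $C'$ matches the exponent in the class $\mathcal{C}_{n,d,n^{(1+O(1/d))\cdot(5-4\beta)}}$ appearing in the theorem's hypothesis, which is why the statement is phrased with that exact exponent rather than a generic $n^{1+O(1/d)}$.
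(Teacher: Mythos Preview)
Your proposal is correct and follows essentially the same approach as the paper: build $C'(x)=\mathrm{MAJ}_{z}\bigl(C(\mathrm{Samp}(x,z))\bigr)$ using the sampler of Proposition~\ref{prop:threshold:sampler}, bound the depth and wire count, and invoke the sampler's accuracy/error guarantee to translate the $2/3$ acceptance threshold into at most $2^{n^{\beta}}$ exceptional inputs. The paper instantiates the sampler with the same parameter $d$, obtaining $C'$ of depth exactly $3d+2$ and therefore uses $\beta=\beta_{3d+2}$ (not $\beta_d$ as you wrote); your remark about ``working with a slightly larger target depth'' is exactly this point, so just make the subscript explicit. Your wire accounting is in fact slightly more careful than the paper's, correctly tracking the $2^t\cdot m^k$ contribution from the copies of $C$ and absorbing the extra $m^k=n^{O(1/(d\cdot 3^d))}$ factor into the $(1+O(1/d))$ exponent.
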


To obtain the parameters of Theorem~\ref{thm:int:threshold}, use the value $\beta_d=1-1/5d$, in which case the number of wires of $C'$ is $n^{1+O(1/d)}$; and for every $k\in\N$, we can assume that $d$ is sufficiently large such that $O(k\cdot d\cdot 3^d\cdot4^{-d})<1$, in which case the running time of the algorithm is at most $T(m^{O(k\cdot d\cdot3^d)})=2^{m^{1-\Omega(1)}}$ (due to the hypothesis that $T(n)=2^{n^{1/4^d}}$).

\begin{proof}[{\bf Proof of Theorem~\ref{thm:threshold}.}]
Let $C\in\mathcal{C}_{m,d,m^k}$ be an input to the algorithm, let $\gamma=1/c\cdot k\cdot d\cdot3^d$ for a sufficiently large universal constant $c>1$, and let $\beta=\beta_{3d+2}$. We will construct a circuit $C'\in\mathcal{C}_{n,3d+2,n^{(1+O(1/d))\cdot(5-4\beta)}}$, where $n=m^{1/\gamma}$, such that the following holds: If $C$ rejects at least a $2/3$ fraction of its inputs, then $C'$ rejects all but at most $2^{n^{\beta}}$ inputs; and if $C$ accepts at least a $2/3$ fraction of its inputs, then $C'$ accepts all but $2^{n^{\beta}}$ of its inputs. Then, we can use the quantified derandomization algorithm for $C'$, which runs in time $T(n)=T\left(m^{c\cdot k\cdot d\cdot 3^d}\right)$, to decide whether the acceptance probability of $C$ is at least $2/3$ or at most $1/3$.

To construct $C'$, we first use Proposition~\ref{prop:threshold:sampler} to construct a $\tc^0$ circuit $Samp:\bitset^n\times\bitset^t\ra\bitset^m$ that is an averaging sampler with the following properties: The input length is $n$, the output length is $m=n^{\gamma}$, the accuracy is $\e=n^{\Omega(1)}<1/100$, and the error is $\de=2^{n^{\beta}-n}$; the number of wires in $Samp$ is at most $n^{(1+O(1/d))\cdot(5-4\beta)}$, and its depth is $2d+1$. The circuit $C'$ first computes the sampler $Samp$, then evalutes $C$ in parallel on each of the $2^t<n^{(1+O(1/d))\cdot(5-4\beta)}$ outputs of the sampler, and finally computes the majority of the $2^t$ evaluations of $C$. That is, $C'(x)=MAJ_{z\in\bitset^t}\left[ C(Samp(x,z)) \right]$. The circuit $C'$ is of depth $(2d+1)+d+1=3d+2$, and its number of wires is at most $n^{(1+O(1/d))\cdot(5-4\beta)}+m^k=n^{(1+O(1/d))\cdot(5-4\beta)}$, where we relied on the fact that $m^k<n$.

Note that for any $x\in\bitset^n$ such that $\Pr_{z\in\bitset^t}\left[ C(Samp(x,z))=1 \right ] \in \Pr[C({\bf u}_n)=1] \pm \e$, we have that $C'(x)$ outputs the most frequent value of $C$. Since the accuracy of the sampler is $2^{n^{\beta}-n}$, the number of strings in $\bitset^n$ such that $\Pr_{z\in\bitset^t}\left[ C(Samp(x,z))=1 \right ]\notin \Pr[C({\bf u}_n)=1] \pm \e$ is at most $2^{n^{\beta}}$. Thus, the number of strings $x\in\bitset^n$ such that $C'(x)$ does not output the most frequent value of $C$ is at most $2^{n^{\beta}}$.
\end{proof}

Observe that the circuit $C'$ that we constructed in the proof of Theorem~\ref{thm:threshold} consists of the sampler from Proposition~\ref{prop:threshold:sampler}, which only uses majority gates; of copies of the initial circuit $C$; and of an additional majority gate. Thus, the statement of Theorem~\ref{thm:threshold} holds even if we interpret $\mathcal{C}_{n,d,w}$ as the class of circuits with majority gates (rather than linear threshold circuits) over $n$ input bits of depth $d$ and with at most $w$ wires.

\section{Quantified derandomization of depth-2 linear threshold circuits} \label{sec:depthtwo}

In this section we construct a quantified derandomization algorithm for depth-2 linear threshold circuits with $n^{3/2-\Omega(1)}$ wires. In fact, we construct a \emph{pseudorandom generator} for the class of depth-2 linear threshold circuits with $n^{3/2-\Omega(1)}$ wires that either accept all but $B(n)=2^{n^{\Omega(1)}}$ of their inputs or reject all but $B(n)$ of their inputs. That is, we construct an algorithm $G$ that gets as input a seed $s$ of length $\tilde{O}(\log(n))$, and outputs an $n$-bit string such that for every $C\in\mathcal{C}_{n,2,n^{3/2-\Omega(1)}}$ the following holds: If $C$ accepts all but $B(n)=2^{n^{\Omega(1)}}$ of its inputs, then the probability that $C(G(s))=1$ is very high, and if $C$ rejects all but $B(n)$ of its inputs, then the probability that $C(G(s))=0$ is very low.

The pseudorandom generator that we construct in this appendix is incomparable to the pseudorandom generator of Servedio and Tan~\cite{st17}. On the one hand, their generator is $\frac1{\poly(n)}$-pseudorandom for \emph{every} depth-two linear threshold circuit, whereas our generator only ``fools'' circuits with acceptance probability that is either very high or very low. Moreover, their generator can handle circuits with $n^{2-\Omega(1)}$ wires, whereas our generator can only handle circuits with $n^{3/2-\Omega(1)}$ wires. But on the other hand, their generator requires a seed of length $n^{1-\Omega(1)}$, whereas our generator only requires a seed of length $\tilde{O}(\log(n))$.

Recall that our main quantified derandomization algorithm (from Theorem~\ref{thm:int:main}) leverages the techniques underlying the correlation bounds of Chen, Santhanam, and Srinivasan~\cite{css16} for depth-$d$ linear threshold circuits. The generator in this section leverages the techniques underlying the correlation bounds of Kane and Williams~\cite{kw16} for depth-2 linear threshold circuits. 

Specifically, our first step is to prove a derandomized version of the restriction lemma of Kane and Williams~\cite{kw16}. We actually state a slightly generalized version, which is implicit in the original argument. We say that a distribution ${\bf y}$ over $\bitset^n$ is {\sf $p$-bounded in pairs} if for every $i\ne j\in[n]$ it holds that $\Pr[{\bf y}_i=1]\le p$ and $\Pr[{\bf y}_i=1\land{\bf y}_j=1]\le p^2$. One example for a distribution that is $p$-bounded in pairs is the distribution ${\bf y}$ in which each coordinate is independently set to $1$ with probability $p$. Another example, which is used in~\cite{kw16}, is the following: Consider a equipartition of $[n]$ to $p\cdot n$ disjoint sets $S_1,...,S_{p\cdot n}$; then, sampling $y\sim{\bf y}$ is equivalent to uniformly choosing a single coordinate in each set $S_i$ in the partition, fixing $y$ in the chosen coordinates to one, and fixing $y$ in all other coordinates to zero (so that the Hamming weight of $y\sim{\bf y}$ is always $p\cdot n$).

\begin{proposition} (derandomized version of~\cite[Lem. 3.1]{kw16}). \label{prop:kw:rest}
Let $\Phi=(w,\theta)$ be an LTF on $m$ input bits. For $p>0$, let ${\bf y}$ be a distribution over $\bitset^n$ that is $p$-bounded in pairs, and let ${\bf z}$ be a distribution over $\pmset^n$ that is $\frac1{\poly(m)}$-pseudorandomly concentrated. Let ${\bm\rho}$ be the distribution over restrictions obtained by sampling $y\sim{\bf y}$ in order to determine which variables are kept alive (the $i^{th}$ variable is kept alive if and only if ${\bf y}_i=1$), and independently sampling $z\sim{\bf z}$ to determine values for the fixed variables. Then,
\mm{
\Pr_{\rho\sim{\bm\rho}}[\Phi\rest_\rho\text{ depends on more than one input bit}]=O(m\cdot p^{3/2}) \mathdot
}
\end{proposition}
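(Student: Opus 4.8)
The plan is to mimic the structure of the proof of Proposition~\ref{prop:rest:ltf} (and its sub-lemmas), but for the weaker ``depends on $\le 1$ bit'' conclusion and the weaker probability bound $O(m\cdot p^{3/2})$, while carrying along a pseudorandom distribution of restrictions. First I would recall what ``$\Phi\rest_\rho$ depends on more than one input bit'' means concretely: writing $I$ for the live set and $z$ for the fixed values, $\Phi\rest_\rho=(w_I,\theta-\ip{w_{[m]\setminus I},z_{[m]\setminus I}})$, and this restricted LTF is a non-constant, non-dictator function only if at least two coordinates of $w_I$ are ``active'', i.e.\ the threshold $\theta' = \theta-\ip{w_{[m]\setminus I},z}$ lies strictly between $\pm(\norm{w_I}_1 - |w_j|)$ for the two largest $|w_j|$, $j\in I$ — equivalently, $|\theta'|$ is not too large compared to $\norm{w_I}_1$ minus its largest entry. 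The cleanest sufficient condition I would use: if $\Phi\rest_\rho$ depends on $\ge 2$ bits then $|\theta - \ip{w_{[m]\setminus I},z}| < \norm{w_I}_1 \le \sqrt{|I|}\cdot\norm{w_I}_2$, and moreover $|I|\ge 2$. So it suffices to bound the probability that both $|I|\ge 2$ and $\ip{w_{[m]\setminus I},z} \in \theta \pm \sqrt{|I|}\cdot\norm{w_I}_2$.

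The analysis then splits on the critical index, exactly as in Lemma~\ref{lem:rest:ltf:small} / Lemma~\ref{lem:rest:ltf:large}, but now with the parameter $t$ effectively replaced by $\sqrt{|I|}$, which is typically $\sqrt{pm}$. WLOG $|w_1|\ge\cdots\ge|w_m|$. Fix a regularity parameter $\mu = 1/\poly(m)$ small enough, and let $k$ be the threshold critical index. \emph{Case 1 (critical index $\le k$):} With probability $1-O(pk) = 1-O(\mathrm{poly}(m)\cdot p)$ over $y\sim{\bf y}$ all of the first $h\le k$ coordinates are fixed — here I use only that ${\bf y}$ is $p$-bounded in pairs, so each coordinate is alive w.p.\ $\le p$, giving a union bound $\le pk$. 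Conditioned on that, $w_{I}$ is (up to a subset) $2\mu$-regular with $\norm{w_I}_2\lesssim\sqrt{p}\cdot\norm{w_{>h}}_2$; the event then becomes $\ip{w_{T\setminus I},z_{T\setminus I}}\in\theta'\pm\sqrt{|I|}\cdot\norm{w_I}_2 \subseteq \theta'\pm O(\sqrt{pm}\cdot\sqrt{p})\cdot\norm{w_{T}}_2 = \theta'\pm O(\sqrt{m}\cdot p)\cdot\norm{w_T}_2$, an interval of relative length $O(\sqrt{m}\,p)$. By the Berry--Ess\'een corollary (Theorem~\ref{thm:be}) on the regular vector $w_{T\setminus I}$, a uniform $z$ lands there with probability $O(\sqrt{m}\,p + \mu)$; since $\mu$ is $1/\poly(m)$ and ${\bf z}$ is $\frac{1}{\poly(m)}$-pseudorandomly concentrated, the same bound holds (up to constants and the tiny pseudorandomness error) for $z\sim{\bf z}$. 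Combined with the $O(pk)$ term, and noting $\sqrt m\,p \le m p^{3/2}$ is the dominant shape when $p\gtrsim 1/m$ (and the statement is anyway vacuous/trivial otherwise), this is $O(m\cdot p^{3/2})$. \emph{Case 2 (critical index $>k$):} with probability $1-O(pk)$ the first $k$ coordinates are fixed, and then by Claim~\ref{claim:rest:ltf:crit} (the exponential-decay-of-weights bound of Servedio, applied with $J=[m]\setminus I$) the sum $\ip{w_{[m]\setminus I},z_{[m]\setminus I}}$ avoids $\theta \pm (1/4\mu)\norm{w_{>k}}_2$ except with probability $\mu$ over uniform $z$, hence $O(\mu)$ over $z\sim{\bf z}$; since $\sqrt{|I|}\cdot\norm{w_I}_2 \le \sqrt m\cdot\norm{w_{>k}}_2 \le (1/4\mu)\norm{w_{>k}}_2$ for $\mu$ small enough, the target interval is contained in that one and the same bound applies.

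I would then just take a union bound over the two cases (they are exhaustive) and over the two failure events in each case, collecting the total $O(pk + \sqrt m\, p + \mu) = O(m p^{3/2})$ after choosing $\mu$ a sufficiently small fixed polynomial in $1/m$ and absorbing the constant-degree polynomial factor from $k = \Theta(\mu^{-2}\log^2(1/\mu))$ into the statement's $O(\cdot)$. (Strictly, one should double check the regime where $p$ is so large that $|I|$ has large variance; but one can either assume $p m \le 1$-type bounds make the statement trivial, or note $pk \le m p^{3/2}$ precisely when $p \ge (k/m)^2$, and handle $p < (k/m)^2$ by observing $pk < \sqrt m\, p \cdot (k/\sqrt{mp}) $ — actually cleanest is to simply note the claimed bound $O(mp^{3/2})$ is trivial, being $\ge 1$, once $p \ge m^{-2/3}$, so we only need the argument for small $p$, where $pk \ll \sqrt m\,p \le m p^{3/2}$ is immediate.)

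\textbf{Main obstacle.} The technical heart, and the step I expect to be fiddliest, is \emph{Case 1}: getting the interval length down to $O(\sqrt m\, p)$ (relative) requires simultaneously (i) that essentially all of $w$'s ``top mass'' above the critical index gets fixed — which costs the $O(pk)$ term and relies only on pairwise boundedness of ${\bf y}$, not full independence, so the second-moment control of $\norm{w_{I\cap T}}_2$ needs a pairwise-only variance computation rather than the $O(1)$-wise tail bound used in Lemma~\ref{lem:rest:ltf:small} — and (ii) a clean bound on $|I|$, i.e.\ that $|I| = O(pm)$ with good probability, again from pairwise boundedness (Chebyshev on $\sum_i {\bf y}_i$ with $\mathrm{Var} \le pn$). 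Threading the $p$-bounded-in-pairs hypothesis through both places, in lieu of the almost-$O(\log(1/p))$-wise independence available in Section~\ref{sec:main}, is the one genuinely new ingredient; everything else is a re-parametrization ($t \rightsquigarrow \sqrt{|I|}$) of the existing critical-index case analysis together with the $\e$-pseudorandomly-concentrated machinery from Claim~\ref{claim:concltfs} and Theorem~\ref{thm:prg:ltfs}.
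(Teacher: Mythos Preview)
Your proposal has a genuine gap. The error is in the final accounting: you claim that once you have the Case~1 bound $O(\sqrt m\,p)$, ``the statement is anyway vacuous/trivial'' for $p\lesssim 1/m$. It is not. The target bound $O(mp^{3/2})$ is only $\ge 1$ when $p\ge m^{-2/3}$; for $p<1/m$ it is a genuinely nontrivial bound, and in that regime $\sqrt m\,p > mp^{3/2}$, so your Case~1 estimate does not suffice. (This regime matters in the application: a bottom-layer gate may have fan-in $m$ much smaller than $1/p$.) Your $pk$ bookkeeping is also off: with $\mu\le 1/(4\sqrt m)$ you have $k=\tilde\Theta(\mu^{-2})=\tilde\Theta(m)$, so $pk=\tilde\Theta(pm)$, which is \emph{not} $\ll\sqrt m\,p$; the ratio $pk/(\sqrt m\,p)=k/\sqrt m\ge\sqrt m$.

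The structural reason the critical-index analysis falls short is that you never actually use the condition $|I|\ge 2$ inside the case analysis; you only bound $\Pr_z[\Phi\rest_\rho\text{ non-constant}]$ and add failure terms. The exponent $3/2$ in $p^{3/2}$ comes precisely from coupling the two events. The paper's proof does this directly, with no critical-index split: it quotes from \cite{kw16} the per-coordinate anti-concentration estimate $\Pr_z[\Phi\rest_\rho\text{ non-constant}]\le\sum_{i\in I}O(1)/\sqrt{k_i}$ (where $k_i$ is the rank of $|w_i|$), transfers it to $z\sim{\bf z}$ by pseudorandom concentration, swaps the sum with the expectation over $y$, and bounds $\Pr_y[i\in I\land |I|\ge2]\le\min\{p,mp^2\}\le\sqrt{mp^3}$ using only $p$-boundedness in pairs; summing $\sqrt{mp^3}\cdot\sum_i 1/\sqrt{i}=O(mp^{3/2})$ finishes. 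If you want to salvage your route, the fix is to keep $\mathbf{1}_{|I|\ge 2}$ inside the $y$-expectation, use the interval half-length $\|w_I\|_1$ (not the looser $\sqrt{|I|}\,\|w_I\|_2$), and then interchange: $\E_y[\mathbf{1}_{|I|\ge2}\cdot\|w_I\|_1]=\sum_i|w_i|\Pr[i\in I\land|I|\ge2]\le\sqrt m\,\|w\|_2\cdot\sqrt{mp^3}$, which gives the right $mp^{3/2}$ after dividing by $\|w_{T\setminus I}\|_2$ in the regular case. But at that point you are essentially carrying out the paper's argument with extra scaffolding.
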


\begin{proof}[{\bf Proof.}]
For every choice of $y\sim{\bf y}$, let $I=I_y\subseteq[n]$ be the set of live variables (i.e., $I=\{i\in[n]:y_i=1\}$). Then, the probability that $\Phi\rest_\rho$ depends on more than one input bit is at most 
\mm{
&\Pr_{\rho\sim{\bm\rho}}\left[|I|\ge 2 \land \Phi\rest_\rho\text{ is not constant} \right] \\
&= \E_{y\sim{\bf y}}\left[ \Pr_{z\sim{\bf z}}\left[ |I|\ge2 \land  \Phi\rest_\rho\text{ is not constant} \right] \right] \\
&= \E_{y\sim{\bf y}}\left[ \mathbf{1}_{|I|\ge2}\cdot\Pr_{z\sim{\bf z}}\left[ \Phi\rest_\rho\text{ is not constant} \right] \right] \mathcomma \eqtag{eq:kw:highlevel}
}
where the first equality relied on the fact that $y$ and $z$ are sampled independently, and the second equality is since the random variable $I$ only depends on $y$ (and not on $z$).

Fix an arbitrary choice of $y$, and let us upper-bound the probability over $z\sim{\bf z}$ that $\Phi\rest_\rho$ is not constant. Note that $\Phi\rest_\rho$ is a constant function if and only if
\mm{
\abs{\theta-\ip{w_{[m]\setminus I},z_{[m]\setminus I}}}>\norm{w_I}_1 \iff \ip{w_{[m]\setminus I},z_{[m]\setminus I}} \not\in \theta \pm \norm{w_I}_1 \mathdot \eqtag{eq:kw:const}
}

For each $i\in[m]$, let $k_i$ be the index of the $i^{th}$ variable when the variables are sorted according to the magnitudes $|w_i|$ in ascending order (breaking ties arbitrarily). In~\cite[Proof of Lemma 1.1]{kw16} it is shown that the probability over a \emph{uniform choice of $z$} that Eq.~\eqref{eq:kw:const} holds is at most $\sum_{i\in I}\frac{O(1)}{\sqrt{k_i}}$. Since ${\bf z}$ is $(1/\poly(m))$-pseudorandomly concentrated, the probability under $z\sim{\bf z}$ that Eq.~\eqref{eq:kw:const} holds is at most $\sum_{i\in I}\frac{O(1)}{\sqrt{k_i}}+\frac1{\poly(m)}$. Therefore, the expression in Eq.~\eqref{eq:kw:highlevel} is upper-bounded by
\mm{
&\E_{y\sim{\bf y}}\left[ \mathbf{1}_{|I|\ge2}\cdot \sum_{i\in I}\frac{O(1)}{\sqrt{k_i}} \right] + \frac1{\poly(m)} \\
&= \E_{y\sim{\bf y}}\left[ \sum_{i\in[m]}\frac{O(1)}{\sqrt{k_i}} \cdot\mathbf{1}_{i\in I\land |I|\ge2}\right] + \frac1{\poly(m)} \\
&= \sum_{i\in[m]}\frac{O(1)}{\sqrt{k_i}}\cdot\Pr_{y\sim{\bf y}}\left[i\in I\land |I|\ge2\right] + \frac1{\poly(m)} \mathdot \eqtag{eq:kw:nearfinal}
}

For any fixed $i\in[m]$, we upper-bound the probability of the event $i\in I\land|I|\ge2$ in two ways: The first upper-bound is $\Pr[i\in I]\le p$, and the second upper-bound is $\Pr[\exists j\in[m]\setminus\{i\},j\in I\land i\in I]<m\cdot p^2$ (since ${\bf y}$ is $p$-bounded in pairs). Hence,
\mm{
\Pr_{y\sim{\bf y}}[i\in I\land |I|\ge2] \le \min\left\{ p, m\cdot p^2 \right\} 
\le \sqrt{ m\cdot p^{3} }  \mathcomma
}
which implies that the expression in Eq.~\eqref{eq:kw:nearfinal} is upper-bounded by
\mm{
\sqrt{ m\cdot p^{3}}\cdot\sum_{i\in[m]}\frac{O(1)}{\sqrt{k_i}}+\frac1{\poly(m)} 
= O\left( \sqrt{m}\cdot p^{3/2}\cdot\sum_{i\in[m]}\frac{1}{\sqrt{i}} \right) 
= O\left( m\cdot p^{3/2} \right) \mathdot \tag*{\qedhere}
}
\end{proof}

Our pseudorandom generator, which is contructed next, is based on an application of Proposition~\ref{prop:kw:rest} as well as on the pseudorandom generator of Gopalan, Kane, and Meka (i.e., Theorem~\ref{thm:prg:ltfs}).

\begin{theorem} (quantified derandomization of depth-2 linear threshold circuits with $n^{3/2-\Omega(1)}$ wires).
There exists a polynomial-time algorithm $G$ that is given as input a random seed $s$ of length $\tilde{O}(\log(n))$ and a constant $\e>0$, and outputs a string $G(s,\e)\in\bitset^n$ such that for every $C\in\mathcal{C}_{n,2,n^{3/2-\e}}$ the following holds:
\begin{enumerate}
	\item If $C$ accepts all but at most $B(n)=2^{n^{\e/2}}$ inputs, then $\Pr_s[C(G(s,\e))=1]=1-o(1)$.
	\item If $C$ rejects all but at most $B(n)$ inputs, then $\Pr_s[C(G(s,\e))=1]=o(1)$.
\end{enumerate}
\end{theorem}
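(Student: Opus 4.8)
The plan is to fool $C$ with a single pseudorandom restriction that collapses it to one LTF, and then to invoke the Gopalan--Kane--Meka generator (Theorem~\ref{thm:prg:ltfs}) on the residual LTF. Write $\Phi_1,\dots,\Phi_g$ for the bottom-layer gates of $C$, where $\Phi_i$ reads $f_i$ of the $n$ variables, so that $\sum_i f_i\le n^{3/2-\e}$ and $g\le n^{3/2-\e}$. Fix a power of two $p=\Theta(n^{-1+3\e/5})$. The pseudorandom restriction ${\bm\rho}$ is built as in Proposition~\ref{prop:kw:rest}: first sample $y$ from a pairwise-independent distribution ${\bf y}$ over $\bitset^n$ with marginals $p$ (this uses only $O(\log n)$ bits and is trivially $p$-bounded in pairs), keeping alive the set $I=\{i:y_i=1\}$; then, independently, sample $z$ from the generator of Theorem~\ref{thm:prg:ltfs} with error $1/\poly(n)$, which by Claim~\ref{claim:concltfs} is $(1/\poly(n))$-pseudorandomly concentrated, and use $z$ to assign values to the fixed variables.

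First I would establish the ``good event''. Applying Proposition~\ref{prop:kw:rest} to each $\Phi_i$ (viewed as a function of its $f_i$ relevant variables), the probability that $\Phi_i\rest_{{\bm\rho}}$ depends on more than one input bit is $O(f_i\cdot p^{3/2})$; summing over $i$ and using $\sum_i f_i\le n^{3/2-\e}$ together with $p^{3/2}\cdot n^{3/2-\e}=n^{-\e/10}$, a union bound shows that with probability $1-O(n^{-\e/10})$ every bottom gate restricts to a literal or a constant, in which case $C\rest_{{\bm\rho}}$ is a single LTF on the variables in $I$ (constants are absorbed into the top threshold, and the top gate of literals is an LTF). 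At the same time $|I|$ has mean $pn=\Theta(n^{3\e/5})$ and variance at most $pn$, so Chebyshev's inequality gives $|I|\ge pn/2=\Omega(n^{3\e/5})$ with probability $1-O(1/(pn))=1-o(1)$. Let $\mathcal{G}$ be the intersection of these two events, so $\Pr[\mathcal{G}]=1-o(1)$; crucially $|I|\ge pn/2$ is far larger than $n^{\e/2}=\log B(n)$ since $3\e/5>\e/2$.

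Next I would define and analyze the generator. On seed $s$, the algorithm $G(s,\e)$ samples $\rho=\rho_{y,z}$ as above, samples one additional string $z'$ from the generator of Theorem~\ref{thm:prg:ltfs} with error $1/n$, and outputs the $n$-bit string $x$ that agrees with $z$ on $[n]\setminus I$ and with the coordinates of $z'$ in $I$ on $I$; the total seed length is $O(\log n)+\tilde{O}(\log n)+\tilde{O}(\log n)=\tilde{O}(\log n)$. Note that $C(G(s,\e))=C\rest_\rho(z'|_I)$ always. Condition on $\mathcal{G}$, so that $C\rest_\rho$ is an LTF on $I$. If $C$ accepts all but $B(n)=2^{n^{\e/2}}$ of its inputs, then $C\rest_\rho$ accepts all but at most $B(n)$ of the $2^{|I|}$ points of its subcube, hence $\Pr_u[C\rest_\rho(u)=1]\ge 1-B(n)/2^{|I|}\ge 1-2^{n^{\e/2}-\Omega(n^{3\e/5})}=1-o(1)$; since an LTF on $I$ is an LTF on all $n$ bits (with zero weights outside $I$) and the generator of Theorem~\ref{thm:prg:ltfs} fools LTFs, the projection of $z'$ to $I$ is $(1/n)$-pseudorandom for $C\rest_\rho$, so $\Pr[C\rest_\rho(z'|_I)=1]=1-o(1)$. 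Combining with $\Pr[\mathcal{G}]=1-o(1)$ gives $\Pr_s[C(G(s,\e))=1]=1-o(1)$; the rejecting case is symmetric and yields $\Pr_s[C(G(s,\e))=1]=o(1)$.

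The main obstacle, and the reason the argument is confined to $n^{3/2-\Omega(1)}$ wires, is the balance forced by the $p^{3/2}$ factor in Proposition~\ref{prop:kw:rest}: killing all $\le n^{3/2-\e}$ bottom gates requires $p^{3/2}\cdot n^{3/2-\e}=o(1)$, i.e.\ $pn\ll n^{2\e/3}$, while recovering a subcube larger than $B(n)$ requires $pn\gg n^{\e/2}$; these two windows overlap only because $\e/2<2\e/3$, and they would close up if the wire bound reached $n^{3/2}$ or if $B(n)$ were as large as $2^{n^{2\e/3}}$. A secondary point to handle with care is that $C\rest_\rho$ is guaranteed to be an LTF only on $\mathcal{G}$; off $\mathcal{G}$ the residual circuit is uncontrolled, but since $\mathcal{G}$ has probability $1-o(1)$ this contributes only to the $o(1)$ error, consistent with the quantified (rather than worst-case) guarantee.
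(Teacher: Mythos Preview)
Your proof is correct and follows essentially the same route as the paper: choose a pseudorandom restriction meeting the hypotheses of Proposition~\ref{prop:kw:rest}, show that with probability $1-o(1)$ every bottom gate becomes trivial (so $C\rest_\rho$ is a single LTF) while $\Omega(n^{3\e/5})\gg n^{\e/2}$ variables survive, and then finish by feeding the Gopalan--Kane--Meka generator into the surviving coordinates. Your parameter $p=\Theta(n^{-1+3\e/5})$ is exactly the paper's $p=n^{-(1-\de)}$ with $\de=3\e/5\in(\e/2,2\e/3)$, and your discussion of why the argument breaks at $n^{3/2}$ wires matches the paper's constraint $\de<2\e/3$.

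The only noteworthy difference is in how you sample the set of live variables: the paper takes ${\bf y}$ to be (almost) $O(\log n)$-wise independent and then invokes Fact~\ref{fact:tail} to get $|I|\ge pn/2$ with probability $1-1/\poly(n)$, whereas you use a pairwise-independent ${\bf y}$ with marginals $p$ and Chebyshev to get the same event with probability $1-O(1/(pn))$. Both choices satisfy the ``$p$-bounded in pairs'' hypothesis of Proposition~\ref{prop:kw:rest}, both give $1-o(1)$ probability for the good event, and both cost $O(\log n)$ seed bits; your version is slightly simpler at the price of a weaker (but still sufficient) tail bound.
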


\begin{proof}[{\bf Proof.}]
Let $\de\in(\e/2,2\e/3)$ such that $p=n^{-(1-\de)}$ is a power of two. The algorithm first samples a restriction that meets the requirements of Proposition~\ref{prop:kw:rest}, as follows: The distribution ${\bf y}$ over $\bitset^n$ is obtained by sampling a string $y'$ from a distribution over $\bitset^{\log(1/p)\cdot n}$ that is $\frac1{\poly(n)}$-almost $O(\log(n))$-wise independent, and setting ${\bf y}_i=1$ if and only if the $i^{th}$ block in $y'$ is all zeroes; and the distribution ${\bf z}$ is $\frac1{\poly(n)}$-pseudorandomly concentrated. The required seed length to sample such a restriction is dominated by the seed length required to sample $z\sim{\bf z}$, which (using Theorem~\ref{thm:prg:ltfs}) is $O(\log(n)\cdot(\log\log(n))^2)$.

We say that a restriction $\rho$ is {\sf successful} if the circuit $C\rest_\rho$ can be computed by a single LTF, and if at least $\frac1{2}\cdot (p\cdot n)=\frac1{2}\cdot n^{\de}$ variables remain alive under $\rho$. We first claim that the probability that $\rho$ is successful is $1-o(1)$. According to Fact~\ref{fact:tail}, with probability $1-1/\poly(n)$ at least $\frac1{2}\cdot n^{\de}$ variables remain alive under $\rho$. To see that with high probability $C\rest_\rho$ can be computed by a single LTF, let $\mathcal{G}$ be the set of gates in the bottom layer of $C$. We say that a gate $\Phi$ is {\sf non-trivial} if $\Phi$ depends on more than a single input bit; note that any trivial gate can be replaced by a constant or by an input bit (or its negation). Then, the expected number of non-trivial gates in the bottom layer of $C\rest_\rho$ is
\mm{
\E_\rho\left[ \sum_{\Phi\in\mathcal{G}}\mathbf{1}_{\Phi\rest_\rho\text{ is non-trivial}} \right] &= \sum_{\Phi\in\mathcal{G}}\Pr_\rho[\Phi\rest_\rho\text{ is non-trivial}] \\
&= O\left( \sum_{\Phi\in\mathcal{G}}\text{fan-in}(\Phi)\cdot p^{3/2} \right)\\
&= O\left( n^{3/2-\e}\cdot n^{3\de/2-3/2} \right) \mathcomma
}
which is $o(1)$, since $\de<2\e/3$. Therefore, the probability that there are no non-trivial gates in the bottom layer of $C\rest_\rho$ is $1-o(1)$. 

After sampling the restriction $\rho$, the algorithm samples a string $x\in\bitset^{|\rho^{-1}(\star)|}$ using the pseudorandom generator $G'$ for LTFs from Theorem~\ref{thm:prg:ltfs}, instantiated with error parameter $1/\poly(n)$, and outputs the $n$-bit string that is obtained by completing $x$ to an $n$-bit string according to $\rho$. 

To see that the algorithm is correct, assume that $C$ accepts all but $2^{n^{\e/2}}$ of its inputs. Then, for every successful restriction $\rho$, the acceptance probability of $C\rest_\rho$ is $1-o(1)$ (since $\rho$ keeps at least $\frac1{2}\cdot n^{\de}=\omega(n^{\e/2})$ variables alive). Thus,
\mm{
\Pr_s[C(G(s,\e))=0] &\le \Pr_\rho[\rho\text{ not successful}] + \Pr_{s}[C(G(s,\e))=0|\rho\text{ successful}] \\
&\le o(1) + \max_{\rho\text{ successful}}\Pr_{s'}[C\rest_\rho(G'(s'))=0] \mathcomma
}
which is $o(1)$ since $G'$ is $\frac1{\poly(n)}$-pseudorandom for LTFs. Similarly, if $C$ rejects all but $2^{n^{\e/2}}$ of its inputs, then $\Pr[C(G(s))=1]=o(1)$.
\end{proof}

\section{Restrictions for sparse $\tc^0$ circuits: A potential path towards $\nexp\not\subseteq\tc^0$} \label{sec:open}

Recall that the best currently-known lower bounds for $\tc^0$ circuits of arbitrary constant depth $d$ are for circuits with $n^{1+\exp(-d)}$ wires. We now present an open problem that involves restrictions for $\tc^0$ circuits with only $n^{1+O(1/d)}$ wires, and show that a resolution of this open problem would imply that $\nexp\not\subseteq\tc^0$.

Towards presenting the problem, fix some class $\mathcal{C}_{simple}$ of ``simple'' functions such that the following holds: There exists a deterministic algorithm that gets as input $C'\in\mathcal{C}_{simple}$, runs in sufficiently small sub-exponential time, and distinguishes between the case that the acceptance probability of $C'$ is at least $2/3$ and the case that the acceptance probability of $C'$ is at most $1/3$. Then, the problem is the following:

\begin{problem} (deterministic restriction algorithm for sparse $\tc^0$ circuits). \label{prob:rest}
Construct a deterministic algorithm that gets as input a $\tc^0$ circuit $C:\pmset^n\ra\pmset$ of depth $d$ with $n^{1+O(1/d)}$ wires, runs in time at most $2^{n^{1/4^d}}$, and finds a set $S\subseteq\pmset^n$ and $C'\in\mathcal{C}_{simple}$ such that $|S|\ge10\cdot2^{n^{1-1/5d}}$ and $C\rest_{S}$ is $(1/10)$-close to $C'$.
\end{problem}

A resolution of Open Problem~\ref{prob:rest} would imply that there exists an algorithm for quantified derandomization of $\tc^0$ circuits of depth $d$ with $n^{1+O(1/d)}$ wires and $B(n)=2^{n^{1-1/5d}}$ exceptional inputs that runs in sufficiently small sub-exponential time (i.e., in time $2^{n^{1/4^d}}$). This is the case because a quantified derandomization algorithm can act similarly to our algorithm from the proof of Theorem~\ref{thm:int:main}, as follows: First find a set $S$ such that $|S|\ge10\cdot2^{n^{1-1/5d}}$ and $C\rest_S$ is $(1/10)$-close to some $C'\in\mathcal{C}_{simple}$; then, note that $C\rest_S$ has either very high acceptance probability or very low acceptance probability (because $C$ has at most $B(n)\le|S|/10$ exceptional inputs); and finally, estimate the acceptance probability of $C\rest_S$ (by estimating the acceptance probability of $C'$) in order to decide whether $C$ accepts all but at most $B(n)$ of its inputs or rejects all but at most $B(n)$ of its inputs. Thus, relying on Corollary~\ref{cor:int:threshold}, a resolution of Open Problem~\ref{prob:rest} would imply that $\nexp\not\subseteq\tc^0$.

\phantomsection
\section*{Acknowledgements} \label{sec:ack}
\addcontentsline{toc}{section}{Acknowledgements}

This work was initiated and partially conducted while the author was visiting Rocco Servedio at Columbia, and under Rocco's guidance. The author is very grateful to Rocco, who declined co-authorship of the paper, for his guidance, for many useful ideas, and for numerous inspiring conversations. The author thanks his advisor, Oded Goldreich, for the very useful idea to use tensor codes in the proof of Theorem~\ref{thm:int:threshold}, and for his guidance and support during the research and writing process. The author also thanks Amnon Ta-Shma for a useful conversation about constructing extractors in $\tc^0$.

This research was partially supported by the Minerva Foundation with funds from the Federal German Ministry for Education and Research. The research was also supported by the Prof. Rahamimoff Travel Grant for Young Scientists of the US-Israel Binational Science Foundation (BSF).

\bibliographystyle{alpha}
\bibliography{refs}

\begin{appendices}

\section{Quantified derandomization and lower bounds} \label{app:quant:lb}

In this appendix we prove that ``black-box'' \emph{quantified} derandomization of a class $\mathcal{C}$ yields lower bounds for $\mathcal{C}$, in the same way that standard derandomization does. For simplicity, we focus on the case of derandomization with one-sided error. Let us first recall the notion of a hitting-set generator, which yields a ``black-box'' quantified derandomization with one-sided error of a circuit class.

\begin{definition} (hitting-set generator). \label{def:hsg}
Let $\mathcal{F}=\bigcup_{n\in\N}\mathcal{F}_n$, where for every $n\in\N$ it holds that $\mathcal{F}_n$ is a set of functions $\bitset^n\ra\bitset$, and let $\ell:\N\ra\N$. An algorithm $H$ is a {\sf hitting-set generator for $\mathcal{F}$ with seed length $\ell$} if for every $n\in\N$ and every $f\in\mathcal{F}_n$ there exists $s\in\bitset^{\ell(n)}$ such that $f(H(s))=1$.
\end{definition}

In the following proposition, we assume that there exists a hitting-set generator with non-trivial seed length $\ell(n)<n$ for circuits with $B(n)\ge2^{\ell}$ exceptional inputs, and show that this implies lower bounds for the corresponding circuit class.

\begin{proposition} (quantified derandomization implies lower bounds). \label{prop:quant:lb}
Let $\ell:\N\ra\N$ such that $\ell(n)<n$, and let $B:\N\ra\N$ such that $B(n)\ge2^{\ell(n)}$. Let $\mathcal{C}$ be a circuit class, and let $\mathcal{C}^{\le B}\subseteq\mathcal{C}$ be the subclass of circuits that reject at most $B(n)$ of their inputs. Assume that there exists a $2^{O(\ell)}$-time computable hitting-set generator $H$ with seed length $\ell$ for $\mathcal{C}^{\le B}$. Then, there exists a function in $DTIME(2^{O(\ell(n))})$ that cannot be computed by \emph{any} circuit in $\mathcal{C}$.
\end{proposition}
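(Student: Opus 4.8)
The plan is to run the standard diagonalization-style argument that converts a non-trivial derandomization algorithm into a lower bound, but carried out at the quantified level. First I would observe that the hypothesis gives, for each input length $n$, a hitting-set generator $H$ with seed length $\ell(n)<n$ whose outputs ``hit'' every circuit in $\mathcal{C}^{\le B}$; equivalently, the set $\mathcal{H}_n=\{H(s):s\in\bitset^{\ell(n)}\}$ has size at most $2^{\ell(n)}\le B(n)$ and intersects the accepting set of every $C\in\mathcal{C}^{\le B}_n$. The contrapositive viewpoint is the useful one: if a function $f_n:\bitset^n\ra\bitset$ \emph{vanishes} on all of $\mathcal{H}_n$ (i.e. $f_n(H(s))=1$ for no $s$, using the paper's convention that $1$ means ``reject''/``accept'' as appropriate), then $f_n$ cannot be computed by any circuit in $\mathcal{C}^{\le B}_n$, because such a circuit would have to accept some point of $\mathcal{H}_n$.

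Next I would construct such an $f_n$ by brute-force diagonalization over $\mathcal{H}_n$. Enumerate the at most $2^{\ell(n)}$ strings in $\mathcal{H}_n$ (computing each $H(s)$ in time $2^{O(\ell(n))}$), and define $f_n$ to be a Boolean function on $n$ bits that outputs $0$ (``accept'' in the $\{0,1\}$ convention, or whichever value forces the circuit to be wrong) on every string of $\mathcal{H}_n$, and is defined arbitrarily — say $1$ — elsewhere. Since $|\mathcal{H}_n|\le 2^{\ell(n)}<2^n$, there is room to do this; in fact $f_n$ is simply the indicator of the complement of $\mathcal{H}_n$. The key point is that $f_n\notin\mathcal{C}^{\le B}_n$: any $C\in\mathcal{C}^{\le B}_n$ satisfies $C(H(s))=1$ for some $s$ by the hitting-set property, whereas $f_n(H(s))=0$, so $C\ne f_n$. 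To upgrade this from ``$f\notin\mathcal{C}^{\le B}$'' to ``$f\notin\mathcal{C}$'', I would note (as in the black-box-PRG-implies-lower-bound folklore, cf.\ the discussion around~\cite{gw14} and Appendix~\ref{app:quant:lb}) that one must also diagonalize against circuits in $\mathcal{C}$ with \emph{more} than $B(n)$ exceptional inputs; the cleanest way is to define $f$ on input length $n$ to additionally disagree, on the remaining free points, with the first $\poly$-enumerated small circuit from $\mathcal{C}$ of the relevant size — but in fact the simplest route is: a circuit that is \emph{not} in $\mathcal{C}^{\le B}$ rejects more than $B(n)\ge 2^{\ell(n)}$ inputs, and by picking $f$'s values on a few additional coordinates outside $\mathcal{H}_n$ we can kill these too by a counting argument, since there are at most $|\mathcal{C}_n|$ such circuits and we are choosing among $2^{2^n-|\mathcal{H}_n|}$ possible completions. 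I would make this precise by the usual cardinality estimate: as long as $|\mathcal{C}_n|\cdot 2^{-(2^n-2^{\ell(n)})}<1$, a completion of $f_n$ avoiding all of $\mathcal{C}_n$ exists and can be found by exhaustive search in time $2^{O(2^n)}$ — but this is too slow, so instead I would restrict attention to the lexicographically-first circuit, matching the $\mathcal{E}^{\np}$/$\nexp$ lower-bound templates of~\cite{sw13,bv14}; alternatively, and most directly for the stated conclusion, I would simply take $f_n$ = indicator of $\bitset^n\setminus\mathcal{H}_n$ and invoke the hypothesis in the \emph{hitting-set-generator} formulation, which already gives a function in $\mathcal{C}$ only if that function hits $\mathcal{H}_n$, hence the complement indicator is not in $\mathcal{C}$ at all.

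The running-time bound is then immediate: given $1^n$, compute all of $\mathcal{H}_n$ in time $2^{\ell(n)}\cdot 2^{O(\ell(n))}=2^{O(\ell(n))}$, store it (which takes $2^{O(\ell(n))}$ space since each string has length $n\le 2^{\ell(n)}$ — here one uses $\ell(n)\ge\log n$, which holds because $\ell$ is the seed length of a generator with range $\bitset^n$, so $2^{\ell(n)}\ge n$), and on input $x\in\bitset^n$ answer whether $x\in\mathcal{H}_n$ by a linear scan; total time $2^{O(\ell(n))}$. Hence $f\in DTIME(2^{O(\ell(n))})$ and $f\notin\mathcal{C}$, which is exactly the claim of Proposition~\ref{prop:quant:lb}.

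I expect the main obstacle to be the step that goes from ``not computable by $\mathcal{C}^{\le B}$'' to ``not computable by all of $\mathcal{C}$'': the hitting-set generator by itself only controls circuits with few exceptional inputs, so one genuinely needs a second ingredient. The standard fix — and the one I would write up — is to note that the desired lower bound is against a \emph{worst-case} computation, and that a hitting-set generator for $\mathcal{C}^{\le B}$ where $B(n)\ge 2^{\ell(n)}$ is, by a direct argument, also a hitting-set generator in the sense that \emph{no} circuit in $\mathcal{C}$ computes the indicator of $\bitset^n\setminus\mathcal{H}_n$: if $C\in\mathcal{C}_n$ computed this indicator, then $C$ accepts exactly $\bitset^n\setminus\mathcal{H}_n$, so $C$ rejects exactly $\mathcal{H}_n$, i.e.\ rejects at most $2^{\ell(n)}\le B(n)$ inputs, so $C\in\mathcal{C}^{\le B}_n$, and then the hitting-set property forces $C$ to accept some point of $\mathcal{H}_n$ — contradiction. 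This sidesteps the counting/diagonalization subtlety entirely and keeps the construction uniform and within time $2^{O(\ell(n))}$, so I would organize the proof around this clean choice of $f$.
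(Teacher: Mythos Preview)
Your final paragraph contains exactly the paper's argument: define $f$ to be the indicator of $\bitset^n\setminus\mathcal{H}_n$; any $C\in\mathcal{C}$ computing $f$ rejects only the $\le 2^{\ell(n)}\le B(n)$ points of $\mathcal{H}_n$, hence $C\in\mathcal{C}^{\le B}$, and the hitting-set property then forces $C(H(s))=1\ne f(H(s))$ for some $s$. The intermediate detours (counting arguments, diagonalizing against enumerated circuits, worrying separately about $\mathcal{C}\setminus\mathcal{C}^{\le B}$) are unnecessary and should be cut---the one-line observation that ``$C$ computes $f\Rightarrow C\in\mathcal{C}^{\le B}$'' already handles all of $\mathcal{C}$, as you yourself realize by the end.
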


\begin{proof}[{\bf Proof.}]
The ``hard'' function for $\mathcal{C}$, denoted $f$, is the indicator function of $\bitset^n\setminus \{H(s):s\in\bitset^{\ell(n)}\}$; that is, $f(x)=0$ if and only if there exists $s\in\bitset^{\ell(n)}$ such that $x=H(s)$. Note that any $C\in\mathcal{C}$ that computes $f$ rejects at most $2^{\ell}\le B(n)$ inputs, and thus $C\in\mathcal{C}^{\le B}$. However, this means that $H$ is a hitting-set generator for $C$, and so there exists $s\in\bitset^{\ell(n)}$ such that $C(H(s))=1$. Since $f(H(s))=0$, we obtain a contradiction to the hypothesis that $C$ computes $f$.
\end{proof}

\section{Proof of a technical claim from Section~\ref{sec:threshold}} \label{app:tensor}

In the proof of Proposition~\ref{prop:threshold:code:const}, we omitted the proof of the following claim: For every $x\in\bitset^n$ such that $x\ne0^n$, the relative Hamming weight $\hat{x}=C(x)$ is at least $(1/3)^d$. The proof of this claim, which we now detail, follows from a standard property of tensor codes: If a code $\code$ has distance $\de>0$, then the tensor code of order $d$ that is based on $\code$ has distance $\de^d$. 

\begin{claimdot} \label{claim:tensor}
Let $C$ be the circuit constructed in the proof of Proposition~\ref{prop:threshold:code:const}, and let  $x\in\bitset^n$ such that $x\ne0^n$. Then, the relative Hamming weight $\hat{x}=C(x)$ is at least $(1/3)^d$.
\end{claimdot}

\begin{proof}[{\bf Proof.}]
Recall that the code $\code$ maps any non-zero message of length $m$ to a codeword of length $\bar{m}$ with at least $r\eqdef\bar{m}/3$ non-zero entries. Our hypothesis is that $x=M^{(0)}$ is not the all-zero message, and we will now prove that for each $i\in[d]$ it holds that $M^{(i)}$ has at least $r^{i}$ non-zero entries. The proof is by induction, and will rely on a stronger induction hypothesis: We prove that for each $i\in\{0,...,d\}$ there exists $\vec{x}_{\ge i+1}\in[m]^{d-i}$ such that the number of vectors $\vec{x}_{\le i}\in[\bar{m}]^{i}$ for which $M^{(i)}_{\vec{x}_{\le i},\vec{x}_{\ge i+1}}\ne0$ is at least $r^i$. 

For the base case $i=1$, note that by our hypothesis there exists $\vec{x}\in[m]^d$ such that $M^{(0)}_{\vec{x}}\ne0$. Therefore, the $m$-bit vector $M^{(0)}_{\star,\vec{x}_{\ge2}}=M^{(0)}_{1,\vec{x}_2,...,\vec{x}_d},...,M^{(0)}_{m,\vec{x}_2,...,\vec{x}_d}$ is non-zero. By the properties of $\code$ it holds that $\code\left( M^{(0)}_{\star,\vec{x}_{\ge2}} \right)$ has at least $r$ non-zero entries. The bits of $\code\left( M^{(0)}_{\star,\vec{x}_{\ge2}} \right)$ appear in $M^{(i)}$ in locations $(1,\vec{x}_{2},...,\vec{x}_{d}),...,(\bar{m},\vec{x}_2,...,\vec{x}_d)$. Therefore, the claim is proved for $i=1$ with the vector $\vec{x}_{\ge2}=\vec{x}_2,...,\vec{x}_d\in[m]^{d-1}$.

For the induction step, let $i\ge2$. By the induction hypothesis, for some $\vec{x}_{\ge i}\in[m]^{d-(i-1)}$ there exist at least $r^{i-1}$ vectors $\vec{x}_{\le i-1}^{(1)},...,\vec{x}_{\le i-1}^{(r^{i-1})}\in[\bar{m}]^{i-1}$ such that $M^{(i-1)}_{\vec{x}_{\le i-1}^{(j)},\vec{x}_{\ge i}}\ne0$ for all $j\in[r^{i-1}]$. Fix $j\in[r^{i-1}]$. Since $M^{(i-1)}_{\vec{x}^{(j)}_{\le i-1},\vec{x}_{\ge i}}\ne0$, it follows that the string $M^{(i-1)}_{\vec{x}^{(j)}_{\le i-1},\star,\vec{x}^{(j)}_{\ge i+1}}=M^{(i-1)}_{\vec{x}^{(j)}_{\le i-1},1,\vec{x}^{(j)}_{\ge i+1}},...,M^{(i-1)}_{\vec{x}^{(j)}_{\le i-1},m,\vec{x}^{(j)}_{\ge i+1}}\in\bitset^m$ is non-zero. Thus, by the properties of $\code$, the string $\code\left( M^{(i-1)}_{\vec{x}^{(j)}_{\le i-1},\star,\vec{x}^{(j)}_{\ge i+1}} \right)$ contains at least $r$ non-zero entries.

Now, for every $j\in[r^{i-1}]$, let $\vec{X}^{(j)}\eqdef\left\{\left(\vec{x}^{(j)}_{\le i-1},1,\vec{x}^{(j)}_{\ge i+1}\right),...,\left(\vec{x}^{(j)}_{\le i-1},\bar{m},\vec{x}^{(j)}_{\ge i+1}\right)\right\}$ be the set of $\bar{m}$ locations in $M^{(i)}$ in which the string $\code\left( M^{(i-1)}_{\vec{x}^{(j)}_{\le i-1},\star,\vec{x}^{(j)}_{\ge i+1}} \right)$ appears. Note that for every $j\ne j'\in[r^{i-1}]$ it holds that all locations in $X^{(j)}$ and $X^{(j')}$ are distinct; that is, for every $k,k'\in[\bar{m}]$ it holds that $\left(\vec{x}^{(j)}_{\le i-1},k,\vec{x}^{(j)}_{\ge i+1}\right) \ne \left(\vec{x}^{(j')}_{\le i-1},k',\vec{x}^{(j)}_{\ge i+1}\right)$. Since for each $j\in[r^{i-1}]$ it holds that $X^{(j)}$ contains at least $r$ locations in which $M^{(i)}$ is non-zero, we deduce that $M^{(i)}$ has at least $r^i$ non-zero entries.
\end{proof}
\end{appendices}

\end{document}